\newcommand{\pp}{p}
\newcommand{\perm}[3]{\fu^{{#1}_{#2}^{#3}}}
\newcommand{\sub}{\mathsf{sub}}
\newcommand{\tp}{\boldsymbol{\tau}}
\newcommand{\yes}{$\mathsf{yes}$}
\newcommand{\no}{$\mathsf{no}$}
\newcommand{\RPR}{\mathsf{RPR}}
\newcommand{\SA}{\mathfrak S_\Abox}
\renewcommand{\rq}{\boldsymbol Q}
\newcommand{\Z}{\mathbb{Z}}
\newcommand{\A}{\ensuremath{\mathfrak A}}
\newcommand{\I}{\ensuremath{\mathcal{I}}}
\newcommand{\TO}{\mathcal{O}}
\newcommand{\Abox}{\mathcal{A}}
\newcommand{\OWLQL}{\textsl{OWL\,2\,QL}}
\newcommand{\q}{{\boldsymbol{q}}}
\newcommand{\fu}{\delta}
\newcommand{\Xallop}{^{\smash{\Box\raisebox{1pt}{$\scriptscriptstyle\bigcirc$}}}}
\newcommand{\Xnext}{^{\smash{\raisebox{1pt}{$\scriptscriptstyle\bigcirc$}}}}
\newcommand{\Xbox}{^{\smash{\Box}}}
\newcommand{\bool}{\textit{bool}}
\newcommand{\horn}{\textit{horn}}
\newcommand{\krom}{\textit{krom}}
\newcommand{\core}{\textit{core}}
\newcommand{\lang}{\mathcal{L}}
\renewcommand{\L}{{\boldsymbol{L}}}
\newcommand{\Type}{{\boldsymbol{T}}}
\newcommand{\sig}{\mathsf{sig}}
\newcommand{\lit}{\mathsf{lit}}
\newcommand{\frag}{{\boldsymbol{c}}}
\newcommand{\op}{{\boldsymbol{o}}}
\newcommand{\nxt}{{\ensuremath\raisebox{0.25ex}{\text{\scriptsize$\bigcirc$}}}}
\newcommand{\Rnext}{\nxt_{\!\scriptscriptstyle F}}
\newcommand{\Lnext}{\nxt_{\!\scriptscriptstyle P}}
\newcommand{\Rdiamond}{\Diamond_{\!\scriptscriptstyle F}}
\newcommand{\Ldiamond}{\Diamond_{\!\scriptscriptstyle P}}
\newcommand{\Rbox}{\rule{0pt}{1.4ex}\Box_{\!\scriptscriptstyle F}}
\newcommand{\Lbox}{\rule{0pt}{1.4ex}\Box_{\!\scriptscriptstyle P}}
\newcommand{\avec}[1]{\boldsymbol{#1}}
\newcommand{\LTL}{\textsl{LTL}}
\newcommand{\DL}{\textsl{DL-Lite}}
\newcommand{\tem}{\mathsf{tem}}
\newcommand{\NCo}{{{\ensuremath{\textsc{NC}^1}}}}
\newcommand{\ACz}{{\ensuremath{\textsc{AC}^0}}}
\newcommand{\ACC}{{\ensuremath{\textsc{ACC}^0}}}
\newcommand{\MOD}{\ensuremath{\mathsf{MOD}}}
\newcommand{\coNP}{{\ensuremath{\textsc{coNP}}}}
\newcommand{\ExpSpace}{{\ensuremath{\textsc{ExpSpace}}}}
\newcommand{\PSpace}{{\ensuremath{\textsc{PSpace}}}}
\newcommand{\NL}{\textsc{NL}}
\newcommand{\NP}{\textsc{NP}}
\newcommand{\FO}{\mathsf{FO}}
\newcommand{\MFO}{\textup{MFO}}
\newcommand{\qa}{q_{\textit{acc}}}
\newcommand{\B}{\mathsf{b}}
\newcommand{\M}{\boldsymbol{M}}
\newcommand{\conf}{\mathfrak c}
\newcommand{\aut}{\mathfrak B}
\newcommand{\simm}{\sim}
\newcommand{\Sg}{\mathfrak S}
\newcommand{\unit}{identity}
\newcommand{\fum}{\tilde{\fu}}
\newcommand{\ord}{o}
\newcommand{\id}{\mathsf{id}}
\newcommand{\G}{\mathfrak G}
\newcommand{\Qr}{Q^r}
\newcommand{\simclass}[1]{#1/_{\mathop{\simm}}}
\newcommand{\Amin}{\A_{\L}}
\newcommand{\Tmabc}{\Gamma}
\newcommand{\Aiabc}{\Sigma}
\newcommand{\Tmtran}{\gamma}
\newcommand{\Aabc}{\Sigma_+}
\newcommand{\ppn}{N}
\newcommand{\tst}{\textit{tr}}
\newcommand{\nm}[1]{\textit{#1}}
\tikzset{
  basic box/.style = {
    shape = rectangle,
    align = center,
    draw  = #1,
    %fill  = white,
    rounded corners},
  header node/.style = {
    font          = \strut\Large\ttfamily,
    text depth    = +0pt,
    fill          = white,
    draw},
  header/.style = {%
    inner ysep = +1.5em,
    append after command = {
      \pgfextra{\let\TikZlastnode\tikzlastnode}
      node [header node] (header-\TikZlastnode) at (\TikZlastnode.north) {#1}
      %node [span = (\TikZlastnode)(header-\TikZlastnode)]
        %at (fit bounding box) (h-\TikZlastnode) {}
    }
  },
  hv/.style = {to path = {-|(\tikztotarget)\tikztonodes}},
  vh/.style = {to path = {|-(\tikztotarget)\tikztonodes}},
  fat blue line/.style = {ultra thick, blue}
}
\newtheorem{theorem}{Theorem}
\newtheorem{example}[theorem]{Example}
\newtheorem{proposition}[theorem]{Proposition}
\newtheorem{lemma}[theorem]{Lemma}
\newtheorem{remark}[theorem]{Remark}
\begin{document}

\date{ }

\title{Deciding FO-rewritability of Regular Languages and Ontology-Mediated Queries in Linear Temporal Logic}

\author{Agi Kurucz\\
		King's College London, U.K.\\
		\texttt{agi.kurucz@kcl.ac.uk}\\
		\And Vladislav Ryzhikov\\
		Birkbeck, University of London, U.K.\\
		\texttt{vlad@dcs.bbk.ac.uk}\\
		\And Yury Savateev\\
		University of Southampton, U.K.\\
		\texttt{y.savateev@soton.ac.uk}\\
		\And Michael Zakharyaschev\\
		Birkbeck, University of London, U.K.\\
		\texttt{michael@dcs.bbk.ac.uk}
	}

%\author{\name Agi Kurucz \email agi.kurucz@kcl.ac.uk\\
%\addr King's College London, UK
%\AND
%\name Vladislav Ryzhikov \email vlad@dcs.bbk.ac.uk\\
%\addr Birkbeck, University of London, UK
%\AND
%\name Yury Savateev \email y.savateev@soton.ac.uk\\
%\addr University of Southampton, UK
%\AND
%\name Michael Zakharyaschev \email michael@dcs.bbk.ac.uk\\
%\addr Birkbeck, University of London, UK}

% Uncomment to override  the `A preprint' in the header
	\renewcommand{\headeright}{}
	\renewcommand{\undertitle}{Technical Report}

\maketitle

%%%%%%%%%%%%%%%%%%%%%%%%%%%%%%%%%%%%%%%%%%%%%%%%%%%%%%%%%%%%%%%%%%%%%%

\begin{abstract}
Our concern is the problem of determining the data complexity of answering an ontology-mediated query (OMQ) formulated in linear temporal logic \LTL{} over $(\mathbb Z,<)$ and deciding whether it is rewritable to an $\FO(<)$-query, possibly with some extra predicates.
First, we observe that, in line with the circuit complexity and FO-definability of regular languages, OMQ answering in \ACz{}, $\textsc{ACC}^0$ and $\NCo$ coincides  with $\FO(<,\equiv)$-rewritability using unary predicates \mbox{$x \equiv 0\, (\text{mod}\ n)$}, $\FO(<,\mathsf{MOD})$-rewritability,  and  $\FO(\RPR)$-rewritability using relational primitive recursion, respectively.
We prove that, similarly to known \PSpace-completeness of recognising $\FO(<)$-definability of regular languages, deciding \mbox{$\FO(<,\equiv)$}- and \mbox{$\FO(<,\mathsf{MOD})$}-definability is also \PSpace-complete (unless $\ACC = \NCo$).
We then use this result to show that deciding $\FO(<)$-, \mbox{$\FO(<,\equiv)$}- and $\FO(<,\mathsf{MOD})$-rewritability of \LTL{} OMQs is \ExpSpace-complete, and that these problems become \PSpace-complete for OMQs with a linear Horn ontology and an atomic query, and also a positive query in the cases of $\FO(<)$- and \mbox{$\FO(<,\equiv)$}-rewritability.
Further, we consider $\FO(<)$-rewritability of OMQs with a binary-clause ontology and identify OMQ classes, for which deciding it is \PSpace-, $\Pi_2^p$- and \coNP-complete.
\end{abstract}

%%%%%%%%%%%%%%%%%%%%%%%%%%%%%%%%%%%%%%%%%%%%%%%%%%%%%%%%%%%%%%%%%%%%%%

% !TEX root = JAIR.tex

\section{Introduction}\label{intro}

\paragraph*{\bf\em Motivation.}
The problem we consider in this paper originates in the area of \emph{ontology-based data access} (\emph{OBDA}) to temporal data. The aim of the OBDA paradigm~\cite{PLCD*08,DBLP:conf/ijcai/XiaoCKLPRZ18} and systems such as Mastro\footnote{\url{https://www.obdasystems.com}} or Ontop\footnote{\url{https://ontopic.biz}} is to facilitate management and integration of possibly incomplete and heterogeneous data by providing the user with a view of the data through the lens of a description logic (DL) ontology. As a result, the user can think of the data as a virtual knowledge graph~\cite{DBLP:journals/dint/XiaoDCC19}, $\Abox$, whose labels---unary and binary predicates supplied by an  ontology, $\TO$---are the only thing to know when formulating queries, $\varkappa$. Ontology-mediated queries (OMQs) $\q = (\TO,\varkappa)$ are supposed to be answered over $\Abox$ under the open-world semantics (taking account of all models of $\TO$ and $\Abox$), which can be prohibitively complex. So the key to practical OBDA is ensuring first-order rewritability of $\q$ (aka \emph{boundedness} in the datalog literature~\cite{Abitebouletal95}), which reduces open-world reasoning to evaluating an FO-formula over $\Abox$. The W3C standard ontology language \OWLQL{} for OBDA is based on the \DL{} family of DL~\cite{CDLLR07,ACKZ09}, which uniformly guarantees FO-rewritability of all \OWLQL{} OMQs with a conjunctive query. Other ontology languages with this feature include various dialects of tgds~\cite<e.g.,>{DBLP:journals/ai/BagetLMS11,DBLP:journals/ai/CaliGP12,DBLP:conf/datalog/CiviliR12}. 
%However, by design such languages are rather inexpressive.
However, this \emph{uniform} approach to ensuring FO-rewritability inevitably imposes severe syntactical restrictions on ontology languages, making them  rather inexpressive.

%Theory and practice of OBDA have revived the interest to the problem of deciding whether an OMQ given in some expressive language is FO-rewritable, which 
Theory and practice of OBDA have revived the interest in the \emph{non-uniform} approach, where the problem is to decide whether a given OMQ, formulated in some expressive language, is FO-rewritable. This problem 
was thoroughly investigated  in the 1980--90s for datalog queries~\cite<e.g.,>{DBLP:conf/pods/Vardi88,DBLP:journals/algorithmica/UllmanG88,DBLP:conf/stoc/CosmadakisGKV88,DBLP:journals/jacm/AfratiP93,DBLP:conf/lics/Marcinkowski96}. The data complexity and rewritability of OMQs in various DLs and disjunctive datalog have become an active research area in the past decade~\cite{DBLP:journals/tods/BienvenuCLW14,DBLP:journals/ai/KaminskiNG16,DBLP:conf/ijcai/LutzS17,DBLP:journals/lmcs/FeierKL19,DBLP:conf/kr/GerasimovaKKPZ20} lying at the crossroads of logic, database theory, knowledge representation in AI, circuit and descriptive complexity, and constraint satisfaction problems.

There have been numerous attempts to extend ontology and query languages with constructors that are capable of representing events over \emph{temporal data\/}; consult~\citeA{LutzWZ08,DBLP:conf/time/ArtaleKKRWZ17} for surveys and \citeA{DBLP:conf/ijcai/Gutierrez-Basulto17,DBLP:conf/ruleml/BorgwardtFK19,DBLP:conf/ijcai/WalegaGKK20,DBLP:conf/kr/WalegaGKK20,DBLP:journals/corr/abs-2111-06806} 
for more recent developments. However, so far the focus has only been on the uniform complexity of reasoning with arbitrary ontologies and queries in a given language rather than on determining the data complexity and FO-rewritability of individual temporal OMQs.
%
%On the other hand, the non-uniform analysis of OMQs in DLs or datalog mentioned above is not applicable to standard temporal logics interpreted over linearly-ordered structures.
On the other hand, standard temporal logics are interpreted over linearly-ordered structures, and so the non-uniform analysis of OMQs in DLs and datalog mentioned above is not applicable to them.

In this paper, we take a first step towards understanding the problem of non-uniform FO-rewritability of OMQs over temporal data by focusing on the temporal dimension and considering OMQs given in linear temporal logic \LTL{} interpreted over $(\mathbb Z,<)$. In fact, already this basic `one-dimensional' temporal OBDA formalism provides enough expressive power in those real-world situations where the interaction among individuals in the object domain is not important and can be disregarded in data modelling. (This interaction is usually captured by binary relations (roles) in DLs, giving the models a `two-dimensional' character.) 
%which is usually captured by binary relations (roles) in DLs, is not important and can be disregarded in data modelling. 
We illustrate this claim and the language of \LTL{} OMQs by an example. 

\begin{example}\em \label{ex:first}
A typical scenario for the use of OBDA technologies is where a non-IT-expert user, say a turbine engineer, analyses the behaviour of a complex system, turbines in our example, based on various  sensor measurements stored in a relational database. To be more specific, imagine that  turbines, $t$, are equipped with sensors, $s$, to measure such parameters as the rotor speed, the temperature of the blades, vibration, active power, etc. The relational database in a remote diagnostic centre might store a binary predicate $\nm{location}(s,t)$ saying that sensor $s$ is located in turbine $t$ and a ternary predicate $\nm{measurement}(s,v,n)$ giving the numerical value $v$ of the reading of $s$ at time instant $n$. The timestamps of sensor readings are  synchronised with a central clock, and so can be regarded as integers.

When defining events of interest like `active power trip' or `purging is over'\!, engineers usually operate with facts such as `the active power of turbine $t$ measured by $s$ is above 1.5MW at moment~$n$'\!, which can be obtained as database views of the form $\nm{ActivePower}^{t,s}_{\geq 1.5}(n)$. We regard these unary predicates as atomic concepts that can be true or false at different moments of time. Omitting $t$ and $s$ to unclutter notation, we can then assume that our virtual database~$\Abox$ consists of facts like 
\begin{equation}\label{turbinesdata}
\nm{Run}(6), \ \nm{ActivePower}_{\geq 1.5}(7), \ 
%\nm{Malfunction}(7), \  \nm{Disabled}(10), \ \dots
\nm{Malfunction}(7), \  \nm{Disabled}(10),
\end{equation}
based on which we analyse the behaviour of the turbines. As some sensors might occasionally fail to send their measurements, we cannot assume the data to be complete. Thus, in our example data above, the sensor detecting if the turbine is running (by measuring the electric current) failed to send a signal at time instant $7$. However, the power sensor attached to the turbine recorded $\ge 1.5$MW at 7, which should imply that the turbine was running at 7. This piece of domain knowledge can be encoded by the ontology axiom
\begin{equation}\label{ax1}
\Rbox\Lbox (\nm{ActivePower}_{\geq 1.5} \to \nm{Run})
\end{equation}
with the \LTL{}-operators $\Rbox$ (always in the future) and $\Lbox$ (always in the past).
Other \LTL{} axioms in our example ontology $\TO$ (designed by a domain expert) could look like 
\begin{align}\label{ax2}
& \Rbox\Lbox ( \nm{Pause} \land \nm{Run} \to \bot),\\\label{ax3}
& \Rbox\Lbox (\nm{Malfunction} \to \Rnext \nm{Pause}),\\\label{ax40}
& \Rbox\Lbox (\nm{Malfunction} \to \Rdiamond \nm{Diagnostics}),\\
\label{ax4}
& \Rbox\Lbox (\nm{Disabled} \to \neg \Rdiamond \nm{Diagnostics}).
\end{align}
The first of them says that a turbine cannot be paused and running at the same time; the second and third say that immediately  after ($\Rnext$) a malfunction, the turbine is paused and will eventually ($\Rdiamond$) be  diagnosed; the 
%third 
fourth axiom asserts that a disabled turbine will never undergo diagnostics in the future.

Now, if we are interested in continuous runs 
%of 
lasting at least two time units that end up in a non-run state, we (engineers) could write and execute the following simple query $\varkappa(x)$ 
with the previous-time operator $\Lnext$, assuming that 
$\varkappa(x)$ 
is mediated by the ontology $\TO$:
\begin{align*}
& \varkappa(x) \  = \ \neg \nm{Run} \land \Lnext \nm{Run} \land \Lnext \Lnext \nm{Run}.
\end{align*}
Intuitively, 
%it is 
we are
looking for those timestamps $x$ in the active domain of the database at which 
this temporal formula   
is a logical consequence of $\TO$ and the data.
It is not hard to see that the only certain answer to the OMQ $(\TO,\varkappa(x))$ over $\Abox$ given by 
\eqref{turbinesdata} is the time instant $8$ because we can derive $\neg\nm{Run}(x)$ if $\nm{Pause}(x)$ or $\nm{Malfunction}(x-1)$ is in $\Abox$, or $\TO$ and $\Abox$ are inconsistent; and we know for certain that $\nm{Run}(x)$ iff $\Abox$ contains $\nm{Run}(x)$ or $\nm{ActivePower}_{\geq 1.5}(x)$, or again $\TO$ and $\Abox$ are inconsistent. These conditions can be expressed by the $\FO(<)$-query 
$\rq(x) = \varphi(x) \lor \nm{Incons}$,  
to be evaluated over $\Abox$, 
where 
\begin{align*}
&\varphi(x) = (\nm{Pause}(x) \lor \nm{Malfunction}(x-1)) \land {} (\nm{Run}(x-1) \lor \nm{ActivePower}_{\geq 1.5}(x-1)) \land {} \\
& \hspace*{8cm}(\nm{Run}(x-2) \lor \nm{ActivePower}_{\geq 1.5}(x-2)).
\end{align*}
and 
%a disjunction $\psi$ 
$\nm{Incons}$ is a disjunction 
of a few sentences such as 
\begin{align*}
& \exists x\, (\nm{Malfunction}(x) \land \nm{ActivePower}_{\geq 1.5}(x+1)),\\
& \exists x,y \, ((y \geq x) \land \nm{Disabled}(x) \land \nm{Malfunction}(y) ),\ \ \dots
\end{align*}
 that describe all of the cases when $\TO$ is inconsistent with $\Abox$ (which are left to the reader). The aim of a temporal OBDA system is to construct such an $\FO(<)$-rewriting $\rq(x)$ of the OMQ 
$(\TO,\varkappa(x))$ automatically, and evaluate it over the original relational data using a conventional database management system.
%
%&\psi = \exists x\, (\nm{Pause}(x) \land \nm{Run}(x)) \lor \exists x,y\, ((y > x) \land \nm{Disabled}(x) \land \nm{Diagnostics}(y)) \lor{} \\
%%
%& \hspace*{0cm}\exists x,y \, ((y \geq x) \land \nm{Disabled}(x) \land \nm{Malfunction}(y) ) \lor \exists x\, (\nm{Malfunction}(x) \land (\nm{Run}(x+1) \lor \nm{ActivePower}(x+1)).
%
The OMQ $(\TO,\varkappa'(x))$ with 
$$\varkappa'(x)= \varkappa(x)  \ \land \ (\nm{Diagnostics} \ \lor \ \Rnext \nm{Diagnostics} \ \lor \ \Rnext \Rnext \nm{Diagnostics})$$ 
%which asks for continuous runs as above coinciding with a diagnostics event or followed by a diagnostics event within two time units, has 
also returns $8$ over $\Abox$ because~\eqref{ax40} and~\eqref{ax4} imply that diagnostics took place some time in the interval $[8,10]$. We obtain an $\FO(<)$-rewriting of 
$(\TO,\varkappa'(x))$
 by adding to  $\rq(x)$ the conjunct 
\begin{equation*}
\exists y \, [ (x\leq y \leq x+2) \land (\nm{Diagnostics}(y) \lor (\nm{Disabled}(y) \land \exists z\, ((y-3 \leq z < y) \land \nm{Malfunction}(z)))) ]. 
\end{equation*}
%
%
%The query
%%
%$$
%\q_2(x)  = \q_1 \land \bigvee_{i = 1}^3 \Rnext^i \nm{Diagnostics}
%$$
%%
%returns the time of completed runs\nz{?} that were followed by a diagnostics event within 3 time units. The certain answer to the OMQ $(\TO,\q_2(x))$ over $\Abox$ is also $8$ as $\TO$ and $\Abox$ imply that the diagnostics took place at some (unknown) moment within the interval $[9,11]$.
%
%
%The OMQ $(\TO, \q_1(x))$ is $\FO(<)$-rewritable to the following that is a disjunction of:
%%
%$$
%\varphi(x) = \bigl((\nm{Pause}(x) \lor \nm{Malfunction}(x-1)) \land {} \rho(x-1) \land {} \rho(x-2)\bigr)
%$$
%and $\nu$,
%%
%where $\rho(x)$ for $\nm{Run}(x) \lor \nm{ActivePower}_{\geq 1.5}(x)$ and $\nu$ stands for
%%
%\begin{multline*}
%\exists x (\nm{Pause}(x) \land \nm{Run}(x)) \lor \exists x,y ((y > x) \land \nm{Disabled}(x) \land \nm{Diagnostics}(y) \lor\\
%%
%\exists x,y ((y \geq x) \land \nm{Disabled}(x) \land \nm{Malfunction}(y) ).
%\end{multline*}
%%
%(Intuitively, $\nu$ checks if the data instance is inconsistent with $\TO$.)
%The OMQ $(\TO, \q_2(x))$ is $\FO(<)$-rewritable to the following formula:
%%
%\begin{multline*}
%[\varphi(x) \land{} \exists y ((x < y \leq x+3) \land{} \nm{Diagnostics}(y)) \lor \big( \nm{Malfunction}(x) \land{} \\
%%
%\exists y ((x < y \leq x+3) \land{} \nm{Disabled}(y))\big)] \lor \nu.
%\end{multline*}
%%
%
%
\end{example}

\paragraph*{\bf\em Problems and related work.}
The initial problem we are interested in can be formulated in complexity-theoretic terms: given an \LTL{} OMQ $\q$, determine the data complexity of answering $\q$ over any data instance $\Abox$ in a given signature $\Xi$. For simplicity's sake, let us assume that $\q$ is Boolean (with a \yes/\no{} certain answer). It is also convenient to think of each $\Abox$ as a word whose symbol at position $\ell$ is the set of all atoms in $\Abox$ with timestamp $\ell$. Then the data instances $\Abox$ over which the answer to $\q$ is \yes{} form a language, $\L(\q)$, over the alphabet $2^\Xi$. In fact, using the automata-theoretic view of \LTL{}~\cite{VardiW86}, one can show (see Proposition~\ref{Prop:rewr-def} below) that the language $\L(\q)$ is regular, and so can be decided in \NCo~\cite{DBLP:journals/jacm/BarringtonT88,DBLP:journals/jcss/Barrington89}.

This observation naturally leads to the task of recognising the complexity of the word problem for a given regular language.
The circuit and descriptive complexity of regular languages was investigated by~\citeA{DBLP:journals/jcss/Barrington89,DBLP:journals/jcss/BarringtonCST92,Straubing94} who  established an $\ACz$/$\ACC$/$\NCo$ trichotomy, gave algebraic characterisations of languages in these classes (implying that the trichotomy is decidable) and also in terms of extensions of FO. Namely, the regular languages $\L$ in \ACz{} are definable by $\FO(<,\equiv)$-sentences with unary predicates $x \equiv 0\, (\text{mod}\ n)$; those in $\ACC$ are definable by $\FO(<,\MOD)$-sentences with quantifiers $\exists^n x\, \psi(x)$ checking whether the number of positions satisfying $\psi$ is divisible by $n$; and all regular languages $\L$ are definable in $\FO(\RPR)$ with relational primitive recursion~\cite{DBLP:journals/iandc/ComptonL90}.
$\FO(<)$-definable regular languages, which are decidable in \ACz{}, were proven to be the same as star-free languages~\cite{McNaughton&Papert71}, and their algebraic characterisation as languages with aperiodic syntactic monoids was obtained by~\citeA{DBLP:journals/iandc/Schutzenberger65a}.
The problem of deciding whether the language of a given DFA $\A$ is $\FO(<)$-definable is known to be \PSpace-complete~\cite{DBLP:journals/iandc/Stern85,DBLP:journals/TCS/ChoHyunh91,DBLP:journals/actaC/Bernatsky97}\footnote{This is also a special case of general results on finite monoids~\cite{Beaudryetal92,fleischeretal18}.}\!. However, the precise complexity of deciding whether a given regular language is in $\ACz$ and \mbox{$\FO(<,\equiv)$}-definable, or in $\ACC$ and $\FO(<,\MOD)$-definable, or $\NCo$-complete and is not $\FO(<,\MOD)$-definable (unless $\ACC = \NCo$) has remained open. It will be the first major problem we address in this article.

The characterisation of regular languages in terms of FO-definability allows us to reformulate the initial problem in terms of FO-rewritability that reduces OMQ answering (under the open world assumption) to model checking various types of FO-formulas: given an \LTL{} OMQ $\q$, how complex is it to decide whether $\q$ is $\FO(<)$-, $\FO(<,\MOD)$- or $\FO(<,\MOD)$-rewritable (that is, $\L(\q)$ is $\FO(<)$-, $\FO(<,\equiv)$- or $\FO(<,\MOD)$-definable)? Note that, by Kamp's Theorem~\cite{phd-kamp,DBLP:journals/corr/Rabinovich14}, $\FO(<)$-rewritability reduces answering \LTL{} OMQs to model checking \LTL-formulas.
$\FO(\RPR)$-rewritability of all \LTL{} OMQs was established by~\citeA{DBLP:journals/ai/ArtaleKKRWZ21} who also provided uniform rewritability results for various classes of \LTL{} OMQs (to be defined below); see Table~\ref{LTL-table}.

\paragraph*{\bf\em Our contribution.}
The first main result of this paper consists of the following parts. 
Let $\lang$ be one of the languages $\FO(<)$, $\FO(<,\equiv)$ or $\FO(<,\MOD)$.
First, using the algebraic characterisation results of~\citeA{DBLP:journals/jcss/Barrington89,DBLP:journals/jcss/BarringtonCST92,Straubing94}, we
obtain criteria for the $\lang$-definability of the language $\L(\A)$ of any given DFA $\A$ in terms of a limited part of the transition monoid of $\A$ (Theorem~\ref{DFAcrit}). Then, using our criteria and generalising the construction of~\citeA{DBLP:journals/TCS/ChoHyunh91},  we show that deciding $\lang$-definability of $\L(\A)$ for any minimal DFA $\A$ is \PSpace-hard (Theorem~\ref{DFAhard}).
Finally, we apply our criteria to give a \PSpace-algorithm deciding $\lang$-definability of $\L(\A)$ for not only any DFA but also any 2NFA $\A$ (Theorem~\ref{thm:2NFA}). 
%This is the first main result of this article.

To investigate $\lang$-rewritability of \LTL{} OMQs $\q = (\TO, \varkappa)$, we follow the classification  of~\citeA{DBLP:journals/ai/ArtaleKKRWZ21}, according to which the axioms of every \LTL{} ontology $\TO$ are given in the clausal form
\begin{equation}\label{normal-dl}
\Lbox\Rbox \big( C_1 \land \dots \land C_k ~\to~ C_{k+1} \lor \dots \lor C_{k+m} \big),
\end{equation}
where the $C_i$ are atoms, possibly prefixed by the temporal operators $\Rnext$, $\Lnext$, $\Rbox$, $\Lbox$.
Given any $\op \in \{\Box,\nxt,\Box\nxt\}$ and $\frag \in \{\bool, \horn, \krom,\core\}$, we denote by $\LTL_{\frag}^{\op}$ the fragment of \LTL{} with clauses~\eqref{normal-dl}, in which the $C_i$ can only use the (future and past) operators indicated in $\op$, and $m\leq 1$ if $\frag = \horn$; $k + m\leq 2$ if $\frag = \krom$; $k + m\leq 2$ and $m \leq 1$ if $\frag = \core$; and arbitrary $k$, $m$ if $\frag = \bool$. If $\op$ is omitted, the $C_i$ are atomic.
An $\LTL_\horn^{\op}$-ontology $\TO$ is linear if, in each of its axioms \eqref{normal-dl}, at most one $C_{i}$, for $1 \le i \le k$, can occur on the right-hand side of an axiom in $\TO$ (is an IDB predicate in datalog parlance).
We distinguish between arbitrary $\LTL^{\op}_\frag$ OMQs $\q = (\TO, \varkappa)$, where $\TO$ is any $\LTL^{\op}_\frag$ ontology and $\varkappa$ any \LTL-formula with $\nxt$-, $\Box$- and $\Diamond$-operators; positive OMQs (OMPQs), where $\varkappa$ is $\to, \neg$-free; existential OMPQs (OMPEQs) with $\Box$-free $\varkappa$; and atomic OMQs (OMAQs) with atomic $\varkappa$.

\begin{table}[t]\centering\small
%\begin{spacing}{1}
\begin{tabular}{l|c|c|c}\toprule
 class of OMQs %& \multicolumn{2}{c|}{in \ACz} & {\small in \ACC\,/\,\NCo-comp.} \\
 %
% \hhline{~|-|-|-}
 & $\FO(<)$  & { $\FO(<,\equiv)$} & { $\FO(<,\mathsf{MOD})$}\\
\hline
$\LTL_\horn\Xnext$ OMAQs & \multicolumn{3}{c}{\multirow{2}{*}{$\ExpSpace$ [Th.\,\ref{hornExpSpacehard}]}} \\
 $\LTL_{\bool}\Xallop$ OMQs& \multicolumn{3}{c}{} \\\hline
$\LTL_\krom$ OMPEQs & \multicolumn{3}{c}{$\ExpSpace$ [Th.\,\ref{thm:kromlowerexp}]} \\
\hline\hline
linear $\LTL_\horn\Xnext$ OMAQ & \multicolumn{3}{c}{$\PSpace$ [Th.\,\ref{th:linear-omaq-pspace}]} \\\hline
linear $\LTL_\horn\Xnext$ OMPQs & $\PSpace$ [Th.\,\ref{th:lin-ompq-fo-pspace}]   & $\PSpace$ [Th.\,\ref{th:lin-ompq-foe-pspace}]& ? \\
\hline\hline
$\LTL_\krom^{\smash{\scriptscriptstyle\bigcirc}}$ OMAQs & $\coNP$ [Th.\,\ref{thm:coNP}]& \multirow{3}{*}{all in \ACz~\cite{DBLP:journals/ai/ArtaleKKRWZ21} }& \multirow{3}{*}{--} \\
%\hline
$\LTL_\core\Xnext$ OMPEQs & $\Pi^p_2$ [Th.\,\ref{thm:corepi-upper}]   & & \\
%\hline
$\LTL_\core\Xnext$ OMPQs & \PSpace{} [Th.\,\ref{thm:ompqsforcore}]   & & \\
\bottomrule
\end{tabular}
\caption{Complexity of deciding FO-rewritability of \LTL{} OMQs.}
\label{tab:main-res}
%\end{spacing}
\end{table}

The second main result of this article is the tight complexity bounds on deciding $\lang$-rewritability (and so data complexity) of \LTL{} OMQs from the classes defined above, which are summarised in Table~\ref{tab:main-res}. The \ExpSpace{} upper bound in the first stripe is shown using our $\lang$-definability criteria and exponential-size NFAs for \LTL{} akin to those of~\citeA{DBLP:books/el/07/Vardi07}; in the proof of the matching lower bound, an exponential-size automaton is encoded in a polynomial-size ontology. If the ontology in an $\LTL_\horn\Xnext$ OMAQ is linear, we show that its language (\yes-data instances) can be captured by a 2NFA with polynomially-many states, which allows us to reduce the complexity of deciding $\lang$-rewritability to \PSpace. However, for linear $\LTL_\horn\Xnext$ OMPQs (with more expressive queries $\varkappa$), the existence of polynomial-state 2NFAs remains open; instead, we show how the structure of the canonical models for $\LTL_\horn\Xnext$-ontologies can be utilised to yield a \PSpace{} algorithm. In the third stripe of the table, we deal with binary-clause ontologies. The \coNP-completeness of deciding FO-rewritability of $\LTL_\krom^{\smash{\scriptscriptstyle\bigcirc}}$ OMAQs is established using unary NFAs and results of~\citeA{DBLP:conf/stoc/StockmeyerM73}. The $\Pi^p_2$-completeness for $\LTL_\core\Xnext$ OMPEQs (without $\lor$ in ontologies but with $\land$, $\lor$, $\Diamond$ in queries) and the \PSpace-completeness for $\LTL_\core\Xnext$ OMPQs (admitting $\Box$ in queries, too) can be explained by the fact that the combined complexity of answering such OMPEQs and OMPQs is \NP-hard
%, respectively, \NP- and $\text{P}^{\NP}[O(\log n)]$-complete (like validity in Carnap's modal logic~\cite{DBLP:journals/jacm/Gottlob95})
rather than tractable as in the previous case.

It might be of interest to compare the results in Table~\ref{tab:main-res} with the complexity of deciding FO-rewritability (boundedness) of datalog queries and OMQs with a DL ontology and a conjunctive (CQ) or atomic query, which is:
\begin{itemize}
\item undecidable for linear datalog queries with binary predicates and for ternary linear datalog queries with a single recursive rule~\cite{DBLP:journals/jlp/HillebrandKMV95,DBLP:journals/siamcomp/Marcinkowski99};

\item 2\textsc{NExpTime}-complete for monadic disjunctive datalog queries and OMQs with an $\mathcal{ALC}$ ontology and a CQ~\cite{DBLP:conf/kr/BourhisL16,DBLP:journals/lmcs/FeierKL19};

\item 2\textsc{ExpTime}-complete for monadic datalog queries~\cite{DBLP:conf/stoc/CosmadakisGKV88,DBLP:conf/lics/BenediktCCB15}, even with a single recursive rule~\cite{DBLP:conf/pods/KikotKPZ21};

\item \textsc{NExpTime}-complete for OMQs with an ontology in any DL between $\mathcal{ALC}$  and $\mathcal{SHIU}$ and an atomic query~\cite{DBLP:journals/tods/BienvenuCLW14};

\item \textsc{ExpTime}-complete for OMQs with an $\mathcal{EL}$ ontology~\cite{DBLP:conf/ijcai/LutzS17,DBLP:journals/corr/abs-1904-12533};

\item \PSpace-complete for linear mon\-adic programs~\cite{DBLP:conf/stoc/CosmadakisGKV88,DBLP:journals/ijfcs/Meyden00};

\item \NP-complete for linear monadic single rule programs~\cite{DBLP:conf/pods/Vardi88}.
\end{itemize}

\paragraph{\bf\em Structure.} The article is organised in the following way. In the next  section, we introduce and illustrate by multiple examples \LTL{} OMQs and their semantics. We also briefly remind the reader of the basic algebraic and automata-theoretic notions that will be used later on in this article and show that FO-rewritability of \LTL{} OMQs is equivalent to FO-definability of certain regular languages. In Section~\ref{sec:groups}, we obtain algebraic characterisations of FO-definability, which are used in Sections~\ref{sec:reglang} and~\ref{sec:2nfa} to show that deciding each type of FO-definability of regular languages is \PSpace-complete. In Sections~\ref{sec:LTL-genearal}-\ref{sec:others}, we prove the complexity bounds from Table~\ref{tab:main-res} and then conclude in Section~\ref{sec:conclusion}. Some of the technical results and constructions are given in the appendices to the article.

%*************

% !TEX root = JAIR.tex

\section{Preliminaries}\label{prelims}

\paragraph{\bf\em Temporal ontology-mediated queries.}
In our setting, the alphabet of linear temporal logic \LTL{} comprises a set of \emph{atomic concepts} (or simply \emph{atoms}) $A_i$, $i < \omega$. \emph{Basic temporal concepts}, $C$, are defined by the grammar
\begin{equation*} %\label{tem-conc}
C \ \ ::=\ \ A_i  \ \ \mid\ \ \Rbox C \ \ \mid \ \ \Lbox C \ \ \mid\ \ \Rnext C \ \ \mid \ \ \Lnext C
\end{equation*}
with the \emph{temporal operators} $\Rbox$/$\Lbox$ (always in the future/past) and $\Rnext$/$\Lnext$ (at the next/pre\-vious moment). A \emph{temporal ontology}, $\TO$, is a finite set of \emph{axioms} of the form\footnote{From now on, to improve readability we make the prefix $\Lbox\Rbox$ in axioms implicit (which is taken into account in their semantics).}
\begin{equation}\label{axiom1}
C_1 \land \dots \land C_k ~\to~ C_{k+1} \lor \dots \lor C_{k+m},
\end{equation}
where $k,m \ge 0$, the $C_i$ are basic temporal concepts, the empty $\land$ is $\top$, and the empty $\lor$ is $\bot$.
Following the \DL{} convention~\cite{ACKZ09,DBLP:conf/ijcai/ArtaleKKRWZ15}, we classify ontologies by the shape of their axioms and the temporal operators that can occur in them. Suppose $\frag \in \{\horn, \krom,\core,\bool\}$ and $\op \in \{\Box, \nxt,\Box\nxt\}$. The axioms of an $\LTL_\frag^{\op}$-\emph{ontology} may only contain occurrences of the (future and past) temporal operators in $\op$ and satisfy the following restrictions on $k$ and $m$ in~\eqref{axiom1} indicated by $\frag$:
\textit{horn} requires $m\leq 1$,
\textit{krom} requires $k + m\leq 2$,
\textit{core} both $k + m\leq 2$ and $m \leq 1$, while
\textit{bool} imposes no restrictions. To illustrate, axioms~\eqref{ax1} and~\eqref{ax2} from Example~\ref{ex:first} are allowed in all of these fragments, \eqref{ax3} is in $\LTL_\core\Xnext$,~\eqref{ax4} can be expressed in $\LTL_\core\Xbox$ and~\eqref{ax40} can be expressed in $\LTL_\krom\Xbox$ as explained in Remark~\ref{rem1} below.

A basic concept is called an \emph{IDB} (intensional database) \emph{concept} in an ontology $\TO$ if its atom occurs on the right-hand side of some axiom in $\TO$. The set of IDB atomic concepts in $\TO$ is denoted by $\textit{idb}(\TO)$. An $\LTL_\horn^{\op}$-ontology is called \emph{linear} if each of its axioms $C_1 \land \dots \land C_k \to D$, where $D$ is either a basic temporal concept $C$ or $\bot$, contains \emph{at most one} IDB concept $C_i$, for $1 \le i \le k$.

A \emph{data instance}---or an \emph{ABox} in description logic parlance---is a finite set $\Abox$ of atoms $A_i(\ell)$, for some \emph{timestamps} $\ell \in \Z$, together with a finite interval $\tem(\Abox) = [m,n] \subseteq \mathbb Z$, the \emph{active domain} of $\Abox$, such that $m \le \ell \le n$, for all $A_i(\ell) \in \Abox$. If $\Abox = \emptyset$, then $\tem(\Abox)$ may also be $\emptyset$. Otherwise, we assume without loss of generality that $m = 0$.  If $\tem(\Abox)$ is not specified explicitly, it is assumed to be either empty or $[0,n]$, where $n$ is the maximal timestamp in $\Abox$.  By a \emph{signature}, $\Xi$, we mean any finite set of atomic concepts. An ABox $\Abox$ is a $\Xi$-\emph{ABox} if $A_i(\ell) \in \Abox$ implies $A_i \in \Xi$.

We query ABoxes by means of \emph{temporal concepts}, $\varkappa$, which are \LTL-formulas built from the atoms $A_i$, Booleans $\land$, $\lor$, $\neg$, temporal operators $\Rnext$, $\Rbox$, $\Rdiamond$ (eventually) and their past-time counterparts $\Lnext$, $\Lbox$, $\Ldiamond$ (previously).
If $\varkappa$ does not contain $\neg$, we call it \emph{positive}; if $\varkappa$ does not contain $\Lbox$ and $\Rbox$ either, we call it \emph{positive existential}.

A \emph{temporal interpretation} is a structure of the form $\I = (\Z, A_0^\I, A_1^\I,\dots)$ with   $A_i^\I\subseteq \Z$, for every $i < \omega$. The \emph{extension} $\varkappa^\I$ of a temporal concept $\varkappa$ in $\I$ is defined inductively as usual in \LTL{} under the `strict semantics'~\cite{gkwz,DBLP:books/cu/Demri2016}:
\begin{align*}
&
(\Rnext \varkappa)^\I  =  \bigl\{\, n\in\Z \mid n + 1\in \varkappa^\I \,\bigr\}, \\
&
(\Rbox \varkappa)^\I  =  \bigl\{\, n\in\Z \mid k\in \varkappa^\I \text{  for all } k>n \,\bigr\}, \\
&
(\Rdiamond \varkappa)^\I  =  \bigl\{\, n\in\Z \mid \text{there is } k>n \text{ with } k\in\varkappa^\I\,\bigr\},
%
%&(\varkappa_1 \U \varkappa_2)^\I  = \bigl\{\, n\in\Z \mid \text{there is } k>n \text{ with } k\in\varkappa_2^\I \text{ and }\\
%&\mbox{} \hspace*{5cm}m\in\varkappa_1^\I \text{ for } n < m < k \,\bigr\},
\end{align*}
and symmetrically for the past-time operators.
We regard $\I,n\models \varkappa$ as synonymous to $n \in \varkappa^\I$.
An axiom~\eqref{normal-dl} is \emph{true} in an interpretation $\I$ if $C_1^{\I} \cap \dots \cap C_k^{\I} ~\subseteq~ C_{k+1}^{\I} \cup \dots \cup C_{k+m}^{\I}$. An interpretation $\I$ is a \emph{model} of $\TO$ if all axioms of $\TO$ are true in $\I$; it is a \emph{model} of $\Abox$ if $A_i(\ell) \in \Abox$ implies $\ell \in A_i^\I$. %We write $\I \models \TO$

An $\LTL^{\op}_\frag$ \emph{ontology-mediated query} (OMQ) is a pair of the form $\q = (\TO, \varkappa)$, where $\TO$ is an $\LTL^{\op}_\frag$ ontology and $\varkappa$ a temporal concept. If $\varkappa$ is positive, we call $\q$ a \emph{positive OMQ} (OMPQ, for short), if $\varkappa$ is positive existential, we call $\q$ a \emph{positive existential OMQ} (OMPEQ), and if $\varkappa$ is an atomic concept, we call $\q$  \emph{atomic} (OMAQ). The set of atomic concepts occurring in $\q$ (in $\TO$) is denoted by $\sig(\q)$ (respectively, $\sig(\TO)$).

We can treat $\q = (\TO, \varkappa)$ as a \emph{Boolean} OMQ, which returns \yes/\no, or as a \emph{specific} OMQ, which returns timestamps from the ABox in question assigned to the free variable, say $x$, in the standard FO-translation of $\varkappa$. In the latter case, we write $\q(x) = (\TO, \varkappa(x))$.
More precisely, the \emph{certain answer} to a Boolean OMQ $\q = (\TO, \varkappa)$  over an ABox $\mathcal A$ is \yes{} if, for every model $\I$ of $\TO$ and $\mathcal{A}$, there is $k \in \Z$ such that $k \in \varkappa^\I$, in which case we write $(\TO, \Abox) \models \exists x\, \varkappa(x)$. If $(\TO, \Abox) \not\models \exists x \varkappa(x)$, the certain answer to $\q$ over $\Abox$ is \no.
We write $(\TO, \Abox) \models \varkappa(k)$, for $k \in \Z$, if $k \in \varkappa^\I$ in all models $\I$ of $\TO$ and $\mathcal{A}$. A \emph{certain answer} to a specific OMQ $\q(x) = (\TO, \varkappa(x))$ over $\Abox$ is any $k \in \tem(\Abox)$ with $(\TO, \Abox) \models \varkappa(k)$.
By the \emph{answering} (or \emph{evaluation}) \emph{problem} for $\q$ or $\q(x)$ we understand the decision problem `$(\TO, \Abox) \models^?\! \exists x \varkappa(x)$' or `$(\TO, \Abox) \models^?\! \varkappa(k)$' with input $\Abox$ or, respectively, $\Abox$ and $k \in \tem(\Abox)$,
We say that $\q$/$\q(x)$ is in a complexity class $\mathcal{C}$ if the answering problem for $\q$/$\q(x)$ is in $\mathcal{C}$.

\begin{example}\label{example1}\em
$(i)$ Suppose $\TO_1 = \{ A \to \Rbox B,\ \Rbox B \to C \}$ and $\q_1 = (\TO_1, C \land D)$. The certain answer to $\q_1$ over $\Abox_1 = \{D(0), B(1), A(1)\}$ is \yes{}, and \no{} over $\Abox_2 = \{D(0), A(1)\}$. The only  answer to $\q_1(x) = \big(\TO_1, (C \land D)(x)\big)$ over $\Abox_1$ is $0$.

$(ii)$ Let $\TO_2 = \{\,\Lnext A \to B, \ \Lnext B \to A,\, A \land B \to \bot\,\}$.
The certain answer to $\q_2 = (\TO_2, C)$ over $\Abox_1 = \{A(0)\}$ is \no{}, and \yes{} over $\Abox_2 = \{A(0), A(1) \}$. There are no certain answers to $\q_2(x) = (\TO_1, C(x))$ over $\Abox_1$, while over $\Abox_2$ the answers are $0$ and $1$.

$(iii)$ Consider next
$\TO_3 = \{\Lnext B_k \land A_0 \to B_k, \, \Lnext B_{1-k} \land A_1 \to B_k \mid k = 0, 1\}.
$
For any word $\avec{e} = e_1\dots e_{n} \in \{0,1\}^n$, let
$$
\Abox_{\avec{e}}  =   \{B_0(0)\} \cup  \{A_{e_i}(i) \mid 0 < i \leq n\} \cup \{E(n)\}.
$$
The certain answer to $\q_3 = (\TO_3,B_0 \land E)$ over $\Abox_{\avec{e}}$ is \yes{} iff the number of 1s in $\avec{e}$ is even.

$(iv)$ Let $\TO_4 = \{A \to \Rnext B\}$ and $\q_4 = (\TO_4, B)$. Then, the answer to $\q_4$ over $\Abox = \{ A(0) \}$ is \yes; however, there are no certain answers to $\q_4(x) = (\TO_4, B(x))$ over $\Abox$.

$(v)$ Finally, suppose $\TO_5 = \{A \to B \lor \Rnext B\}$. The certain answer to $\q_5 = (\TO_5,B)$ over $\Abox = \{ A(0), C(1) \}$ is \yes; however, there are no certain answers to $\q_5(x)$ over $\Abox$. \hfill $\dashv$
\end{example}

Thus, as shown by Example~\ref{example1} $(iv)$ and $(v)$, a Boolean OMAQ $\q = (\TO, B)$ can have an answer \yes{} over an ABox $\Abox$ even though the set of certain answers to the specific OMAQ $\q(x) = (\TO, B(x))$ over $\Abox$ is empty. (Clearly, the existence of certain answers to $\q(x)$ over $\Abox$ implies that the answer to $\q$ over $\Abox$ is \yes{}.) In $(iv)$, the reason for the absence of certain answers to $\q_4(x)$ is that any $k \in \Z$ with $(\TO, \Abox) \models B(k)$ is not in $\tem(\Abox)$. In $(v)$, the reason is that there is no $k \in \Z$ with $(\TO, \Abox) \models B(k)$ even though every model $\I$ of $\TO$ and $\Abox$ contains some $k \in \tem(\Abox) \subseteq \Z$ with $\I, k \models B$.

Two OMQs are called $\Xi$-\emph{equivalent}, for a signature $\Xi$, if they return the same certain answers over any $\Xi$-ABox. Without loss of generality, we assume that, when answering an \LTL{} OMQ $\q$ or $\q(x)$ over $\Xi$-ABoxes, we always have $\Xi \subseteq \sig(\q)$. Indeed, if this is not the case, we can extend the ontology of $\q$ with $|\Xi|$-many dummy axioms of the form $A \to A$ and obtain a $\Xi$-equivalent OMQ.

\begin{remark}\label{rem1}\em
If arbitrary \LTL-formulas (possibly with the until or since operators) in the scope of $\Lbox \Rbox$ are used as axioms of an ontology $\TO$, then one can construct an $\LTL_{\bool}\Xallop$ ontology $\TO'$ that is a \emph{model-conservative extension} of $\TO$~\cite<e.g.,>{FisherDP01,AKRZ:LPAR13}. For example, let $\TO'$ be the result of replacing axiom~\eqref{ax40} in $\TO$ from Example~\ref{ex:first} by two axioms $\textit{Malfunction} \land \Rbox X \to \bot$ and \mbox{$\top \to X \lor \textit{Diagnostics}$}, for a fresh $X$. Then the OMQ $\q = (\TO, \varkappa)$ is $\sig(\q)$-equivalent to $\q' = (\TO', \varkappa)$. Axiom~\eqref{ax4} can be replaced with $\nm{Diagnostics} \to \Lbox Y$ and $\nm{Disabled} \land Y \to \bot$ with fresh $Y$.

Similarly, every $\LTL_{\horn}\Xallop$ OMQ $\q = (\TO, \varkappa)$ has the same certain answers over any $\sig(\q)$-ABox as an $\LTL_{\horn}\Xallop$ OMQ $\q' = (\TO', \varkappa)$, in which $\TO'$  contains axioms of the form $\avec{C}\to \bot$ or $\avec{C}\to B$ only, for some $\avec{C}= C_1 \land \dots \land C_n$ and an atomic concept $B$. For example, the axiom $A \to \Rnext \Rbox B$ can be replaced by $\Lnext A \to X$, $\Lnext X \to X$, and $\Lnext X \to B$ with fresh $X$. Note also that if $\TO$ is a linear $\LTL_{\horn}\Xnext$ ontology, then $\TO'$ is also a linear $\LTL_{\horn}\Xnext$ ontology.
\end{remark}

We now introduce the central notion of this article, which reduces answering OMQs to evaluating FO-formulas over structures representing ABoxes.

Let $\lang$ be a class of FO-formulas that can be interpreted over finite linear orders. A Boolean OMQ $\q$ is $\lang$-\emph{rewritable over $\Xi$-ABoxes} if there is an $\lang$-sentence $\rq$ such that, for any $\Xi$-ABox~$\Abox$, the certain answer to $\q$ over $\Abox$ is \yes{} iff $\SA \models \rq$. Here, $\SA$ is a structure\footnote{We allow structures with the empty domain, in which $\exists x\, (x = x)$ is false~\cite<e.g.,>{DBLP:books/daglib/0071316}.} with domain $\tem(\Abox)$ ordered by $<$, in which $\SA \models A_i(\ell)$ iff $A_i(\ell)\in \Abox$.
A specific OMQ $\q(x)$ is $\lang$-\emph{rewritable over $\Xi$-ABoxes} if there is an $\lang$-formula $\rq(x)$ with one free variable $x$ such that, for any $\Xi$-ABox~$\Abox$, $k$ is a certain answer to $\q(x)$ over $\Abox$ iff $\SA \models \rq(k)$. The sentence $\rq$ and formula $\rq(x)$ are called $\lang$-\emph{rewritings} of the OMQs $\q$ and $\q(x)$, respectively.

We require four languages $\lang$ for rewriting \LTL{} OMQs, which are listed below in order of increasing expressive power:
\begin{description}
\item[$\FO(<)$:] (monadic) first-order formulas with the built-in predicate $<$ for order;

\item[$\FO(<,\equiv)$:] $\FO(<)$-formulas with unary predicates $x \equiv 0\, (\text{mod}\ N)$, for all $N >1$;

\item[$\FO(<,\MOD)$:] $\FO(<)$-formulas with  quantifiers $\exists^N\! x$, for all $N >1$, that are defined by taking $\SA \models \exists^N\! x\, \psi(x)$ iff the cardinality of $\{ n \in \tem(\Abox) \mid \SA \models \psi(n) \}$ is divisible by $N$ (note that $x \equiv 0\, (\text{mod}\ N)$ is definable as $\exists^N\! y\, (y < x) $);

\item[$\FO(\RPR)$:] $\FO(<)$ with relational primitive recursion~\cite{DBLP:journals/iandc/ComptonL90}.
\end{description}
As well-known, $\FO(<,\equiv)$ is strictly more expressive than $\FO(<)$ and strictly less expressive than $\FO(<,\MOD)$, which is illustrated by the examples below.
\begin{example}\label{example2}\em
$(i)$ An $\FO(<)$-rewriting of $\q_1(x)$ from Example~\ref{example1} is
\begin{equation*}
\rq_1(x) = D(x) \land [C(x) \lor \exists y\, (A(y) \land{} \forall z \, ((x < z \leq y) \to B(z)))],
\end{equation*}
$\exists x\, \rq_1(x)$ is an $\FO(<)$-rewriting of $\q_1$.

$(ii)$ An $\FO(<, \equiv)$-rewriting of $\q_2(x)$ is
\begin{multline*}
\rq_2(x) =~ C(x) \lor \exists x,y \, [(A(x) \land A(y) \land{} \mathsf{odd}(x,y)) \lor{}\\ (B(x) \land B(y) \land \mathsf{odd}(x,y)) \lor{}
(A(x) \land B(y) \land \neg\mathsf{odd}(x,y))],
\end{multline*}
where
$\mathsf{odd}(x,y) = \big(x \equiv 0\, (\text{mod}\ 2) \leftrightarrow y \not\equiv 0\, (\text{mod}\ 2)\big)$ implies that $|x - y|$ is odd; $\exists x\, \rq_2(x)$ is an $\mbox{$\FO(<, \equiv)$}$-rewriting of $\q_2$. Recall that $\mathsf{odd}$ is not $\FO(<)$-expressible \cite{Libkin}.

$(iii)$ The OMQ $\q_3$ is not rewritable to an FO-formula with any numeric predicates as PARITY is not in $\ACz$~\cite{DBLP:journals/mst/FurstSS84}; the following sentence is an \mbox{$\FO(<,\MOD)$}-rewriting of $\q_3$:
\begin{multline*}
\rq_3 =  \exists x, y\, \big[ E(x) \land (y \leq x) \land \forall z\, \big((y < z \leq x) \to{}
A_0(z) \lor A_1(z) \big) \land{} \\
\big((B_0(y) \land \exists^2 z\, ((y < z \leq x) \land A_1(z))) \lor{}
(B_1(y) \land \neg \exists^2 z\, ((y < z \leq x) \land A_1(z))) \big) \big].
\end{multline*}

$(iv)$ An $\FO(<)$-rewriting of $\q_4(x)$ is $B(x) \lor A(x-1)$; an $\FO(<)$-rewriting of the Boolean query $\q_4$ is $\rq_4 = \exists x \, (A(x) \lor B(x))$.

$(v)$ $\rq_4$ is also an $\FO(<)$-rewriting of $\q_5$; $B(x)$ is an $\FO(<)$-rewriting of $\q_5(x)$. \hfill $\dashv$
\end{example}

%**************

As shown by~\citeA{DBLP:journals/ai/ArtaleKKRWZ21}, all Boolean and specific \LTL{} OMQs are $\FO(\RPR)$-rewri\-table and specific OMPQs can be  classified syntactically by their rewritability type as shown in Table~\ref{LTL-table}. This means, e.g., that all $\LTL_\core\Xallop$ OMPQs  are $\FO(<,\equiv)$-rewritable, with some of them being not $\FO(<)$-rewritable. It is to be noted that $\FO(<,\MOD)$-rewritable OMQs such as $\q_3$ in Example~\ref{example1} are not captured by these  syntactic classes.

\begin{table}[h]
\centering%
\tabcolsep=5pt%
\begin{tabular}{ccccc}\toprule
     \rule[-3pt]{0pt}{12pt}
     &\multicolumn{2}{c}{OMAQs}
     & \multicolumn{2}{c}{OMPQs}
     \\%\cline{2-5}%
     \rule[-3pt]{0pt}{13pt}$\frag$
     & $\LTL_\frag\Xbox$ & $\LTL_\frag\Xnext$ {\footnotesize and} $\LTL_\frag\Xallop$ & $\LTL_\frag\Xbox$ &  $\LTL_\frag\Xnext$ {\footnotesize and} $\LTL_\frag\Xallop$
     \\\midrule%
     \bool & \multirow{4}{*}{$\FO(<)$ } &   $\FO(\RPR)$  & \multirow{2}{*}{$\FO(\RPR)$}  & \multirow{3}{*}{FO(RPR)}  \\\cmidrule[0pt](lr){3-3}
     \krom &  & $\FO(<,\equiv)$ &&     \\\cmidrule(lr){4-4}
     \horn & & $\FO(\RPR)$ & \multirow{2}{*}{$\FO(<)$ }  & \\\cmidrule(lr){5-5}%\cline{3-3}
     \core & & $\FO(<,\equiv)$ & & $\FO(<,\equiv)$ \\\bottomrule\\
   \end{tabular}
\caption{Rewritability of specific \LTL{} OMQs.}
\label{LTL-table}
\end{table}

Our aim here is to understand how complex it is to decide the optimal type of FO-rewritability for a given \LTL{} OMQ $\q$ over $\Xi$-ABoxes.
As this will rely on an intimate connection between $\lang$-rewritability of OMQs and $\lang$-definability of certain regular languages, we briefly remind the reader of the basic algebraic and automata-theoretic notions that are used in the remainder of the article.

\paragraph{\bf\em Monoids and Groups.}

A \emph{semigroup} is a structure $\Sg=(S,\cdot)$, where $\cdot$ is an associative binary operation.
For $s,s'\in S$ and $n>0$, we set $s^n = \underbrace{s\cdot s \cdot \ldots \cdot s}_n$ and often write $ss'$ for $s\cdot s'$.
An element $s$ of $\Sg$ is \emph{idempotent} if $s^2=s$.
An element $e$ is an \emph{\unit{}} in $\Sg$ if $e\cdot x=x\cdot e=x$ for all $x\in S$
(such an $e$ is unique, if exists). The \unit{} element is clearly idempotent.
A \emph{monoid} is a semigroup with an \unit{} element.
For any element $s$ in a monoid, we set $s^0=e$.
A monoid $\Sg=(S,\cdot)$ is a \emph{group} if, for any $x\in S$, there is $x^-\in S$---the \emph{inverse of} $x$---such that
$x\cdot x^-=x^-\cdot x=e$ (every element of a group has a unique inverse). A group is \emph{trivial} if it has one element, and \emph{nontrivial} otherwise.

Given two groups $\G=(G,\cdot)$ and $\G'=(G',\cdot')$,
a map $h\colon G\to G'$ is a \emph{group homomorphism from $\G$ to $\G'$} if  $h(g_1\cdot g_2)=h(g_1)\cdot' h(g_2)$ for all
$g_1,g_2\in G$. (It is easy to see that any group homomorphism maps the \unit{} of $\G$
to the \unit{} of $\G'$ and preserves the inverses. The set
$\{h(g)\mid g\in G\}$
is closed under $\cdot'$, and so is a group, the \emph{image of $\G$ under $h$\/}.)
$\G$ is a \emph{subgroup of} $\G'$ if $G\subseteq G'$ and
%$\cdot$ is the restriction of $\cdot'$ to $G\times G$.
the identity map $\id_G$ is a group homomorphism.
Given $X\subseteq G$, the \emph{subgroup of $\G$ generated by $X$} is the smallest subgroup of $\G$ containing $X$.
%
%If $\G$ is finite then every element of the subgroup generated by $X$
%can be given as a $\cdot$-term of elements of $X$.
%
The \emph{order} $\ord_\G(g)$ of an element $g$ in $\G$ is
the smallest positive number $n$ with $g^n=e$, which always exists. Clearly, $\ord_\G(g)=\ord_\G(g^-)$ and, if $g^k=e$
then $\ord_\G(g)$ divides $k$. Also,
\begin{equation}\label{idemp}
\mbox{if $g$ is a nonidentity element in a group $\G$, then $g^k\ne g^{k+1}$ for any $k$.}
\end{equation}
A semigroup $\Sg'=(S',\cdot')$ is a \emph{subsemigroup} of a semigroup $\Sg=(S,\cdot)$
if $S'\subseteq S$ and $\cdot'$ is the restriction of $\cdot$ to $S'$.
Given a monoid $\M=(M,\cdot)$ and a set $S\subseteq M$, we say that $S$ \emph{contains the group} $\G=(G,\cdot')$, if
$G\subseteq S$ and $\G$ is a subsemigroup of $\M$.
Note that we do {\bf not} require the \unit{} of $\M$ to be in $\G$, even if it is in $S$.
If $S=M$, we also say that $\M$ \emph{contains the group} $\G$, or $\G$ \emph{is in} $\M$. We call a monoid $\M$ \emph{aperiodic} if it does not contain any nontrivial groups.

Let $\Sg=(S,\cdot)$ be a finite semigroup and $s\in S$. By the pigeonhole principle,
there exist $i,j\geq 1$ such that $i+j\leq |S|+1$ and $s^{i}=s^{i+j}$. Take the minimal such numbers, that is,
let $i_s,j_s\geq 1$ be such that $i_s+j_s\leq |S|+1$  and $s^{i_s}=s^{i_s+j_s}$ but $s^{i_s},s^{i_s+1},\dots,s^{i_s+j_s-1}$ are all different.
Then clearly $\G_s=(G_s,\cdot)$, where $G_s=\{s^{i_s},s^{i_s+1},\dots,s^{i_s+j_s-1}\}$, is a subsemigroup of $\Sg$.
It is easy to see that there is $m\geq 1$ with $i_s\leq m\cdot j_s<i_s+j_s\leq |S|+1$, and so $s^{m\cdot j_s}$ is idempotent. Thus, for every element $s$ in a semigroup $\Sg$, we have the following:
\begin{align}
\label{gini}
& \mbox{there is $n\geq 1$ such that $s^n$ is idempotent;}\\
\label{ginii}
& \mbox{$\G_s$ is a group in $\Sg$ (isomorphic to the cyclic group $\mathbb Z_{j_s}$);}\\
\label{giniii}
& \mbox{$\G_s$ is nontrivial iff $s^n\ne s^{n+1}$ for any $n$.}
\end{align}
%
%\begin{claim}\label{c:gin}
%Suppose that $\M=(M,\cdot)$ is a finite monoid. Given some $m\in M$, let $k_m,\ell_m\geq 1$ be such that
%$m^{k_m}=m^{k_m+\ell_m}$ but $m^{k_m},m^{k_m+1},\dots,m^{k_m+\ell_m-1}$ are all different.
%Then $\M$ contains the group $\G_m=(G_m,\cdot)$, where
%Let $\Sg=(S,\cdot)$ be a finite semigroup, and $s\in S$. Then:
%
%\begin{description}
%\item[\rm $(i)$]
%There is $n$, $1\leq n\leq |S|$, such that $s^n$ is idempotent.
%
%\item[\rm $(ii)$]
% $\G_s$ is a group in $\Sg$;
 %
%\item[\rm $(iii)$]
%$\G_s$ is nontrivial iff $s^n\ne s^{n+1}$ for any $n>0$.
%\end{description}
%
%\end{claim}
%
%\begin{proof}
%$(i)$ It is easy to see that there is some $m\geq 1$ such that $i_s\leq m\cdot j_s<i_s+j_s\leq |S|+1$. Then $s^{m\cdot j_s}$ is idempotent.
%
%$(ii)$
%$s^{m\cdot j_s}$  is the \unit{} element of $\G_s$, and every element in $G_s$ has an inverse in $G_s$.
%
%$(iii)$ We clearly have that $\G_s$ is nontrivial iff $j_s>1$ iff $s^n\ne s^{n+1}$ for any $n>0$.
%\end{proof}
%
%Note that $\G_s$ is isomorphic to the cyclic group $\mathbb Z_{j_s}$.
%
%One can apply these to a particular setting.
%\begin{corollary}\label{co:fp}
Let $\fu \colon Q\to Q$ be a function on a finite set $Q \ne \emptyset$.
For any $p\in Q$, the subset $\{\fu^k(p)\mid k<\omega\}$ with the obvious multiplication is a semigroup,
and so we have:
%
%\begin{description}
%\item[\rm $(i)$]
\begin{align}
\label{fpi}
& \mbox{for every $p\in Q$, there is $n_p\geq 1$ such that $\fu^{n_p}\bigl(\fu^{n_p}(p)\bigr)=\fu^{n_p}(p)$;}\\
\label{fpii}
%\item[\rm $(ii)$]
& \mbox{there exist $q\in Q$ and $n\geq 1$ such that $q=\fu^n(q)$;}\\
\nonumber
& \mbox{for every $q\in Q$, if $q=\fu^k(q)$ for some $k\geq 1$,} \\
\label{fpiii}
& \hspace*{3.5cm}\mbox{then there is $n$, $1\leq n\leq |Q|$, with $q=\fu^n(q)$.}
\end{align}
For a definition of \emph{solvable} and \emph{unsolvable} groups the reader is referred to~\citeA{rotman1999introduction}. In this article, we only use the fact that any homomorphic image of a solvable group is solvable and the Kaplan--Levy criterion \citeyear{kaplan_levy_2010} (generalising Thompson's \citeyear[Corollary~3]{thompson1968}) according to which a finite group $\G$ is unsolvable iff it contains elements $a,b,c$ such that $\ord_\G(a)=2$, $\ord_\G(b)$ is an odd prime,
$\ord_\G(c)>1$ and coprime to both $2$ and $\ord_\G(b)$, and $abc$ is the \unit{} of $\G$.

A one-to-one and onto function on a finite set $S$ is called a \emph{permutation on} $S$.
%All permutations on $S$ form a group with composition, and the identity permutation $\id_S$ as the identity element.
The \emph{order of a permutation} $\fu$ is its order in the group of all permutations on $S$
(whose operation is composition, and its identity element is the identity permutation $\id_S$).
We use the standard cycle notation for permutations.

Suppose that $\G$ is a monoid of $Q\to Q$ functions, for some finite set $Q \ne \emptyset$.
Let $S=\{q\in Q\mid e_\G(q)=q\}$, where $e_\G$ the \unit{} element in $\G$. For every function $\fu$ in $\G$, let $\fu\!\!\restriction_S$ denote
the restriction of $\fu$ to $S$.
Then
\begin{align}
\label{groupini}
& \mbox{$\G$ is a group iff  $\fu\!\!\restriction_S$ is a permutation on $S$, for every $\fu$ in $\G$;}\\
\nonumber
& \mbox{if $\G$ is a group and $\fu$ is a nonindentity element in it, then $\fu\!\!\restriction_S\ne\id_S$ and}\\
\label{groupinii}
& \hspace*{3cm}\mbox{the order of the permutation $\fu\!\!\restriction_S$ divides $\ord_\G(\fu)$.}
\end{align}

%*************
\paragraph{\bf\em Automata, languages, and OMQs.}

A \emph{two-way nondeterministic finite automaton} is a quintuple $\A = (Q, \Sigma, \delta, Q_0, F)$ that consists of an alphabet $\Sigma$, a finite set $Q$ of states with a subset $Q_0 \ne \emptyset$ of initial states and a subset $F$ of accepting states, and a transition function $\delta \colon Q \times \Sigma \to 2^{Q \times \{-1,0,1\}}$ indicating the next state and whether the head should move left ($-1$), right ($1$), or stay put. If $Q_0 = \{q_0\}$ and $|\delta(q, a)| = 1$, for all $q \in Q$ and $a \in \Sigma$, then $\A$ is \emph{deterministic}, in which case we write $\A = (Q, \Sigma, \delta, q_0, F)$.
If $\delta(q, a) \subseteq Q \times \{1\}$, for all $q \in Q$ and $a \in \Sigma$, then $\A$ is a \emph{one-way} automaton, and we write $\delta \colon Q \times \Sigma \to 2^Q$. As usual, DFA and NFA refer to one-way deterministic and non-deterministic finite automata, respectively, while 2DFA and 2NFA to the corresponding two-way automata. Given a 2NFA $\A$, we write $q \to_{a,d} q'$ if $(q', d) \in \delta(q,a)$; given an NFA $\A$, we write $q \to_{a} q'$ if $q' \in \delta(q,a)$.
A \emph{run} of a 2NFA $\A$ is a word in $(Q \times \mathbb N)^*$. A run $(q_0, i_0), \dots, (q_m, i_m)$ is a \emph{run of $\A$ on a word} $w = a_0 \dots a_n \in \Sigma^*$ if $q_0 \in Q_0$, $i_0 = 0$ and there exist $d_0, \dots, d_{m-1} \in \{-1,0,1\}$ such that $q_j \to_{a_{i_j}, d_j} q_{j+1}$ and $i_{j+1} = i_j+d_j$ for all $j$, $0 \leq j < m$. The  run is \emph{accepting} if $q_m \in F$, $i_m = n+1$. $\A$ \emph{accepts} $w \in \Sigma^*$ if there is an accepting run of $\A$ on $w$; the language $\L(\A)$ of $\A$ is the set of all words accepted by $\A$.

Given an NFA $\A$, states $q,q' \in Q$, and $w = a_0 \dots a_n \in \Sigma^*$, we write $q \to_w q'$ if either $w = \varepsilon$ and $q' = q$ or there is a run of $\A$ on $w$ that starts with $(q_0, 0)$ and ends with $(q', n+1)$. We say that a state $q \in Q$ is \emph{reachable} if $q' \to_w q$, for some $q' \in Q_0$ and $w \in \Sigma^*$.
%, and it is \emph{co-accessible} if $q \to_w q'$ for some $q' \in F$ and $w \in \Sigma^*$. The state is $q \in Q$ \emph{biaccessible} if it is accessible and co-accessible.

%Two states $q_1, q_2 \in Q$ are \emph{equivalent} if, for each $w \in \Sigma^*$, we have $q_1 \to_w q'$ for some $q' \in F$ iff $q_2 \to_w q''$ for some $q'' \in F$. A DFA is \emph{minimal} if each of its states is accessible and it has no distinct equivalent states.
%%
%Every DFA $\A = (Q, \Sigma, \delta, q_0, F)$ can be converted to a minimal DFA $\A' = (Q', \Sigma, \delta', q_0', F')$ with $\L(\A) = \L(\A')$ in the following way~\cite{Straubing94}.
%%First, if $|\delta(q,a)| = 0$ for some $q \in Q$ and $a \in \Sigma$, we add a fresh `dummy' state $r$ to $Q$ and transitions $q \to_a r$ for $q$ and $r$ as above, as well as the transitions $r \to_a r$ for all $a \in \Sigma$.
%Let $R = \{q \in Q \mid q \text{ is accessible in }\A\}$ and let $\sim$ be the equivalence relation on $R$. We set $Q' = \{ q/_\sim \mid q \in R\}$ and define $\delta'$ by taking $q/_\sim \to_a q'/_\sim$, where $\{q'\} = \delta(a,q)$, for all $q \in R$ and $a \in \Sigma$ (which is obviously well-defined). Finally, we set $q_0' = q_0/_\sim$ and $F' = \{q/_\sim \mid q \in R \cap F\}$. For any regular language $\L$, all minimal DFAs $\A'$ with $\L(\A') = \L$ are \emph{isomorphic}; we call each such $\A'$ a \emph{minimal DFA of} $\L$.

Given a DFA $\A = (Q, \Sigma, \fu, q_0, F)$ and a
word $w \in \Sigma^\ast$, we define a function $\fu_w \colon Q \to Q$ by taking  $\fu_w(q) = q'$ iff $q \to_w q'$. We also define an equivalence relation $\simm$ on the set $\Qr\subseteq Q$ of reachable states by taking $q\simm q'$ iff,
for every $w \in \Sigma^\ast$, we have $\fu_w(q)\in F$ just in case $\fu_w(q')\in F$. We denote the $\simm$-class of $q$ by $\simclass{q}$, and let
$\simclass{X}=\{\simclass{q}\mid q\in X\}$ for $X\subseteq \Qr$. Define $\fum_w\colon \simclass{\Qr\!}\to \simclass{\Qr\!}$ by taking
$\fum_w(\simclass{q})=\simclass{\fu_w(q)}$.
Then $\bigl(\simclass{\Qr\!},\Sigma,\fum,\simclass{q_0},\simclass{(F\cap \Qr)}\bigr)$ is the \emph{minimal DFA} whose language coincides with the language of $\A$.
Given a regular language $\L$, we denote by $\Amin$ the minimal DFA whose language is $\L$.

The \emph{transition monoid of} a DFA $\A$ is $M(\A) = (\{ \fu_w  \mid w \in \Sigma^\ast\},\cdot)$ with
%$\varepsilon$ is the empty word and
$\fu_v\cdot\fu_w = \fu_{vw}$, for any $v,w$.
The \emph{syntactic monoid $M(\L)$ of $\L$} is the transition monoid $M(\Amin)$ of $\Amin$.
%(it\nz{added} can be exponential in $\A_\L$~\cite{DBLP:journals/eatcs/HolzerK04}).
%The \emph{syntactic monoid} $M(\L)$ of a regular language $\L$~\cite[Chapter V.1]{Straubing94} is isomorphic to the transition monoid of a minimal DFA accepting $\L$.
%
%A monoid is \emph{aperiodic} if it does not contain non-trivial groups\nz{define!} (with the monoid operation).
%
%Let $\A$ be a minimal DFA of $\L$ and $B$ the domain of $M(\L)$.
The \emph{syntactic morphism of} $\L$ is the map $\eta_\L$ from $\Sigma^*$ to the domain of $M(\L)$ defined by $\eta_\L(w) = \fum_w$. We call $\eta_\L$ \emph{quasi-aperiodic} if $\eta_\L(\Sigma^t)$ is aperiodic for every $t<\omega$.

Let $\lang \in \{ \FO(<), \FO(<,\equiv), \FO(<,\MOD)\}$. A language $\L$ over $\Sigma$ is \emph{$\lang$-definable} if there is an $\lang$-sentence $\varphi$ in the signature $\Sigma$, whose symbols are treated as unary predicates, such that, for any $w \in \Sigma^*$, we have $w=a_0\ldots a_n \in \L$ iff $\mathfrak S_w \models \varphi$, where $\mathfrak S_w$ is an FO-structure with domain $\{0,\dots,n\}$ ordered by $<$, in which $\mathfrak S_w \models a(i)$ iff $a=a_i$, for $0\leq i\leq n$.

%This definition makes sense since, given a word $w\in\Sigma^*$ and a position $i < |w|$, we can compute the $i$th bit in $f(w)$ in $\LogTime$.\nz{why logtime?}
%

%****************

%By~\cite[Theorem~5]{DBLP:journals/jcss/Barrington89}, the word problem for a non-solvable group is $\NCo$-hard for \ACz-reductions. Thus, if the syntactic monoid of a regular language contains a non-solvable group, the language is \NCo-hard. On the other hand, it follows from~\cite[Theorem 6]{DBLP:journals/jcss/Barrington89} that, if the syntactic monoid does not have unsolvable groups, it is in \ACC.

%\begin{proposition}
%Classes \ACz\ and \ACC\ are closed under boolean operations, concatenation, and length-preserving homomorphisms.
%\end{proposition}

%************

Table~\ref{tab:algebra} summarises the known results that connect definability of a regular language $\L$ with properties of the syntactic monoid $M(\L)$ and syntactic morphism $\eta_\L$~\cite{DBLP:journals/jcss/BarringtonCST92} and with its circuit complexity under a reasonable binary encoding of $\L$'s alphabet~\cite<e.g.,>[Lemma~2.1]{DBLP:journals/actaC/Bernatsky97} and the assumption that $\ACC \ne \NCo$. We also remind the reader that a regular language is $\FO(<)$-definable iff it is star-free~\cite{Straubing94}, and that $\ACz \subsetneqq \ACC \subseteq \NCo$~\cite{Straubing94,DBLP:books/daglib/0028687}.

\begin{table}[th]
\centering
%\begin{tabular}{|m{3cm}|m{5.5cm}|m{2.7cm}|}
\begin{tabular}{c|c|c}\toprule
definability of $\L$ & algebraic characterisation of $\L$ & circuit complexity\\
\hline
$\FO(<)$& $M(\L)$ is aperiodic & \multirow{2}{*}{in \ACz}\\
\hhline{|-|-|~|}
$\FO(<,\equiv)$& $\eta_\L$ is quasi-aperiodic &  \\
\hline
$\FO(<,\MOD)$ &all groups in $M(\L)$ are solvable &in \ACC\\
\hline
$\FO(\RPR)$ &arbitrary $M(\L)$& in \NCo\\
\hline
\hline
%$\FO(\RPR)$ and
not in $\FO(<,\MOD)$  & $M(\L)$ has an unsolvable group & \NCo-hard\\
\bottomrule
\end{tabular}
\caption{Definability, algebraic characterisations and circuit complexity of regular language $\L$, where $M(\L)$ is the syntactic monoid and $\eta_\L$ the syntactic morphism of $\L$.}
\label{tab:algebra}
%\end{spacing}
\end{table}

We are now in a position to establish the connection between the rewritability of temporal OMQs and definability of regular languages mentioned above.
For any OMQ $\q$ and $\Xi\subseteq\sig(\q)$, we regard $\Sigma_{\Xi} = 2^{\Xi}$ as an \emph{alphabet}.
Any $\Xi$-ABox $\Abox$ can be given as a $\Sigma_{\Xi}$-word $w_\Abox = a_0 \dots a_n$ with $a_i = \{A \mid A(i) \in \Abox\}$.
Conversely, any $\Sigma_{\Xi}$-word $w = a_0 \dots a_n$ gives the ABox $\Abox_{w}$ with $\tem(\Abox_{w}) = [0,n]$ and $A(i) \in \Abox_{w}$ iff $A \in a_i$. The word $\emptyset$ corresponds to $\Abox_\emptyset = \emptyset$ with $\tem(\Abox_\emptyset) = [0,0]$.
%
%Therefore\nb{V: added}, given a signature $\Xi$, we treat $\Abox$ as a $\Sigma_{\Xi}$-word and as $\Xi$-ABox depending on context.
%
The \emph{language} $\L_\Xi(\q)$ is defined to be the set of $\Sigma_{\Xi}$-words  $w_\Abox$ with a \yes-answer to $\q$ over $\Abox$.
For a specific OMQ $\q(x)$, we take $\Gamma_\Xi=\Sigma_\Xi\cup\Sigma_\Xi'$ with a disjoint copy $\Sigma_{\Xi}'$ of $\Sigma_\Xi$ and represent a pair $(\Abox,i)$ with a $\Xi$-ABox $\Abox$ and $i\in\tem(\Abox)$ as a $\Gamma_\Xi$-word $w_{\Abox,i} = a_0 \dots a_i'\dots a_n$, where $a_i'= \{A' \mid A(i) \in \Abox\}\in\Sigma_\Xi'$ and $a_j = \{A \mid A(j) \in \Abox\}\in\Sigma_\Xi$, for $j\neq i$.
The \emph{language} $\L_\Xi(\q(x))$ is the set of $\Gamma_\Xi$-words $w_{\Abox,i}$ such that $i$ is  a certain answer to $\q(x)$ over $\Abox$. The following is proved similarly to Vardi and Wolper's~\citeyear[Theorem~2.1]{VardiW86}.

\begin{theorem}\label{Prop:rewr-def}
Let $\q = (\TO, \varkappa)$ be a Boolean and $\q(x) = (\TO, \varkappa(x))$ a specific OMQ. Then

$(i)$ both $\L_\Xi(\q)$ and $\L_\Xi(\q(x))$ are regular languages\textup{;}

$(ii)$ for any $\lang \in \{ \FO(<), \FO(<,\equiv), \FO(<,\MOD)\}$ and $\Xi\subseteq\sig(\q)$, the OMQ $\q$ is $\lang$-rewritable over $\Xi$-ABoxes iff $\L_\Xi(\q)$ is $\lang$-definable\textup{;} similarly, $\q(x)$ is $\lang$-rewritable over $\Xi$-ABoxes iff $\L_\Xi(\q(x))$ is $\lang$-definable.
\end{theorem}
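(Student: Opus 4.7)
My plan follows the automata-theoretic template of \citeA{VardiW86}, adapted to finite data instances with models interpreted over $(\mathbb{Z}, <)$.

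For part $(i)$, I would construct an NFA $\A_{\q}^-$ recognising the complement $\Sigma_\Xi^* \setminus \L_\Xi(\q)$, that is, the set of words $w_\Abox$ for which \emph{some} model $\I$ of $\TO$ and $\Abox$ satisfies $\I \not\models \exists x\, \varkappa$. Take the states to be "types" $\tau$: maximal propositionally consistent subsets of the closure of $\TO \cup \{\varkappa\}$ under subformulas and single negation, augmented with a one-shot flag that records whether a witness for $\varkappa$ has been realised. Transitions between types enforce the standard one-step \LTL-consistency conditions for $\Rnext$/$\Lnext$, the propagation of $\Rbox$/$\Lbox$, and the local satisfaction of the $\Lbox\Rbox$-prefixed axioms of $\TO$; on reading the $i$-th letter $a_i$, the automaton restricts to types whose atomic projection coincides with $a_i$. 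Since the ABox has finite active domain $[0,n]$ but models live on all of $\mathbb{Z}$, the infinite past and future tails are summarised by precomputed sets $I$ and $F$ of types from which an infinite backward, respectively forward, sequence consistent with $\TO$ and avoiding a $\varkappa$-witness exists; these sets are computable by the cycle-detection/fixed-point arguments underlying the \PSpace-decidability of \LTL-satisfiability over $\mathbb{Z}$. Taking the initial states of $\A_{\q}^-$ to be $I$ and the accepting states to be $F$ (with the flag off) yields $\L_\Xi(\q) = \Sigma_\Xi^* \setminus \L(\A_{\q}^-)$, which is therefore regular.

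For part $(ii)$, both directions reduce to syntactic translations between FO-formulas over the signature $\Xi$ (interpreted on $\SA$) and over the alphabet $\Sigma_\Xi = 2^\Xi$ (interpreted on $\mathfrak S_{w_\Abox}$); crucially, these two structures share the same linearly ordered domain $\tem(\Abox)$, and every built-in predicate used in $\lang$ refers only to this order. For the "only if" direction, given an $\lang$-rewriting $\rew$ of $\q$, I replace each atomic subformula $A(t)$ by $\bigvee_{a \in \Sigma_\Xi,\, A \in a} a(t)$; this yields an $\lang$-sentence defining $\L_\Xi(\q)$, because $\SA \models A(i)$ iff the $i$-th letter of $w_\Abox$ contains $A$. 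For the "if" direction, given an $\lang$-sentence $\varphi$ defining $\L_\Xi(\q)$, I replace each letter-predicate $a(t)$ by the characteristic formula $\bigwedge_{A \in a} A(t) \land \bigwedge_{A \in \Xi \setminus a} \neg A(t)$, obtaining the required $\lang$-rewriting. Specific OMQs $\q(x)$ are handled analogously over $\Gamma_\Xi = \Sigma_\Xi \cup \Sigma'_\Xi$: the automaton is augmented with a one-shot flag triggered when a primed letter is read, forcing the current type to witness $\varkappa$ there; in the rewriting translations, atoms $A(x)$ at the free variable are matched to the uniquely primed position via $\bigvee_{a' \in \Sigma'_\Xi,\, A \in a'} a'(x)$, and symmetrically in the reverse direction.

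The main technical obstacle lies in the construction of the initial and final sets $I$, $F$ in part $(i)$: they must correctly encode the \emph{existence} of extensions of $w_\Abox$ to models of $\TO$ over all of $\mathbb{Z}$ that avoid any $\varkappa$-witness, requiring a careful treatment of the interplay between the universal $\Lbox\Rbox$-axioms of $\TO$, the existential nature of the query, and the two-sided unboundedness of the time line. Once this is in place, regularity and the equivalence in part $(ii)$ follow by standard adaptations of the techniques in \citeA{VardiW86}.
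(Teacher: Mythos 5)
Your proposal is correct and follows essentially the same route as the paper: an NFA over type-states recognising the complement of $\L_\Xi(\q)$, with the infinite tails absorbed into the initial/accepting conditions (the paper phrases your sets $I$ and $F$ as the types $\tp$ for which $\tp\cup\{\Lbox\neg\varkappa\}$, respectively $\tp\cup\{\Rbox\neg\varkappa\}$, is consistent with $\TO$), and the same two syntactic substitutions $A(y)\mapsto\bigvee_{A\in a}a(y)$ and $a(y)\mapsto\bigwedge_{A\in a}A(y)\land\bigwedge_{A\notin a}\neg A(y)$ for part $(ii)$. The only cosmetic difference is your one-shot flag, which the paper avoids by simply restricting all states to types containing $\neg\varkappa$.
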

\begin{proof}
$(i)$ Let $\sub_{\q}$ (or $\sub_{\TO}$) be the set of temporal concepts occurring in $\q$ (respectively, $\TO$) and their negations. A \emph{type for} $\q$ (respectively, $\TO$) is any maximal subset $\tp \subseteq \sub_{\q}$ (respectively, $\tp \subseteq \sub_{\TO}$) \emph{consistent} with $\TO$ in the sense that all formulas in $\tp$ are true at some point of a model of $\TO$. Let $\Type$ be the set  of all types for $\q$.
%
%\label{eq:path:until}
%$C \mathcal U D \in \tau_1$ iff $D \in \tau_2$ or $C \mathcal U D, C \in \tau_2$.
%
Define an NFA $\A$ over $\Sigma_\Xi$ whose language $\L(\A)$ is $\Sigma_\Xi^* \setminus \L_\Xi(\q)$. Its states are $Q_{\neg \varkappa}=\{\tp \in\Type \mid \neg\varkappa\in\tp\}$. The transition relation $\to_a$, for $a \in \Sigma_\Xi$,  is defined by taking $\tp_1 \to_a \tp_2$ if the following conditions hold:
\begin{description}
\item[\rm (a)] $a \subseteq \tp_2$,

\item[\rm (b)] $\Rnext C \in \tp_1$ iff $C \in \tp_2$, for every $\Rnext C \in \sub_{\q}$, 

\item[\rm (c)] $\Rbox C \in \tp_1$ iff $C \in \tp_2$ and $\Rbox C \in \tp_2$, for every $\Rbox C \in \sub_{\q}$, 

\item[\rm (d)] $\Rdiamond C \in \tp_1$ iff $C \in \tp_2$ or $\Rdiamond C \in \tp_2$, for every $\Rdiamond C \in \sub_{\q}$, 
\end{description}
and symmetrically for the past-time operators.
The initial (accepting) states are those $\tp \in Q_{\neg \varkappa}$ for which $\tp \cup \{\Lbox \neg \varkappa\}$ (and, respectively, $\tp \cup \{\Rbox \neg \varkappa\}$) is consistent with $\TO$.
Then $w \in \L(\A)$ iff $(\TO, \Abox_w) \not\models \exists x \,\varkappa(x)$, for any $w \in \Sigma_\Xi^*$. Indeed, if $w \in \L(\A)$, we take an accepting run $\tp_0,\dots,\tp_n$ of $\A$ on $w$, a model $\I^-$ of $\TO$ with $\I^-,k \models \tp_0 \cup \{\Lbox \neg \varkappa\}$, a model $\I^+$ of $\TO$ with $\I^+,l \models \tp_n \cup \{\Rbox \neg \varkappa\}$, for some $k,l \in \mathbb Z$, and construct a new interpretation $\I$ that has the types $\tp_0,\dots,\tp_n$ in the interval $[0,n]$, before (after) which it has the same types as in $\I^-$ in $(-\infty,k)$ (respectively, $\I^+$ on $(l,\infty)$). One can readily check that $\I$ is a model of $\TO$ and $\Abox_w$ such that $\varkappa^\I = \emptyset$, and so $(\TO, \Abox_w) \not\models \exists x \,\varkappa(x)$. The opposite direction is obvious.

To show that $\L_\Xi(\q(x))$ is regular, we observe first that the language $\L$ over $\Gamma_\Xi$ comprising words of the form $w_{\Abox, i}$, for all non-empty $\Xi$-Aboxes $\Abox$ and $i \in \tem(\Abox)$, is regular. Thus, it suffices to define an NFA $\A$ over $\Gamma_\Xi$ such that $\L_\Xi(\q(x)) = \L \setminus \L(\A)$. The set of states in $\A$ is $\Type \cup \Type'$ with a disjoint copy $\Type'$ of $\Type$. The set of initial states is $\Type$ and the set of accepting states is $\Type'$. The transition relation $\to_a$, for $a \in \Sigma_\Xi$,  is defined by taking $\tp_1 \to_a \tp_2$ if either $\tp_1, \tp_2 \in \Type$ or $\tp_1, \tp_2 \in \Type'$ and conditions (a)--(d) are satisfied; for $a' \in \Sigma_\Xi'$, we set $\tp_1 \to_{a'} \tp_2$ if $\tp_1 \in \Type$, $\tp_2 \in \Type'$, $\neg \varkappa \in \tp_2$, $a' \subseteq \tp_2$, and (b)--(d) hold. It is easy to see that, for any $\Xi$-ABox $\Abox$ and $i \in \tem(\Abox)$, there exists a model $\I$ of $\TO$ and $\Abox$ with $i \not \in \varkappa^\I$ iff $w_{\Abox, i} \in \L(\A)$.

The proof of $(ii)$ is easy and can be found in Appendix~\ref{app:thm5}.
\end{proof}

Note that the number of states in the NFAs in the proof above is $2^{O(|\q|)}$ and that they can be constructed in exponential time in the size $|\q|$ of $\q$ as $\LTL$-satisfiability is in \PSpace.

By Theorem~\ref{Prop:rewr-def}, we can reformulate  the evaluation problem for $\q$ and $\q(x)$ over $\Xi$-ABoxes as the \emph{word problem} for the regular languages $\L_\Xi(\q)$ and $\L_\Xi(\q(x))$.
%
%Furthermore\nb{V: I think this and next sentence can be removed; what do you think?}, to make circuit complexity applicable to our languages, we can assume that the alphabets $\Sigma_\Xi$ and $\Gamma_\Xi$ of $\L_\Xi(\q)$ and $\L_\Xi(\q(x))$ are encoded in binary in a way preserving the properties of languages from Table~\ref{tab:algebra}. For example, one can take an encoding similar to that in~\cite[Lemma~2.1]{DBLP:journals/actaC/Bernatsky97}.
%
Then Table~\ref{tab:algebra} yields the following correspondences between the data complexity of answering and FO-rewritability of Boolean and specific \LTL{} OMQs $\q$:
\begin{description}
\item $\q$ is $\FO(<,\equiv)$-rewritable iff it can be answered in $\ACz$;

\item $\q$ is $\FO(<,\MOD)$-rewritable iff it can be answered in $\ACC$;

\item $\q$ is not $\FO(<,\MOD)$-rewritable iff answering $\q$ in $\NCo$-complete (unless $\ACC = \NCo$);

\item $\q$ is $\FO(<,\RPR)$-rewritable iff it can be answered in $\NCo$\!.
\end{description}

% !TEX root = JAIR.tex

\section{Characterising FO-rewritability of regular languages}\label{sec:groups}

In this section, we show that the algebraic characterisations of FO-definability of $\L(\A)$ in   Table~\ref{tab:algebra}
can be captured by localisable properties
of the transition monoid of $\A$.
Note that Theorem~\ref{DFAcrit}~$(i)$ was already observed by \citeA{DBLP:journals/iandc/Stern85} and used in
proving that $\FO(<)$-definability of $\L(\A)$ is \PSpace-complete \cite{DBLP:journals/iandc/Stern85,DBLP:journals/TCS/ChoHyunh91,DBLP:journals/actaC/Bernatsky97}; criteria $(ii)$ and $(iii)$ of $\FO(<,\equiv)$- and $\FO(<,\MOD)$-definability are novel.

%
%The algebraic criteria of Table~\ref{tab:algebra} yield obvious algorithms deciding $\lang$-definability of the language $\L(\A)$ of a given DFA $\A$. However, these algorithms are not optimal since the monoid $M(\L(\A))$ is of exponential size. Theorem~\ref{DFAcrit} gives subtler characterisations of non-$\lang$-definability in terms of a fixed number of transitions $\fu_w$.
%
%We now prove algebraic criteria of $\lang$-definability of the language of a DFA, which will be used later on for establishing the complexity of deciding $\lang$-definability.

\begin{theorem}\label{DFAcrit}
For any DFA $\A=(Q,\Sigma,\delta,q_0,F)$, the following criteria hold\textup{:}
\begin{description}
\item[$(i)$] $\L(\A)$ is not $\FO(<)$-definable iff  $\A$ contains a nontrivial cycle, that is, there exist a word $u\in\Sigma^\ast$, a
%reachable
state $q\in \Qr$, and a number $k\leq|Q|$ such that $q\not\simm \fu_u(q)$ and $q= \fu_{u^k}(q)$\textup{;}

\item[$(ii)$] $\L(\A)$ is not $\FO(<,\equiv)$-definable iff there are words $u,v\in\Sigma^\ast$, a
%reachable
state $q\in \Qr$, and a number $k\leq |Q|$ such that $q\not\simm\fu_u(q)$, $q=\fu_{u^k}(q)$, $|v|=|u|$, and
%$\fu_{u}(q)=\fu_{uv}(q)$.
$\fu_{u^i}(q)=\fu_{u^iv}(q)$, for every $i<k$\textup{;}

%there exist  $u,v\in\Sigma^*$, an accessible state $q$, and a number $k\leq|Q|$ such that $|u|=|v|$, $q\not\sim \delta_u(q)$, $q= \delta_{u^k}(q)$, and $\delta_{u^i}(q)\sim\delta_{u^iv}(q)$, for every $i<k$\textup{;}

\item[$(iii)$] $\L(\A)$ is not $\FO(<,\MOD)$-definable iff there exist words $u,v\in\Sigma^\ast$, a
%reachable
state $q\in \Qr$ and numbers $k,l\leq |Q|$ such that $k$ is an odd prime, $l>1$ and coprime to both $2$ and $k$,
$q\not\simm\fu_u(q)$, $q\not\simm\fu_v(q)$, $q\not\simm\fu_{uv}(q)$ and, for all $x\in\{u,v\}^\ast$, we have
$\fu_{x}(q)\simm\fu_{xu^{2}}(q)\simm\fu_{xv^{k}}(q)\simm\fu_{x(uv)^{l}}(q)$.
%there exist $u,v\in\Sigma^*$, an accessible state $q$ and prime numbers $k,l<|Q|$, such that, for all $\delta_x\in M_{\{u,v\}}$ we have $\delta_{x}(q)\sim\delta_{x(u)^{2}}(q)\sim\delta_{x(v)^{k}}(q)\sim\delta_{x(uv)^{l}}(q)$, $q$ is not $\sim$-equivalent to $\delta_{u}(q)$, $\delta_{v}(q)$, $\delta_{uv}(q)$, and there exists $y$ such that $\delta_{xy}(q)=q$.
\end{description}
\end{theorem}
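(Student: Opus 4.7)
The plan is to translate the three algebraic characterisations of $M(\L(\A))$ and $\eta_{\L(\A)}$ listed in Table~\ref{tab:algebra} into the stated combinatorial patterns inside $\A$. Each part has an easy direction, where the combinatorial witness produces the forbidden algebraic structure, and a hard direction, where the algebraic obstruction is lifted back into $\A$ with the bounds $k, l \leq |Q|$.

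For $(i)$: from $q \not\simm \fu_u(q)$ and $q = \fu_{u^k}(q)$ the classes $\simclass{q}, \fum_u(\simclass{q}), \dots, \fum_u^{k-1}(\simclass{q})$ form a cycle of length $\geq 2$ in $\Amin$, so $\fum_u$ restricted to it is a nontrivial permutation and generates a nontrivial cyclic subgroup of $M(\L)$, whence $\L(\A)$ is not $\FO(<)$-definable by Table~\ref{tab:algebra}. Conversely, pick a non-aperiodic $\fum_u \in M(\L)$ via \eqref{gini}--\eqref{giniii}; by \eqref{groupini}--\eqref{groupinii} the cyclic group it generates acts as a nontrivial permutation on the fixed-point set of its idempotent power, giving a $\simm$-class moved by $\fum_u$ inside a cycle of length $\leq |Q|$; then lift this cycle to a periodic $q \in \Qr$ using \eqref{fpi}--\eqref{fpiii}, with the $\fu_u$-period in $\Qr$ bounded by $|Q|$.

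For $(ii)$ the strategy is identical, but the obstruction is a nontrivial group $\G$ all of whose elements lie inside $\eta_\L(\Sigma^t)$ for some $t$, forcing both a nonidentity generator and the identity of $\G$ to be represented by same-length words $u$ and $v$. In the easy direction, the $k$ equal-length words $u^i v^{k-i}$, $i = 0, \dots, k-1$, have length $k|u|$ and, because $\fu_v$ pointwise fixes the cycle in $\A$, realise the $k$ distinct rotations of the $\Amin$-cycle, giving a nontrivial cyclic group inside $\eta_\L(\Sigma^{k|u|})$ and violating quasi-aperiodicity. Conversely, from $\G$ in $\eta_\L(\Sigma^t)$ I extract $u, v \in \Sigma^t$ and obtain a cycle in $\Amin$ as in $(i)$; the extra work is to upgrade the $\simm$-level identity of $\fum_v$ on the cycle to the literal equalities $\fu_{u^i}(q) = \fu_{u^i v}(q)$ in $\A$. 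I plan to achieve this by a simultaneous pigeonhole over the finite $\fu_v$-orbits inside each $\simm$-class $\simclass{\fu_{u^i}(q)}$, combined with a length-equalising substitution $u \mapsto u^p$ (with $\gcd(p,k)<k$ to preserve nontriviality of the $\Amin$-cycle) so that $|u|$ and $|v|$ still match.

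For $(iii)$, by the Kaplan--Levy criterion an unsolvable group inside $M(\L)$ is witnessed by $a, b, c$ of orders $2, k, l$ with $k$ an odd prime, $l > 1$ coprime to $2k$, and $abc = e$. Pick $u, v$ of a common length with $\fum_u = a$, $\fum_v = b$ via the same padding trick as in $(ii)$ (multiplying representatives by powers of a representative of the group identity and by suitable odd / coprime powers of themselves to preserve orders); then $\fum_{uv} = ab = c^{-1}$ has order $l$, the relations $a^2 = b^k = (ab)^l = e$ translate into the uniform $\simm$-identities $\fu_x(q) \simm \fu_{xu^2}(q) \simm \fu_{xv^k}(q) \simm \fu_{x(uv)^l}(q)$ for every $x \in \{u,v\}^*$ (since on the $\langle \fum_u, \fum_v\rangle$-orbit of $\simclass{q}$ each of $u^2, v^k, (uv)^l$ acts as the group identity), and $a, b, ab \neq e$ give the three $\not\simm$ clauses. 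The converse extracts from the witness three elements with the prescribed orders inside the subgroup of $M(\L)$ generated by $\fum_u, \fum_v$ on the orbit of $\simclass{q}$, and invokes Kaplan--Levy. The main obstacle, shared with $(ii)$, is this $\Amin$-to-$\A$ lifting: the algebraic data acts on $\simm$-classes, whereas the criterion demands literal equalities in $\A$; the $|Q|$-bounds on $k, l$ come precisely from pigeonholes on $\Qr$-orbits, whose size is at most $|\Qr| \leq |Q|$.
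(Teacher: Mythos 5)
Your overall strategy---converting the three rows of Table~\ref{tab:algebra} into in-automaton witnesses, with an ``easy'' direction (witness $\Rightarrow$ forbidden algebraic structure in $M(\L)$ or $\eta_\L$) and a ``hard'' lifting direction---is the same as the paper's, and your part $(i)$ is essentially the paper's argument. But the ``easy'' directions of $(ii)$ and $(iii)$, which you treat as routine, are exactly where the real work lies, and both have genuine gaps. In $(ii)$ you claim the $k$ equal-length words $u^iv^{k-i}$ ``realise the $k$ distinct rotations of the $\Amin$-cycle, giving a nontrivial cyclic group inside $\eta_\L(\Sigma^{k|u|})$''. The set $\{\fum_{u^iv^{k-i}}\mid i<k\}$ is not closed under the monoid product (the concatenation of two such words is not of this form, and the functions need only behave like rotations on the orbit of $\simclass{q}$, not globally), so it is not a group \emph{in} $M(\L)$ in the sense required by the quasi-aperiodicity criterion. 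Nor does the cyclic group $\G_{\fum_w}$ generated by a single word $w$ of length $k|u|$ help directly: its elements are represented by words of lengths $k|u|, 2k|u|,\dots$, and they all lie in $\eta_\L(\Sigma^{k|u|})$ only if that level is closed under composition, which is not automatic. The paper's proof first finds $i,j$ with $\eta_\L(\Sigma^{i|u|})=\eta_\L(\Sigma^{(i+j)|u|})$, picks a multiple $z$ of $j$ in $[i,i+j)$ so that $\eta_\L(\Sigma^{z|u|})$ is product-closed, and only then exhibits $\G_{\fum_{uv^{z-1}}}$ inside it; without some such step your witness does not refute quasi-aperiodicity.

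The analogous problem in $(iii)$ is worse: ``the subgroup of $M(\L)$ generated by $\fum_u,\fum_v$'' does not exist in general, since $\fum_u,\fum_v$ are merely monoid elements and the submonoid they generate need not be a group. To invoke Kaplan--Levy via Table~\ref{tab:algebra} you must produce an actual unsolvable group that is a subsemigroup of $M(\L)$; the paper manufactures one as $\{\fum_{wxw}\mid x\in\{u,v\}^\ast\}$ for a carefully chosen idempotent sandwich $\fum_w$ whose associated kernel relation is stable under further right multiplication by words in $\{u,v\}^\ast$, and only then passes to the induced permutation group on the orbit $R$ of $\simclass{q}$ (a homomorphic image) to compute orders. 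This sandwich construction is the missing idea in your sketch. Two smaller points: the statement of $(iii)$ does not require $|u|=|v|$ and is phrased entirely up to $\simm$, so the length-equalising padding and the ``literal equalities in $\A$'' you worry about are needed only in $(ii)$---where the paper's cleaner fix is $\bar u=uv^{2\ell-1}$, $\bar v=v^{2\ell}$ with $\fu_{v^\ell}$ idempotent, which makes $\fu_{\bar u^i\bar v}=\fu_{\bar u^i}$ as functions rather than only at selected states; and your condition $\gcd(p,k)<k$ does not guarantee $q\not\simm\fu_{u^p}(q)$, since the period $d$ of $\simclass{q}$ under $\fum_u$ may be a proper divisor of $k$ and what you need is $d\nmid p$.
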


%\begin{claim}\label{c:foord}
%Suppose $\A = (Q, \Sigma, \delta, q_0, F)$ is a DFA and $\L$ is its language.
%$M(\L)$ contains a nontrivial group iff there exist $u\in\Sigma^\ast$, $q\in \Qr$, and $k\leq |Q|^2$\nz{??} such that $q\not\simm\fu_u(q)$
%and $q=\fu_{u^k}(q)$.
%\end{claim}

\begin{proof}
We use the algebraic criteria of Table~\ref{tab:algebra} for $\L=\L(\A)$.
Thus,  $M(\L)$ is the transition monoid of the minimal DFA $\A_{\L(\A)}$, whose transition function is denoted by $\fum$.

$(i)~(\Rightarrow)$ Suppose $\G$ is a nontrivial group in $M(\A_{\L(\A)})$.
Let $u\in\Sigma^\ast$ be such that $\fum_u$ is a nonidentity element in $\G$.
We claim that there is $p\in \Qr$ such that $\fum_{u^n}(\simclass{p})\ne\fum_{u^{n+1}}(\simclass{p})$ for any $n>0$.
Indeed, otherwise for every $p\in \Qr$ there is $n_p>0$ with $\fum_{u^{n_p}}(\simclass{p})=\fum_{u^{n_p+1}}(\simclass{p})$.
Let $n=\max\{n_p\mid p\in \Qr\}$. Then $\fum_{u^n}=\fum_{u^{n+1}}$, contrary to~\eqref{idemp}.
By \eqref{fpi}, there is $m\geq 1$ with $\fum_{u^{2m}}(\simclass{p})=\fum_{u^{m}}(\simclass{p})$.
%Now consider the classes $[p]$, $\fum_u([p])$, $\fum_{u^2}([p])$, $\dots$. As $Q$ is finite, there is some
%$\ell,m\leq \bigl |[Q^r]\bigr |\leq |Q|$ with $\fum_{u^\ell}([p])=\fum_{u^{\ell+m}}([p])$.
Let $\simclass{s}=\fum_{u^m}(\simclass{p})$. Then $\simclass{s}=\fum_{u^m}(\simclass{s})$, and so the restriction of $\fu_{u^m}$ to the subset $\simclass{s}$
of $\Qr$  is an $\simclass{s}\to \simclass{s}$ function.
By \eqref{fpii},
%As $[s]$ is finite,
there are $q\in \simclass{s}$ and $n\geq 1$
%$n\leq\bigl |[s]\bigr |\leq |\Qr|\leq |Q|$
with
$(\fu_{u^{m}})^n(q)=q$. Thus, $\fu_{u^{mn}}(q)=q$, and so by \eqref{fpiii}, there is
$k\leq |Q|$ with $\fu_{u^{k}}(q)=q$.
%Now let $k=mn$.
%Then we have  $q\not\simm\fu_u(q)$ (as $[s]\ne\fum_{u}([s])$), and $q=\fu_{u^k}(q)$, as required.
As $\simclass{s}\ne\fum_{u}(\simclass{s})$, we also have $q\not\simm\fu_u(q)$, as required.

$(i)~(\Leftarrow)$
Suppose the condition holds for $\A$.
%and consider the minimal DFA $\A_{\L(\A)}$.
Then there are $u\in\Sigma^\ast$,
$q\in \simclass{\Qr\!}$, and $k<\omega$ with $q\ne\fum_u(q)$ and $q=\fum_{u^k}(q)$.
So $\fum_{u^n}\ne\fum_{u^{n+1}}$ for any $n>0$. Indeed, otherwise we would have some $n>0$ with $\fum_{u^n}(q)=\fum_{u^{n+1}}(q)$. Let $i,j$ be such that $n=i\cdot k+j$ and $j<k$. Then
\[
q=\fum_{u^k}(q)=\fum_{u^{(i+1)k}}(q)=\fum_{u^n u^{k-j}}(q)=\fum_{u^{n+1}u^{k-j}}(q)=\fum_{u^{(i+1)k}u}(q)=\fum_u(q).
\]
So, by \eqref{ginii} and \eqref{giniii}, $\G_{\fum_u}$ is a nontrivial group in $M(\A_{\L(\A)})$.
%\end{proof}

%\begin{claim}\label{c:fomodn}
%Suppose $\A = (Q, \Sigma, \delta, q_0, F)$ is a DFA and $\L$ is its language.
%$\eta_\L(\Sigma^t)$ contains a nontrivial group for some $t<\omega$ iff there exist $u,v\in\Sigma^\ast$, $q\in \Qr$, and $k\leq |Q|^2$ such that $q\not\simm\fu_u(q)$, $q=\fu_{u^k}(q)$, $|v|=|u|$, and
%$\fu_{u}(q)=\fu_{uv}(q)$.
%$\fu_{u^i}(q)=\fu_{u^iv}(q)$, for every $i<k$.
%\end{claim}

$(ii)~(\Rightarrow)$
Let $\G$ be a nontrivial group in $\eta_\L(\Sigma^t)$, for some $t<\omega$, and
let $u\in\Sigma^t$ be such that $\fum_u$ is a nonidentity element in $\G$.
As shown in the proof of $(i)~(\Rightarrow)$,
there exist $s\in \Qr$ and $m\geq 1$ such that $\simclass{s}\ne\fum_{u}(\simclass{s})$ and
%$\fum_{u^n}([q])\ne\fum_{u^{n+1}}([q])$ for any $n>0$, and
$\simclass{s}=\fum_{u^m}(\simclass{s})$.
%and let $q$ and $k$ be like Then we have $q\ne\fu_u(q)$ and $q=\fu_{u^k}(q)$.
Now let $v\in\Sigma^t$ be such that $\fum_v$ is the \unit{} element in $\G$, and consider $\fu_v$.
By \eqref{gini}, there is $\ell\geq 1$ such that $\fu_{v^\ell}$ is idempotent.
Then $\fu_{v^{2\ell-1}v^{2\ell}}=\fu_{v^{2\ell-1}}$. Thus, if we let $\bar{u}=uv^{2\ell-1}$ and $\bar{v}=v^{2\ell}$, then
$|\bar{u}|=|\bar{v}|$ and
$\fu_{\bar{u}^i}=\fu_{\bar{u}^i\bar{v}}$ for any $i<\omega$.
Also,  $\fum_{u^i}=\fum_{\bar{u}^i}$ for every $i\geq 1$, and so
the restriction of $\fu_{\bar{u}^m}$ to $\simclass{s}$ is an $\simclass{s}\to \simclass{s}$ function.
By \eqref{fpii},
%As $[s]$ is finite,
there exist $q\in \simclass{s}$ and $n\geq 1$ such that
$(\fu_{\bar{u}^{m}})^n(q)=q$.
Thus, $\fu_{\bar{u}^{mn}}(q)=q$, and so by \eqref{fpiii}, there is some $k\leq |Q|$ with $\fu_{\bar{u}^{k}}(q)=q$.
As $\simclass{s}\ne\fum_u(\simclass{s})=\fum_{\bar{u}}(\simclass{s})$, we also have $q\not\simm\fu_{\bar{u}}(q)$, as required.
%Now let $k=mn\leq |Q|^2$. Then $\fu_{\bar{u}^{k}}(q)=q$ and
%As for every $i<k$, $\fu_{u^i}$ is an element of $\G$,
%we have $\fu_{u^i}=\fu_{u^iv}$, and so in particular $\fu_{u^i}(q)=\fu_{u^iv}(q)$, as required.
%Then we have $\fu_{u}=\fu_{uv}$, and so in particular $\fu_{u}(q)=\fu_{uv}(q)$, as required.

$(ii)~(\Leftarrow)$
If the condition holds for $\A$, then  there exist
$u,v\in\Sigma^\ast$, $q\in \simclass{\Qr\!}$, and $k<\omega$ such that $q\ne\fum_u(q)$, $q=\fum_{u^k}(q)$, $|v|=|u|$,
and $\fum_{u^i}(q)=\fum_{u^iv}(q)$, for every $i<k$.
%and $\fum_{u}(q)=\fum_{uv}(q)$.
As $M(\A_{\L(\A)})$ is finite, it has finitely many subsets.
So there exist $i,j\geq 1$ such that $\eta_\L(\Sigma^{i|u|})=\eta_\L(\Sigma^{(i+j)|u|})$.
Let $z$ be a multiple of $j$ with $i\leq z<i+j$.
Then $\eta_\L(\Sigma^{z|u|})=\eta_\L(\Sigma^{(z|u|)^2})$, and so $\eta_\L(\Sigma^{z|u|})$ is closed under the composition of functions (that is, the semigroup operation of $M(\A_{\L(\A)})$). Let $w=uv^{z-1}$ and consider the group
$\G_{\fum_w}$. Then $G_{\fum_w}\subseteq\eta_\L(\Sigma^{z|u|})$.
We claim that $\G_{\fum_w}$ is nontrivial. Indeed, we have $\fum_w(q)=\fum_{uv^{z-1}}(q)=\fum_u(q)\ne q$.
On the other hand, $\fum_{w^{k}}(q)=\fum_{u^{k}}(q)=q$.
By the proof of $(i)~(\Leftarrow)$, $\G_{\fum_w}$ is nontrivial.

$(iii)~(\Rightarrow)$
Suppose $\G$ is an unsolvable group in $M(\A_{\L(\A)})$. By the Kaplan--Levy criterion,
$\G$ contains three functions $a,b,c$ such that $\ord_\G(a)=2$, $\ord_\G(b)$ is an odd prime,
$\ord_\G(c)>1$ and coprime to both $2$ and $\ord_\G(b)$, and $c\circ b\circ a=e_\G$ for the \unit{} element $e_\G$ of $\G$.
Let $u,v\in\Sigma^\ast$ be such that $a=\fum_u$, $b=\fum_v$ and $c=(\fum_{uv})^-$,
and let $k=\ord_\G(\fum_{v})$ and $r=\ord_\G(c)=\ord_\G(\fum_{uv})$. Then $r>1$ and coprime to both $2$ and $k$.
Let $S=\bigl\{p\in \simclass{\Qr\!}\mid e_\G(p)=p\bigr\}$.
As $\fum_x$ is $\G$ for every $x\in\{u,v\}^\ast$, we have $e_\G\circ\fum_x=\fum_x$.
Thus,
\begin{align*}
& \fum_{xu^2}(q)=\fum_{u^2}\bigl(\fum_x(q)\bigr)=e_\G\bigl(\fum_x(q)\bigr)=(e_\G\circ\fum_x)(q)=\fum_x(q),\quad\mbox{and}\\
& \fum_{xv^k}(q)=\fum_{v^k}\bigl(\fum_x(q)\bigr)=e_\G\bigl(\fum_x(q)\bigr)=(e_\G\circ\fum_x)(q)=\fum_x(q),\quad\mbox{for every $q\in S$}.
\end{align*}
Then, by \eqref{groupini}, each of $\fum_u\!\!\restriction_S$, $\fum_v\!\!\restriction_S$
and $\fum_{uv}\!\!\restriction_S$ is a permutation on $S$.
By \eqref{groupinii}, the order of $\fum_u\!\!\restriction_S$ is $2$, the order of $\fum_v\!\!\restriction_S$ is $k$,
and  the order $l$ of $\fum_{uv}\!\!\restriction_S$ is a $>1$ divisor of $r$, and so it is coprime to both $2$ and $k$.
Also, we have $k,l\leq |S|\leq |Q|$.
Further, for every $x$, if $q$ is in $S$ then $\fum_x(q)\in S$ as well. So we have
\[
\fum_{x(uv)^l}(q)=\fum_{(uv)^l}\bigl(\fum_x(q)\bigr)=(\fum_{uv}\!\!\restriction_S)^l\bigl(\fum_x(q)\bigr)=\id_S\bigl(\fum_x(q)\bigr)=\fum_x(q),\ \ \mbox{for all $q\in S$}.
\]
It remains to show that there is $q\in S$ with $q\ne\fum_u(q)$, $q\ne\fum_u(q)$, and $q\ne\fum_{uv}(q)$.
 %
%\begin{equation}\label{qinS}
% \mbox{there is some $q\in S$ such that $q\ne\fu_u(q)$, $q\ne\fu_u(q)$, and $q\ne\fu_{uv}(q)$.}
%\end{equation}
%
Recall that the length of any cycle in a permutation divides its order.
First, we show there is $q\in S$ with $q\ne\fum_u(q)$ and $q\ne\fum_u(q)$. Indeed, as
$\fum_{u}\!\!\restriction_S\ne\id_S$, there is $q\in S$ such that $\fum_u(q)=q'\ne q$.
As the order of $\fum_{u}\!\!\restriction_S$ is $2$, $\fum_u(q')=q$.
If both $\fum_v(q)=q$ and $\fum_v(q')=q'$ were the case, then $\fum_{uv}(q)=q'$ and $\fum_{uv}(q')=q$ would hold, and so
$(qq')$ would be a cycle in $\fum_{uv}\!\!\restriction_S$, contrary to $l$ being coprime to $2$.
So take some $q\in S$ with $\fum_u(q)=q'\ne q$ and  $\fum_v(q)\ne q$. %Let $q''=\fum_v(q')$.
If $\fum_v(q')\ne q$ then $\fum_{uv}(q)\ne q$, and so $q$ is a good choice.
Suppose $\fum_v(q')=q$, and let $q''=\fum_v(q)$. Then $q''\ne q'$, as $k$ is odd. Thus, $\fum_{uv}(q')\ne q'$, and
so $q'$ is a good choice.

$(iii)~(\Leftarrow)$
Suppose $u,v\in\Sigma^\ast$, $q\in \Qr$, and $k,l<\omega$ are satisfying the conditions.
For every $x\in\{u,v\}^\ast$, we define an equivalence relation $\approx_x$ on $\simclass{\Qr\!}$ by taking $p\approx_x p'$ iff
$\fum_x(p)=\fum_x(p')$. Then we clearly have that $\approx_x\subseteq \approx_{xy}$, for all $x,y\in \{u,v\}^\ast$.
As $Q$ is finite, there is $z\in \{u,v\}^\ast$ such that
$\approx_z= \approx_{zy}$ for all $y\in \{u,v\}^\ast$.
Take such a $z$. By \eqref{gini}, $\fum_z^n$ is idempotent for some $n\geq 1$. We let $w=z^n$.
Then $\fum_w$ is idempotent and we also have that
\begin{equation}\label{fp}
\approx_w\,=\, \approx_{wy}\quad\mbox{for all $y\in \{u,v\}^\ast$.}
\end{equation}
Let $G_{\{u,v\}}=\bigl\{\fum_{wxw}\mid x\in\{u,v\}^\ast\bigr\}$. Then $G_{\{u,v\}}$ is closed under composition. Let $\G_{\{u,v\}}$ be the subsemigroup of $M(\A_{\L(\A)})$  with universe $G_{\{u,v\}}$.
Then $\fum_w=\fum_{w\varepsilon w}$ is an \unit{} element in $\G_{\{u,v\}}$.
Let $S=\{p\in \simclass{\Qr\!}\mid \fum_w(p)=p\}$.
We show that
\begin{equation}\label{Sperm}
\mbox{for every $\fum$ in $\G_{\{u,v\}}$, $\fum\!\!\restriction_S$ is a permutation on $S$,}
\end{equation}
and so $\G_{\{u,v\}}$ is a group by \eqref{groupini}.
Indeed, take some $x\in\{u,v\}^\ast$. As
$\fum_w\bigl(\fum_{wxw}(p)\bigr)=\fum_{wxww}(p)=\fum_{wxw}(p)$, for any $p\in \simclass{\Qr\!}$, $\fum_{wxw}\!\!\restriction_S$ is an $S\to S$ function. Also, if $p,p'\in S$ and $\fum_{wxw}(p)=\fum_{wxw}(p')$ then $p\approx_{wxw}p'$. Thus, by \eqref{fp}, $p\approx_w p'$, that is,
$p=\fum_w(p)=\fum_w(p')=p'$, proving \eqref{Sperm}.

We show that $\G_{\{u,v\}}$ is unsolvable by finding an unsolvable homomorphic image of it.
Let $R=\bigl\{p\in \simclass{\Qr\!}\mid p=\fum_x(q)\mbox{ for some }x\in\{u,v\}^\ast\bigr\}$.
We claim that, for every $\fum$ in $\G_{\{u,v\}}$, $\fum\!\!\restriction_R$ is a permutation on $R$, and so the function
$h$ mapping every $\fum$ to $\fum\!\!\restriction_R$ is a group homomorphism from $\G_{\{u,v\}}$ to the group of all permutations on $R$. Indeed, by \eqref{Sperm}, it is enough to show that $R\subseteq S$.
Let $\overline{w}=\overline{z}_{m}\dots\overline{z}_1$, where $w=z_1\dots z_m$ for some $z_i\in\{u,v\}$,
$\overline{u}=u$ and $\overline{v}=v^{k-1}$.
Since $\fum_{x}(q)=\fum_{x(u)^{2}}(q)=\fum_{x(v)^{k}}(q)$ for all $x\in\{u,v\}^\ast$,
we obtain that
\begin{multline}\label{barw}
\fum_{yw\overline{w}}(q)=\fum_{\overline{z}_{m-1}\dots\overline{z}_1}\bigl(\fum_{yz_1\dots z_m\overline{z}_m}(q)\bigr)
=\fum_{\overline{z}_{m-1}\dots\overline{z}_1}\bigl(\fum_{yz_1\dots z_{m-1}}(q)\bigr)=\dots\\
\dots =\fum_{\overline{z}_1}\bigl(\fum_{yz_1}(q)\bigr)=\fum_{xz_1\overline{z}_1}(q)=\fum_y(q),\quad
\mbox{for all $y\in\{u,v\}^\ast$.}
\end{multline}
Now suppose $p\in R$, that is, $p=\fum_x(q)$ for some $x\in \{u,v\}^\ast$. Then, by \eqref{barw},
\[
\fum_w(p)=\fum_w\bigl(\fum_x(q)\bigr)=\fum_{xw}(q)=\fum_{xww\overline{w}}(q)=\fum_{xw\overline{w}}(q)=\fum_x(q)=p,
\]
and so $p\in S$, as required.

Now let $\G$ be the image of $\G_{\{u,v\}}$ under $h$.
%(which, by the Homomorphism Theorem, is isomorphic to a factor of $\G_{\{u,v\}}$).
We prove that $\G$ is unsolvable by finding three elements $a,b,c$ in it such that $\ord_\G(a)=2$, $\ord_\G(b)=k$,
$\ord_\G(c)$ is coprime to both $2$ and $\ord_\G(b)$, and $c\circ b\circ a=\id_R$  (the \unit{} element of $\G$).
So let $a=h(\fum_{wuw})$,  $b=h(\fum_{wvw})$, and $c=h(\fum_{wuvw})^-$.
Observe that, for every $x\in\{u,v\}^\ast$, $h(\fum_{wxw})=\fum_x\!\!\restriction_R$, and so $c\circ b\circ a=\id_R$.
Also, for any $\fum_x(q)\in R$, $a^2\bigl(\fum_x(q)\bigr)=(\fum_u\!\!\restriction_R)^2\bigl(\fum_x(q)\bigr)=\fum_{xu^2}(q)=\fum_x(q)$
by our assumption, so $a^2=\id_R$. On the other hand, $q\in R$ as $\fum_\varepsilon(q)=q$, and
$\id_R(q)=q\ne\fum_u(q)$ by assumption, so $a\ne\id_R$.  As $\ord_\G(a)$ divides $2$, $\ord_\G(a)=2$ follows.
Similarly, we can show that $\ord_\G(b)=k$ (using that $\fum_{xv^k}(q)=\fum_x(q)$ for every $x\in\{u,v\}^\ast$, and $u\ne\fum_v(q)$). Finally (using that $\fum_{x(uv)^l}(q)=\fum_x(q)$ for every $x\in\{u,v\}^\ast$, and $u\ne\fum_{uv}(q)$),
we obtain that $h(\fum_{wuvw})^l=\id_R$ and $h(\fum_{wuvw})\ne\id_R$.
Therefore, it follows that $\ord_\G(c)=\ord_\G\bigl(h(\fum_{wuvw})^-\bigr)=\ord_\G\bigl(h(\fum_{wuvw})\bigr)>1$ and divides $l$, and so coprime to both $2$ and $k$, as required.
\end{proof}

The following technical observation will be used in Sections~\ref{sec:LTL-genearal} and~\ref{sec:linear}; its proof is given in Appendix~\ref{app:lem7}.

\begin{lemma}\label{expandL}
Suppose $\lang \in \{ \FO(<), \FO(<,\equiv), \FO(<,\MOD)\}$ and $\Sigma$, $\Gamma$ and $\Delta$ are alphabets such that $\Sigma\cup\{x,y\}\subseteq\Gamma\subseteq\Delta$, for some $x,y \notin \Sigma$. Then a regular language $\L$ over $\Sigma$ is $\lang$-definable iff the regular language
$
\L'=\{w_1xwyw_2\mid w\in\L,\ w_1,w_2\in\Gamma^\ast\}
$
%
%concatenation $\L(\Gamma^* x) \L \L(y\Gamma^*)$ of three languages $\L(\Gamma^* x)$, $\L$ and $\L(y\Gamma^*)$ over $\Gamma'$ is $\lang$-definable.
 is $\lang$-definable over $\Delta$.
\end{lemma}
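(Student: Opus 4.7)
The plan is to establish both directions of the bi-implication by giving explicit $\lang$-translations between sentences defining $\L$ over $\Sigma$ and sentences defining $\L'$ over $\Delta$, treating the three logics uniformly but with extra care for the modular cases.

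For the $(\Rightarrow)$ direction, given an $\lang$-sentence $\varphi$ with $\mathfrak{S}_w \models \varphi$ iff $w \in \L$, I would construct
\[
\varphi' = \forall z \bigvee_{a \in \Gamma} a(z) \ \land \ \exists p\, \exists q \Bigl[ p < q \land x(p) \land y(q) \land \forall z \bigl(p < z < q \to \bigvee_{a \in \Sigma} a(z)\bigr) \land \varphi^{(p,q)} \Bigr],
\]
where $\varphi^{(p,q)}$ is the relativization of $\varphi$ to the open interval $(p,q)$. For $\FO(<)$ this is the standard bounded-quantifier translation. For $\FO(<,\equiv)$, the subword position $z \in (p,q)$ has local index $z-(p+1)$, so the predicate $z \equiv 0\,(\text{mod}\,n)$ in the subword becomes $z \equiv p+1\,(\text{mod}\,n)$ in $u$-coordinates; I would expand this as $\bigvee_{r=0}^{n-1} \bigl[(p+1) \equiv r\,(\text{mod}\,n) \land z \equiv r\,(\text{mod}\,n)\bigr]$, relying on the fact that each $z \equiv r\,(\text{mod}\,n)$ is $\FO(<,\equiv)$-definable from $z \equiv 0\,(\text{mod}\,n)$ via the $\FO(<)$-definable successor. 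For $\FO(<,\MOD)$, the quantifier $\exists^N z\, \psi(z)$ is relativized as $\exists^N z\, (p < z < q \land \psi^{(p,q)}(z))$, since $\exists^N$ depends only on the count of satisfying positions.

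For the $(\Leftarrow)$ direction, the key observation is that for any $w \in \Sigma^\ast$ one has $w \in \L$ iff $xwy \in \L'$: the right-to-left implication uses that $w \in \Sigma^\ast$ contains neither $x$ nor $y$, forcing the unique decomposition $xwy = \varepsilon\!\cdot\! x \!\cdot\! w \!\cdot\! y\!\cdot\! \varepsilon$. Hence it suffices to build $\varphi$ over $\Sigma$ that simulates the evaluation of $\varphi'$ on the padded structure $\mathfrak{S}_{xwy}$ within $\mathfrak{S}_w$. I would use a padding translation in which every variable of $\varphi'$ is tagged with a type in $\{\mathtt{beg}, \mathtt{real}, \mathtt{end}\}$ indicating whether it refers to the leading $x$, a position in $w$, or the trailing $y$. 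Each quantifier $\exists z\, \psi$ expands to a three-way disjunction, with the $\mathtt{real}$ branch quantifying over $\mathfrak{S}_w$. Atomic predicates evaluate per type (e.g.\ $x(z)^{\mathtt{beg}} = \top$; $a(z)^{\mathtt{real}} = a(z)$ if $a \in \Sigma$, else $\bot$; all $\Delta\setminus\Sigma$ labels on $\mathtt{real}$ positions yield $\bot$); order follows $\mathtt{beg} < \mathtt{real} < \mathtt{end}$. For $\FO(<,\equiv)$, a $\mathtt{real}$ position at $w$-index $k$ lives at $u$-index $k+1$, so $z \equiv 0\,(\text{mod}\,n)$ in $u$ becomes $z \equiv n-1\,(\text{mod}\,n)$ in $w$; the $\mathtt{end}$ case reduces to $|w|+1 \equiv 0\,(\text{mod}\,n)$, expressible as ``$w$ has a last position and it is $\equiv n-2\,(\text{mod}\,n)$''. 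For $\FO(<,\MOD)$, the count of $z$ in $u$ satisfying $\psi$ decomposes as a count over $w$ plus two indicator contributions from the virtual endpoints; case-splitting on the four combinations of the $\mathtt{beg}$- and $\mathtt{end}$-values of $\psi$ reduces $\exists^N z\, \psi$ to a disjunction of four conjuncts, each asserting a shifted-modulus condition of the form ``the count over $w$ of $\psi^{\mathtt{real}}$ is $\equiv c\,(\text{mod}\,N)$'' for a specific $c \in \{0, \ldots, N-1\}$; such a shifted condition is $\FO(<,\MOD)$-definable by the formula $\exists z_1, \ldots, z_c\, (\bigwedge_{i<j} z_i \neq z_j \land \bigwedge_i \psi(z_i)) \land \exists^N z\, (\psi(z) \land \bigwedge_i z \neq z_i)$.

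The main technical obstacle is the bookkeeping in the modular cases: tracking the $+1$ offset between $u$- and $w$-coordinates and verifying that the resulting shifted-modulus and shifted-count predicates remain in the target logic. Both rely on the $\FO(<)$-definability of the successor relation and the routine reductions sketched above. Empty-word edge cases (such as $\varepsilon \in \L$ iff $xy \in \L'$) are handled uniformly, since the padding translation evaluates correctly on $\mathfrak{S}_\varepsilon$, where only the $\mathtt{beg}$ and $\mathtt{end}$ branches of expanded quantifiers can contribute.
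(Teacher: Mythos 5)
Your proof is correct in outline, but it takes a genuinely different route from the paper. You argue purely on the logical side: the forward direction is a relativization of $\varphi$ to the interval strictly between an $x$-position and a $y$-position (using that $\L'$ is exactly the set of $\Gamma$-words containing positions $p<q$ labelled $x,y$ with only $\Sigma$-letters strictly in between and with the subword between them in $\L$), and the backward direction is an FO-interpretation of the two-point padding $w \mapsto xwy$ together with the observation that $w\in\L$ iff $xwy\in\L'$. The burden then falls on showing that $\FO(<,\equiv)$ and $\FO(<,\MOD)$ are closed under these relativizations and interpretations, which you discharge with the residue-shift and shifted-count gadgets; these are routine but must be carried out separately for each logic. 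The paper instead stays on the automata side: it builds an explicit DFA $\A'$ for $\L'$ by augmenting the minimal DFA for $\L$ with states $q_0'$, $f$ and a trash state, and shows that the non-definability witnesses of Theorem~\ref{DFAcrit} transfer between $\A$ and $\A'$ (a nontrivial cycle in $\A'$ must avoid the new states and hence already lives in $\A$ over a word in $\Sigma^\ast$). The paper's argument reuses machinery already in place and is written out only for the $\FO(<)$ criterion, with the other two claimed to be analogous; yours is more self-contained, treats all three logics explicitly, and would generalise to any logic closed under these syntactic operations, at the cost of the coordinate-shift bookkeeping you identify. One small point worth making explicit in your write-up: in the shifted-count formula $\exists z_1,\dots,z_c\,(\dots)\land\exists^N z\,(\psi(z)\land\bigwedge_i z\neq z_i)$ you are implicitly using that $m\equiv c\ (\text{mod}\ N)$ with $0\le c<N$ and $m\ge 0$ forces $m\ge c$, so requiring the $c$ distinct witnesses to exist does not change the defined set.
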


\section{Deciding FO-definability of regular languages: \PSpace-hardness}\label{sec:reglang}

\citeA{Kozen77} showed that deciding non-emptiness of the intersection of the languages recognised by a set of given deterministic DFAs is \PSpace-complete. By carefully analysing Kozen's lower bound proof and using the %simplified algebraic
criterion of Theorem~\ref{DFAcrit}~$(i)$, \citeA{DBLP:journals/TCS/ChoHyunh91} established that
deciding $\FO(<)$-definability of $\L(\A)$, for any given minimal DFA $\A$, is \PSpace-hard. We generalise their construction and use the criteria in Theorem~\ref{DFAcrit}~$(ii)$--$(iii)$ to cover $\FO(<,\equiv)$- and $\FO(<,\MOD)$-definability as well.

\begin{theorem}\label{DFAhard}
For any $\lang \in \{ \FO(<), \FO(<,\equiv), \FO(<,\MOD)\}$,  deciding $\lang$-defina\-bi\-lity of the language $\L(\A)$ of a given minimal DFA $\A$ is \PSpace-hard.
\end{theorem}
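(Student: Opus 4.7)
The plan is to reduce from the \PSpace-complete intersection non-emptiness problem for DFAs~\cite{Kozen77}. Given DFAs $\A_1, \dots, \A_n$ of polynomial total size, I construct in polynomial time a minimal DFA $\A$ over an enriched alphabet such that $\L(\A_1) \cap \cdots \cap \L(\A_n) \ne \emptyset$ iff $\L(\A)$ is \emph{not} $\lang$-definable. Following \citeA{DBLP:journals/TCS/ChoHyunh91}, $\A$ sequentially simulates the $\A_i$'s on the same candidate word (using separator letters between the phases), and on successful simulation a designated activation letter $\$$ hands control over to a small ``gadget'' engineered so that the appropriate clause of Theorem~\ref{DFAcrit} applies; otherwise $\$$ leads to a dead sink. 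A distinguishing probe suffix attached to every state takes care of minimality.

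For $\lang = \FO(<)$ this is already the Cho--Huynh construction: the gadget is a plain $k$-cycle with one distinguished state, so activation supplies criterion $(i)$ via a word $u$ closing a nontrivial cycle, whereas an empty intersection leaves the reachable part counter-free and hence star-free. For $\lang = \FO(<,\equiv)$, the gadget additionally absorbs a second letter $v$ of the \emph{same length} as $u$, padded by internal self-loops that coincide with $u$ on the cycle orbit but differ from $u$ elsewhere in the gadget; this yields $|v|=|u|$ together with $\fu_{u^i}(q)=\fu_{u^i v}(q)$ for every $i<k$, matching criterion $(ii)$, while with no activation the syntactic morphism remains quasi-aperiodic.

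The hardest case is $\lang = \FO(<,\MOD)$. Here the cycle is replaced by an ``unsolvable gadget'' realising a concrete unsolvable group, e.g.\ $A_5$, presented by two letters $u,v$ whose induced permutations have the orders $2$, an odd prime $k$ and coprime $l>1$ demanded by the Kaplan--Levy criterion and by Theorem~\ref{DFAcrit}$(iii)$. The principal obstacle is that criterion $(iii)$ requires the chain $\fu_x(q)\simm\fu_{xu^2}(q)\simm\fu_{xv^k}(q)\simm\fu_{x(uv)^l}(q)$ for \emph{every} $x\in\{u,v\}^\ast$, not just the activation word. I meet this by closing the gadget under the action of $u$ and $v$ along the Cayley graph of $A_5$ and collapsing all its states into a single $\simm$-class via a shared acceptance pattern, while the activation step ensures $q\not\simm\fu_u(q)$, $q\not\simm\fu_v(q)$ and $q\not\simm\fu_{uv}(q)$. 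If the intersection is empty the gadget is unreachable, so the transition monoid of the reachable subautomaton is aperiodic and $\L(\A)$ is even $\FO(<)$-definable. Checking that the overall $\A$ remains minimal and that the required $\simm$-distinctions and $\simm$-identifications survive the presence of the probe suffix will be the main technical bookkeeping.
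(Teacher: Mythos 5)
Your high-level ingredients are the right ones (reduction from DFA intersection non-emptiness, the criteria of Theorem~\ref{DFAcrit}, a long cycle for $\FO(<)$, an equal-length padding letter for $\FO(<,\equiv)$, an unsolvable permutation group with Kaplan--Levy orders $2$, an odd prime, and a coprime $l>1$ for $\FO(<,\MOD)$), but the architecture ``simulate the $\A_i$ sequentially with separators, then let an activation letter $\$$ hand control to the gadget'' does not work. A polynomial-size DFA reading $w_1\sharp w_2\sharp\dots\sharp w_n\,\$$ can check each $w_i\in\L(\A_i)$ separately but cannot check that the $w_i$ coincide---that is exactly the \PSpace-hard problem you are reducing from---so your gadget is reachable whenever every $\L(\A_i)$ is individually non-empty, not only when $\bigcap_i\L(\A_i)\neq\emptyset$, and the claim ``if the intersection is empty the gadget is unreachable'' fails. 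The construction the paper uses (generalising Cho--Huynh) avoids this by making the checking automata \emph{be} the edges of the group gadget: each transition $s_i\to_a s_j$ of the skeleton automaton is replaced by a fresh copy of $\A_i$, so a single word $u=a_1wa_2$ induces a nontrivial cycle only if the \emph{same} $w$ is accepted by every $\A_i$. It is the existential quantification over $u$ in Theorem~\ref{DFAcrit}, not any check performed by the DFA while reading, that forces a common word; your design severs that link.

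For $\FO(<,\MOD)$ there are two further problems. A fixed gadget such as $A_5$ has constantly many transitions and so cannot host the $N+1$ checking automata once these are (as they must be) woven into its edges; the paper instead uses the growing family $M(\aut^p_\MOD)\cong\text{\sc PSL}_2(p)$ with $p\geq N+2$ and $p\not\equiv\pm1\ (\text{mod}\ 10)$. More importantly, your proposal never argues the converse direction for this case: one must show that \emph{any} unsolvable group in the transition monoid of the composite automaton forces a word in $\bigcap_i\L(\A_i)$. The paper does this by projecting such a group homomorphically onto a subgroup of the skeleton group and invoking Lemma~\ref{algebralemma} (all proper subgroups of $\text{\sc PSL}_2(p)$ are solvable), which forces the projection to be the \emph{whole} group, hence forces some word in $\{u,v\}^*$ to realise the full $a$-cycle, hence forces a word accepted by all $\A_i$. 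Without an argument of this shape---and without tying reachability of the unsolvable behaviour to the intersection in the first place---the reduction has no converse, and the ``technical bookkeeping'' you defer is in fact the core of the proof.
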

\begin{proof}
%First, we construct the DFA for the $\FO(<)$ case.
Let $\M$ be a deterministic Turing machine that decides a language using at most $\ppn=P_{\M}(n)$ tape cells on any input of size $n$, for some polynomial $P_{\M}$. Given such an $\M$ and an input $\boldsymbol{x}$,
% of size $n$,
%$\boldsymbol{x} = \boldsymbol{x}_1\dots \boldsymbol{x}_n$,
our aim is to define three minimal DFAs whose languages are, respectively, $\FO(<)$-, $\FO(<,\equiv)$-, and $\FO(<,\MOD)$-definable iff $\M$ rejects $\boldsymbol{x}$, and whose sizes are polynomial in $\ppn$ and the size $|\M|$ of $\M$.
%the sizes of $\M$ and $\boldsymbol{x}$

%and whose size is polynomial in the space bound of $\M$.

Suppose $\M =(Q, \Tmabc, \Tmtran, \B, q_0, \qa)$ with a set $Q$ of states, tape alphabet $\Tmabc$ with $\B$ for blank, transition function $\Tmtran$, initial state $q_0$ and accepting state $\qa$.
%We assume that $\M$ uses at most $P(n)$ tape cells on an input of size $n$ for some polynomial $P$.
Without loss of generality we assume that
$\M$ erases the tape before accepting, its head is at the left-most cell in an accepting configuration, %a successful computation always takes $l\mod N'$ steps,
and if $\M$ does not accept the input, it runs forever.
Given an input word $\boldsymbol{x}=x_1\dots x_n$ over $\Tmabc$,
%we let $\Aiabc=...$ and
we represent configurations $\conf$ of the computation of $\M$ on $\boldsymbol{x}$ by
the $\ppn$-long word written on the tape (with sufficiently many blanks at the end) in which the symbol $y$ in the active cell is replaced by the pair $(q,y)$ for the current state $q$.
The accepting computation of $\M$ on $\boldsymbol{x}$ is encoded by a word
$\sharp\, \conf_1 \, \sharp \, \conf_2 \, \sharp\,  \dots \, \sharp \, \conf_{k-1} \, \sharp \, \conf_{k} \flat$ over
the alphabet $\Aiabc=\Tmabc\cup(Q\times\Tmabc)\cup\{\sharp,\flat\}$,
with $\conf_1,\conf_2,\dots,\conf_k$ being the subsequent configurations. In particular, $\conf_1$ is the initial configuration on $\boldsymbol{x}$ (so it is of the form $(q_0,x_1)x_2\dots x_n\B\dots\B$), and $\conf_k$ is the accepting configuration (so it is of the form $(\qa,\B)\B\dots\B$).
As usual for this representation of computations, we may regard $\Tmtran$ as a partial function
from  $\bigl(\Tmabc\cup(Q\times\Tmabc)\cup\{\sharp\}\bigr)^3$ to $\Tmabc\cup(Q\times\Tmabc)$ with $\Tmtran(\sigma^j_{i-1},\sigma^j_i,\sigma^j_{i+1})=\sigma^{j+1}_i$ for each $j< k$, where $\sigma^j_i$ is the $i$th symbol of $\conf^j$.

%As usual, we represent each configuration $\conf$ of $\M$ as the word written on the tape (including blanks at the end of tape), in which the character $a$ in the active cell is replaced by the pair $(q,a)$ with the current state $q$. We encode an accepting computation of $\M$ as the word
%
%$\sharp\, \conf_1 \, \sharp \, \conf_2 \, \sharp\,  \dots \, \sharp \, \conf_{k-1} \, \sharp \, \conf_{k} \flat$,
%
%where the $\conf_i$ are configurations of $\M$, with $\conf_1$ being an initial configuration, $\conf_{k}$ an accepting one, and $\conf_{i+1} = \delta(\conf_i)$.% Note that we arrange configurations% in pairs $\conf_i \, \flat \, \conf_{i+1}$
%separated by $\sharp$ and that $\flat$ marks the end of the computation.

Let $p_{\M,\boldsymbol{x}}=\pp$ be the first prime such that $\pp\geq\ppn+2$ and $\pp\not\equiv\pm 1\ (\text{mod}\ 10)$. By Corollary 1.6 of~\citeA{Illin2018}, $\pp$ is polynomial in $\ppn$.  %There is $l<N'$ such that $(N+1)\cdot l \equiv -3\,(\text{mod}\, N')$.
Our first aim is to define a $p+1$-long sequence of disjoint minimal DFAs $\A_i$ over $\Aiabc$.
Each $\A_i$ has size polynomial in $\ppn$, $|\M|$, and is constructible in logarithmic space; it
checks certain properties of an accepting computation on $\boldsymbol{x}$ such that
$\M$ accepts $\boldsymbol{x}$ iff the intersection of the $\L(\A_i)$ is not empty and
consists of the single word encoding the accepting computation on $\boldsymbol{x}$.

%of DFAs $\A_i=(Q_i,\Sigma',\delta_i,s_i,\{f_i\})$ with $\Sigma'=\Sigma\cup(\Sigma\times Q)\cup\{\sharp,\flat\}$, for $0 \le i < N'$. Intuitively, the DFAs are meant to check certain properties of computations of $\M$ on $\boldsymbol{x}$ so that their intersection only accepts the valid computation.

Formally, we define each $\A_i$ as an NFA but bear in mind that it can standardly be turned to a DFA by adding to it a `trash state' $\tst_i$
looping on itself with every character $\sigma\in\Aiabc$, and also adding the missing transitions that all lead to the trash state $\tst_i$.
The DFA $\A_{0}$ checks that an input starts with the initial configuration on $\boldsymbol{x}$ and ends with the accepting configuration:\\[3pt]
\centerline{
\begin{tikzpicture}[->,thick,scale=0.92,node distance=2cm, transform shape]
\node[state, initial] (1) {$t_0$};
\node[state, right of  =1] (2) {$q^0$};
\node[state, right  =1.2cm of 2] (3) {$q^1$};
\node[right  of= 3](4){$\ldots$};
\node[state, right  of= 4] (5) {$q^{n}$};
\node[right of=5](6){$\ldots$};
\node[state, right  of= 6] (7) {$q^{\ppn}$};
\node[state, below of= 7] (8) {$p_{\sharp\sharp}$};
\node[state,left of= 8](9){$p^0$};
\node[state, left =1.2cm of 9] (10) {$p^1$};
\node[left of=10](11){$\ldots$};
\node[state, left of= 11] (12) {$p^{\ppn}$};
\node[state, accepting, left of =12] (f) {$f_0$};
\draw (1) edge[above] node{$\sharp$} (2)
(2) edge[above] node{$(q_0,x_1)$} (3)
(3) edge[above] node{$x_2$} (4)
(4) edge[above] node{$x_n$} (5)
(5) edge[above] node{$\B$} (6)
(6) edge[above] node{$\B$} (7)
(8) edge [ loop right ] node{$y\neq\sharp,\flat$} (8)
(7) edge[above] node{$\sharp$} (9)
(9) edge[above] node{$(\qa,\B)$} (10)
(8) edge[above] node{$\sharp$} (9)
(9) edge[bend right,below] node{$y\neq (\qa,\B),\sharp,\flat$} (8)
(10) edge[above] node{$\B$} (11)
(11) edge[above] node{$\B$} (12)
(12) edge[above] node{$\flat$} (f)
%(8) edge [ loop above] node{$\B$} (8)
;
\end{tikzpicture}
}\\
When $1 \le i \le \ppn$,
the DFA $\A_{i}$ checks, for all $j< k$, whether the $i$th symbol of $\conf^j$ changes `according to $\Tmtran$' in passing to $\conf^{j+1}$. The non-trash part of its transition function $\delta^i$ is as follows, for $1<i<N$. (For $i=1$ and $i=N$,  some adjustments are needed.) For all $u,u',v,w,w',y,z\in\Tmabc\cup(Q\times\Tmabc)$,
\begin{align*}
& \delta^i_\sharp(t_i)=q^{i-1},\quad
\delta^i_u(q^j)=q^{j-1},\ \mbox{for $2\le j\le i-1$,}\quad
\delta^i_u(q^{1})=r_u,\quad
\delta^i_v(r_u)=r_{uv},\\
& \delta^i_w(r_{uv})=q^{i+1}_{\Tmtran(u,v,w)},\quad
% \delta^i_\sharp(q^j_z)=p^{j}_z,\quad
 \delta^i_y(q^j_z)=q^{j+1}_z,\ \mbox{for $i+1\le j\le N$}, \quad \delta^i_\sharp(q^{N}_z)=p^{i-1}_z\\
& \delta^i_y(p^j_z)=p^{j-1}_z, \mbox{ for $2\le j\le i-1$},\
\delta^i_\flat(q^{N}_z)=f_i,\ \
\delta^i_{u'}(q^{1}_z)=p_{u'z},\ \
\delta^i_z(p_{u'z})=r_{u'z};\\
%& \delta^i_{w'}(q_{u'z})=q^0_{\Tmtran(u',z,w')};\quad
& \mbox{see below, where $z=\Tmtran(u,v,w)$ and $z'=\Tmtran(u',z,w')$:}
\end{align*}
%\begin{align*}
%& \delta^i_\sharp(t_i)=q^0,\quad
%\delta^i_u(q^j)=q^{j+1},\ \mbox{for $j=0,...,i-3$,}\quad
%\delta^i_u(q^{i-2})=r_u,\quad
%\delta^i_v(r_u)=r_{uv},\\
%& \delta^i_w(r_{uv})=q^0_{\Tmtran(u,v,w)},\quad
% \delta^i_\sharp(q^j_z)=p^{j}_z,\quad
% \delta^i_y(q^j_z)=q^{j+1}_z,\ \mbox{for $j=0,...,N-3$, $j\ne N-i-1$,}\\
%&  \delta^i_\sharp(q^{N-i-1}_z)=q^{N-i}_z,\quad
%\delta^i_\flat(q^{N-i-1}_z)=f_i,\quad
%\delta^i_{u'}(q^{N-2}_z)=p_{u'z},\ \
%\delta^i_z(p_{u'z})=r_{u'z},\\
%& \delta^i_{w'}(q_{u'z})=q^0_{\Tmtran(u',z,w')};\quad
%& \mbox{see below, where $z=\Tmtran(u,v,w)$ and $z'=\Tmtran(u',z,w')$:}
%\end{align*}
%
\\%[-.4cm]
\centerline{
\begin{tikzpicture}[->,thick,scale=0.93,node distance=2cm, transform shape]
\node[state, initial] 	(ti) 	{$t_i$};
\node[state, right=5mm of ti]  (q0)  {$q^{i-1}$};
\node[below =7mm of q0]			(dotsq0)	{$\ldots$};
\node[state,below =7mm of dotsq0] 	(qi2) 	{$q^{1}$};
\node[below right=5mm of  qi2]	(dotsqi2) {$\ldots$};
\node[state, right=7mm of qi2]  (rup)  {$r_{u'}$};
\node[above right=5mm of rup]  (dotsrup)  {$\ldots$};
\node[state, above=5mm of  rup]  (ru)  {$r_{u}$};
\node[state, right=7mm of rup]  (rupz)  {$r_{u'z}$};
\node[above right=5mm of rupz]  (dotsrupz)  {$\ldots$};
\node[state, below right=9mm of rupz]  (qzp0)  {$q_{z'}^{i+1}$};
\node[right=5mm of qzp0]  (dotsqzp0)  {$\ldots$};
\node[right=5mm of ru]  (dotsru)  {$\ldots$};
\node[state, above=5mm of  dotsru]  (ruv)  {$r_{uv}$};
\node[below right=5mm of ruv]  (dotsruv)  {$\ldots$};
\node[state, right=8mm of ruv]  (qz0)  {$q_{z}^{i+1}$};
\node[right=5mm of qz0]  (dotsqz0)  {$\ldots$};
\node[state, right=5mm of dotsqz0]  (qzNi1)  {$q_{z}^{N}$};
\node[state, right=5mm of qzNi1]  (qzNi)  {$p_{z}^{i-1}$};
\node[right=5mm of qzNi]  (dotsqzNi)  {$\ldots$};
\node[state, right=5mm of dotsqzNi]  (qzN2)  {$p_{z}^{1}$};
\node[below left=5mm of  qzN2]	(dotsqzN2) {$\ldots$};
\node[state, below=14mm of qzN2]  (pupz)  {$p_{u'z}$};
\node[state,accepting, below right=7mm of qzNi1] (fi) {$f_i$};
\draw
(ti) 	edge[above] 	node{$\sharp$}  (q0)
(q0) 	edge[left]	 node{$y$} 	(dotsq0)
(dotsq0) 	edge[left]	 node{$y$} 	(qi2)
(qi2) 	edge[above] 	node{$u'$}  (rup)
(qi2) edge[above] 	node{$$}  (dotsqi2)
(rup) edge[below] 	node{$z$}  (rupz)
(rup) edge[above] 	node{$$}  (dotsrup)
(rupz) edge[above] 	node{$$}  (dotsrupz)
(rupz) edge[below left] 	node{$w'$}  (qzp0)
(qzp0) edge[above] 	node{$y$}  (dotsqzp0)
(qi2) 	edge[above left] 	node{$u$}  (ru)
(ru) edge[above] 	node{$$}  (dotsru)
(ru) 	edge[above left] 	node{$v$}  (ruv)
(ruv) 	edge[above] 	node{$w$}  (qz0)
(ruv) edge[above] 	node{$$}  (dotsruv)
(qz0) edge[above] 	node{$y$}  (dotsqz0)
(dotsqz0) 	edge[above] node{$y$}  (qzNi1)
(qzNi1) edge[above] 	node{$\sharp$}  (qzNi)
(qzNi) edge[above] 	node{$y$}  (dotsqzNi)
(dotsqzNi) edge[above] node{$y$}  (qzN2)
(qzN2) edge[left] node{$u'$}  (pupz)
(qzN2) edge[above] 	node{$$}  (dotsqzN2)
(pupz) edge[below] 	node{$z$}  (rupz)
(qzNi1) edge[below left] node{$\flat$}  (fi)
;
\end{tikzpicture}
}
\\
Finally, if $\ppn+1\le i\le \pp$  then $\A_i$ accepts all words over $\Aiabc$ with a single occurrence of $\flat$, which is the input's last character:\\
\centerline{
\begin{tikzpicture}[->,thick,scale=0.95, transform shape]
\node[state, initial] (1) {$t_i$};
\node[state, accepting, right = 1cm of  1] (2) {$f_i$};
\draw (1) edge[loop above] node{$\sigma\neq\flat$} (1)
(1) edge[above] node{$\flat$} (2)
;
\end{tikzpicture}
}\\
Note that $\A_{\pp-1}=\A_\pp$ as $\pp\geq\ppn+2$.
It is not hard to check that each $\A_i$ is a minimal DFA that does not contain nontrivial cycles and
the following holds:
%So, by Theorem~\ref{DFAcrit}~$(i)$ we have:

\begin{lemma}\label{l:DFAs}
%$(i)$
%All of the languages $\L(\A_i)$, for $i \leq\pp$, are $\FO(<)$-definable.
%$(ii)$
$\M$ accepts $\boldsymbol{x}$ iff $\bigcap_{i=0}^{\pp} \L(\A_i) \ne \emptyset$, in which case this language consists of a single word that encodes the accepting computation of $\M$ on $\boldsymbol{x}$.
%Without loss of generality we can assume that $|w|\neq -1\!\!\!\!\mod N'$.
%
%$(iii)$ $\bigcap_{i=0}^{N'} \L(\A_i') = \{\flat\}$ iff $\M$ rejects $\boldsymbol{x}$; otherwise $\bigcap_{i=0}^{N'} \L(\A_i') = \{\flat,w\}$, where $w$ encodes the accepting computation of $\M$ on $\boldsymbol{x}$.
\end{lemma}

%***

Next, we require three sequences of DFAs $\aut^p_<$, $\aut^p_\equiv$ and $\aut^p_\MOD$, where $p>5$ is a prime number with  $p\not\equiv\pm 1\ (\text{mod}\ 10)$; see the picture below for $p=7$:\\[4pt]
\centerline{
\begin{tikzpicture}[->,thick,scale=0.92,node distance=2cm, transform shape]
  \node[state,initial,accepting] (0)at (180:2cm) {$s_0$};
  \foreach \x in {1,...,6}{%
    \pgfmathparse{(-\x)*(360/7)+180}]
    \node[state] (\x) at (\pgfmathresult:2cm) {$s_\x$};
  }
 \draw (0) edge[left] node{$a$} (1)
(1) edge[above] node{$a$} (2)
(2) edge[above right] node{$a$} (3)
(3) edge[right] node{$a$} (4)
(4) edge[below right] node{$a$} (5)
(5) edge[below] node{$a$} (6)
(6) edge[below left] node{$a$} (0)
;
\node (0) at (0,0) {\Large $\aut^7_<$};
\end{tikzpicture}
\qquad
\begin{tikzpicture}[->,thick,scale=0.92,node distance=2cm, transform shape]
  \node[state,initial,accepting] (0)at (180:2cm) {$s_0$};
  \foreach \x in {1,...,6}{%
    \pgfmathparse{(-\x)*(360/7)+180}
    \node[state] (\x) at (\pgfmathresult:2cm) {$s_\x$};
  }
 \draw (0) edge[left] node{$a$} (1)
 (0) edge[loop right,right] node{$\natural$}(0)
(1) edge[above] node{$a$} (2)
(1) edge[loop left,left] node{$\natural$}(1)
(2) edge[above right] node{$a$} (3)
(2) edge[loop below,below] node{$\natural$}(2)
(3) edge[right] node{$a$} (4)
(3) edge[loop above,above] node{$\natural$}(3)
(4) edge[below right] node{$a$} (5)
(4) edge[loop below,below] node{$\natural$}(4)
(5) edge[below] node{$a$} (6)
(5) edge[loop above,above] node{$\natural$}(5)
(6) edge[below left] node{$a$} (0)
(6) edge[loop left,left] node{$\natural$}(6)
%(0) edge[left] node{$a,\natural$} (1)
%(1) edge[above] node{$a,\natural$} (2)
%(2) edge[above right] node{$a,\natural$} (3)
%(3) edge[right] node{$a,\natural$} (4)
%(4) edge[below right] node{$a,\natural$} (5)
%(5) edge[below] node{$a,\natural$} (6)
%(6) edge[below left] node{$a,\natural$} (0)
;
\node (0) at (0.5,0) {\Large$\aut^7_\equiv$};
\end{tikzpicture}
}
\\[4pt]
\centerline{
\begin{tikzpicture}[->,thick,scale=0.92,node distance=2cm, transform shape]
 \node[state,initial,accepting] (0)at (180:2cm) {$s_0$};
  \foreach \x in {1,...,6}{%
    \pgfmathparse{(-\x)*(360/7)+180}
    \node[state] (\x) at (\pgfmathresult:2cm) {$s_\x$};
  }
  \node[state, below left of =0] (7){$s_7$};
\draw (0) edge[above left] node{$a$} (1)
(0) edge[bend left=10,below] node{$\natural$} (7)
(1) edge[above] node{$a$} (2)
(1) edge[bend left=20,left] node{$\natural$} (6)
(2) edge[above right] node{$a$} (3)
(2) edge[bend right=40,left] node{$\natural$} (3)
(3) edge[right] node{$a$} (4)
(3) edge[bend left=80,left] node{$\natural$} (2)
(4) edge[below right] node{$a$} (5)
(4) edge[bend right=40,above] node{$\natural$} (5)
(5) edge[below] node{$a$} (6)
(5) edge[bend left=80,above] node{$\natural$} (4)
(6) edge[left] node{$a$} (0)
(6) edge[bend right=40,right] node{$\natural$} (1)
(7) edge[loop left,left] node{$a$} (7)
(7) edge[bend left=10,above ] node{$\natural$} (0)
;
\node (0) at (-2.8,2) {\Large$\aut^7_\MOD$};
\end{tikzpicture}
}\\[2pt]
In general, the first sequence
%, shown below for $p=7$,
is $\aut^p_< = \bigl(\{s_{i} \mid i < p \},\{a\},\perm{\aut}{<}{p},s_0,\{s_0\} \bigr)$, where
$\perm{\aut}{<}{p}_a(s_i)=s_j$ if $i,j<p$ and $j\equiv i+1\ (\text{mod}\ p)$.
%
%$$
%\aut^p_< = \big(\{s_{i} \mid 0 \le i \le p-1 \},\{a\},\{(s_i,a,s_{i+1\!\!\!\!\mod p}) \mid 0 \le i \le p-1 \},\{s_0\},\{s_0\} \big)
%$$
%
Then $\L(\aut^p_<)$ comprises all words of the form $(a^{p})^\ast$,
$\aut^p_<$ is the minimal DFA for $\L(\aut^p_<)$, and the syntactic monoid $M(\aut^p_<)$
 is the cyclic group of order $p$ (generated by the permutation $\smash{\perm{\aut}{<}{p}_a}$).

The second sequence is
% shown above for $p=7$, is defined in general as
%\\[2pt]
%%
%\centerline{
%\begin{tikzpicture}[->,thick,scale=0.8,node distance=2cm, transform shape]
%  \node[state,initial,accepting] (0)at (180:2cm) {$s_0$};
%  \foreach \x in {1,...,6}{%
%    \pgfmathparse{(-\x)*(360/7)+180}
%    \node[state] (\x) at (\pgfmathresult:2cm) {$s_\x$};
%  }
% \draw (0) edge[left] node{$a$} (1)
% (0) edge[loop right,right] node{$\natural$}(0)
%(1) edge[above] node{$a$} (2)
%(1) edge[loop left,left] node{$\natural$}(1)
%(2) edge[above right] node{$a$} (3)
%(2) edge[loop below,below] node{$\natural$}(2)
%(3) edge[right] node{$a$} (4)
%(3) edge[loop above,above] node{$\natural$}(3)
%(4) edge[below right] node{$a$} (5)
%(4) edge[loop below,below] node{$\natural$}(4)
%(5) edge[below] node{$a$} (6)
%(5) edge[loop above,above] node{$\natural$}(5)
%(6) edge[below left] node{$a$} (0)
%(6) edge[loop left,left] node{$\natural$}(6)
%%(0) edge[left] node{$a,\natural$} (1)
%%(1) edge[above] node{$a,\natural$} (2)
%%(2) edge[above right] node{$a,\natural$} (3)
%%(3) edge[right] node{$a,\natural$} (4)
%%(4) edge[below right] node{$a,\natural$} (5)
%%(5) edge[below] node{$a,\natural$} (6)
%%(6) edge[below left] node{$a,\natural$} (0)
%;
%\node (0) at (3.5,-2) {\Large$\aut^7_\equiv$};
%\end{tikzpicture}}\\[2pt]
%
$\aut^p_\equiv = \bigl(\{s_{i} \mid i < p \},\{a,\natural\},\perm{\aut}{\equiv}{p},s_0,\{s_0\} \bigr)$, where
$\perm{\aut}{\equiv}{p}_\natural(s_i)=s_i$ and
$\perm{\aut}{\equiv}{p}_a(s_i)=s_j$ if $i,j<p$ and $j\equiv i+1\ (\text{mod}\ p)$.
One can check that $\L(\aut^p_\equiv)$ comprises all words of $a$'s and $\natural$'s
where the number of $a$'s is divisible by $p$,
$\aut^p_\equiv$ is the minimal DFA for $\L(\aut^p_\equiv)$, and $M(\aut^p_\equiv)$
 is the cyclic group of order $p$ (generated by the permutation $\perm{\aut}{\equiv}{p}_a$).
%Note also that for every $\fu$ in $M(\aut^p_\equiv)$ there is some $w\in\{a,\natural\}^{p-1}$ such that $\fu=\perm{\aut}{\equiv}{p}_w$.

%is formally defined as
%
%$$
%\aut^p_\equiv= \big(\{s_{i} \mid 0 \le i \le p-1\},\{a,\natural\},\beta^{p,\equiv},\{s_0\},\{s_0\}\big)
%$$
%
%with the transition function
%
%$$
%\beta^{p,\equiv} = \{(s_i,a,s_{i+1\!\!\!\!\mod p}),(s_i,\natural,s_{i}) \mid 0 \le i \le p-1 \}
%$$
%
%and the language
%
%$$
%\L(\aut^p_\equiv) = \{w\mid |w|_a = 0\!\!\!\!\mod p\},
%$$
%
%where $|w|_a$ is the number of occurrences of $a$ in $w$.
%
%Its syntactic monoid is also the cyclic group of order $p$ generated by $\beta_a^{p,\equiv}$, with $\beta_\natural^{p,\equiv}$ being the identity element.

The third sequence is
%in , depicted below for $p=7$, is defined in\\[2pt]
%
%
%general as
$\aut^p_\MOD = \bigl(\{s_{i} \mid i \leq p \},\{a,\natural\},\perm{\aut}{\MOD}{p},s_0,\{s_0\} \bigr)$, where
\begin{itemize}
\item
$\perm{\aut}{\MOD}{p}_a(s_p)=s_p$, and
$\perm{\aut}{\MOD}{p}_a(s_i)=s_j$ if $i,j<p$ and $j\equiv i+1\ (\text{mod}\ p)$;

\item
$\perm{\aut}{\MOD}{p}_\natural(s_0)=s_p$, $\perm{\aut}{\MOD}{p}_\natural(s_p)=s_0$, and
$\perm{\aut}{\MOD}{p}_\natural(s_i)=s_j$ whenever $1\leq i,j<p$ and $i\cdot j\equiv p-1\ (\text{mod}\ p)$,
that is, $j = -1/i$ in the finite field $\mathbb F_p$.
\end{itemize}
One can check that $\aut^p_\MOD$ is the minimal DFA for its language,
and the syntactic monoid $M(\aut^p_\MOD)$ is the permutation group generated by
$\perm{\aut}{\MOD}{p}_a$ and $\perm{\aut}{\MOD}{p}_\natural$.

\begin{lemma}\label{algebralemma}
For any prime $p>5$ with $p\not\equiv\pm 1\ (\text{mod}\ 10)$, the group $M(\aut^p_\MOD)$ is unsolvable, but all of its proper subgroups are solvable.

%$(ii)$ Let $S'\subseteq S$. Let $G_{S'}$ be a subgroup of $G_{p}$ that preserves $S'$. Let $G'_{S'}$ be the group of permutations of $S'$ corresponding to the elements of $G_{S'}$. If $G'_{S'}$ is unsolvable, then $S'=S$ and $G'_{S'}=G_{S'}=G_p$.
%
%Let $\sigma_1,\sigma_2\in G_q$, $S'\subseteq S$, $\sigma_1(S')\subseteq S'$, $\sigma_2(S')\subseteq S'$ and the group $G'$ generated by $\sigma_1,\sigma_2$ is unsolvable, then $S'=S$ and $G'=G_{N'}$.
%\end{enumerate}
\end{lemma}
\begin{proof}
It is readily seen that the order of the permutation $\perm{\aut}{\MOD}{p}_\natural$ is $2$, that of $\perm{\aut}{\MOD}{p}_a$ is $p$, while the order of the inverse of $\perm{\aut}{\MOD}{p}_{\natural a}$ is the same as the
order of $\perm{\aut}{\MOD}{p}_{\natural a}$, which is $3$. So $M(\aut^p_\MOD)$ is unsolvable, for any prime $p$,
by the Kaplan--Levy criterion.
To prove that all proper subgroups of $M(\aut^p_\MOD)$ are solvable,
we show that $M(\aut^p_\MOD)$ is a subgroup of the
\emph{projective special linear group} $\text{\sc PSL}_2(p)$. If $p$ is a prime with $p>5$ and $p\not\equiv\pm 1\ (\text{mod}\ 10)$, then
all proper subgroups of $\text{\sc PSL}_2(p)$ are solvable~\cite<e.g.,>[Theorem~2.1]{DBLP:conf/bcc/King05}.
(So $M(\aut^p_\MOD)$  is in fact isomorphic to the unsolvable group $\text{\sc PSL}_2(p)$.)
Consider the set $P=\{0,1,\dots,p-1,\infty\}$ of all points of the projective line over the field $\mathbb F_p$.
By identifying $s_i$ with $i$ for $i<p$, and $s_p$ with $\infty$,
we may regard the elements of $M(\aut^p_\MOD)$ as $P\to P$ functions.
The group $\text{\sc PSL}_2(p)$ consists of all $P\to P$ functions of the form
$i\mapsto\frac{w\cdot i + x}{y\cdot i+z}$,
where $w\cdot z-x\cdot y=1$,  with the field arithmetic of $\mathbb F_p$ extended by $i+\infty=\infty$ for any $i\in P$, $0\cdot\infty=1$ and $i\cdot\infty=\infty$ for $i\ne 0$.
%
%\begin{multline*}
%i\mapsto\frac{w\cdot i + x}{y\cdot i+z},\ \
%\mbox{where $w\cdot z-x\cdot y=1$,  with the field arithmetic of $\mathbb F_p$ }\\
%\mbox{extended by $i+\infty=\infty$ for any $i\in P$, $0\cdot\infty=1$ and $i\cdot\infty=\infty$ for $i\ne 0$.}
%\end{multline*}
%
The two generators of $M(\aut^p_\MOD)$ are in $\text{\sc PSL}_2(p)$: take
$w=1$, $x=1$, $y=0$, $z=1$ for $\perm{\aut}{\MOD}{p}_a$,
and $w=0$, $x=1$, $y=p-1$, $z=0$ for $\perm{\aut}{\MOD}{p}_\natural$.
\end{proof}
%Any element of $\text{PSL}_2(p)$ is determined by its action on the projective line $\mathbb P^1(\mathbb F_p)$, which can be represented as a permutation of the set $\{(i,1) \mid i<p\} \cup \{(1,0)\}$ of vectors. The group $\text{PSL}_2(p)$ is generated\nb{why?} by $(\begin{smallmatrix} 1 & 1\\0 & 1\end{smallmatrix})$ and  $(\begin{smallmatrix} 0 & 1\\-1 & 0 \end{smallmatrix})$.
%The actions of $\perm{\aut}{\MOD}{p}_a$ and $\perm{\aut}{\MOD}{p}_\natural$ on $\{s_i\mid i\leq p\}$ exactly correspond to the actions of $(\begin{smallmatrix} 1 & 1\\0 & 1\end{smallmatrix})$ and  $(\begin{smallmatrix} 0 & 1\\-1 & 0 \end{smallmatrix})$ on the projective line under the map $s_i \mapsto (i,1)$, for $i < p$, and $s_p \mapsto (1,0)$. Therefore, $M(\aut^p_\MOD)$  is isomorphic to $\text{PSL}_2(p)$.

%Note that the unsolvability of $M(\aut^p_\MOD)$ for any prime $p\geq 5$ can also be shown using the Kaplan--Levy criterion: the order of the permutation $\perm{\aut}{\MOD}{p}_\natural$ is $2$, the order of $\perm{\aut}{\MOD}{p}_a$ is $p$, while the order of the inverse of $\perm{\aut}{\MOD}{p}_{\natural a}$ is the same as the order of $\perm{\aut}{\MOD}{p}_{\natural a}$, which is $3$.

%We are now in a position to establish the \PSpace-lower bound:

%***

Finally, we
%Now take some fresh symbols $a_1,a_2$.
define automata $\A_<$, $\A_\equiv$, $\A_\MOD$ over the tape alphabet
$\Aabc = \Aiabc\cup\{a_1,a_2,\natural\}$, where $a_1,a_2$ are fresh symbols.
We take, respectively,  $\aut^{\pp}_<$, $\aut^{\pp}_\equiv$, $\aut^{\pp}_\MOD$  and replace each transition $s_i\to_a s_j$ in them by a fresh copy of $\A_i$, for $i \le \pp$, as shown in the picture below:\\[3pt]
%where $q_0^i$ is the initial state of $\A_i$.\\
%Each of these automata has the same tape alphabet $\Sigma'' = \Sigma'\cup\{a_1,a_2,\natural\}$.
%
\centerline{
\begin{tikzpicture}[->,thick,scale=0.9,node distance=2cm, transform shape]
\node[state] (si) {$s_i$};
%\node[above right =0.7cm of si](B){$\aut_{<,\equiv,\MOD}$};
\node[state, right  of =si] (sj) {$s_{j}$};
\node[right  of= sj] (maps) {\Large\bf$\leadsto$};
\node[state, right  of= maps] (si2) {$s_i$};
%\node[above of=si2](A){$\A_{<,\equiv,\MOD}$};
\node[state, right of=si2] (q) {$t_i$};
\node[state,right of= q](f){$f_i$};
\node[state, right of =f] (sj2) {$s_j$};
%\node[above right =0.5cm of q](A){$\A_i$};
\node[fit = (q)(f), basic box = black, header = $\A_i$] (A) {};
\draw (si) edge [above] node{$a$} (sj)
(si2) edge [above] node{$a_1$} (q)
(f) edge [above] node{$a_2$} (sj2)
(q) edge [dotted,above] node{$ $} (f)
%(8) edge [ loop above] node{$\B$} (8)
;
\end{tikzpicture}
}\\
We make $\A_<$, $\A_\equiv$, $\A_\MOD$ deterministic by adding a trash state $\tst$
looping on itself with every $y\in\Aabc$, and adding the missing transitions leading to $\tst$.
It follows that $\A_<$, $\A_\equiv$, $\A_\MOD$ are minimal DFAs of size polynomial in $\ppn$ and $|\M|$, which can clearly be constructed in logarithmic space.

\begin{lemma}\label{l:empty}
$(i)$ $\L(\A_<)$ is $\FO(<)$-definable iff $\bigcap_{i=0}^{\pp} \L(\A_i)= \emptyset$.

$(ii)$ $\L(\A_\equiv)$ is $\FO(<,\equiv)$-definable iff $\bigcap_{i=0}^{\pp} \L(\A_i)= \emptyset$.

$(iii)$ $\L(\A_\MOD)$ is $\FO(<,\MOD)$-definable iff  $\bigcap_{i=0}^{\pp} \L(\A_i)= \emptyset$.
\end{lemma}
\begin{proof}
As $\A_<,\A_\equiv,\A_\MOD$ are minimal, we can replace $\simm$ by $=$ in the conditions
of Theorem~\ref{DFAcrit}.
For the ($\Rightarrow$) directions, given some  $w\in\bigcap_{i=0}^{\pp} \L(\A_i) $, in each case we show how to satisfy the corresponding condition of Theorem~\ref{DFAcrit}:
$(i)$ take $u=a_1wa_2$, $q=s_0$, and $k=\pp$;
$(ii)$ take $u=a_1wa_2$, $v=\natural^{|u|}$, $q=s_0$, and $k=\pp$;
$(iii)$ take $u=\natural$,  $v=a_1wa_2$, $q = s_0$, $k = \pp$ and $l = 3$.

$(\Leftarrow)$ We show that the corresponding condition of Theorem~\ref{DFAcrit}
implies non-emptiness of $\bigcap_{i=0}^{\pp} \L(\A_i) $.
To this end, we define a
$\Aabc^\ast\to\{a,\natural\}^\ast$ homomorphism by taking $h(\natural)=\natural$, $h(a_1)=a$, and $h(b)=\varepsilon$ for all other $b\in\Aabc$.

\smallskip
$(i)$ and $(ii)$: Let $\circ\in\{<,\equiv\}$ and suppose $q$ is a state in $\A^\pp_\circ$ and $u'\in\Aabc^\ast$ such that $q\ne\perm{\A}{\circ}{\pp}_{u'}(q)$ and
$q=\perm{\A}{\circ}{\pp}_{(u')^k}(q)$ for some $k$.
Let $S=\{s_0,s_1,\dots,s_{\pp-1}\}$.
 We claim that there exist $s\in S$ and $u\in \Aabc^\ast$ such that
 \begin{align}
 \label{unotid}
 & s\ne\perm{\A}{\circ}{\pp}_{u}(s),%\\
 \end{align}
 \begin{align}
 \label{uins}
 & \perm{\A}{\circ}{\pp}_{x}(s)\in S,\quad\mbox{for every $x\in\{u\}^\ast$.}
 \end{align}
 Indeed, observe that none of the states along the cyclic $q\to_{(u')^k} q$ path $\Pi$ in $\A^\pp_\circ$ is $\tst$.
 So there is some state along $\Pi$ that is in $S$, as otherwise one of the $\A_i$ would contain a nontrivial cycle.
 Therefore, $u'$ must be of the form $w\natural^n a_1w'$ for some $w\in\Aiabc^\ast$, $n<\omega$ and $w'\in\Aabc^\ast$.
 It is easy to see that $s=\perm{\A}{\circ}{\pp}_{(u')^{k-1}w}(q)$ and $u=\natural^na_1w'w$ is as required in \eqref{unotid} and \eqref{uins}.

 As $M(\aut^\pp_\circ)$ is a finite group, $\bigl\{\perm{\aut}{\circ}{\pp}_{h(x)}\mid x\in\{u\}^\ast\bigr\}$
 forms a subgroup $\G$ in it (the subgroup generated by $\perm{\aut}{\circ}{\pp}_{h(u)}$).
 We show that $\G$ is nontrivial by finding its nontrivial homomorphic image.
 By \eqref{uins}, for any $x\in\{u\}^\ast$, the restriction $\perm{\A}{\circ}{\pp}_{x}\!\!\restriction_{S'}$ of
 $\perm{\A}{\circ}{\pp}_{x}$ to
 $S'=\bigl\{ \perm{\A}{\circ}{\pp}_{y}(s)\mid y\in\{u\}^\ast\bigr\}$ is an $S'\to S'$ function and
 $\perm{\A}{\circ}{\pp}_{x}\!\!\restriction_{S'}=\perm{\aut}{\circ}{\pp}_{h(x)}\!\!\restriction_{S'}$.
As  $M(\aut^\pp_\circ)$ is a group of permutations on a set containing $S'$,
$\perm{\aut}{\circ}{\pp}_{h(x)}\!\!\restriction_{S'}$ is a permutation of $S'$, for every $x\in\{u\}^\ast$.
Thus, $\bigl\{\perm{\aut}{\circ}{\pp}_{h(x)}\!\!\restriction_{S'}\mid x\in\{u\}^\ast\bigr\}$ is a homomorphic image of $\G$
that is nontrivial by~\eqref{unotid}.

As $\G$ is a nontrivial subgroup of the cyclic group $M(\aut^\pp_\circ)$ of order $\pp$ and $\pp$ is a prime, $\G=M(\aut^\pp_\circ)$. Then there is $x\in\{u\}^\ast$ with $\perm{\aut}{\circ}{\pp}_{h(x)}=\perm{\aut}{\circ}{\pp}_{a}$ (a permutation containing the $\pp$-cycle $(s_0 s_1\dots s_{\pp-1})$ `around' all elements of $S$), and so $S'=S$
and $x=\natural^n a_1wa_2w'$ for some $n<\omega$, $w\in\Aiabc^\ast$, and $w'\in\Aabc^\ast$.
As $n=0$ when $\circ=<$ and $\perm{\A}{\equiv}{\pp}_{\natural^n}(s)$ for every $s\in S$,
$S'=S$ implies that $w\in\bigcap_{i=0}^{\pp-1} \L(\A_i)=\bigcap_{i=0}^{\pp} \L(\A_i)$.

$(iii)$ Suppose $q$ is a state in $\A^\pp_\MOD$ and $u',v'\in\Aabc^\ast$ such that
$q\ne\perm{\A}{\MOD}{\pp}_{u'}(q)$, $q\ne\perm{\A}{\MOD}{\pp}_{v'}(q)$, $q\ne\perm{\A}{\MOD}{\pp}_{u'v'}(q)$,
and $\perm{\A}{\MOD}{\pp}_{x}(q)=\perm{\A}{\MOD}{\pp}_{x(u')^2}(q)=\perm{\A}{\MOD}{\pp}_{x(v')^k}(q)=\perm{\A}{\MOD}{\pp}_{x(u'v')^l}(q)$ for some odd prime $k$ and number $l$ that is coprime to both $2$ and $k$.
Take $S=\{s_0,s_1,\dots,s_{\pp}\}$.
 We claim that there exist $s\in S$ and $u,v\in \Aabc^\ast$ such that
\begin{align}
%\label{allstart}
% & \mbox{both $u$ and $v$ start with $a_1$},\\
%& s\ne s_\pp,\\
 \label{allnotid}
 & s\ne\perm{\A}{\MOD}{\pp}_{u}(s),\ s\ne\perm{\A}{\MOD}{\pp}_{v}(s),\ s\ne\perm{\A}{\MOD}{\pp}_{uv}(s),\\
  \label{allins}
 & \perm{\A}{\MOD}{\pp}_{x}(s)\in S,\quad\mbox{for every $x\in\{u,v\}^\ast$,}%\\
\end{align}
\begin{align}
  \label{allorder}
 & \perm{\A}{\MOD}{\pp}_{x}(s)=\perm{\A}{\MOD}{\pp}_{xu^2}(s)=\perm{\A}{\MOD}{\pp}_{xv^k}(s)=\perm{\A}{\MOD}{\pp}_{x(uv)^l}(s),\quad\mbox{for every $x\in\{u,v\}^\ast$.}
\end{align}
Indeed, by an argument similar to the one in the proof of $(i)$ and $(ii)$ above, we must have $u'=w_u\natural^n a_1w'_u$ and $v'=w_v\natural^m a_1w'_v$ for some $w_u,w_v\in\Aiabc^\ast$, $n,m<\omega$ and $w'_u,w'_v\in\Aabc^\ast$.
 For every $x\in\{u,v\}^\ast$, as both
  $\perm{\A}{\MOD}{\pp}_{xw_u}(q)$ and $\perm{\A}{\MOD}{\pp}_{xw_v}(q)$
 are in $S$, they must be the same state. Using this
it is not hard to see that $s=\perm{\A}{\MOD}{\pp}_{u'w_u}(q)$, $u=\natural^na_1w'_uw_u$ and $v=\natural^ma_1w'_vw_v$ are as required in \eqref{allnotid}--\eqref{allorder}.

 As $M(\aut^\pp_\MOD)$ is a finite group, the set $\bigl\{\perm{\aut}{\MOD}{\pp}_{h(x)}\mid x\in\{u,v\}^\ast\bigr\}$
 forms a subgroup $\G$ in it (the subgroup generated by $\perm{\aut}{\MOD}{\pp}_{h(u)}$ and $\perm{\aut}{\MOD}{\pp}_{h(v)}$). We show that $\G$ is unsolvable by finding an unsolvable homomorphic
 image of it. To this end, we let $S'=\bigl\{ \perm{\A}{\MOD}{\pp}_{y}(s)\mid y\in\{u,v\}^\ast\bigr\}$.
 Then \eqref{allins} implies that $S'\subseteq S$ and
 \begin{equation}\label{sclosed}
 \perm{\aut}{\MOD}{\pp}_{h(x)}(s')=\perm{\A}{\MOD}{\pp}_{x}(s')\in S',\quad
 \mbox{for all $s'\in S$ and $x\in\{u,v\}^\ast$,}
 \end{equation}
 and so the restriction $\perm{\A}{\MOD}{\pp}_{x}\!\!\restriction_{S'}$ of
 $\perm{\A}{\MOD}{\pp}_{x}$ to $S'$ is an $S'\to S'$ function and
 $\perm{\A}{\MOD}{\pp}_{x}\!\!\restriction_{S'}=\perm{\aut}{\MOD}{\pp}_{h(x)}\!\!\restriction_{S'}$.
 As  $M(\aut^\pp_\MOD)$ is a group of permutations on a set containing $S'$,
$\perm{\aut}{\MOD}{\pp}_{h(x)}\!\!\restriction_{S'}$ is a permutation of $S'$, for any  $x\in\{u,v\}^\ast$.
It follows that $\{\perm{\aut}{\MOD}{\pp}_{h(x)}\!\!\restriction_{S'}\mid x\in\{u,v\}^\ast\!\}$ is a homomorphic image of $\G$, which
is unsolvable by the Kaplan--Levy criterion: by \eqref{allnotid}, \eqref{allorder}, and $2$ and $k$ being primes,
the order of the permutation $\perm{\aut}{\MOD}{\pp}_{h(u)}\!\!\restriction_{S'}$ is $2$,
the order of $\perm{\aut}{\MOD}{\pp}_{h(v)}\!\!\restriction_{S'}$ is $k$,
and the order of $\perm{\aut}{\MOD}{\pp}_{h(uv)}\!\!\restriction_{S'}$ (which is the same as the order of its inverse)
is a $>1$ divisor of $l$, and so coprime to both $2$ and $k$.

As $\G$ is an unsolvable subgroup of $M(\aut^\pp_\MOD)$, Lemma~\ref{algebralemma} implies that
$\G=M(\aut^\pp_\MOD)$, so $\{u,v\}^\ast\not\subseteq\natural^\ast$. We claim that $S'=S$.
Indeed, let $x\in\{u,v\}^\ast$ be such that $\perm{\aut}{\MOD}{\pp}_{h(x)}=\perm{\aut}{\MOD}{\pp}_{a}$.
As $|S'|\geq 2$ by \eqref{allnotid}, $s\in \{s_0,\dots,s_{\pp-1}\}$, and so
$\{s_0,\dots,s_{\pp-1}\}\subseteq S'$ follows by \eqref{sclosed}.  As there is
$y\in\{u,v\}^\ast$ with $\perm{\aut}{\MOD}{\pp}_{h(y)}=\perm{\aut}{\MOD}{\pp}_{\natural}$, $s_\pp\in S'$ also follows by \eqref{sclosed}.
Finally, as $\{u,v\}^\ast\not\subseteq\natural^\ast$, there is $x\in\{u,v\}^\ast$ of the form
$\natural^na_1wa_2w'$, for some $n<\omega$, $w\in\Aiabc$ and $w'\in\Aabc^\ast$.
As $S'=S$, $\perm{\aut}{\MOD}{\pp}_{x}(s_i)\in S$ for every $i\leq p$, and so
$w\in\bigcap_{i=0}^{\pp} \L(\A_i)$.
\end{proof}

%Therefore, there is $x\in\{u,v\}^\ast$ with  (the permutation $(s_0 s_1\dots s_{\pp-1})(s_{\pp})$ on $S$ of order $\pp$), and so $S'$ is either $\{s_0,\dots,s_{\pp-1}\}$ or $\{s_{\pp}\}$. As $|S'|\geq 2$ by \eqref{allnotid}, it follows that $S'=\{s_0,\dots,s_{\pp-1}\}$.Thus, $h(x)\notin\natural^\ast$, and so $x=\natural^na_1wa_2w'$ for some $n<\omega$, $w\in\Aiabc$ and $w'\in\Aabc^\ast$. Therefore,if $n$ is even then $w\in\bigcap_{i=0}^{\pp-1} \L(\A_i)=\bigcap_{i=0}^{\pp} \L(\A_i)$, and if $n$ is odd then $w\in\bigcap_{i=1}^{\pp} \L(\A_i)=\bigcap_{i=0}^{\pp} \L(\A_i)$.
%$x=a_1wa_2w'$ for some $w\in\Aiabc^\ast$, $w'\in\Aabc^\ast$.
%As $S'=S$, it follows that $w\in\bigcap_{i=0}^{\pp-1} \L(\A_i)$.

%As $\A_<$, $\A_\equiv$, and $\A_\MOD$ are all of size polynomial in $\ppn$ and $|\M|$,
Now Theorem \ref{DFAhard} clearly follows from Lemmas~\ref{l:DFAs} and \ref{l:empty}.
\end{proof}

%**************

\section{Deciding FO-definability of 2NFAs in \textsc{PSpace}}\label{sec:2nfa}

Using the criterion of Theorem~\ref{DFAcrit}~$(i)$, \citeA{DBLP:journals/iandc/Stern85} showed that
deciding whether the language of any given DFA is $\FO(<)$-definable can be done in \PSpace.
In this section, we also apply the criteria of Theorem~\ref{DFAcrit} to provide
\PSpace-algorithms deciding whether the language of any given 2NFA is $\lang$-definable, for $\lang \in \{ \FO(<), \FO(<,\equiv), \FO(<,\MOD)\}$.

Let $\A = (Q, \Sigma, \delta, Q_0, F)$ be a 2NFA. Similarly to~\citeA{carton_et_al:LIPIcs:2015:5413},
we first construct an exponential-size DFA $\A'$ with $\L(\A)=\L(\A')$. To this end, for any $w \in \Sigma^+$, we introduce four binary relations $\mathsf{b}_{lr}(w)$, $\mathsf{b}_{rl}(w)$, $\mathsf{b}_{rr}(w)$, and $\mathsf{b}_{ll}(w)$ on $Q$ describing the \emph{left-to-right}, \emph{right-to-left}, \emph{right-to-right}, and \emph{left-to-left behaviour of} $\A$ \emph{on} $w$. Namely,
\begin{itemize}
\item $(q,q') \in \mathsf{b}_{lr}(w)$ if there is a run of $\A$ on $w$ from $(q, 0)$ to $(q', |w|)$;

\item $(q,q') \in \mathsf{b}_{rr}(w)$ if there is a run of $\A$ on $w$ from $(q, |w|-1)$ to $(q', |w|)$;

\item $(q,q') \in \mathsf{b}_{rl}(w)$ if, for some $a \in \Sigma$, there is a run on $aw$ from $(q, |aw|-1)$ to $(q', 0)$ such that no $(q'',0)$ occurs in it before $(q', 0)$;

\item $(q,q') \in \mathsf{b}_{ll}(w)$ if, for some $a \in \Sigma$, there is a run on $aw$ from $(q, 1)$ to $(q', 0)$ such that no $(q'',0)$ occurs in it before $(q', 0)$.
\end{itemize}
For $w = \varepsilon$ (the empty word), we define the $\mathsf{b}_{ij}(w)$ as the identity relation on $Q$.

\begin{figure}[t]
\begin{center}
\begin{tikzpicture}[->,thick,node distance=2cm, transform shape]
\node[state, initial] (q0) {$q_0$};
\node[state,  right of  =q0] (r) {$r$};
\node[state, right  of =r] (s) {$s$};
\node[state, right  of =s] (t) {$t$};
\node[state, right  of =t] (v) {$v$};
\node[state, below  of =v] (w) {$w$};
\node[state, left  of =w] (u) {$u$};
\node[state, left  of =u] (y) {$y$};
\node[state, left  of =y] (z) {$z$};
\node[state, left  of =z] (p) {$p$};
\node[state, below  of =y] (g) {$g$};
\node[state, left  of =g] (h) {$h$};
\node[state, below  of =w] (x) {$x$};
\node[state, accepting,left  of =x] (q) {$q$};
\draw (q0) edge[above] node{$a,1$} (r)
(r) edge[above] node{$b,1$} (s)
(s) edge[above] node{$a,0$} (t)
(t) edge[above] node{$a,1$} (v)
(t) edge[left] node{$a,-1$} (u)
(v) edge[left] node{$b,-1$} (w)
(w) edge[above] node{$a,-1$} (u)
(u) edge[above] node{$b,1$} (y)
(y) edge[above] node{$a,1$} (z)
(z) edge[above] node{$b,1$} (p)
(u) edge[left] node{$b,-1$} (g)
(g) edge[above] node{$a,-1$} (h)
(w) edge[left] node{$a,1$} (x)
(x) edge[above] node{$b,1$} (q)
;
\end{tikzpicture}
\caption{The 2NFA $\A$ for Example~\ref{ex:2nfa}.}
\label{ex:2NFA}
\end{center}
\end{figure}

\begin{example}\label{ex:2nfa}\em
For the 2NFA $\A$ over $\Sigma = \{a,b\}$ shown in Figure~\ref{ex:2NFA}, we have:
\begin{align*}
& \mathsf{b}_{lr}(ab) = \{ (q_0, s), (s,q), (t,q), (w,q), (y,p) \}, \quad \mathsf{b}_{rl}(ab) = \{(v,u), (u,h)\},\\
& \mathsf{b}_{rr}(ab) = \{(r,s), (u,y), (v,q), (z,p)\},\quad \mathsf{b}_{ll}(ab) = \{(s,u), (t,u), (w,u)\}. \hspace*{2cm} \dashv
\end{align*}
\end{example}
Now, let $\mathsf{b} = (\mathsf{b}_{lr}, \mathsf{b}_{rl}, \mathsf{b}_{rr}, \mathsf{b}_{ll})$, where the $\mathsf{b}_{ij}$ are the behaviours of $\A$ on some $w \in \Sigma^*$, in which case we can also write $\mathsf{b}(w)$, and let $\mathsf{b}' = \mathsf{b}(w')$, for some $w' \in \Sigma^*$. We define the composition $\mathsf{b} \cdot \mathsf{b}' = \mathsf{b}''$ with components $\mathsf{b}_{ij}''$ as follows. Let $X$ and $Y$ be the reflexive and transitive closures of the relations $\mathsf{b}_{ll}' \circ \mathsf{b}_{rr}$ and $\mathsf{b}_{rr} \circ \mathsf{b}_{ll}'$ on $Q$, respectively.
Then we set:
\begin{align*}
&  \mathsf{b}_{lr}'' =  \mathsf{b}_{lr} \circ X \circ \mathsf{b}_{lr}',\qquad &&
\mathsf{b}_{rl}'' =  \mathsf{b}_{rl}' \circ Y \circ \mathsf{b}_{rl}, \\
&  \mathsf{b}_{rr}'' =  \mathsf{b}_{rr}' \cup \mathsf{b}_{rl}' \circ Y \circ \mathsf{b}_{rr} \circ \mathsf{b}_{lr}',\qquad &&
 \mathsf{b}_{ll}'' = \mathsf{b}_{ll} \cup \mathsf{b}_{lr} \circ X \circ \mathsf{b}_{ll}' \circ \mathsf{b}_{rl}.
\end{align*}
One can check that $\mathsf{b}'' = \mathsf{b}(ww')$.
\begin{example}\em\label{ex:behavours}
Consider again the 2NFA $\A$ from Example~\ref{ex:2nfa}, where we computed $\mathsf{b}(ab)$. One can readily check that $\mathsf{b}(ab) \cdot \mathsf{b}(ab) = (\mathsf{b}_{lr}, \mathsf{b}_{rl}, \mathsf{b}_{rr}, \mathsf{b}_{ll})$, where
\begin{multline*}
\mathsf{b}_{lr} = %\mathsf{b}_{lr}(ab) \circ X \circ \mathsf{b}_{lr}(ab) =
\mathsf{b}_{lr}(ab) \circ (\{(s,y), (t,y), (w,y) \} \cup \{(q,q) \mid q \in Q \})\circ
\mathsf{b}_{lr}(ab) ={}
\shoveright{\{(q_0, q), (q_0, p) \},}\\
\shoveleft{\mathsf{b}_{rl} = \{ (v,h)\},}\\
\shoveleft{\mathsf{b}_{rr} = \mathsf{b}_{rr}(ab) \cup \mathsf{b}_{rl}(ab) \circ (\{ (r,u) \} \cup \{(q,q) \mid q \in Q \}) \circ \mathsf{b}_{rr}(ab) \circ \mathsf{b}_{lr}(ab) =}
\shoveright{\mathsf{b}_{rr}(ab) \cup \{ (v,p) \},}\\
\shoveleft{\mathsf{b}_{ll} = \mathsf{b}_{ll}(ab) \cup \mathsf{b}_{lr}(ab) \circ (\{(s,y), (t,y), (w,y) \} \cup \{(q,q) \mid q \in Q \}) \circ \mathsf{b}_{ll}(ab) \circ{}} \\
\mathsf{b}_{rl}(ab) = \mathsf{b}_{ll}(ab) \cup \{(q_0, h)).
\end{multline*}
Clearly, $\mathsf{b}(ab) \cdot \mathsf{b}(ab)$ coincides with $\mathsf{b}(abab) = (\mathsf{b}_{lr}, \mathsf{b}_{rl}, \mathsf{b}_{rr}, \mathsf{b}_{ll})$, where $\mathsf{b}_{lr} = \{(q_0, p), (q_0, q) \}$, $\mathsf{b}_{rl} = \{ (v,h) \}$, $\mathsf{b}_{rr} = \{ (r,s), (u,y), (v,q), (v,p) \}$ and $\mathsf{b}_{ll} = \{ (q_0, h), (s,u), (t, u), (w,u) \}$; see the picture above. \hfill $\dashv$
\end{example}

Define a DFA $\A' = (Q', \Sigma, \delta', q_0', F')$ by taking
\begin{align*}
& Q' = \bigl\{ (B_{lr}, B_{rr}) \mid B_{lr} \subseteq Q_0 \times Q, \ B_{rr} \subseteq Q \times Q \bigr\},\ \
q_0' = \bigl(\bigl\{(q,q) \mid q \in Q_0\bigr\}, \emptyset\bigr),\\
& F' = \bigl\{(B_{lr}, B_{rr}) \mid (q_0, q) \in B_{lr}, \text{ for some $q_0 \in Q_0$ and $q\in F$} \bigr\},\\
& \delta'_a\bigl((B_{lr}, B_{rr})\bigr) =(B_{lr}', B_{rr}'), \text{ with}\ B_{lr}' = B_{lr} \circ X(a) \circ \mathsf{b}_{lr}(a),\\
%& \hspace*{5cm}  B_{lr}' = B_{lr} \circ X \circ \mathsf{b}_{lr}(a), \tag{*}\\
%\label{aprime-reach}
& \hspace*{6.2cm} B_{rr}' = \mathsf{b}_{rr}(a) \cup \mathsf{b}_{rl}(a) \circ Y(a) \circ B_{rr} \circ  \mathsf{b}_{lr}(a),
\end{align*}
where $X(a)$ and $Y(a)$ are the reflexive and transitive closures of $\mathsf{b}_{ll}(a) \circ B_{rr}$ and $B_{rr} \circ \mathsf{b}_{ll}(a)$ respectively.
\begin{example}\label{aut'}\em
We illustrate the construction of the DFA $\A'$ using the 2NFA $\A$ from Example~\ref{ex:2nfa}. We have $q_0' = (\{ (q_0, q_0) \}, \emptyset)$ and
\begin{align*}
& \delta'_a(q_0') = (\{(q_0, r)\}, \{ (q_0, r), (s,v), (t,v), (w,x), (y,z) \}) = q_1',\\
& \delta'_b(q_1') = (\{(q_0, s)\}, \{(r,s), (u,y), (x,q), (z,p) \} \cup \{(v,q) \}) = q_2',\\
& \delta'_a(q_2') = (\{ (q_0, z), (q_0, v) \}, \{ (q_0, r), (s,v), (t,v), (w,x), (y,z) \} \cup{} \{ (s,z), (w,z), (t,z) \}) = q_3',\\
& \delta'_b(q_3') = (\{ (q_0, q), (q_0, p) \},\{(r,s), (u,y), (x,q), (z,p) \} \cup \{(v, q), (v, p)\}) = q_4'.
\end{align*}
Note that $q_4' \in F'$. \hfill $\dashv$
\end{example}

Returning to our general construction, we observe that, for any $w\in\Sigma^\ast$,
\begin{align}
\nonumber
& \delta'_w\bigl((B_{lr}, B_{rr})\bigr) =(B_{lr}', B_{rr}')\ \text{iff}\ B_{lr}' = B_{lr} \circ X(w) \circ \mathsf{b}_{lr}(w) \text{ and}\\
\label{aprime-reach}
& \hspace*{5.3cm} B_{rr}' = \mathsf{b}_{rr}(w) \cup \mathsf{b}_{rl}(w) \circ Y(w) \circ B_{rr} \circ  \mathsf{b}_{lr}(w),
\end{align}
where $X(w)$ and $Y(w)$ are the reflexive and transitive closures of $\mathsf{b}_{ll}(w) \circ B_{rr}$ and $B_{rr} \circ \mathsf{b}_{ll}(w)$. (To illustrate, for $\A'$ in Example~\ref{aut'}, we have just shown that $\delta'_{abab}(q_0') = (B_{lr}', B_{rr}') = q_4'$, and $(B_{lr}', B_{rr}')$ can be computed by applying~\eqref{aprime-reach} to $q_0'$ and $\mathsf{b}(abab)$ defined in Example~\ref{ex:behavours}.)
Similarly to~\citeA{5392614,10.1016/0020-0190(89)90205-6} one can show that 
\begin{equation}\label{eq:2nfatodfa}
\L(\A) = \L(\A').
\end{equation}
Intuitively, $q_0' \to_w (B_{lr}, B_{rr})$ in $\A'$ and $q \in B_{lr}$ iff there exists a (two-way) run of $\A$ on $w$ from $(q_0, 0)$ to $(q, |w|)$ on $w$.

Next, we prove that, even though the size of $\A'$ is exponential in $\A$, we can still use Theorem~\ref{DFAcrit} to decide $\lang$-definability of $\L(\A)$ in \PSpace:

\begin{theorem}\label{thm:2NFA}
For $\lang \in \{ \FO(<), \FO(<,\equiv), \FO(<,\MOD)\}$,
deciding $\lang$-definability of $\L(\A)$, for any 2NFA $\A$, can be done in \PSpace.
\end{theorem}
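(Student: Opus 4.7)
The plan is to check each criterion of Theorem~\ref{DFAcrit} applied to the DFA $\A'$ from the construction above (for which $\L(\A)=\L(\A')$), but entirely via polynomial-size data, since $\A'$ itself may have up to $2^{O(|Q|^2)}$ states. The key observations are that each state $(B_{lr},B_{rr})$ of $\A'$ has $O(|Q|^2)$-bit encoding; that a behaviour tuple $\mathsf{b}(w)=(\mathsf{b}_{lr}(w),\mathsf{b}_{rl}(w),\mathsf{b}_{rr}(w),\mathsf{b}_{ll}(w))$ also has size $O(|Q|^2)$ and composes as $\mathsf{b}(wa)=\mathsf{b}(w)\cdot\mathsf{b}(a)$ in polynomial time; and, by~\eqref{aprime-reach}, that the transition $\delta'_w(q)$ depends on $w$ only through $\mathsf{b}(w)$. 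Write $F_{\mathsf{b}}(q)$ for the state of $\A'$ produced from $q$ and $\mathsf{b}$ by the formula of~\eqref{aprime-reach}; then $F_{\mathsf{b}(w)}(q)=\delta'_w(q)$ for every $w$, and any iterate $F_\mathsf{b}^k(q)$ coincides with $\delta'_{w^k}(q)$ for any $w$ with $\mathsf{b}(w)=\mathsf{b}$. Since the criteria require only $k,l\le|Q'|$, whose binary encodings are polynomial in $|Q|$, any such iterate can be computed in polynomial space from a polynomial-bit counter, the current $\A'$-state and $\mathsf{b}$. Reachability of a state in $\A'$ from $q_0'$ and the inequivalence $q\not\simm q'$ are both in \PSpace{} by guessing a witness word symbol by symbol and maintaining the relevant $\A'$-states.

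Criteria~$(i)$ and~$(ii)$ are then decided by a single nondeterministic polynomial-space procedure. Guess $u$ (and, for~$(ii)$, also $v$) symbol by symbol, updating the running behaviour $\mathsf{b}_u$ (and, synchronously, $\mathsf{b}_v$, so that $|u|=|v|$ is automatically enforced); guess a reachable $q$ and a number $k$ of polynomial bit length; verify $q\not\simm F_{\mathsf{b}_u}(q)$; and iterate $F_{\mathsf{b}_u}$ for $k$ steps starting at $q$, checking that the final state equals $q$ and, in case~$(ii)$, that $F_{\mathsf{b}_v}(p_i)=p_i$ at every intermediate state $p_i=F_{\mathsf{b}_u}^i(q)$. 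For~$(iii)$ we additionally guess $\mathsf{b}_v$ and numbers $k,l$ of polynomial bit length, checking in polynomial time that $k$ is an odd prime and $l>1$ is coprime to both $2$ and $k$, and verify the three inequivalences $q\not\simm F_{\mathsf{b}_u}(q)$, $q\not\simm F_{\mathsf{b}_v}(q)$ and $q\not\simm F_{\mathsf{b}_u\cdot\mathsf{b}_v}(q)$.

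The main obstacle is the universal condition $\forall x\in\{u,v\}^\ast\colon \delta'_x(q)\simm\delta'_{xu^2}(q)\simm\delta'_{xv^k}(q)\simm\delta'_{x(uv)^l}(q)$ in criterion~$(iii)$, made thorny by the fact that a witness $u$ for a nontrivial cycle in $\A'$ can be exponentially long and thus cannot be stored explicitly. We decide the \emph{negation} of this condition in nondeterministic polynomial space and invoke Savitch's theorem: nondeterministically guess $x$ as a sequence over $\{u,v\}$ of length at most $|Q'|$ (so that every $\simm$-class in the orbit of $q$ under the actions of $\mathsf{b}_u$ and $\mathsf{b}_v$ is reachable), maintain only the current state $p=F_{\mathsf{b}_x}(q)$ by applying $F_{\mathsf{b}_u}$ or $F_{\mathsf{b}_v}$ at each step, and at a nondeterministically chosen moment verify one of $F_{\mathsf{b}_u}^2(p)\not\simm p$, $F_{\mathsf{b}_v}^k(p)\not\simm p$ or $F_{\mathsf{b}_u\cdot\mathsf{b}_v}^l(p)\not\simm p$ via the polynomial-space iteration described earlier. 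Throughout the procedure we never store $u$ or $v$ explicitly: their polynomial-size behaviours $\mathsf{b}_u$ and $\mathsf{b}_v$ are all that is required to compute a single transition $\delta'_u$ or any iterate $\delta'_{u^k}$ on a specific state of $\A'$, and this is exactly what keeps the entire algorithm in polynomial space.
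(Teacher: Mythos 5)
Your proposal is correct and follows essentially the same route as the paper: pass to the DFA $\A'$, note that its states and the behaviour tuples $\mathsf{b}(w)$ have polynomial-size encodings, that $\delta'_w$ depends on $w$ only through $\mathsf{b}(w)$ via~\eqref{aprime-reach}, and then verify the criteria of Theorem~\ref{DFAcrit} by guessing behaviours symbol by symbol and iterating with a polynomial-bit counter. The one genuine divergence is the universal condition in criterion~$(iii)$: the paper deterministically enumerates all quadruples $\mathsf{b}$ of binary relations on $Q$ and, for each, decides whether it is realisable as $\mathsf{b}(x)$ with $x\in\{u,v\}^\ast$ before testing the required equivalences, whereas you complement a nondeterministic polynomial-space search for a counterexample $x$, walking the orbit of $q$ under $F_{\mathsf{b}_u}$ and $F_{\mathsf{b}_v}$ for at most $|Q'|$ steps. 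Both are sound --- your length bound is justified because that orbit has at most $|Q'|$ states, and both arguments ultimately rest on \PSpace{} being closed under complement --- and your treatment of $(iii)$ is arguably the more direct of the two.
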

\begin{proof}
Let $\A'$ be the DFA defined above for the given 2NFA $\A$.
%First, we consider $\FO(<)$-definability.
By Theorem~\ref{DFAcrit} $(i)$ and~\eqref{eq:2nfatodfa}, $\L(\A)$ is not $\FO(<)$-definable iff there exist a word $u\in\Sigma^\ast$, a reachable state $q \in Q'$, and a number $k\leq|Q'|$ such that $q\not\simm \delta'_u(q)$ and $q= \delta'_{u^k}(q)$. We guess the required $k$ in binary, $q$ and a quadruple $\mathsf{b}(u)$ of binary relations  on $Q$. Clearly, they all can be stored in polynomial space in $|\A|$. To check that our guesses are correct, we  first check that $\mathsf{b}(u)$ indeed corresponds to some $u \in \Sigma^\ast$. This is done by guessing a sequence $\mathsf{b}_0, \dots, \mathsf{b}_n$ of distinct quadruples of binary relations on $Q$ such that $\mathsf{b}_0 = \mathsf{b}(u_0)$ and $\mathsf{b}_{i+1} = \mathsf{b}_{i} \cdot \mathsf{b}(u_{i+1})$, for some $u_0, \dots, u_n \in \Sigma$. (Any sequence with a subsequence starting after $\mathsf{b}_i$ and ending with $\mathsf{b}_{i+m}$, for some $i$ and $m$ such that $\mathsf{b}_i = \mathsf{b}_{i+m}$, is equivalent, in the context of this proof, to the sequence with such a subsequence removed.) Thus, we can assume that $n \leq 2^{O(|Q|)}$, and so $n$ can be guessed in binary and stored in \PSpace{}. So the stage of our algorithm checking that $\mathsf{b}(u)$ corresponds to some $u \in \Sigma^*$ makes $n$ iterations and continues to the next stage if $\mathsf{b}_n = \mathsf{b}(u)$ or terminates with an answer \no{} otherwise. Now, using $\mathsf{b}(u)$, we compute $\mathsf{b}(u^k)$ by means of a sequence $\mathsf{b}_0, \dots, \mathsf{b}_k$, where $\mathsf{b}_0 = \mathsf{b}(u)$ and $\mathsf{b}_{i+1} = \mathsf{b}_{i} \cdot \mathsf{b}(u)$. With $\mathsf{b}(u)$ ($\mathsf{b}(u^k)$), we compute $\delta'_{u}(q)$ (respectively, $\delta'_{u^k}(q)$) in \PSpace{} using \eqref{aprime-reach}. If $\delta'_{u^k}(q) \neq q$, the algorithm terminates with an answer \no{}. Otherwise, in the final stage of the algorithm, we check that $\delta'_{u}(q) \not \sim q$. This is done by guessing $v \in \Sigma^*$ such that $\delta'_v(q) = q_1$, $\delta'_v\bigl(\delta'_{u}(q)\bigr) = q_2$, and $q_1 \in F'$ iff $q_1 \not \in F'$. We guess such $v$ (if any) in the form of $\mathsf{b}(v)$ using an algorithm analogous to that for guessing $u$.

%We next consider $\FO(<,\equiv)$-definability.
By Theorem~\ref{DFAcrit}~$(ii)$ and~\eqref{eq:2nfatodfa}, $\L(\A)$ is not $\FO(<,\equiv)$-definable iff there
there exist words $u,v\in\Sigma^\ast$, a reachable state $q \in Q'$, and a number $k\leq|Q'|$ such that
$q\not\simm\delta'_u(q)$, $q=\delta'_{u^k}(q)$, $|v|=|u|$, and $\delta'_{u^i}(q)=\delta'_{u^iv}(q)$, for all $i<k$.
%are $u\in\Sigma^*$, $q \in Q'$, $k$ as above and $w \in \Sigma^*$ such that $|w| = |u|$ and $\delta'_{u^i}(q)\sim\delta'_{u^iv}(q)$, for every $i<k$.
We outline how to modify the algorithm for $\FO(<)$ above to check $\FO(<,\equiv)$-definability. First, we need to guess and check $v$ in the form of $\mathsf{b}(v)$ in parallel with guessing and checking $u$ in the form of $\mathsf{b}(u)$, making sure that $|v| = |u|$. For that, we guess a sequence of distinct pairs $(\mathsf{b}_0, \mathsf{b}_0'), \dots, (\mathsf{b}_n, \mathsf{b}_n')$ such that the $\mathsf{b}_i$ are as above, $\mathsf{b}_0' = \mathsf{b}(v_0)$ and $\mathsf{b}_{i+1}' = \mathsf{b}_{i}' \cdot \mathsf{b}(v_{i+1})$, for some $v_0, \dots, v_n \in \Sigma$. (Any such sequence with a subsequence starting after $(\mathsf{b}_i, \mathsf{b}_i')$ and ending with $(\mathsf{b}_{i+m}, \mathsf{b}_{i+m}')$, for some $i$ and $m$ such that $(\mathsf{b}_i, \mathsf{b}_i') = (\mathsf{b}_{i+m}, \mathsf{b}_{i+m}')$, is equivalent to the sequence with that  subsequence removed.) So $n \leq 2^{O(|Q|)}$.
For each $i<k$,
we can then compute $\delta'_{u^i}(q)$ and $\delta'_{u^iv}(q)$, using \eqref{aprime-reach}, and check whether
whether they are equal.
%To check that $\delta'_{u^i}(q)=\delta'_{u^iv}(q)$, for every $i<k$, we first compute each $q_i = \delta'_{u^i}(q)$ using \eqref{aprime-reach}, and then check whether $q_i = \delta'_{w}(q_i)$.
%Then we can check that $q_i \sim \delta'_{w}(q_i)$ using the same method as for checking $\delta'_{u}(q) \sim q$.

Finally,
%we consider the case of $\FO(<,\MOD)$-definability.
by Theorem~\ref{DFAcrit}~$(iii)$ and~\eqref{eq:2nfatodfa}, $\L(\A)$ is not $\FO(<,\MOD)$-definable iff
 there exist $u,v\in\Sigma^\ast$, a reachable
state $q\in Q'$ and $k,l\leq |Q'|$ such that $k$ is an odd prime, $l>1$ and coprime to both $2$ and $k$,
$q\not\simm\delta'_u(q)$, $q\not\simm\delta'_v(q)$, $q\not\simm\delta'_{uv}(q)$, and
$\delta'_{x}(q)\simm\delta'_{xu^{2}}(q)\simm\delta'_{xv^{k}}(q)\simm\delta'_{x(uv)^{l}}(q)$, for all $x\in\{u,v\}^\ast$.
%there exist $u,v\in\Sigma^*$, an accessible state $q$ in $\A'$ and prime numbers $k,l<|Q'|$ such that, for all $\delta'_x\in M'_{\{u,v\}}$ (where $M'_{\{u,v\}}$ is the submonoid of the transition monoid $M'$ for $\A'$ that is generated by $\delta'_u$ and $\delta'_v$), we have $\delta'_{x}(q)\sim\delta'_{x(u)^{2}}(q)\sim\delta'_{x(v)^{k}}(q)\sim\delta'_{x(uv)^{l}}(q)$, $q$ is not $\sim$-equivalent to $\delta'_{u}(q)$, $\delta'_{v}(q)$, $\delta'_{uv}(q)$, and there exists $y$ such that $\delta'_{xy}(q)=q$.
We start by guessing $u,v \in \Sigma^*$ in the form of $\mathsf{b}(u)$ and $\mathsf{b}(v)$, respectively.
Also, we guess $k$ and $l$ in binary and check that $k$ is an odd prime and $l$ is coprime to both $2$ and $k$.
By \eqref{aprime-reach}, $\delta'_x$ is determined by $\mathsf{b}(x)$, for any $x\in\{u,v\}^\ast$.
%We remind the reader that $\delta'_x$ corresponds to $\mathsf{b}(x)$, for every $x \in \Sigma^*$, in the sense that $q' = \delta'_x(q)$ iff $q'$ is computed from $q$ by using $\mathsf{b}(x)$ in the sense of Thus, the set of all $\delta'_x$ in $M'_{\{u,v\}}$ corresponds to the set of all $\mathsf{b}(x)$ for $x \in X = \{u, v \}^*$, i.e., the set of all sequences obtained from concatenating $u$ and $v$ in any order.
Thus, to check that $u$, $v$, $k$, $l$ are as required, we perform the following steps, for \emph{each} quadruple $\mathsf{b}$ of binary relations on $Q$. First, we check whether $\mathsf{b} = \mathsf{b}(x)$, for some $x \in \{u,v\}^\ast$ (we discuss the algorithm for this below). If this is not the case, we construct the \emph{next} quadruple $\mathsf{b}'$ and process it as $\mathsf{b}$ above. If it is the case,
%we guess the required $y$,
we compute all the states $\delta'_{x}(q)$, $\delta'_{xu^{2}}(q)$, $\delta'_{xv^{k}}(q)$, $\delta'_{x(uv)^{l}}(q)$, $\delta'_{u}(q)$, $\delta'_{v}(q)$, $\delta'_{uv}(q)$,
%$\delta'_{xy}(q)$,
and check their required (non)equivalences with respect to $\sim$,
using the same method as for checking $\delta'_{u}(q) \not\sim q$ above.
 If they do not hold, our algorithm terminates with an answer \no{}. Otherwise, we construct the \emph{next} quadruple $\mathsf{b}'$ and process it as this $\mathsf{b}$. When all possible quadruples $\mathsf{b}$ of binary relations of $Q$ have been processed, the algorithm terminates with an answer \yes{}.

Now, to check that a given quadruple $\mathsf{b}$ is equal to $\mathsf{b}(x)$, for some $x \in  \{u,v\}^\ast$, we simply guess a sequence $\mathsf{b}_0, \dots, \mathsf{b}_n$ of quadruples of binary relations on $Q$ such that $\mathsf{b}_0 = \mathsf{b}(w_0)$, $\mathsf{b}_n = \mathsf{b}$ and $\mathsf{b}_{i+1} = \mathsf{b}_{i} \cdot \mathsf{b}(w_{i+1})$, where $w_i \in \{u, v\}$. It follows from the argument above that it is enough to take $n \leq 2^{O(|Q|)}$.
\end{proof}

% !TEX root = JAIR.tex

\section{Deciding FO-rewritability of $\LTL$ OMQs}\label{sec:LTL-genearal}

In this section, we use the results obtained above to establish the complexity of recognising the rewritability type of an arbitrary $\LTL_{\bool}\Xallop$ OMQ.

%It follows immediately from Theorem~\ref{thm:2NFA} and the proof of Theorem~\ref{Prop:rewr-def} that this can be done in \ExpSpace. We establish a matching lower bound for Boolean $\LTL_\horn\Xnext$ OMAQs in Theorem~\ref{hornExpSpacehard}.

%The main result of this section relies on the proof following result:
%
%\begin{theorem}[\cite{ICALP21}]\label{DFAhard}
%For $\lang \in \{ \FO(<), \FO(<,\equiv), \FO(<,\MOD)\}$, deciding $\lang$-definability of the language $\L(\A)$ of a given minimal DFA $\A$ is \PSpace-hard.
%\end{theorem}
%
%\begin{proof}
%(Sketch.)
%\end{proof}

%Now we are in position to prove the main result of this section:

\begin{theorem}\label{hornExpSpacehard}
For any $\lang \in \{ \FO(<), \FO(<,\equiv), \FO(<,\MOD)\}$, deciding $\lang$-rewri\-tability of \textup{(}Boolean and specific\textup{)} $\LTL_{\bool}\Xallop$ OMQs over $\Xi$-ABoxes is \ExpSpace-complete. The lower bound holds already for $\LTL_\horn\Xnext$ OMAQs.
\end{theorem}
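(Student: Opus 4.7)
My plan for the upper bound is to combine the automata construction of Theorem~\ref{Prop:rewr-def} with the decision procedure of Theorem~\ref{thm:2NFA}. Given an $\LTL_{\bool}\Xallop$ OMQ $\q$ (Boolean or specific), the proof of Theorem~\ref{Prop:rewr-def} produces an NFA $\A$ over the appropriate alphabet whose language is the complement of $\L_\Xi(\q)$ (or $\L_\Xi(\q(x))$) and whose number of states is $2^{O(|\q|)}$; moreover, $\A$ is constructible in exponential time in $|\q|$ since $\LTL$-satisfiability is in \PSpace. Regarding $\A$ as a 2NFA and invoking Theorem~\ref{thm:2NFA}, I can decide $\lang$-definability of $\L(\A)$ in space polynomial in $|\A|$, i.e., in space exponential in $|\q|$. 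Because each of the three languages $\lang$ is closed under negation, $\L(\A)$ is $\lang$-definable iff $\L_\Xi(\q)$ is, and Theorem~\ref{Prop:rewr-def}$(ii)$ then gives the \ExpSpace{} upper bound on deciding $\lang$-rewritability.

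For the matching \ExpSpace{} lower bound I would scale up the \PSpace-hardness reduction of Theorem~\ref{DFAhard} by one exponential. Starting from a deterministic Turing machine $\M$ that uses $N = 2^{P(n)}$ tape cells on inputs $\avec{x}$ of length $n$, I carry out the construction of the DFAs $\A_0,\ldots,\A_p$ and of the combined minimal DFA $\A_\circ$ from Theorem~\ref{DFAhard} with $N$ in place of $P(n)$. This produces, for each $\circ \in \{<,\equiv,\MOD\}$, an exponential-size minimal DFA $\A_\circ$ whose language is $\lang$-definable iff $\M$ rejects $\avec{x}$. The key step is then to represent $\A_\circ$ implicitly by a polynomial-size $\LTL_\horn\Xnext$ OMAQ. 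I would introduce polynomially many fresh IDB atomic concepts $B_1,\ldots,B_{cP(n)}$ so that the subset of them holding at a given timestamp encodes, in binary, a state of $\A_\circ$ (the address of a tape cell, the head position, the automaton phase, and so on). Linear $\LTL_\horn\Xnext$ axioms of the form allowed by Remark~\ref{rem1} then propagate these encodings from one timestamp to the next in exact correspondence with a single step of $\A_\circ$ driven by the current ABox symbol, raising a distinguished IDB atom $A_\mathsf{acc}$ as soon as the simulated run reaches an accepting state. The OMAQ $\q_\circ = (\TO_\circ, A_\mathsf{acc})$ so obtained has size polynomial in $|\M|$ and $n$, and a routine verification gives $\L_\Xi(\q_\circ) = \L(\A_\circ)$, so by Theorem~\ref{Prop:rewr-def}$(ii)$ $\q_\circ$ is $\lang$-rewritable iff $\M$ rejects $\avec{x}$.

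The main obstacle I expect is carrying out the bit-level simulation of $\A_\circ$ inside the restrictive Horn next-only fragment. Encoding the modular counters of the $\aut^p_\equiv$ layer and, especially, the non-Abelian permutations of the $\aut^p_\MOD$ layer from Theorem~\ref{DFAhard} using only conjunctive implications with $\Rnext$ requires auxiliary IDB predicates that record the complete state of the simulation on a per-timestamp basis (in the spirit of the parity gadget in Example~\ref{example1}$(iii)$) and behave correctly under the open-world semantics, so that every model of $\TO_\circ$ and the ABox faithfully reflects a unique run of $\A_\circ$ without allowing spurious accepting traces. Once this encoding is in place, the reduction transfers \ExpSpace-hardness from the scaled-up version of Theorem~\ref{DFAhard} to $\lang$-rewritability of $\LTL_\horn\Xnext$ OMAQs uniformly for all three choices of $\lang$.
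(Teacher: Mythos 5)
Your plan matches the paper's proof essentially step for step: the upper bound is obtained exactly as you describe (the exponential-size NFA from Theorem~\ref{Prop:rewr-def} fed into the \PSpace{} procedure of Theorem~\ref{thm:2NFA}), and the lower bound is the paper's construction of polynomial-size $\LTL_\horn\Xnext$ ontologies $\TO_<$, $\TO_\equiv$, $\TO_\MOD$ that simulate the exponential-size DFAs of Theorem~\ref{DFAhard} by encoding their states in binary counters of atomic concepts. The only correction is that your claimed identity $\L_\Xi(\q_\circ)=\L(\A_\circ)$ cannot hold literally (the OMQ's language also contains inconsistent ABoxes and arbitrary padding around the encoded word); what one actually proves, as in Lemma~\ref{rewrtodefine} via Lemma~\ref{expandL}, is that the two languages are $\lang$-definable iff each other is, which suffices.
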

\begin{proof}
The upper bound follows from Theorem~\ref{thm:2NFA} and the proof of Theorem~\ref{Prop:rewr-def}. We now establish the matching lower bound for $\LTL_\horn\Xnext$ OMAQs. We only consider specific OMAQs, leaving the easier case of Boolean OMAQs to the reader. (In fact, $\ExpSpace$-hardness for Boolean OMAQs follows from $\ExpSpace$-hardness for specific OMAQs by Lemma~\ref{le:remove-bot} and Proposition~\ref{prop:specific-to-boolean++} $(i)$ to be proved in the next section). With this in mind, we first show how one can store and compute numerical values of polynomial length using $\LTL_\horn\Xnext$-ontologies.

A \emph{counter} is a set $\mathbb A=\{A^i_j \mid i=0,1, \ j=1,\dots,k\}$ of atomic concepts that will be used to store values between $0$ and $2^k-1$, which can be different at different time points. The counter $\mathbb A$ is \emph{well-defined} at a time point $n \in \Z$ in an interpretation $\I$ if $\I,n \models A^0_j \land A^1_j \to \bot$ and $\I,n \models A^0_j \lor A^1_j$, for any $j=1,\dots,k$. In this case, the \emph{value of} $\mathbb A$ at $n$ in $\I$ is given by the unique binary number $b_{k} \dots b_1$ for which $\I,n \models A^{b_1}_1\wedge\dots\wedge A^{b_k}_k$.
We require the following formulas, for $c = b_{k} \dots b_1$ and a well-defined counter $\mathbb A$:
\begin{itemize}
\item $[\mathbb A={c}] = A^{b_1}_1\wedge\dots\wedge A^{b_k}_k$ with $\I,n \models [\mathbb A={c}]$ iff the value of $\mathbb A$ is $c$;

\item $\displaystyle[\mathbb A{<c}] = \bigvee_{\substack{k\geq i\geq 1\\ b_i=1}}\big(A_i^0\wedge\bigwedge_{j=i+1}^kA_j^{b_j}\big)$ with $\I,n \models [\mathbb A < {c}]$ iff the value of $\mathbb A$ is $< c$;

\item $\displaystyle[\mathbb A{>c}] = \bigvee_{\substack{k\geq i\geq 1\\ b_i=0}}\big(A_i^1\wedge\bigwedge_{j=i+1}^kA_j^{b_j}\big)$ with $\I,n \models [\mathbb A > {c}]$ iff the value of $\mathbb A$ is $> c$.
\end{itemize}
%
%where $A^1 = A$ and $A^0 = \neg A$.\nz{added}
%
%The formulas $[\mathbb A={c}](z),[\mathbb A{<c}](z)$, and $[\mathbb A{>c}](z)$ are true if the value stored in $\mathbb A$ in the moment $z\in\mathbb Z$ is equal to, greater or smaller than $c$, respectively.
%
We regard the set $(\Rnext \mathbb A)=\{\Rnext A^i_j \mid i=0,1, \ j=1,\dots,k\}$ as another counter that stores at $n$ in $\I$ the value stored by $\mathbb A$ at $n+1$ in $\I$.
This allows us to use formulas such as $[\mathbb A >c_1]\to[(\Rnext \mathbb A)={c_2}]$, which says that if the value of $\mathbb A$ at $n$ in $\I$ is greater than $c_1$, then the  value of $\mathbb A$ at $n+1$ in $\I$ is $c_2$.

%\textcolor{blue}{Give semantics in terms of temporal interpretations. Also define things like $(\Rnext\mathbb A)_{N'}$, $(\Rnext\mathbb A)_{\mathbb A}$, etc., with semantics}

Given two counters $\mathbb A$ and $\mathbb B$, we set
\begin{align*}
& [\mathbb A=\mathbb B] = \bigwedge_{j=1}^k\left((B_j^0\to A_j^0)\wedge(B_j^1\to A_j^1)\right),\\
& [\mathbb A={\mathbb B+1}] = \bigwedge_{i=1}^k\big(( B_i^0\wedge B_{i-1}^1\wedge\dots\wedge B_{1}^1\to A_i^1\wedge A_{i-1}^0\wedge\dots\wedge A_{1}^0)\wedge{}\\
& \hspace{5.5cm} \bigwedge_{j<i}((B_i^0\wedge B_{j}^0\to A_i^0)\wedge(B_i^1\wedge B_{j}^0\to A_i^1))\big).
\end{align*}
Then $\I,n \models [\mathbb A=\mathbb B]$ iff the values of $\mathbb A$ and $\mathbb B$ at $n$ in $\I$ coincide, and $\I,n \models [\mathbb A={\mathbb B+1}]$ iff the value of $\mathbb A$ at $n$ is equal to the value of $\mathbb B$ at $n$ plus one. In a similar way, we define the formula $[\mathbb A=\mathbb B-1]$.

Consider a deterministic Turing machine $\M$ with exponential space bound, which behaves as described in the proof of Theorem~\ref{DFAhard}. Given an input word $\boldsymbol{x} = x_1\dots x_n$, let $N$ be the number of tape cells needed for the computation of $\M$ on $\avec{x}$, and let $\pp$ be the first prime such that $\pp\geq\ppn+2$ and $\pp\not\equiv\pm 1\ (\text{mod}\ 10)$. %and let $N'$ be the first prime exceeding $N+1$ and such that $N'\neq\pm 1\mod 10$.
Our aim is to  construct $\LTL_\horn\Xnext$ ontologies $\TO_{<}$, $\TO_{\equiv}$ and $\TO_\MOD$ of polynomial size that simulate the exponential-size, $O(\pp)$, DFAs $\A_<$, $\A_\equiv$ and  $\A_{\MOD}$ from the proof of Theorem~\ref{DFAhard}, whose languages are $\lang$-definable (for the corresponding $\lang$) iff $\M$ rejects $\boldsymbol{x}$. The polynomial size of the ontologies can be achieved due to the repetitive structure of the automata $\A_<$, $\A_\equiv$ and  $\A_{\MOD}$ as  we can capture an exponential number of transitions by using only polynomially-many axioms.

First we define $\TO_<$. Let $k=\lceil\log_2\pp\rceil+1$.
The ontology $\TO_<$ uses the following atomic concepts: the symbols in $\Sigma'=\Gamma\cup(Q\times\Gamma)\cup\{\sharp,\flat,a_1,a_2\}$ (see the proof of Theorem~\ref{DFAhard}) and additional symbols $S$, $T$, $Q$, $P$, $Q_a$, $R_a$, $R_{ab}$, $P_a$, $P_{ab}$, for $a,b\in\Sigma'$, $F$, $X$, $Y$, and $F_\textit{end}$. We also use counters $\mathbb A$ and $\mathbb L$ with atomic concepts $A^i_j$ and $L^i_j$, for $i=0,1$, $j=1,\dots,k$. Set $\Xi=\Sigma'\cup\{X,Y\}$.%, where $\Sigma''$ is defined in the proof of Theorem~\ref{DFAhard}.

In the DFA $\A_i$ from the proof Theorem~\ref{DFAhard}, we represent
%
%$t_i$ as $[\mathbb A={i}]\land T$, $q^j$ as $[\mathbb A={i}]\land Q\land[\mathbb L={j}]$,
%$q^j_a$ as $[\mathbb A={i}]\land Q_a\land[\mathbb L={j}]$,
%states $p^j$ of $\A_0$ as $[\mathbb A={0}]\land P\land[\mathbb L={j}]$,
%states $p^j_{a}$ of $\A_i$ as $[\mathbb A={i}]\land P_{a}\land[\mathbb L={j}]$,
%$p_{ab}$ $\A_i$ as $[\mathbb A={i}]\land P_{ab}$,
%$r_a$ as $[\mathbb A={i}]\land R_a$,
%$r_{ab}$ as $[\mathbb A={i}]\land R_{ab}$, and
% $f_i$ as $[\mathbb A={i}]\land F$.
%
\begin{itemize}\itemsep=0pt
\item the state $t_i$ as $[\mathbb A={i}]\land T$;

\item each state $q^j$ of $\A_i$ as $[\mathbb A={i}]\land Q\land[\mathbb L={j}]$;

\item each state $q^j_a$ of $\A_i$ as $[\mathbb A={i}]\land Q_a\land[\mathbb L={j}]$;

\item each state $p^j$ of $\A_0$ as $[\mathbb A={0}]\land P\land[\mathbb L={j}]$;

\item each state $p^j_{a}$ of $\A_i$ as $[\mathbb A={i}]\land P_{a}\land[\mathbb L={j}]$;

\item each state $p_{ab}$ of $\A_i$ as $[\mathbb A={i}]\land P_{ab}$;

\item each state $r_a$ of $\A_i$ as $[\mathbb A={i}]\land R_a$;

\item each state $r_{ab}$ of $\A_i$ as $[\mathbb A={i}]\land R_{ab}$;

\item $f_i$ as $[\mathbb A={i}]\land F$.
\end{itemize}
We refer to these formulas and also $[\mathbb A={i}]\land S$ representing $s_i$ in $\A_<$ as \emph{state formulas}.

The ontology $\TO_<$ simulating $\A_<$ consists of  the following axioms, which are equivalent to polynomially-many $\LTL_\horn\Xnext$ axioms (see Lemma \ref{simulationlemma}):
\begin{itemize}
\item $a\wedge b\to\bot$, for distinct $a,b\in\Xi$; \hfill $(\star_1)$

\item $X\to[(\Rnext\mathbb A)={0}]\land\Rnext{ S}$ to simulate the initial state of $\A_<$; \hfill $(\star_2)$

\item $[\mathbb A={0}]\land S\land Y\to F_\textit{end}$ to simulate the accepting state of $\A_<$; \hfill $(\star_3)$

\item the axioms
\begin{align*}
&[\mathbb A<\pp]\land S\land a_1\to[(\Rnext\mathbb A)=\mathbb A]\land\Rnext T\land[(\Rnext\mathbb L)={\mathbb A}],\\
%&[\mathbb A>0]\land[\mathbb A<N'+1]\land S\land a_1\to[(\Rnext\mathbb A)={\mathbb A}]\land Q_0\land[(\Rnext\mathbb L)={\mathbb A-1}],\\
&[\mathbb A <\pp-1]\land F\land a_2\to[(\Rnext\mathbb A)={\mathbb A+1}]\land\Rnext S,\\
&[\mathbb A=\pp-1]\land F\wedge a_2\to[(\Rnext\mathbb A)={ 0}]\land\Rnext S;
\end{align*}
describing the behaviour of $\A_<$ in states $s_i$ and $f_i$;

\item the axioms
describing the transitions of $\A_i$, $0\le i\leq N$, that are given in Appendix~\ref{sec:LTL-genearal-app};

\item[--] and the following axioms for $a\ne\flat$:
\begin{align*}
&[\mathbb A>N]\land[\mathbb A<\pp]\land T\land a\to{} [(\Rnext\mathbb A)=\mathbb A]\land \Rnext T,\\
&[\mathbb A>N]\land[\mathbb A<\pp]\land T\land \flat\to[(\Rnext\mathbb A)=\mathbb A]\land \Rnext F 
\end{align*}
simulating the transitions of $\A_i$, for $N< i< \pp$.
\end{itemize}
Next, we define the ontology $\TO_{\equiv}$ by adding to $\TO_<$ the axiom
$$
[\mathbb A <p]\land S\land \natural\to[(\Rnext \mathbb A)=\mathbb A]\land\Rnext S
$$
simulating the $\natural$-transitions in $\A_\equiv$. We also extend $\Xi$ with $\natural$.

To define $\TO_{\MOD}$, more work is needed. First, we extend $\TO_<$ with
\begin{itemize}
\item the following axioms regarding $\A_{\pp}$:
\begin{align*}
&[\mathbb A={\pp}]\land S\land a_1\to[(\Rnext\mathbb A)={\pp}]\land\Rnext T,\\
%&[\mathbb A={N'}]\land Q_0\land a\to[(\Rnext\mathbb A)={N'}]\land\Rnext Q_0, \qquad\text{for $a\neq\flat$},\\
%&[\mathbb A={N'}]\land Q_0\land \flat\to[(\Rnext\mathbb A)={N'}]\land\Rnext F,\\
&[\mathbb A={\pp}]\land F\land a_2\to[(\Rnext\mathbb A)={\pp}]
\land \Rnext S,
\end{align*}
\item and the following axioms handling $\natural$:
\begin{align*}
&[\mathbb A=0]\land S\land \natural\to[(\Rnext\mathbb A)={\pp}]\land \Rnext S,\\
&[\mathbb A={\pp}]\land S\land \natural \to[(\Rnext\mathbb A)=0]\land S,\\
&[\mathbb A >0]\land[\mathbb A <\pp]\land S\land \natural \to[(\Rnext\mathbb A)=\mathbb J]\land\Rnext S.
\end{align*}
\end{itemize}
Here, $\mathbb J$ is a new counter that stores the value $j=-1/i$ in the field $\mathbb F_{\pp}$, which is required to make sure that, for $i\neq 0,\pp$, we have
$$
\TO_\MOD\models [\mathbb A = i]\land S\land \natural\to [(\Rnext\mathbb A) = j]\land\Rnext S.
$$
We achieve this as follows.
We compute the number $r$ such that $ir=1 (\text{mod}\, N')$ using the following modified version of Penk's algorithm~\cite<e.g.,>[Exercise~4.5.2.39]{DBLP:books/lib/Knuth98}. The algorithm starts with $u = \pp$, $v = i$, $r = 0$, $s = 1$. In the course of the algorithm, $u$ and $v$ decrease, with  the following conditions being met: $\text{GCD}(u, v) = 1$, $u = ri\, (\text{mod}\, \pp)$, and $v = si\, (\text{mod}\, \pp)$.
The algorithm repeats the following steps until $v=0$:
\begin{itemize}
\item if $v$ is even, replace it with $v/2$, and replace $s$ with either $s/2$ or $(s+\pp)/2$, whichever is a whole number;

\item if $u$ is even, replace it with $u/2$, and replace $r$ with either $r/2$ or $(r+\pp)/2$, whichever is a whole number;

\item if $u,v$ are odd and $u>v$, replace $u$ with $(u-v)/2$ and $r$ with either $(r-s)/2$ or $(r-s+\pp)/2$, whichever is a whole number;

\item if $u,v$ are odd and $v\ge u$, replace $v$ with $(v-u)/2$ and $s$ with either $(s-r)/2$ or $(s-r+\pp)/2$, whichever is a whole number.
\end{itemize}
The binary length of the larger of $u$ and $v$ is reduced by
at least one bit, guaranteeing that the procedure terminates
in at most $2k$ iterations while maintaining the conditions.
At termination, $v = 0$ as otherwise a reduction is still possible.
If $u = 1$,
we get $1 = ri\, (\text{mod}\, \pp)$ and $r= 1/i$ in the field $\mathbb F_{\pp}$, so we can set $j=\pp-r$.

To compute the value of $j$, we need to halve the number in a counter, compare two counters (using an additional counter), add and subtract (using extra counters for carries). This can be done by means of $O(k)$ counters (a fixed number of counters per $O(k)$ steps of the algorithm) with polynomially-many additional axioms. So we compute $j$ when required and store it in counter $\mathbb J$. Appendix~\ref{sec:LTL-genearal-app} provides a full list of counters and axioms we need.

%\textcolor{red}{Here,\nz{???} as before, $\Xi=\Sigma'\cup\{X,Y\}$. We call $\Psi$ a \emph{state-formula} if it corresponds to a state in one of the automata.}\nz{don't undersatand}

%takes one of the following forms: $\big([\mathbb A =i]\land Q_y\land[\mathbb L=j]\big)$, $\big([\mathbb A =i]\land P_a\land[\mathbb L=j]\big)$, $\big([\mathbb A =i]\land S\big)$, or $\big([\mathbb A =i]\land F\big)$, in which case we refer to, respectively, $q^j_y$ of $\A_i$, $p^j_a$ of $\A_i$, $s_i$, or $f_i$ as \emph{the state corresponding to} $\Psi$.

For $\lang \in \{ \FO(<),\FO(<,\equiv), \FO(<,\MOD)\}$, we use $\A_\lang$ and $\TO_\lang$ to denote the corresponding automaton and ontology defined above. Observe that, by the proof of Theorem~\ref{DFAhard}, 
\begin{equation}\label{eq:alang-rejects}
 \L(\A_\lang) \text{ is } \lang\text{-definable} \quad \text{ iff } \quad \M \text{ rejects } \avec{x}.
\end{equation}
The connection between $\A_\lang$ and $\TO_\lang$ is explained by the following lemma.
\begin{lemma}\label{simulationlemma}
 Let $\Abox$ be a $\Xi$-ABox and let $\Psi$ be a state formula. Then

$(i)$ $\Abox$ is inconsistent with $\TO_\lang$ iff there is $i$ such that $a(i),b(i)\in\Abox$ for different $a,b\in \Xi$\textup{;}

$(ii)$ if $\Abox$ is consistent with $\TO_\lang$, then $\TO_\lang,\Abox\models\Psi(l)$
%
%If a $\Xi$-ABox is consistent with $\TO_\lang$, then $\TO_\lang,\Abox\models\big([\mathbb A =i]\land Q_y\land[\mathbb L=j]\big)(l)$, $\TO_\lang,\Abox\models\big([\mathbb A =i]\land S\big)(l)$, or $\TO_\lang,\Abox \models\big([\mathbb A =i]\land F\big)(l)$, for some $i,j \in [0,2^k-1]$ and $l \in \Z$,
%
iff $\Abox$ contains a subset
\begin{equation}\label{A:subset}
\{X(l-m-1), b_1(l-m), b_2(l-m+1), b_3(l-m+2),\dots, b_{m}(l-1)\},
\end{equation}
where $m\ge 0$, $b_k\in\Sigma'$ for all $k\in[1,m]$, and $\A_\lang$, having read the word $b_1\dots b_{m}$, is in the state represented by $\Psi$.
\end{lemma}
\begin{proof}
We obtain $(i)$ because the only axiom with $\bot$ is $(\star_1)$ and, for consistent $\Abox$ and $\TO_\lang$, $b\in\Xi$ and $n\in\mathbb Z$, we have $(\TO,\Abox) \models b(n)$ iff $b(n)\in\Abox$.

$(ii)$ $(\Leftarrow)$ If there is such a subset of $\Abox$, then $(\TO_\lang,\Abox) \models\big([\mathbb A=0]\land S\big)(l-m)$. One can check by induction on $j$ that if the automaton is in a state $q$ after reading $b_1\ldots b_{j-1}$ and $q$ is represented by a state formula $\Psi'$, then $(\TO,\Abox) \models \Psi'(l-m+j)$.

$(\Rightarrow)$ If $(\TO_\lang,\Abox) \models A_{j_1}^{\iota_1}(l)$, for some $A_{j_1}^{\iota_1}\in\mathbb A$, then $(\TO_\lang,\Abox) \models b(l-1)$, for some $b\in\Xi$. There are two possibilities: either $b=X$ or $b\in\Sigma'$ and there exists $A_{j_2}^{\iota_2}\in\mathbb A$ such that  $(\TO_\lang,\Abox) \models A_{j_2}^{\iota_2}(l-1)$. So there is a unique subset of $\Abox$ of the form~\eqref{A:subset}.
By induction on $j\in[1,m+1]$, we can prove that there exists a unique state formula $\Psi_j$ such that $(\TO_\lang,\Abox) \models\Psi_j(l-m+j)$ and $\Psi_j$ represents the state $\A_\lang$ is in after reading $b_1\dots b_{j-1}$.
%either $X(l-1)\in\Abox$ and $\Psi=[\mathbb A=0]\land S$ $(\star_2)$, or $b(l-1)\in\Abox$ such that $\TO_\lang,\Abox\models\Psi'(l-1)$ for some $b\in \Xi$ and $\Psi'$.\nz{???} Since for any $A\in\Xi$ and $n\in\mathbb Z$ we have $\TO,\Abox\models A(n)$ iff $A(n)\in\Abox$, there is a subset of the desired form in $\Abox$. Since $\TO_\lang$ and $\Abox$ are  consistent, precisely one of concepts from $\Xi$ is true at every position. Therefore $\Psi$, corresponds to the correct state of $\A_\lang$.\nz{??}
\end{proof}
%
%We are now in a position to show that the Boolean and specific $\LTL_\horn\Xnext$ OMAQs $(\TO_\lang,F_\textit{end})$ and $(\TO_\lang,F_\textit{end}(x))$ are $\lang$-rewritable\nb{V: it's a mess when two cases are considered in parallel} over $\Xi$-ABoxes iff the language $\L(\A_\lang)$ is $\lang$-definable.
%
To complete the proof of Theorem~\ref{hornExpSpacehard}, we need one more lemma.
\begin{lemma}\label{rewrtodefine}
Let $\q_\lang(x) = (\TO_\lang,F_\textit{end}(x))$. For the signature $\Xi$ above, $\L(\A_\lang)$ is $\lang$-definable iff $\L_\Xi(\q_\lang(x))$ is $\lang$-definable.
\end{lemma}
\begin{proof}
Recall that the alphabet of $\L_\Xi(\q_\lang(x))$ is $\Gamma_\Xi=\Sigma_\Xi\cup\Sigma_\Xi'$.
As $(\star_3)$ is the only rule that produces the target concept $F_{end}$ and $F_{end}\notin\Xi$,  $k$ is a certain answer to $\q_\lang(x)$ over a $\Xi$-ABox $\Abox$ iff either $\Abox$ is inconsistent with $\TO_\lang$ or $(\TO_\lang,\Abox) \models \big([\mathbb A=0]\land S\land Y\big)(k)$  iff,
 %Similarly, if $k$ is a certain answer to $(\TO_\lang,F_{end}(x))$, then $\TO_\lang,\Abox\models \big([\mathbb A=0]\land S\land Y\big)(k)$.
by Lemma~\ref{simulationlemma}, there are $a(i), b(i)\in\Abox$, for $a,b \in \Xi$ with $a\ne b$, or $\Abox$ contains a subset 
$$
\{X(k-m-1), b_1(k-m),\dots,b_m(k-1),Y(k)\},
$$
where $b_1\dots b_m\in \L(\A_\lang)$.

Let $\Xi^{\{\}}$ and  $\L^{\{\}}(\A_\lang)$ stand for
$\Xi$ and, respectively, $\L(\A_\lang)$, in which every $a\in\Xi$ is replaced by the set $\{a\}$. It follows that $\L_\Xi(\q_\lang(x)) = \L_0 \cup \L_1$, where 
\begin{align*}
\L_0 = & \bigl\{\Abox a' \mathcal B \mid \Abox \mathcal B \in \Sigma_\Xi^*, a' \in \Sigma_\Xi'\bigr\} \cap \bigl\{\Abox a \mathcal B \mid \Abox a \mathcal B \in \Gamma_\Xi^*, |a|>1\bigr\}\,\\
\L_1= &\bigl\{w_1\{X\}w\{Y'\}w_2\mid w\in \L^{\{\}}(\A_\lang),\ w_1,w_2\in \bigl(\Xi^{\{\}}\cup\{\emptyset\}\bigr)^*\bigr\}.
\end{align*}
%
%
%$\L_1=\L\big((\Xi^{\{\}}\cup\{\emptyset\})^* \{X\}\big)\L^{\{\}}\big(\A_\lang\big) \L\big(\{Y'\}(\Xi^{\{\}}\cup\{\emptyset\})^*\big)$.
%
(Indeed, $\L_0$ describes the inconsistent ABoxes and $\L_1$ the consistent ones.) Clearly, the language $\L_0$ is $\lang$-definable. Let $\varphi$ be an $\lang$-formula defining it.
If $\L(\A_\lang)$ is definable by an $\lang$-formula, then so are $\L^{\{\}}(\A_\lang)$ and, by Lemma~\ref{expandL}, $\L_1$. Let $\psi$ be the $\lang$-formula defining $\L_1$. Then $\varphi\lor\psi$ defines $\L_\Xi(\q_\lang(x))$.
If $\L_\Xi(\q_\lang(x))$ is definable by an $\lang$-formula $\chi$, then $\chi\land \neg \varphi$ defines $\L_1$. Thus, by Lemma~\ref{expandL}, the language $\L^{\{\}}(\A_\lang)$ is $\lang$-definable, and so is $\L(\A_\lang)$.
\end{proof}

By Theorem~\ref{Prop:rewr-def} $(ii)$, $\q_\lang(x)$ is $\lang$-rewritable over $\Xi$-ABoxes iff $\L(\A_\lang)$ is $\lang$-definable. By \eqref{eq:alang-rejects}, $\L(\A_\lang)$ is $\lang$-definable iff $\M$ rejects $\avec{x}$, which completes the proof of Theorem~\ref{hornExpSpacehard}.
\end{proof}

We also observe that $\LTL_\horn\Xnext$ ontologies can be encoded by positive existential queries  mediated by covering axioms that are available in $\LTL_\krom$:

\begin{theorem}\label{thm:kromlowerexp}
Deciding $\lang$-rewritability of \textup{(}Boolean and specific\textup{)} $\LTL_\krom$ OMPEQs over $\Xi$-ABoxes is \ExpSpace-complete.
\end{theorem}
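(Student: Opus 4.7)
The upper bound is immediate from Theorem~\ref{hornExpSpacehard}, since every $\LTL_\krom$ OMPEQ is a particular $\LTL_{\bool}\Xallop$ OMQ.

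For the matching lower bound, my plan is a polynomial-time reduction from $\lang$-rewritability of $\LTL_\horn\Xnext$ OMAQs, which is \ExpSpace-hard by Theorem~\ref{hornExpSpacehard}. Given such an OMAQ $\q=(\TO,A_0)$ over $\Xi$, I would first normalise $\TO$ using Remark~\ref{rem1}, so every axiom has the form $C_1\land\dots\land C_k\to B$ or $C_1\land\dots\land C_k\to\bot$ with the $C_i,B$ atomic basic concepts (possibly $\Rnext/\Lnext$-prefixed). The goal is to construct an $\LTL_\krom$ OMPEQ $\q'=(\TO',\varkappa)$ over a signature $\Xi'\supseteq\Xi$ such that $\L_\Xi(\q)$ and $\L_{\Xi'}(\q')$ share the same $\lang$-definability status, whence Theorem~\ref{hornExpSpacehard} transfers the hardness.

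The construction relies on the classical complement trick: for each atom $B$ used in $\TO$, introduce a fresh atom $\bar B$ and place into $\TO'$ only the Krom clauses $B\land\bar B\to\bot$ and $\top\to B\lor\bar B$. Every model of $\TO'$ then assigns a definite truth value to each atom at each time point. The effect of a horn axiom $C_1\land\dots\land C_k\to B$ is captured by its contrapositive $\bar B\to\bar C_1\lor\dots\lor\bar C_k$; this is not itself Krom, but its consequences can be \emph{witnessed} in the positive-existential query. Specifically, $\varkappa$ becomes a large disjunction of `derivation certificates'\!: each disjunct positively describes one way in which ABox atoms, propagated through the ontology's horn rules (with their $\Rnext$-prefixes absorbed into nested $\Rnext$/$\Lnext$ operators in the query, which are legal in OMPEQs), force $A_0$ at the query position. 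An alternative $\varkappa$-disjunct fires on any ABox that is already $\TO'$-inconsistent (which is detected via a pattern $B\land\bar B$), mirroring the `certain-answer from inconsistency' behaviour of $\q$.

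The main obstacle is that horn antecedents may have arbitrary arity whereas Krom clauses allow only two atoms per axiom, so every clause of width greater than two has to be reformulated. I would handle this using a Tseitin-style chain: for each horn axiom, introduce fresh atoms $D_0,\dots,D_k$ together with binary Krom clauses such as $D_i\to D_{i-1}$ and $D_i\land\bar C_i\to\bot$, plus covering axioms $\top\to D_i\lor\bar D_i$, designed so that in any model of $\TO'$ the chain collapses either to `all $C_i$ hold'\! or to `some $\bar C_j$ holds'\!, exactly reproducing the horn antecedent's role. The temporal prefixes on the $C_i$ are handled entirely on the query side, using that positive-existential formulas may compose $\Rnext$ and $\Lnext$ freely. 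After verifying that the languages coincide (inconsistency is preserved because $B\land\bar B\to\bot$ remains Krom and the complement atoms are not in $\Xi$, so they do not spuriously enlarge the $\Xi$-language), \ExpSpace-hardness for $\LTL_\krom$ OMPEQs follows from Theorem~\ref{hornExpSpacehard}.
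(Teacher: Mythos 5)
Your upper bound and the scaffolding of the lower bound (reduction from $\LTL_\horn\Xnext$ OMAQs, normalisation via Remark~\ref{rem1}, fresh complement atoms $\bar B$ with the Krom covering axioms $B\land\bar B\to\bot$ and $\top\to B\lor\bar B$) match the paper. But the heart of the reduction --- the query $\varkappa$ --- is wrong as described. A finite disjunction of ``derivation certificates'' that positively trace how ABox atoms propagate through the horn rules to force $A$ cannot exist in general: derivations under $\LTL_\horn\Xnext$ ontologies have unbounded length (this is exactly why some of these OMAQs are not FO-rewritable at all, e.g.\ the parity query $\q_3$ of Example~\ref{example1}), and since your certificates are positive existential formulas over the ABox signature evaluated under an ontology that only constrains the fresh atoms, the resulting OMPEQ would always be FO-rewritable, so the reduction could not preserve non-rewritability. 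The paper's query goes the other way round: it does not enumerate derivations but expresses the implication $\bigl(\bigwedge\TO\bigr)\to A$ as the disjunction
\[
\varkappa \;=\; A \;\lor\; \bigvee_{\avec C\to\bot\in\TO}\Rdiamond\Ldiamond\,\avec C\;\lor\;\bigvee_{\avec C\to B\in\TO}\Rdiamond\Ldiamond(\avec C\land\bar B),
\]
that is, ``$A$ holds, or some axiom of $\TO$ is violated somewhere''\!. The covering axioms force every model of $\TO'$ to decide each atom, and the universal quantification over models in the certain-answer semantics then exactly simulates horn entailment; no derivation is ever written down.

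Your Tseitin-style chain for wide antecedents is both unnecessary and doubtful. It is unnecessary because in the correct construction the wide conjunctions $\avec C$ sit inside the \emph{query}, where OMPEQs impose no arity restriction; only the ontology must be Krom, and it consists solely of the two covering clauses per atom. It is doubtful because the collapse you want (``all $C_i$ hold or some $\bar C_j$ holds'') would need clauses like $D_{i-1}\to D_i\lor\bar C_i$, which contain three atoms and are therefore not Krom; the clauses you do list do not by themselves enforce it. Finally, detecting inconsistency ``via a pattern $B\land\bar B$'' cannot work: since $B\land\bar B\to\bot$ is an axiom of $\TO'$, the conjunction $B\land\bar B$ is false in every model, so a disjunct of the form $\Rdiamond\Ldiamond(B\land\bar B)$ is never certain. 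Inconsistency of the original $\TO$ with the ABox is instead captured by the disjuncts $\Rdiamond\Ldiamond\,\avec C$ for the axioms $\avec C\to\bot$.
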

\begin{proof}
By Theorem~\ref{hornExpSpacehard}, we only need to show the lower bound, which can be done by reduction of $\LTL_\horn\Xnext$ OMAQs $\q=(\TO,A)$ to $\LTL_\krom\Xnext$ OMPEQs.
By Remark~\ref{rem1}, we can assume that the axioms of $\TO$ take the form $\avec{C}\to \bot$ or $\avec{C}\to B$,
for some $\avec{C}=C_1\land\dots\land C_n$ and atomic $B$.
We construct an $\LTL_\krom\Xnext$ OMPQ $\q' = (\TO', \varkappa)$ that is $\Xi$-equivalent to $\q$ by taking $\TO'$ with the axioms $B\land\bar B\to\bot$ and $\top\to B\vee\bar B$, for all $B\in\sig(\q)$, where $\bar B$ is a fresh atom, and
$$
\varkappa ~=~ A \ \ \vee
\bigvee_{\avec{C}\to \bot \text{ in }\TO} \hspace*{-3mm} \Rdiamond\Ldiamond \avec{C} \ \ \vee
\bigvee_{\avec{C}\to B \text{ in } \TO} \hspace*{-3mm} \Rdiamond\Ldiamond(\avec{C}\land \bar B).
$$
Intuitively, $\bar B$ represents the negation of $B$ and $\varkappa$ is equivalent to the formula 
$$
\bigl[\,\bigwedge_{\avec{C}\to \bot \text{ in } \TO} \Rbox\Lbox (\avec{C} \to \bot) \land \bigwedge_{\avec{C}\to B \text{ in } \TO} \Rbox\Lbox (\avec{C} \to B)\,\bigr] \to A.
$$
It is readily seen that, for any $\Xi$-ABox $\Abox$, the certain answer to $\q$ over $\Abox$ is \yes\ iff the answer to $\q'$ over $\Abox$ is \yes, and $k$ is a certain answer to $\q(x)$ over $\Abox$ iff it is also a certain answer to $\q'(x)$.
It follows that $\q'$ is $\lang$-rewritable over $\Xi$-ABoxes iff $\q$ is $\lang$-rewritable.
\end{proof}

% !TEX root = JAIR.tex

\section{Deciding $\lang$-rewritability of Linear $\LTL_\horn\Xnext$ OMPQs}\label{sec:linear}

As well known, deciding FO-rewritability of monadic datalog queries is 2\textsc{ExpTime}-complete \cite{DBLP:conf/stoc/CosmadakisGKV88,DBLP:conf/lics/BenediktCCB15,DBLP:conf/pods/KikotKPZ21}, which becomes \PSpace{} for the important class of linear monadic queries~\cite{DBLP:conf/stoc/CosmadakisGKV88,DBLP:journals/ijfcs/Meyden00}.
In this section, we focus on linear $\LTL_\horn\Xnext$ OMPQs. First, in Section~\ref{sec:incons-reductions}, we show that it suffices to consider $\bot$-free OMQs only and that deciding $\lang$-rewritability of specific $\LTL_\horn\Xallop$ OMPQs is polynomially reducible to the same problem for Boolean $\LTL_\horn\Xallop$ OMPQs and the other way round.
Then, in Section~\ref{sec:linear-2}, for any linear $\LTL_\horn\Xnext$ OMAQ $\q$, we construct in polynomial space a DFA $\A'$ such that $\q$ is $\lang$-rewritable iff $\L(\A')$ is $\lang$-definable. So, by Theorem~\ref{thm:2NFA}, deciding $\lang$-rewritability of linear $\LTL_\horn\Xnext$ OMAQs can be done in \PSpace{}. An essential part of this proof is the construction of a (polynomial-size) 2NFA $\A_\q^\Xi$ that recognises a certain encoding of the language of $\q$. We also show that any DFA can be simulated by a linear $\LTL_\horn\Xnext$ OMAQ, which yields a \PSpace{} lower bound for deciding $\lang$-rewritability.
Section~\ref{sec:lin-ompq} gives semantic criteria of $\FO(<)$- and $\FO(<,\equiv)$-rewritiability of $\LTL_\horn\Xnext$ OMPQs and a \PSpace{} algorithm for checking these criteria based on $\A_\q^\Xi$.

\subsection{Two useful reductions}\label{sec:incons-reductions}

We start with two technical observations. The first one rids ontologies of $\bot$.

\begin{lemma}\label{le:remove-bot}
Let $\TO$ be an $\LTL\Xallop_\bool$ ontology, let $\TO'$ result from $\TO$ by removing every axiom of the form $C_1\land\dots\land C_k\to \bot$, and let $\TO''$ result from $\TO$ by replacing every axiom of the form $C_1\land\dots\land C_k\to \bot$ with $C_1\land\dots\land C_k\to A'$, $A' \to \Rnext A'$, $A' \to \Lnext A'$, $A' \to A$, for a fresh atom $A'$. Let $\Xi$ be a signature that does not contain the newly introduced atoms $A'$.

$(i)$ Every Boolean OMAQ $\q=(\TO, A)$ is $\Xi$-equivalent to $\q' = (\TO'', A)$. Every specific OMAQ $\q(x) = (\TO, A(x))$ is $\Xi$-equivalent to $\q'(x) = (\TO'', A(x))$.

$(ii)$ Every Boolean OMQ $\q=(\TO, \varkappa)$ is $\Xi$-equivalent to $\q'' = (\TO',\varkappa')$, where
\begin{equation*}
\varkappa'=\varkappa\vee\bigvee_{C_1\land\dots\land C_k\to \bot\in\TO}\Rdiamond\Ldiamond(C_1\land\dots\land C_k)
\end{equation*}
Every specific OMQ $\q(x) = (\TO, \varkappa(x))$ is $\Xi$-equivalent to $\q''(x) = (\TO',\varkappa'(x))$.
\end{lemma}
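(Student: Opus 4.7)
The plan is, in both parts, to split on whether $(\TO,\Abox)$ is consistent or not and show that the modified OMQ produces the same certain answers on any $\Xi$-ABox $\Abox$, using the fact that the freshly introduced symbols (the atom $A'$ in $(i)$, or the extra query disjuncts in $(ii)$) act as universal ``inconsistency witnesses'' that force the query to be satisfied everywhere exactly when the original ontology is inconsistent with $\Abox$.

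For part $(i)$, I would first record the following semantic observation about~$A'$ in~$\TO''$: because $A'\to\Rnext A'$, $A'\to\Lnext A'$ and $A'\to A$ are in $\TO''$, in any model $\I''$ of $\TO''$ we have either $A'^{\I''}=\emptyset$, or $A'^{\I''}=\mathbb Z$ and hence $A^{\I''}=\mathbb Z$. I would then establish a one-to-one correspondence between models $\I$ of $(\TO,\Abox)$ and models $\I''$ of $(\TO'',\Abox)$ with $A'^{\I''}=\emptyset$, by extending $\I$ with $A'^{\I''}=\emptyset$ (which is a legal model of $\TO''$ precisely because no body $\avec C$ of a $\bot$-axiom of $\TO$ can be true anywhere in $\I$) and, conversely, restricting such an $\I''$ to the original signature (which satisfies every former $\bot$-axiom vacuously, since its body firing would force $A'$ somewhere). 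Consequently, if $(\TO,\Abox)$ is inconsistent, then no model $\I''$ of $(\TO'',\Abox)$ can have $A'^{\I''}=\emptyset$, so $A^{\I''}=\mathbb Z$ in every such model, yielding the vacuous ``yes'' answer for the Boolean OMAQ $\q'$ and every $k\in\tem(\Abox)$ for the specific $\q'(x)$; while if $(\TO,\Abox)$ is consistent, the bijection above shows $\q$ and $\q'$ agree on $A$, and the ``extra'' models $\I''$ with $A'^{\I''}\neq\emptyset$ impose no new constraints since they have $A^{\I''}=\mathbb Z\supseteq\tem(\Abox)$.

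For part $(ii)$, the first ingredient is the semantic fact, immediate from the strict $(\mathbb Z,<)$-semantics of $\Rdiamond$ and $\Ldiamond$, that for any concept $\varphi$ and any interpretation $\I$ the formula $\Rdiamond\Ldiamond\varphi$ holds at some (equivalently, every) point of $\I$ iff $\varphi^\I\neq\emptyset$. Since $\TO'$ contains no $\bot$-axiom, it is consistent with every $\Xi$-ABox, and the models of $(\TO,\Abox)$ are exactly the models $\I'$ of $(\TO',\Abox)$ in which no body $\avec C$ of a $\bot$-axiom of $\TO\setminus\TO'$ is satisfied anywhere. Splitting on consistency: if $(\TO,\Abox)$ is inconsistent, then every model $\I'$ of $(\TO',\Abox)$ makes some disjunct $\Rdiamond\Ldiamond\avec C$ of $\varkappa'$ true everywhere, so the Boolean $\q''$ returns ``yes'' and the specific $\q''(x)$ returns every $k\in\tem(\Abox)$, matching the vacuous certain answers of $\q$ and $\q(x)$. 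If $(\TO,\Abox)$ is consistent, I split the models of $(\TO',\Abox)$ into those that are models of $\TO$ (where $\varkappa'$ coincides with $\varkappa$, so the certain-answer conditions reduce verbatim to those of $\q$/$\q(x)$) and those that are not (where some body $\avec C$ is true somewhere, so $\varkappa'$ is true everywhere and imposes no extra constraint).

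The routine but slightly delicate step is verifying in part $(i)$ that the map $\I\mapsto\I''$ really produces a model of every new axiom $\avec C\to A'$, which uses exactly the hypothesis that $\I$ satisfies the corresponding $\avec C\to\bot$ vacuously; this is the only place where the shape of the replaced axioms matters. Neither direction presents a genuine obstacle, so the proof is essentially a careful model-theoretic bookkeeping argument along the lines sketched.
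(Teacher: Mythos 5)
Your proposal is correct and rests on exactly the same mechanism as the paper's proof: a model of $\TO''$ in which the query atom fails everywhere must satisfy every former $\bot$-body vacuously (since firing it would force $A'$ and hence $A$), and conversely any model of $\TO$ extends to one of $\TO''$ by interpreting $A'$ as empty; part $(ii)$ likewise uses the observation that $\Rdiamond\Ldiamond\avec{C}$ is true everywhere iff $\avec{C}$ is realised somewhere. The paper only writes out the first claim of $(i)$, arguing more tersely via the ``no''-witness characterisation rather than your explicit consistency case split, but the two arguments are essentially the same.
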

\begin{proof}
We only show the first claim in $(i)$; the other claims are similar and left to the reader. Let $\Abox$ be any $\Xi$-ABox. Suppose the certain answer to $\q'$ over $\Abox$ is \no. This means that there is a model $\I$ of $\TO''$ and $\Abox$ such that $\I,n\not\models A$ for all $n \in\Z$. Then $\I$ is also a model of $\TO$ and $\Abox$. Indeed, if $\I,n \models C_1\land\dots\land C_k$, for some axiom $C_1\land\dots\land C_k\to \bot$ in $\TO$ and $n \in \Z$, then $\I,n \models A'$, and so $\I,n \models A$, which is a contradiction. It follows that the answer to $\q$ over $\Abox$ is \no. Conversely, suppose the answer to $\q$ over $\Abox$ is \no. Let $\I$ be a model of $\TO$ and $\Abox$ such that $\I,n\not\models A$ for all $n \in \Z$. Extend $\I$ to the fresh atoms $A'$ by setting $\I,n \not\models A'$. Then $\I$ is a model of $\TO''$ and $\Abox$, as required.
\end{proof}

%The next statement, which will be used in Theorems~\ref{hornExpSpacehard},~\ref{th:linear-omaq-pspace},~\ref{th:lin-ompq-fo-pspace}, and~\ref{th:lin-ompq-foe-pspace}, shows that deciding $\lang$-rewritability of specific $\LTL_\horn\Xnext$-OMAQs $\q(x)$ is polynomially reducible to deciding $\lang$-rewritability of Boolean $\LTL_\horn\Xnext$-OMAQs $\q$:

The next proposition, which will be used in the proofs of Theorems~\ref{hornExpSpacehard}  and~\ref{th:linear-omaq-pspace}, shows that deciding $\lang$-rewritability of specific $\LTL_\horn\Xallop$-OMPQs is polynomially reducible to deciding $\lang$-rewritability of Boolean $\LTL_\horn\Xallop$-OMPQs. Recall from~\cite<e.g.,>{DBLP:journals/ai/ArtaleKKRWZ21} that, for any $\LTL_\horn\Xallop$-ontology $\TO$ and any ABox $\Abox$ consistent with $\TO$, there is a \emph{canonical} (or \emph{minimal}) \emph{model} $\mathcal C_{\TO, \Abox}$ of $\TO$ and $\Abox$ such that, for any positive concept $\varkappa$ and any $k \in \Z$,
\begin{equation}
%\notag (\TO, \Abox) \models \exists x \varkappa(x) \text{ iff }&\mathcal C_{\TO, \Abox} \models \varkappa(k) \text{ for some }k \in \Z\\
%
\label{eq:can-def}
(\TO, \Abox) \models \varkappa(k)  \quad \text{ iff } \quad \mathcal C_{\TO, \Abox} \models \varkappa(k).
\end{equation}

\begin{proposition}\label{prop:specific-to-boolean++}
Let $\TO$ be an $\LTL_\horn\Xallop$-ontology without occurrences of $\bot$, $A$ an atom, $\varkappa$ a positive concept, and $\Xi$ a signature. Let $X,X'$ be fresh atomic concepts and $\Xi_X = \Xi \cup \{X\}$. Then the following hold:
\begin{description}
\item[$(i)$] The specific OMAQ $\q(x)=(\TO,A(x))$ is $\lang$-rewritable over $\Xi$-ABoxes iff the Boolean OMAQ $\q_X = (\TO\cup\{A\land X\to X'\},X')$  is $\lang$-rewritable over $\Xi_X$-ABoxes.

\item[$(ii)$] The specific OMPQ $\q(x) = (\TO, \varkappa(x))$ is $\lang$-rewritable over $\Xi$-ABoxes iff the Boolean OMPQ $\q_X = (\TO, \varkappa \land X)$ is $\lang$-rewritable over $\Xi_X$-ABoxes.
\end{description}
\end{proposition}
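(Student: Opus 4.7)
My plan is to establish both parts by a single semantic bridge and then read off the two rewritability directions from it. The bridge says: for any $\Xi$-ABox $\Abox$ and any $k \in \tem(\Abox)$, the assertion $(\TO,\Abox)\models A(k)$ (respectively $(\TO,\Abox)\models\varkappa(k)$) is equivalent to the answer \yes{} to the Boolean OMQ $\q_X$ over the $\Xi_X$-ABox $\Abox\cup\{X(k)\}$ (taking $\tem(\Abox\cup\{X(k)\})=\tem(\Abox)$). More generally, for any $\Xi_X$-ABox $\mathcal B$ with $\Xi$-reduct $\mathcal B|_\Xi$, the answer to $\q_X$ over $\mathcal B$ is \yes{} iff there is some $k$ with $X(k)\in\mathcal B$ and $(\TO,\mathcal B|_\Xi)\models A(k)$ (resp. $\varkappa(k)$). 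For $(ii)$ this is immediate from~\eqref{eq:can-def}: since $X$ occurs neither in $\TO$ nor in $\varkappa$, the canonical model $\mathcal C_{\TO,\mathcal B}$ agrees with $\mathcal C_{\TO,\mathcal B|_\Xi}$ on every atom $\ne X$ and interprets $X$ as $\{k\mid X(k)\in\mathcal B\}$, so $(\varkappa\land X)(k)$ holds somewhere in $\mathcal C_{\TO,\mathcal B}$ exactly under the stated condition. For $(i)$, the only new axiom $A\land X\to X'$ derives $X'$ and $X'$ does not trigger anything else, so $X'^{\mathcal C_{\TO_X,\mathcal B}}=\{k\mid X(k)\in\mathcal B\text{ and }k\in A^{\mathcal C_{\TO,\mathcal B|_\Xi}}\}$; the existential check reduces to the stated condition by another application of~\eqref{eq:can-def}.

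For the $(\Rightarrow)$ direction, suppose $\rq(x)$ is an $\lang$-rewriting of $\q(x)$ over $\Xi$-ABoxes. I would set $\rq_X \;:=\; \exists x\,(X(x)\land \rq(x))$, which is an $\lang$-sentence. For any $\Xi_X$-ABox $\mathcal B$, $\mathfrak S_{\mathcal B}\models\rq_X$ iff there exists $k\in\tem(\mathcal B)$ with $X(k)\in\mathcal B$ and $\mathfrak S_{\mathcal B}\models \rq(k)$. Since $\rq(x)$ uses only $\Xi$-symbols and since we may choose $\tem(\mathcal B|_\Xi)=\tem(\mathcal B)$, the latter is equivalent to $\mathfrak S_{\mathcal B|_\Xi}\models\rq(k)$, hence by assumption to $k$ being a certain answer to $\q(x)$ over $\mathcal B|_\Xi$, which by the bridge gives the answer \yes{} to $\q_X$ over $\mathcal B$.

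For the $(\Leftarrow)$ direction, suppose $\rq_X$ is an $\lang$-rewriting of $\q_X$ over $\Xi_X$-ABoxes. After renaming bound variables so that $x$ is fresh in $\rq_X$, I would define $\rq(x)$ by syntactically replacing every atomic subformula $X(t)$ in $\rq_X$ by the equality $t=x$; the result stays in $\lang$ because each of $\FO(<)$, $\FO(<,\equiv)$ and $\FO(<,\MOD)$ contains equality. A routine induction on $\rq_X$ then shows that for every $\Xi$-ABox $\Abox$ and every $k\in\tem(\Abox)$ one has $\mathfrak S_{\Abox\cup\{X(k)\}}\models\rq_X$ iff $\mathfrak S_\Abox\models\rq(k)$, because the two structures coincide on all symbols other than $X$, which the first interprets as the singleton $\{k\}$. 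Combined with the bridge, this shows $\rq(x)$ is an $\lang$-rewriting of $\q(x)$ over $\Xi$-ABoxes, and symmetrically for $(ii)$.

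The only real care needed is bookkeeping around the active domain: one must note that $\tem(\Abox\cup\{X(k)\})$ can be taken equal to $\tem(\Abox)$ whenever $k\in\tem(\Abox)$, and conversely that forming a $\Xi$-reduct of a $\Xi_X$-ABox does not enlarge the active domain, which is harmless because $\tem(\cdot)$ may be any interval covering the timestamps. There is no deep obstacle here; the construction is essentially syntactic, and the whole argument reduces to the bridge plus the semantic transparency of the fresh atom $X$ (and, for $(i)$, of the fresh atom $X'$, whose derivation is completely controlled by the single extra axiom).
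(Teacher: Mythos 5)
Your proposal is correct and follows essentially the same route as the paper's proof: the forward direction uses the rewriting $\exists x\,(X(x)\land\rq(x))$ and the backward direction substitutes $(y=x)$ for $X(y)$, with the equivalence $(\TO,\Abox)\models\varkappa(k)$ iff $\q_X$ answers \yes{} over $\Abox\cup\{X(k)\}$ playing the role of your ``bridge'' via the canonical-model property~\eqref{eq:can-def}. You are somewhat more explicit than the paper about the $\Xi$-reduct bookkeeping and about part $(i)$ (which the paper leaves to the reader), but the argument is the same.
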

\begin{proof}
We only prove $(ii)$.
Suppose $\rq(x)$ is an $\lang$-rewriting of $\q(x)= (\TO, \varkappa(x))$ over $\Xi$-ABoxes. We show that $\exists x\, (\rq(x)\land X(x))$ is an $\lang$-rewriting of $\q_X$ over $\Xi_X$-ABoxes, that is, for every $\Xi_X$-ABox $\Abox$, we have $\SA\models \exists x\, (\rq(x)\land X(x))$ iff $\mathcal C_{\TO, \Abox} \models (\varkappa \land X) (n)$, for some $n \in \Z$. If $\SA\models \exists x\, (\rq(x)\land X(x))$, then $\SA\models \rq(n)$ and $\SA\models X(n)$, for some $n \in \tem (\Abox)$. Since $\rq(x)$ is a rewriting of $\q(x)$, we have $\mathcal C_{\TO, \Abox} \models \varkappa(n)$, and since $X$ does not occur in $\TO$, we must have $X(n) \in \Abox$.
Conversely, suppose $\mathcal C_{\TO, \Abox} \models (\varkappa \land X) (n)$, for some  $n \in \Z$. Then clearly $X(n) \in \Abox$ and, by~\eqref{eq:can-def}, $n$ is a certain answer to $\q(x)$ over $\Abox$, from which $\SA\models \exists x\, (\rq(x)\land X(x))$.

Suppose $\rq$ is an $\lang$-rewriting of $\q_X$ over $\Xi_X$-ABoxes.
Fix a variable $x$ that does not occur in $\rq$ and let $\rq^-(x)$ be the result of replacing every occurrence of $X(y)$ in $\rq$ with $(x=y)$. We show that $\rq^-(x)$ is an $\lang$-rewriting of $\q(x)$ over $\Xi$-ABoxes.
Given a $\Xi$-ABox $\Abox$, for any $k\in\tem(\Abox)$, we have
$$
\mathcal C_{\TO, \Abox} \models \varkappa(k) \quad  \text{ iff } \quad
\mathcal C_{\TO,  \Abox \cup \{X(k)\}} \models (\varkappa \land X)(k) \quad \text{ iff } \quad \mathfrak S_{ \Abox \cup \{X(k)\}} \models \rq \quad
 \text{ iff } \quad \mathfrak S_{\Abox} \models \rq^-(k)
$$
as required.
%we have $\SA\models \rq^-(k)$ iff $\mathfrak S_{\Abox_X^k}\models \rq$.
%%
%%Indeed, $\mathfrak S_{\Abox_X^k}\models X(y)\leftrightarrow(k=y)$, and so $\mathfrak S_{\Abox_X^k}\models \rq\leftrightarrow \rq^-(k)$.
%As $X$ does not occur in $\rq^-$, we also have $\mathfrak S_{\Abox_X^k}\models \rq^-(k)$ iff $\SA\models \rq^-(k)$.\nz{do we need this?}
%%
%Now, suppose $k$ is a certain answer to $\q(x)$ over $\Abox$. Then the certain answer to $\q_X$ over $\Abox_X^k$ is \yes, and so $\mathfrak S_{\Abox_X^k}\models\rq$, from which $\SA\models\rq^-(k)$.
%%
%Conversely, if $k$ is not a certain answer to $\q(x)$ over $\Abox$, then the answer to $\q_X$ over $\Abox_X^k$ is\ \no. We then have $\mathfrak S_{\Abox_X^k}\not\models\rq$, and so $\SA\not\models\rq^-(k)$.
\end{proof}

\subsection{Deciding FO-rewritability of linear $\LTL_\horn\Xnext$ OMAQs}\label{sec:linear-2}

In this section, we use $\Abox$ to refer to both the ABox $\Abox$ and its representation as the word $w_{\Abox}$ over the alphabet $\Sigma_\Xi$.
For a linear $\LTL_\horn\Xnext$ ontology $\TO$, let $\textit{idb}(\TO)$ be the set of atoms that occur on the right-hand side of axioms in $\TO$. For an atom $A$ and $j \in \Z$, we define $\nxt^0 A = A$ and, inductively, $\nxt^{j+1} A = \Rnext \nxt^j A$ for $j \ge 0$, and $\nxt^{j-1} A = \Lnext \nxt^j A$ for $j \le 0$.
%
%$$
%\nxt^{j} A =
%\begin{cases}
%A, &\text{if $j = 0$},\\
%\underbrace{\Lnext \dots \Lnext}_j A, &\text{if $j <0$},\\
%\underbrace{\Rnext \dots \Rnext}_j A, &\text{if $j >0$}.
%\end{cases}
%$$
%
Let $\q = (\TO, \varkappa)$ be an $\LTL_\horn\Xnext$ OMPQ. For a type $\tp$ for $\q$ (see Proposition~\ref{Prop:rewr-def}), we denote by $\tp^\Xi$ its restriction to atoms in $\Xi$ and their negations. Given a model $\I$ of $\TO$ and $n \in \Z$, we denote by $\tp_\I(n)$ the type for $\q$ that is true in $\I$ at $n$. For a $\bot$-free $\TO$, we write $\tp_{\TO, \Abox}(n)$ instead of $\tp_{\mathcal C_{\TO, \Abox}}(n)$, where $\mathcal C_{\TO, \Abox}$ is the canonical model of $\TO$ and $\Abox$ with the key property~\eqref{eq:can-def}.

\begin{theorem}\label{th:linear-omaq-pspace}
For $\lang \in \{ \FO(<),\FO(<,\equiv), \FO(<,\MOD)\}$, deciding $\lang$-rewritabi\-lity of linear $\LTL_\horn\Xnext$ OMAQs over $\Xi$-ABoxes is \PSpace-complete.
\end{theorem}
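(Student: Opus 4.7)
The plan is to establish \PSpace-completeness in both directions, after two preliminary reductions that let me focus on a streamlined form of the query. First, by Lemma~\ref{le:remove-bot}~$(i)$, every linear $\LTL_\horn\Xnext$ OMAQ is $\Xi$-equivalent to one whose ontology contains no $\bot$: the fresh-atom trick $\avec{C} \to \bot \leadsto \{\avec{C}\to A',\, A'\to \Rnext A',\, A'\to \Lnext A',\, A'\to A\}$ preserves linearity because each new body contains at most the IDB premise already present in the original clause. Next, Proposition~\ref{prop:specific-to-boolean++}~$(i)$ turns any specific OMAQ into a Boolean one via the extra axiom $A \land X \to X'$, which introduces only the fresh IDB $X'$ and keeps the ontology linear. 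Hence it suffices to prove the upper bound for Boolean, $\bot$-free, linear OMAQs $\q = (\TO, A)$; the lower bound will then be witnessed by a family of such Boolean OMAQs, and Proposition~\ref{prop:specific-to-boolean++}~$(i)$ transfers hardness back to the specific case.

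For the upper bound, the central step is to construct, for every such $\q$ and every signature $\Xi$, a polynomially succinct 2NFA $\A_\q^\Xi$ with $\L(\A_\q^\Xi)=\L_\Xi(\q)$; the conclusion will then follow from Theorem~\ref{thm:2NFA} together with Theorem~\ref{Prop:rewr-def}~$(ii)$. The construction hinges on the observation that, because $\TO$ is \emph{linear} Horn, every derivation of $A(k)$ in the canonical model $\mathcal C_{\TO, \Abox}$ takes the shape of a chain: the applied axiom $C_1 \land \dots \land C_m \to B$ has at most one IDB premise; the other premises $C_i = \Rnext^{r_i} A_i$ are resolved by direct lookup in $\Abox$ at offset $r_i$ from the current position; and the unique IDB premise (if any) is the next link of the chain, located at its own fixed bounded offset. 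I would design $\A_\q^\Xi$ so that its states encode the current IDB goal together with a bounded amount of offset bookkeeping, while its head position encodes the current time-point; each transition nondeterministically guesses an applicable backward-chaining axiom, verifies the non-IDB premises by short bidirectional head excursions to inspect $\Abox$ at the required offsets, and then shifts to the new IDB goal at the correspondingly shifted position. An accepting run corresponds to a derivation chain that bottoms out in $\Abox$-atoms. Since the temporal depths in axioms and the number of atoms in $\TO$ are bounded by $|\TO|$, the 2NFA has polynomially many states and a transition function computable on-the-fly (even though the alphabet $2^\Xi$ is exponential in $|\Xi|$), so Theorem~\ref{thm:2NFA} yields the \PSpace{} upper bound.

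For the matching lower bound, I would reduce from deciding $\lang$-definability of $\L(\A)$ for a given minimal DFA $\A=(Q,\Sigma,\delta,q_0,F)$, which is \PSpace-hard by Theorem~\ref{DFAhard}. Given $\A$, form the Boolean linear $\LTL_\horn\Xnext$ OMAQ $\q_\A = (\TO_\A, \textit{Acc})$ over the signature $\Xi = \Sigma \cup \{I, E\}$, where $\TO_\A$ consists of $I \to q_0$, the transition axioms $q \land a \to \Rnext q'$ for every $\delta(q,a)=q'$, and $q \land E \to \textit{Acc}$ for every $q\in F$. Each axiom contains at most one IDB premise (the state atom), so $\TO_\A$ is linear, and a straightforward induction shows that $(\TO_\A, \Abox)\models\exists x\,\textit{Acc}(x)$ iff the word encoded by $\Abox$ with initial $I$-marker and terminal $E$-marker is accepted by $\A$. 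A routine argument mirroring the proof of Lemma~\ref{rewrtodefine}, invoking Lemma~\ref{expandL} to absorb the markers and the alphabet blow-up $\Sigma_\Xi = 2^\Xi$, shows that $\L_\Xi(\q_\A)$ is $\lang$-definable iff $\L(\A)$ is, and Theorem~\ref{Prop:rewr-def}~$(ii)$ yields the required \PSpace-hardness.

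The main technical obstacle is the upper-bound construction of $\A_\q^\Xi$ with only polynomially many states: a naive determinization-style construction necessarily blows up to exponential size, as must be the case for general $\LTL_\horn\Xnext$ OMAQs by Theorem~\ref{hornExpSpacehard}. The argument must therefore genuinely exploit linearity to collapse proof trees into chains trackable by a single IDB atom in the finite control, and must also justify the two-way nature of $\A_\q^\Xi$ as unavoidable, since $\Rnext$- and $\Lnext$-premises in the same axiom body can force the head to range in both directions. A careful induction on derivation rank in $\mathcal C_{\TO, \Abox}$ will be needed both to exhibit a witnessing chain for every yes-instance and to rule out spurious accepting runs that would recognise the wrong ABoxes.
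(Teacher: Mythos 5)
Your overall strategy coincides with the paper's: the same two preliminary reductions (Lemma~\ref{le:remove-bot}~$(i)$ and Proposition~\ref{prop:specific-to-boolean++}~$(i)$), a polynomial-size two-way automaton whose finite control tracks the single IDB atom of a linear Horn derivation chain and whose head excursions verify the non-IDB premises, followed by Theorem~\ref{thm:2NFA} for the upper bound; and an encoding of a DFA into a linear Horn ontology for the lower bound. On the upper bound, the one point you gloss over is that derivations in $\mathcal C_{\TO,\Abox}$ may use time points outside $\tem(\Abox)$, so the 2NFA actually recognises a padded language $\{\Abox\mid \emptyset^N\Abox\emptyset^N\in\L(\A_\q^\Xi)\}$ rather than $\L_\Xi(\q)$ itself; the paper bounds such excursions by $N=M+2M^2$ (Lemma~\ref{th:derivations-runs}) and adjusts the initial and final states of the derived DFA accordingly. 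That is a repairable detail consistent with your announced ``careful induction on derivation rank''.

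The lower bound, however, has a genuine gap: your ontology $\TO_\A$ lacks the disjointness axioms $a\land b\to\bot$ for distinct $a,b\in\Xi$, and without them the claimed equivalence ``$\L_\Xi(\q_\A)$ is $\lang$-definable iff $\L(\A)$ is'' is false. The language $\L_\Xi(\q_\A)$ ranges over \emph{all} $\Xi$-ABoxes, including those with several letters of $\Sigma$ at the same time point. Under the Horn semantics, the set of state atoms derived at position $n+1$ is $\{\delta(q,a)\mid q\text{ derived at }n,\ a\text{ asserted at }n\}$, so $\textit{Acc}$ is derived iff \emph{some selection} of one letter per position yields a word accepted by $\A$. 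This existential-selection operation does not preserve $\FO(<)$-definability: take $\A$ to be the minimal DFA for the star-free language $a(ba)^*$; then the ABoxes $\{I\}\{a,b\}^n\{E\}$ are yes-instances precisely when $n$ is odd, so intersecting $\L_\Xi(\q_\A)$ with the $\FO(<)$-definable language $\L(\{I\}\{a,b\}^*\{E\})$ yields a non-star-free language even though $\L(\A)$ is $\FO(<)$-definable. Consequently the ``routine argument mirroring Lemma~\ref{rewrtodefine}'' cannot go through: that argument decomposes $\L_\Xi(\q)$ into an independently $\FO(<)$-definable set of ill-formed ABoxes plus a well-formed part handled by Lemma~\ref{expandL}, and it is precisely the missing axioms that make every ill-formed ABox inconsistent. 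The fix is what the paper does: add $a\land b\to\bot$ for all distinct $a,b\in\Xi$ (these have no IDB premises, so linearity is preserved). A smaller issue: Proposition~\ref{prop:specific-to-boolean++}~$(i)$ maps specific OMAQs to Boolean ones, so it transfers hardness from the specific to the Boolean case, not the other way around as you use it; the paper avoids this by proving hardness directly for specific OMAQs.
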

\begin{proof}
To show the upper bound, it suffices, by Lemma~\ref{le:remove-bot}~$(i)$ and Proposition~\ref{prop:specific-to-boolean++}, to consider Boolean $\LTL_\horn\Xnext$ OMAQs $\q = (\TO, B)$ with a $\bot$-free $\TO$.
In view of Remark~\ref{rem1}, we can also assume that the axioms in $\TO$ are of two types:
\begin{align}
&C_1 \land \dots \land C_k \to A',\label{eq:varrho1}\\
&C_1 \land \dots \land C_k \land \nxt^i A \to A',\label{eq:varrho2}
\end{align}
where $k \geq 0$, $C_1, \dots, C_k$ contain no IDB atoms, $A \in \textit{idb}(\TO)$ and $i \in \{-1, 0, 1\}$.

%Further, we assume that in each rule $\varrho \in \TO$, $C_1 \land \dots \land C_k$ are such that $C_i = \nxt^{j_i}A_i$ and $j_{i_1} \leq j_{i_2}$ for $i_1 < i_2$.

%We\nb{V: new} observe that, for a 2NFA $\A = (Q, \Sigma, \delta, Q_0, F)$, there is a one-to-one correspondence between the function $\delta$ and a relation on $Q \times \Sigma \times Q \times \{-1,0,1\}$. It will be convenient, in the proof below, to define a transition function of a 2NFA as such a relation and write $q \to_{a,d} q' \in \delta$ if $(q, a, q', d) \in \delta$.

We define a quadruple $\A_\TO^\Xi = (Q, \Sigma_\Xi, \delta, Q_0)$---a 2NFA without final states---giving the transition function $\delta$ as a set of transitions of the form $q \to_{a,d} q'$. Namely, we set $Q_0 = \{q_{0}\}$, $Q = \bigcup_{\alpha \in \TO} Q_\alpha \cup \{q_0, q_h\} \cup \{q_A \mid A \in \textit{idb}(\TO) \}$ and
$$
\delta = \bigcup_{\alpha \in \TO} \delta_\alpha \cup \{q_0 \to_{a, 1} q_0 \mid a \in \Sigma_\Xi\}.
$$
The states in $Q_\alpha$ and transitions in $\delta_\alpha$ are defined as follows. If $\alpha \in \TO$ is of the form~\eqref{eq:varrho1} and $C_i = \nxt^{j_i} A_i$, $1 \leq i \leq k$, then $Q_\alpha = \{q_\alpha\} \cup Q_\alpha'$ and $\delta_\alpha = \{ q_0 \to_{a, 0} q_\alpha \mid a \in \Sigma_\Xi\} \cup \delta_\alpha'$, where $Q_{\alpha}'$ and $\delta_{\alpha}'$ are defined below. If $j_1 < 0$ (the cases $j_1 = 0$ and $j_1 > 0$ are analogous), then $\delta'_{\alpha}$ is such that $\A_\TO^\Xi$ makes $|j_1|$ steps to the left by reading any symbols from $\Sigma_\Xi$. If after that the 2NFA reads any symbol $a$ with $A_1 \not \in a$ (remember that $C_1 = \nxt^{j_1} A_1$), it moves  to the `dead-end' state $q_h$. Otherwise, it makes $|j_1|$ steps to the right and repeats the same process for $C_2 = \nxt^{j_2} A_2$, etc. After executing the transitions for $C_k = \nxt^{j_k} A_k$ and provided that $q_h$ has been avoided, the 2NFA comes to state $q_{A'}$.
%
%If $\alpha$ is of the form $\alpha_2$, then $Q_\alpha$ is defined as above and $\delta_\alpha = \{ s_0 \to_{a, 0} q_\alpha, r_\alpha \to_{a, 0} q_{B} \mid a \in \Sigma_\q\} \cup \bigcup_{1 \leq i \leq k} \delta_{i, \alpha}$ with $\delta_{i, \alpha}$ defined as above.
For $\alpha$ of the form~\eqref{eq:varrho2}, $Q_\alpha$ is the same as above but $\delta_\alpha = \{ q_A \to_{a, 0} q_\alpha \mid a \in \Sigma_\Xi\} \cup \delta_\alpha'$ for the same $\delta_\alpha'$ as above, leading to either $q_h$ or $q_{A'}$.

In what follows, $\mathsf{b}_{\bullet}(\Abox)$ and $\mathsf{b}(\Abox)$, for $\bullet \in \{lr, rr, rl, ll\}$ and $\Abox \in \Sigma_\Xi^*$, are defined with respect to $\A_\TO^\Xi$ (see Section~\ref{sec:2nfa} taking into account that the final states of the 2NFA are not relevant in the definition of $\mathsf{b}_{\bullet}(\Abox)$). Let $X_\Abox(\ell)$ be the reflexive and transitive closure of $\mathsf{b}_{ll}(\Abox^{>\ell}) \circ \mathsf{b}_{rr}(\Abox^{\leq \ell})$, for $0 \leq \ell < |\Abox|$.
Let $N = M + 2 M^2$, where $M$ is the number of occurrences of $\Rnext$ and $\Lnext$ in $\TO$. The proof of the following technical result can be found in Appendix~\ref{lem:23}:

\begin{lemma}\label{th:derivations-runs}
Let $\Abox \in \Sigma_\Xi^*$ be of the form $\emptyset^N \mathcal B \emptyset^N$. % and $\TO$ a linear $\LTL_\horn\Xnext$ ontology.
Then
%
%$(i)$ $(\TO, \Abox)$ is inconsistent iff there exists a run $(q_0, 0), \dots, (q_\bot, i)$ of $\A_\TO^\Xi$ on $\Abox$\textup{;}
%
%$(ii)$
$A \in \tp_{\TO, \mathcal A}^{\sig(\TO)}(\ell)$ iff there exists a run $(q_0, 0), \dots, (q, \ell), (q_A, i)$ of $\A_\TO^\Xi$ on $\Abox$, for all $\ell$ with $N \leq \ell < |\mathcal A| - N$.
\end{lemma}

As $\A_\TO^\Xi$ has a run $(q_0, 0), \dots, (q, \ell), (q_A, i)$ on $\Abox$ iff $(q_0, q_A) \in \mathsf{b}_{lr}(\Abox^{\leq \ell}) \circ X_\Abox(\ell)$, for all $\ell < |\Abox|$ and $A \in \sig(\TO)$, we immediately obtain that
\begin{equation}\label{th:atomic-type-2nfa}
%For any $\Abox$, $\TO$ and $\ell$ as above,
\tp_{\TO, \mathcal A}^{\sig(\TO)}(\ell) = \{ A \mid (q_0, q_A) \in \mathsf{b}_{lr}(\Abox^{\leq \ell}) \circ X_\Abox(\ell) \}.
%\end{itemize}
\end{equation}
Define a 2NFA $\A_\q^\Xi = (\Sigma_\Xi, Q', \delta', Q_0,  F)$ with $Q' = Q \cup \{q_B\}$, $\delta' = \delta \cup \{ q_B \to_{a, 1} q_B \mid a \in \Sigma_\Xi \}$, and $F = \{q_B\}$. Using Lemma~\ref{th:derivations-runs}, we obtain:
\begin{equation}
\L_\Xi(\q) = \{ \Abox \in \Sigma_\Xi^* \mid \emptyset^N \Abox \emptyset^N \in \L(\A_\q^\Xi)\}.\label{eq:emptyNlang}
\end{equation}
Our aim is to construct in polynomial space a DFA $\A'$ with $\L_\Xi(\q) = \L(\A')$ whose $\lang$-definability can be decided in \PSpace{}. We construct $\A'$ from $\A_\q^\Xi$ in the same way as in Section~\ref{sec:2nfa} except the definition of $q_0'$ and $F'$, which is now as follows: $q_0' = (\{(q_0,q_0)\}, \mathsf{b}_{rr}(\emptyset^N))$ and $F' = \{(B_{lr}, B_{rr}) \mid (q_0, q_1) \in B_{lr} \circ X \}$, where $X$ is the reflexive and transitive closure of $\mathsf{b}_{ll}(\emptyset^N) \circ B_{rr}$. By~\eqref{eq:emptyNlang}, we have $\L_\Xi(\q) = \L(\A')$, and it is readily seen that $\A'$ is constructible from $\q$ in \PSpace. That $\lang$-definability of $\A'$ is decidable in \PSpace, follows from the proof of Theorem~\ref{thm:2NFA}.

\smallskip

We now establish a matching lower bound. By Lemma~\ref{le:remove-bot} and Proposition~\ref{prop:specific-to-boolean++} $(i)$, it suffices to show it for \emph{specific} linear $\LTL_\horn\Xnext$ OMAQs $\q(x) = (\TO, F_{end}(x))$, which
will be done by reduction of $\lang$-rewritability for DFAs $\A=(Q,\Omega,\delta,q_0,F)$. We set $\Xi = \Omega \cup \{ X, Y\}$ with fresh $X$, $Y$ and construct a linear $\LTL_\horn\Xnext$ ontology $\TO$ with $\textit{idb}(\TO) \subseteq \{\bar q \mid q \in Q \} \cup \{ F_{end}\}$ (treating $\bar q$ as an atomic concept) that simulates the behaviour of the DFA $\A$ by means of the axioms $X \to \Rnext \bar q_0$, $\bar q \land Y \to F_{end}$ for all $q \in F$, $\bar q \land A \to \Rnext \bar r$ for all transitions  $q \to_A r$ in $\delta$, $A \land C \to \bot$ for all distinct $A, C \in \Xi$.
Then
$\L(\A)$ is $\lang$-definable iff $\L_\Xi(\q(x))$ is $\lang$-definable, which is proved similarly to Lemma~\ref{rewrtodefine}.
\end{proof}

%**************

\subsection{Deciding FO-rewritability of Linear $\LTL_\horn\Xnext$ OMPQs}\label{sec:lin-ompq}

We next show that $\FO(<)$- and $\FO(<,\equiv)$-definability of linear $\LTL_\horn\Xnext$ OMPQs can be recognised in \PSpace.
By Lemma~\ref{le:remove-bot} and Proposition~\ref{prop:specific-to-boolean++}, it suffices to do this for Boolean OMPQs $\q = (\TO, \varkappa)$ with $\bot$-free $\TO$, in which case we can assume that $\varkappa = \Ldiamond \Rdiamond \varkappa'$. Let $\Type_\q$ be the set of all types for $\q$.

We start with $\FO(<)$-definability. Recall that we established the \PSpace{} upper bound for deciding $\FO(<)$-definability of the language of a given DFA $\A$ in two steps. First, in Theorem~\ref{DFAcrit} $(i)$, we gave a criterion in terms of words in the alphabet of $\A$, and then, in Theorem~\ref{thm:2NFA}, we showed how to check that criterion in \PSpace{}. Similarly, in Theorem~\ref{th:fo-horn-crit} below, we prove a criterion of $\FO(<)$-rewritability of OMPQs we are dealing with in terms of $\Sigma_\Xi$-ABoxes. Then, in Theorem~\ref{th:lin-ompq-fo-pspace}, we show how this criterion can be checked in \PSpace{}.

\begin{theorem}\label{th:fo-horn-crit}
Let $\q = (\TO, \varkappa)$ be an OMPQ with a $\bot$-free $\LTL_\horn\Xallop$-ontology $\TO$. Then $\q$ is not $\FO(<)$-rewritable over $\Xi$-ABoxes iff there exist $\Abox, \mathcal{B}, \mathcal{D} \in \Sigma_\Xi^*$ and $k \geq 2$ such that the following conditions hold:
\begin{description}
\item[$(i)$] $\neg \varkappa \in \tp_{\TO, \Abox \mathcal{B}^{k} \mathcal{D}}(|\Abox|-1) = \tp_{\TO, \Abox \mathcal{B}^{k} \mathcal{D}}(|\Abox \mathcal{B}^{k}|-1)$\textup{;}

\item[$(ii)$] $\varkappa \in \tp_{\TO, \Abox \mathcal{B}^{k+1} \mathcal{D}}(|\Abox \mathcal{B}|-1)  = \tp_{\TO, \Abox \mathcal{B}^{k+1} \mathcal{D}}(|\Abox \mathcal{B}^{k+1}|-1)$.
\end{description}
Moreover, we can find $\Abox, \mathcal{B}, \mathcal{D}$ and $k$ such that $|\Abox|, |\mathcal{D}|, k \leq 2^{O(|\q|)}$.
\end{theorem}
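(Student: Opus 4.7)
The plan is to combine Theorem~\ref{Prop:rewr-def}~(ii), which reduces FO($<$)-rewritability of $\q$ over $\Xi$-ABoxes to FO($<$)-definability of the regular language $\L_\Xi(\q)$, with the nontrivial-cycle criterion of Theorem~\ref{DFAcrit}~(i) applied to the minimal DFA $\Amin$ of $\L_\Xi(\q)$. Throughout I will use the reduction already recorded in the preamble of this subsection to Boolean $\q$ with $\varkappa = \Ldiamond\Rdiamond\varkappa'$, under which $\varkappa$ belongs to $\tp_{\TO,\Abox}(n)$ for some $n$ iff it belongs to $\tp_{\TO,\Abox}(n)$ for every $n$; with this shape of $\varkappa$, each half of (i) and (ii) decomposes cleanly into a Boolean answer to $\q$ together with the equality of types at two specified positions.

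For the ($\Rightarrow$) direction, assuming $\q$ is not FO($<$)-rewritable, Theorem~\ref{DFAcrit}~(i) supplies $u \in \Sigma_\Xi^\ast$, a reachable state $q \in \Amin$, and $k' \leq |\Amin|$ with $q \not\simm \fu_u(q)$ and $q = \fu_{u^{k'}}(q)$. I will pick a shortest $\Abox_0$ driving $\Amin$ from its initial state to $q$ and a shortest $\mathcal{D}$ that $\simm$-distinguishes $q$ from $\fu_u(q)$, and set $\mathcal{B} = u$. After possibly replacing $\Abox_0$ by $\Abox_0 \mathcal{B}$ to align polarities, I obtain $\Abox_0 \mathcal{B}^{k'} \mathcal{D} \notin \L_\Xi(\q)$ and $\Abox_0 \mathcal{B}^{k'+1} \mathcal{D} \in \L_\Xi(\q)$, which already delivers the $\varkappa$/$\neg\varkappa$ clauses of (i) and (ii). To secure the type-equalities, I will pump $\mathcal{B}$: replace $\Abox_0$ by $\Abox = \Abox_0 \mathcal{B}^{r}$ and $k'$ by a suitable multiple $k = m k'$ chosen so that the refined automaton obtained by tracking the $\q$-type synchronously alongside the $\Amin$-state has entered its periodic phase on the $\mathcal{B}$-trajectory starting at $\Abox$; appropriate $r$ and $m$ exist within single-exponential bounds because this refined automaton has $2^{O(|\q|)}$ states, which in turn yields the bounds $|\Abox|, |\mathcal{D}|, k \leq 2^{O(|\q|)}$.

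For the ($\Leftarrow$) direction, suppose $\Abox, \mathcal{B}, \mathcal{D}, k$ satisfy (i) and (ii). The $\Ldiamond\Rdiamond$-shape of $\varkappa$ turns (i) and (ii) into $\Abox \mathcal{B}^k \mathcal{D} \notin \L_\Xi(\q)$ and $\Abox \mathcal{B}^{k+1} \mathcal{D} \in \L_\Xi(\q)$, so the $\Amin$-states $q_k = \fu_{\Abox \mathcal{B}^k}(q_0)$ and $q_{k+1} = \fu_\mathcal{B}(q_k)$ satisfy $q_k \not\simm q_{k+1}$. Reading the type-equality clauses of (i) and (ii) through the same type-tracking refinement, the refined state after $\Abox$ coincides with the refined state after $\Abox \mathcal{B}^k$; projecting this equality back to $\Amin$ gives $\fu_{\mathcal{B}^k}(q_k) = q_k$. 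Therefore $q_k, \fu_\mathcal{B}(q_k), \ldots, \fu_{\mathcal{B}^{k-1}}(q_k), q_k$ is a cycle in $\Amin$ of length at most $k$, and it is nontrivial since $q_k \not\simm \fu_\mathcal{B}(q_k) = q_{k+1}$. Theorem~\ref{DFAcrit}~(i) then certifies that $\L_\Xi(\q)$ is not FO($<$)-definable, and hence $\q$ is not FO($<$)-rewritable.

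The main obstacle will be the construction and analysis of the type-tracking refinement of $\Amin$: it has to faithfully reflect the $\q$-type of the canonical model at the current position so that the type-equality clauses in (i) and (ii) correspond exactly to refined-state repetitions, while remaining of size $2^{O(|\q|)}$ so that the exponential bounds in the statement are met. Since types in the canonical model of a Horn ontology depend on both past and future of the ABox through $\Lbox, \Ldiamond$ and $\Rbox, \Rdiamond$, the refinement requires a two-way simulation combined with a finite forward summary of the ABox, in the spirit of the 2NFA constructions of Sections~\ref{sec:2nfa} and~\ref{sec:linear-2}, and this is where the bulk of the technical work resides in both directions.
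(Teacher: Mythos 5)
Your $(\Rightarrow)$ direction is broadly in the spirit of the paper's argument (the paper applies Theorem~\ref{DFAcrit}~$(i)$ to the subset-construction DFA over $2^{\Type_\q}$ rather than to $\Amin$, and reads off the type conditions via an explicit identity expressing $\tp_{\TO,\Abox}(i)$ as a function of the forward automaton state at $i$ \emph{and} the remaining suffix), but your $(\Leftarrow)$ direction has a genuine gap. You claim that the type-equality clauses of $(i)$ and $(ii)$, read through a ``type-tracking refinement'', yield that the refined state after $\Abox$ coincides with the refined state after $\Abox\mathcal B^k$, and hence that $\fu_{\mathcal B^k}(q_k)=q_k$. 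This inference does not hold: condition $(i)$ asserts the equality of two \emph{canonical-model types}, i.e.\ single types $\tp_{\TO,\Abox\mathcal B^k\mathcal D}(|\Abox|-1)$ and $\tp_{\TO,\Abox\mathcal B^k\mathcal D}(|\Abox\mathcal B^k|-1)$, whereas a state of any forward-deterministic (refined) automaton is a far richer object, and distinct states can induce the same canonical-model type. Worse, the type at a position depends on the entire future of the word, and the two positions in $(i)$ have different futures ($\mathcal B^k\mathcal D$ versus $\mathcal D$), so the equality of types cannot be recast as a repetition of any state computed by reading the prefix. The paper explicitly flags exactly this point (``if we had $q_0=q_2$, we could conclude \dots Since we are not guaranteed that \dots'') and spends the bulk of its $(\Leftarrow)$ proof circumventing it: it pumps $\mathcal B$ further until an \emph{actual} state repetition $q_4\Rightarrow_{\mathcal B^{kj}} q_4$ occurs by pigeonhole, and then uses additional periodicity properties of canonical models derived from $(i)$ and $(ii)$ (that the relevant types are preserved when $\mathcal B^k$ is replaced by $\mathcal B^{kj}$) to show that the $\mathcal D$-distinguishability survives the pumping. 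Without an argument of this kind your cycle $q_k,\fu_{\mathcal B}(q_k),\dots$ need not close up, and the reduction to Theorem~\ref{DFAcrit}~$(i)$ fails.

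A secondary, smaller divergence: the paper obtains the bounds $|\Abox|,|\mathcal D|,k\leq 2^{O(|\q|)}$ by a direct contraction argument on sequences of pairs of canonical-model types (of length at most $2|\Type_\q|^2$), not by counting states of a refined product automaton; your route could be made to work for the bounds, but only once the refinement is actually constructed, which you correctly identify as the heavy part and leave open.
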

\begin{proof}
Define a DFA $\A = (Q,\Sigma_\Xi,\delta, q_{-1},F)$ by taking $Q = 2^{\Type_\q}$, $q_{-1} = \Type_\q$, $F = \{q \in Q \mid \varkappa \in \tp \text{ for all } \tp \in q\}$, and  $\delta(q, a) = \{ \tp \mid \tp' \to_a \tp \text{
 for some }\tp' \in q\}$, where $\to_a$ was defined in the proof of Proposition~\ref{Prop:rewr-def}. %Clearly, $\A$ is deterministic. (In fact, $\A$ is a determinasation of the NFA used in Proposition~\ref{Prop:rewr-def} with some simplifications.)
As in that proof we can show that $\L_\Xi(\q) = \L(\A)$.
We write $q \Rightarrow_{\Abox} q'$ to say that, having started in state $q$ and read $\Abox \in \Sigma_\Xi^*$, the DFA $\A$ is in state $q'$.

We require the following property of $\A$. For a set $\{\tp_i \mid i \in I\}$ of types for $\q$, let $\bigoplus_{i \in I} \tp_i = \bigcap_{i \in I} \tp_i^+ \cup \bigcup_{i \in I} \tp_i^-$, where $\tp_i^+$ and $\tp_i^-$ are the sets of positive and negated concepts in $\tp_i$, respectively.
%
%$\varkappa \in \subq$ ($\neg \varkappa \in \subq$)\nz{$\varkappa$ is the query?} such that $\varkappa \in \tp_i$ (respectively, $\neg \varkappa \in \tp_i$) for all (some) $i \in I$.
%
Suppose now $q_{-1} \Rightarrow_{\Abox_0} q_0 \Rightarrow_{\Abox_1} \dots \Rightarrow_{\Abox_{n-1}} q_{n-1}\Rightarrow_{\Abox_n} q_n$ is a run of $\A$ on $\Abox = \Abox_0 \dots \Abox_n$, and let $\bar q_i = \{ \tp \in q_i \mid \tp \to_{\Abox_{i+1} \dots \Abox_n} \tp', \text{ for some }\tp' \in q_n \}$.
Then
\begin{equation}\label{eq:fo-horn-crit2}
%&q_n = \emptyset \text{ iff }(\TO, \Abox) \text{ is inconsistent};\label{eq:fo-horn-crit1}\\
\tp_{\TO, \Abox}(i) = \bigoplus  \bar q_i, \quad \text{ for }-1 \leq i \leq n.
%$, if } (\TO, \Abox) \text{ is consistent}
\end{equation}
%
%We are now in a position to establish the criterion of our theorem

$(\Rightarrow)$ Suppose $\q$ is not $\FO(<)$-rewritable. By applying Theorem~\ref{DFAcrit} $(i)$ to $\A$, we find $\Abox, \mathcal{B}, \mathcal{D} \in \Sigma_\Xi^*$ and $k \geq 2$ such that $q_{-1} \Rightarrow_{\Abox} q_0$, $q_0 \Rightarrow_{\mathcal B} q_1$, $q_0 \Rightarrow_{\mathcal B^k} q_0$ and $q_0 \Rightarrow_{\mathcal D} q_0'$, $q_1 \Rightarrow_{\mathcal D} q_1'$, for some $q_0, q_1, q_0',q_1' \in Q$ such that $q_0' \not \in F$ and $q_1' \in F$. Since $q_0' \not \in F$, %it follows by~\eqref{eq:fo-horn-crit1} that $(\TO, \Abox \mathcal B^{k} \mathcal D)$ is consistent. B
by~\eqref{eq:fo-horn-crit2}, we obtain
$\neg \varkappa \in \tp_{\TO, \Abox \mathcal{B}^{k} \mathcal{D}}(|\Abox|-1) = \tp_{\TO, \Abox \mathcal{B}^{k} \mathcal{D}}(|\Abox \mathcal{B}^{k}|-1)$ as required in $(i)$. And since $q_1' \in F$,~\eqref{eq:fo-horn-crit2} yields $\varkappa \in \tp_{\TO, \Abox \mathcal{B}^{k+1} \mathcal{D}}(|\Abox \mathcal B|-1) = \tp_{\TO, \Abox \mathcal{B}^{k+1} \mathcal{D}}(|\Abox \mathcal{B}^{k+1}|-1)$, as required in $(ii)$.

 $(\Leftarrow)$ Assuming $(i)$ and $(ii)$, let $q_0, q_1, q_2$ be states in $\A$ with $q_{-1} \Rightarrow_{\Abox} q_0 \Rightarrow_{\mathcal B} q_1 \Rightarrow_{\mathcal B^{k-1}} q_2 \Rightarrow_{\mathcal B} q_2'$. Let $q_3, q_3'$ be such that $q_2 \Rightarrow_{\mathcal D} q_3$ and $q_2' \Rightarrow_{\mathcal D} q_3'$. It follows by~\eqref{eq:fo-horn-crit2} that $q_3 \not \in F$ and $q_3' \in F$. Observe that, if we had $q_0 = q_2$, we could conclude that $\q$ is not $\FO(<)$-rewritable, as the conditions of aperiodicity for $\A$ (see the proof of $(\Rightarrow)$) would be satisfied. Since we are not guaranteed that, we use the following property of the canonical models that follow from $(i)$ and $(ii)$: (a) %$(\TO, \Abox \mathcal{B}^k \mathcal{D})$ is consistent iff $(\TO, \Abox \mathcal{B}^{kj} \mathcal{D})$ is consistent, and
 $\tp_{\TO, \Abox \mathcal{B}^k \mathcal{D}}(|\Abox \mathcal{B}^k|-1) = \tp_{\TO, \Abox \mathcal{B}^{kj} \mathcal{D}}(|\Abox \mathcal{B}^{kj}|-1)$, for any $j \geq 1$; (b) %$(\TO, \Abox \mathcal{B}^{k+1} \mathcal{D})$ is consistent iff $(\TO, \Abox \mathcal{B}^{kj+1} \mathcal{D})$ is consistent and
 $\tp_{\TO, \Abox \mathcal{B}^{k+1} \mathcal{D}}(|\Abox \mathcal{B}^{k+1}|-1) = \tp_{\TO, \Abox \mathcal{B}^{kj+1} \mathcal{D}}(|\Abox \mathcal{B}^{kj+1}|-1)$, for any $j \geq 1$. Take some $i, j \geq 1$ that satisfy $q_0 \Rightarrow_{\Abox \mathcal B^{ki}} q_4 \Rightarrow_{\mathcal B} q_4'  \Rightarrow_{\mathcal B^{kj}} q_4 \Rightarrow_{\mathcal B} q_4'$, for some $q_4, q_4' \in Q$. By $(i)$, $(ii)$, (a) and (b), we have $q_5 \not \in F$ and $q_5' \in F$ for such $q_5$ and $q_5'$ that $q_4 \Rightarrow_{\mathcal D} q_5$ and $q_4' \Rightarrow_{\mathcal D} q_5'$. Therefore, $\q$ is not $\FO(<)$-rewritable, as the conditions of aperiodicity for $\A$ are satisfied (as in the $(\Rightarrow)$-proof with $\Abox$, $\mathcal B$, $\mathcal D$ and $k$ being $\Abox \mathcal B^{ki}$, $\mathcal B$, $\mathcal D$ and $kj$, respectively).

To establish the bounds on the size of $\Abox$, $\mathcal{D}$ and $k$, we first notice that there is $\Abox$ with $|\Abox| \leq 2|\Type_\q|^2$. Indeed, consider the sequence
$$
\big(\tp_{\TO, \Abox \mathcal B^k \mathcal D}(0), \tp_{\TO, \Abox \mathcal B^{k+1} \mathcal D}(0)\big), \dots, \big(\tp_{\TO, \Abox \mathcal B^k \mathcal D}(|\Abox|-2), \tp_{\TO, \Abox \mathcal B^{k+1} \mathcal D}(|\Abox|-2)\big).
$$
If the $i$-th member of this sequence is equal to its $j$-th member, for $i < j$, then we take $\Abox'= \Abox^{<i}\Abox^{\geq j}$, where $\Abox^{<i}$ is the prefix of $\Abox$ before $i$ and $\Abox^{\geq j}$ the suffix of $\Abox$ starting at $j$. Then $\tp_{\TO, \Abox' \mathcal B^k \mathcal D}(|\Abox'|-1) = \tp_{\TO, \Abox \mathcal B^k \mathcal D}(|\Abox|-1)$ and $\tp_{\TO, \Abox' \mathcal B^{k+1} \mathcal D}(|\Abox' \mathcal B|-1) = \tp_{\TO, \Abox \mathcal B^k \mathcal D}(|\Abox \mathcal B|-1)$, and conditions $(i)$ and $(ii)$ are satisfied with $\mathcal A'$ in place of $\mathcal A$. In the same way we obtain the upper bound for $\mathcal D$. To show that there exists $k \leq 2|\Type_\q|^2$, we consider the sequence
\begin{multline*}
\big(\tp_{\TO, \Abox \mathcal B^k \mathcal D}(|\Abox \mathcal B|-1), \tp_{\TO, \Abox \mathcal B^{k+1} \mathcal D}(|\Abox \mathcal B^2|-1)\big), \dots, \\
\big(\tp_{\TO, \Abox \mathcal B^k \mathcal D}(|\Abox \mathcal B^{k-1}|-1), \tp_{\TO, \Abox \mathcal B^{k+1} \mathcal D}(|\Abox \mathcal B^{k}|-1)\big).
\end{multline*}
Clearly, if the $i$-th member of this sequence is equal to its $j$-th member, for $i < j$, then conditions $(i)$ and $(ii)$ are satisfied with $k-(j-i)$ in place of $k$.
\end{proof}

In the theorem above, we did not claim that there is $\mathcal B$ with $|\mathcal B| \leq 2^{O(|\q|)}$. However, this is indeed the case for linear $\LTL_\horn\Xnext$-ontologies, as follows from the proof of the next result:

\begin{theorem}\label{th:lin-ompq-fo-pspace}
Deciding $\FO(<)$-rewritability of OMPQs $\q = (\TO, \varkappa)$ with a linear $\LTL_\horn\Xnext$-ontology $\TO$ over $\Xi$-ABoxes can be done in \PSpace{}.
\end{theorem}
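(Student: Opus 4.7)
The plan is to combine the semantic criterion of Theorem~\ref{th:fo-horn-crit} with the behavior-composition machinery developed in Theorems~\ref{th:linear-omaq-pspace} and~\ref{thm:2NFA}. By Lemma~\ref{le:remove-bot} and Proposition~\ref{prop:specific-to-boolean++}, it suffices to treat Boolean OMPQs $\q=(\TO,\varkappa)$ with a $\bot$-free linear $\LTL_\horn\Xnext$-ontology $\TO$, and we may assume $\varkappa=\Ldiamond\Rdiamond\varkappa'$. We must decide whether there exist $\Abox,\mathcal B,\mathcal D\in\Sigma_\Xi^*$ and $k\ge 2$ witnessing conditions $(i)$ and $(ii)$ of Theorem~\ref{th:fo-horn-crit}. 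That theorem bounds $|\Abox|$, $|\mathcal D|$ and $k$ by $2^{O(|\q|)}$ but says nothing about $|\mathcal B|$; instead of guessing the witness words themselves we shall guess only their \emph{behaviors} in a suitable polynomial-size 2NFA, of which there are only $2^{O(|\q|)}$ many.

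First, I would enrich the polynomial-size 2NFA $\A_\TO^\Xi$ from the proof of Theorem~\ref{th:linear-omaq-pspace} into an automaton $\A_\q^\Xi$ whose state set contains, in addition, one marker state $q_\psi$ per subconcept $\psi$ of $\varkappa$. The new transitions are designed so that $\A_\q^\Xi$ can scan left or right to witness the temporal subformulas of $\psi$ on top of the atomic-type derivation provided by Lemma~\ref{th:derivations-runs}. An inductive strengthening of that lemma, combined with~\eqref{th:atomic-type-2nfa}, then shows that for every interior position $\ell$ of a sufficiently padded word $\mathcal C$, the behavior quadruple $\mathsf{b}(\mathcal C)$ determines whether $\psi\in\tp_{\TO,\mathcal C}(\ell)$ for every subconcept $\psi$ of $\varkappa$. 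Since $|\varkappa|\le|\q|$, the state set of $\A_\q^\Xi$ remains polynomial in $|\q|$, hence each behavior is a quadruple of polynomial-size binary relations.

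The \PSpace{} algorithm then guesses behaviors $b_\Abox$, $b_\mathcal{B}$, $b_\mathcal{D}$ and the number $k\le 2^{O(|\q|)}$ in binary, and verifies that each guessed behavior is realised by some $\Sigma_\Xi$-word. Following the proof of Theorem~\ref{thm:2NFA}, this amounts to guessing on the fly a sequence of at most $2^{O(|\q|)}$ intermediate behaviors, with consecutive ones composing via one letter, while storing only the current and the previous behavior. Using the composition rules from Section~\ref{sec:2nfa}, the algorithm computes $b_{\mathcal B^k}$ by repeated squaring in \PSpace{} and then the behaviors of $\Abox\mathcal B^k\mathcal D$ and $\Abox\mathcal B^{k+1}\mathcal D$. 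From these and the enriched analogue of~\eqref{th:atomic-type-2nfa}, it reads off the full $\q$-types at the four positions occurring in conditions $(i)$ and $(ii)$ of Theorem~\ref{th:fo-horn-crit} and checks the required (non-)memberships of $\varkappa$.

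The main obstacle is the enrichment step from $\A_\TO^\Xi$ to $\A_\q^\Xi$: unlike for atomic queries, membership $\psi\in\tp_{\TO,\mathcal C}(n)$ for a positive $\psi$ containing $\Rbox$, $\Lbox$, $\Rdiamond$ or $\Ldiamond$ may depend on atomic types at arbitrarily distant positions of the canonical model. The two-way nature of the automaton is essential here, since it allows additional left/right passes launched from any position to witness the existential or universal subformulas of $\varkappa$, and the extension must be done carefully by induction along the structure of $\varkappa$ in order to keep the state space polynomial and the behavior-realisability check from Theorem~\ref{thm:2NFA} applicable. Once this enrichment is in place, the remainder of the algorithm is a direct adaptation of that \PSpace{} procedure, guided by the combinatorial criterion of Theorem~\ref{th:fo-horn-crit}.
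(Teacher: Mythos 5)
Your overall strategy (reduce via Lemma~\ref{le:remove-bot} and Proposition~\ref{prop:specific-to-boolean++} to Boolean $\bot$-free OMPQs, apply the criterion of Theorem~\ref{th:fo-horn-crit}, and check it by guessing behaviours of a polynomial-size two-way automaton) matches the paper's up to one crucial point, and that point is a genuine gap. Your ``enrichment step'' --- extending $\A_\TO^\Xi$ to a polynomial-size 2NFA $\A_\q^\Xi$ whose behaviour quadruples determine membership of \emph{every} subconcept of $\varkappa$ in the canonical types --- is not justified, and there are concrete reasons to believe it cannot be carried out as described. A 2NFA is an existential device: it can witness $\Rdiamond\psi$ by scanning right, but $\Rbox\psi$ and $\Lbox\psi$ impose universal conditions over unboundedly many positions, which a nondeterministic automaton can only handle via complementation (exponential blow-up) or alternation. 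Moreover, verifying a conjunction $\psi_1\land\psi_2$ at a position $\ell$ requires the head to return to $\ell$ after a possibly long excursion witnessing $\psi_1$, which a pebble-free 2NFA cannot do. The paper itself flags exactly this issue in the introduction: for linear $\LTL_\horn\Xnext$ OMPQs ``the existence of polynomial-state 2NFAs remains open''. So the step you describe as ``the main obstacle'' and defer to a careful induction is precisely the open problem, not a routine extension.

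The paper's actual proof avoids building such an automaton. It uses $\A_\TO^\Xi$ (and equation~\eqref{th:atomic-type-2nfa}) only to recover the \emph{atomic} types $\tp^{\sig(\TO)}_{\TO,\cdot}(i)$ from guessed behaviours, and handles the temporal subconcepts of $\varkappa$ by a direct analysis of the canonical model: it guesses the full types at a constant number of distinguished positions ($\tp_0,\tp_1,\tp_0',\tp_0'',\tp_1'$), propagates full types left-to-right using the fact that $\tp_{\TO,\Abox}(i)$ is determined by $\tp^{\sig(\TO)}_{\TO,\Abox}(i)$ and $\tp_{\TO,\Abox}(i-1)$, checks the past-directed $\Lbox/\Ldiamond$ constraints against the canonical model of $\tp_0$, and exploits the observation that $\Box$- and $\Diamond$-subformulas are constant across the periodic block $\mathcal B^k$, so that inside $\mathcal B$ one only needs to test atomic types for conflicts with the temporal concepts recorded in $\tp_1$. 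If you want to salvage your write-up, you should replace the automaton enrichment by this type-propagation argument (or supply a genuinely new construction resolving the open question, which would be a stronger result than the theorem itself).
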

\begin{proof}
By Theorem~\ref{th:fo-horn-crit}, we need to check the existence of $\Abox, \mathcal{B}, \mathcal{D}$, $k \geq 2$, such that $|\Abox|, |\mathcal{D}|, k \leq 2^{O(|\q|)}$ and conditions $(i)$ and $(ii)$ hold. Without loss of generality, we assume that $\Abox$ has a prefix $\emptyset^N$ and $\mathcal{D}$ has a suffix $\emptyset^N$. (As before, $N = M + 2 M^2$, where $M$ is the number of occurrences of $\Rnext$ and $\Lnext$ in $\TO$.)

We start by guessing numbers $N_{\Abox} = |\Abox|$, $N_{\mathcal D} = |\mathcal D|$ and $k$. %, and by guessing whether $(\TO, \Abox \mathcal{B}^k \mathcal{D})$ is consistent. First, we describe how the algorithm works if it was guessed that $(\TO, \Abox \mathcal{B}^k \mathcal{D})$ is consistent, the other case is discuss in the end of the proof.
We guess two types $\tp_0$ and $\tp_1$ that represent $\tp_{\TO, \Abox \mathcal{B}^k \mathcal{D}}(N)$ and $\tp_{\TO, \Abox \mathcal{B}^k \mathcal{D}}(|\Abox|-1)$, respectively, and three types $\tp_0'$, $\tp_0''$, $\tp_1'$ that represent $\tp_{\TO, \Abox \mathcal{B}^{k+1} \mathcal{D}}(N)$, $\tp_{\TO, \Abox \mathcal{B}^{k+1} \mathcal{D}}(|\Abox|-1)$ and, respectively, $\tp_{\TO, \Abox \mathcal{B}^{k+1} \mathcal{D}}(|\Abox \mathcal B|-1)$.
%Denote the ABox $\Abox \mathcal{B}^k \mathcal{D}$ by $\mathcal X$  and $\Abox \mathcal{B}^{k+1} \mathcal{D}$ by $\mathcal Y$.
Next, we compute $\mathsf{b}(\emptyset^N)$ and guess $\mathsf{b}(\Abox)$, $\mathsf{b}(\mathcal B)$, $\mathsf{b}(\mathcal D)$. Note that, given $\mathsf{b}(\mathcal B)$, we are able to compute $\mathsf{b}(\mathcal X)$ for each $\mathcal X \in \{ \mathcal B^i \mid 1 \leq i \leq k+1 \}$.
Now, we guess $\Abox$---symbol by symbol---by means of a sequence of pairs
$$(\mathsf{b}(\Abox^{\leq 0}), \mathsf{b}(\Abox^{>0})), \dots, (\mathsf{b}(\Abox^{\leq N_\Abox-1}), \mathsf{b}(\Abox^{> N_\Abox-1}))
$$
such that $\mathsf{b}(\Abox^{\leq i}) \cdot \mathsf{b}(\Abox^{> i}) = \mathsf{b}(\Abox)$, for all $i$, and there exist $a_i \in \Sigma_\Xi$ with $\mathsf{b}(\Abox^{\leq i+1}) = \mathsf{b}(\Abox^{\leq i}) \cdot \mathsf{b}(a_i)$ and $\mathsf{b}(\Abox^{> {i}}) = \mathsf{b}(a_i) \cdot \mathsf{b}(\Abox^{> i+1})$; we also require that $a_i = \emptyset$ for $i < N$. Observe that, by~\eqref{th:atomic-type-2nfa}, the pairs of the sequence with $i \geq N$ together with $\mathsf{b}(\mathcal B)$ and $\mathsf{b}(\mathcal D)$ give $\tp_{\TO, \Abox \mathcal{B}^k \mathcal{D}}^{\sig(\TO)}(i)$. When computing $\tp_{\TO, \Abox \mathcal{B}^k \mathcal{D}}^{\sig(\TO)}(N)$, we check whether it is subsumed by $\tp_0$ (if not, the algorithm terminates with an answer \no). We also need to check that
$\varkappa' \in \tp_{\mathcal C_{\TO, \{A(0) \mid A \in \tp_0\}}}(0)$ \text{implies} $\varkappa' \in \tp_0$,
for each $\varkappa'$ of the form $\Lbox \varkappa''$, $\Ldiamond \varkappa''$ from $\sub_{\q}$ (if not, the algorithm terminates and returns \no). We have now checked that the type $\tp_0$ is potentially guessed correctly (subject to further checks). We can apply the same method to check that $\tp_0'$ is potentially guessed correctly. For the remaining $N < i < N_\Abox$, since $\tp_{\TO, \Abox \mathcal{B}^k \mathcal{D}}(i)$ is determined by $\tp_{\TO, \Abox \mathcal{B}^k \mathcal{D}}^{\sig(\TO)}(i)$ and $\tp_{\TO, \Abox \mathcal{B}^k \mathcal{D}}(i-1)$, we are able to compute $\tp_{\TO, \Abox \mathcal{B}^k \mathcal{D}}(|\Abox|-1)$ or obtain a conflict, e.g., $\neg \Rdiamond A \in \tp_{\TO, \Abox \mathcal{B}^k \mathcal{D}}(i-1)$ and $A \in \tp_{\TO, \Abox \mathcal{B}^k \mathcal{D}}^{\sig(\TO)}(i)$. In the latter case, the algorithm terminates answering \no. In the former case, we check if $\tp_{\TO, \Abox \mathcal{B}^k \mathcal{D}}(|\Abox|-1)$ is equal to $\tp_1$, in which case $\tp_1$ is guessed correctly, and if not, the algorithm terminates answering \no. In the same way, we check if $\tp_0''$ is guessed correctly using $\mathcal{C}_{\TO, \Abox \mathcal{B}^{k+1} \mathcal{D}}$.

Now, we show how to check that the types $\tp_{\TO, \Abox \mathcal{B}^{k} \mathcal{D}}(i)$, for \mbox{$|\Abox| \leq i <  |\Abox \mathcal{B}^{k}|$},  are correct, that $\tp_1'$ is guessed correctly, and that the types $\tp_{\TO, \Abox \mathcal{B}^{k+1} \mathcal{D}}(i)$ with $|\Abox \mathcal B| \leq i <  |\Abox \mathcal{B}^{k+1}|$ are correct. We only demonstrate the algorithm for $\tp_{\TO, \Abox \mathcal{B}^{k} \mathcal{D}}(i)$. Observe that $\varkappa' \in \tp_{\TO, \Abox \mathcal{B}^{k} \mathcal{D}}(i)$ iff $\varkappa' \in \tp_{\TO, \Abox \mathcal{B}^{k} \mathcal{D}}(j)$ iff $\varkappa' \in \tp_1$, for any $\varkappa'$ of the form $\Box \varkappa''$ and $\Diamond \varkappa''$ from $\sub_{\q}$ and $|\Abox|-1 \leq i,j <  |\Abox \mathcal{B}^{k}|$. To do the required check, we need to guess a sequence of pairs
\begin{equation}\label{eq:b-sequence}
\big(\mathsf{b}(\mathcal B^{\leq 0}), \mathsf{b}(\mathcal B^{>0})\big), \dots, \big(\mathsf{b}(\mathcal B^{\leq |\mathcal B|-1}), \mathsf{b}(\mathcal B^{> |\mathcal B|-1})\big)
\end{equation}
such that $\mathsf{b}(\mathcal B^{\leq i}) \cdot \mathsf{b}(\mathcal B^{> i}) = \mathsf{b}(\mathcal B)$, for all $i$, and there are $a \in \Sigma_\Xi$ with $\mathsf{b}(\mathcal B^{\leq i+1}) = \mathsf{b}(\mathcal B^{\leq i}) \cdot \mathsf{b}(a)$ and $\mathsf{b}(\mathcal B^{> {i}}) = \mathsf{b}(a) \cdot \mathsf{b}(\mathcal B^{> i+1})$. While we do not have any bound on $|\mathcal B|$ yet, we can easily observe that any sequence~\eqref{eq:b-sequence} with repeating members at positions $0 \leq i' < i'' \leq |\mathcal B|-1$ is equivalent for the purposes of this proof to the sequence with all the members $i', \dots, i''-1$ removed. Thus, we can assume that $|\mathcal B|\leq 2^{O(|\q|)}$, if $\mathcal B$ exists at all.
By~\eqref{th:atomic-type-2nfa}, using an element $i$ of this sequence, we can compute $\tp_{\TO, \Abox \mathcal{B}^k \mathcal{D}}^{\sig(\TO)}(|\Abox \mathcal{B}^{j}|+i)$, for all $j < k$. We only need to check that such an atomic type is not in conflict with the temporal concepts in $\tp_1$, e.g., $\Lbox A \in \tp_1$ and $\neg A \in \tp_{\TO, \Abox \mathcal{B}^k \mathcal{D}}^{\sig(\TO)}(|\Abox \mathcal{B}^{j}|+i)$. If a conflict is detected for some $i$ and $j$,  the algorithm answers \no. Here, we also verify that $\tp_{\TO, \Abox \mathcal{B}^{k} \mathcal{D}}(|\Abox \mathcal{B}^{k}|-1) = \tp_1$ and $\tp_{\TO, \Abox \mathcal{B}^{k+1} \mathcal{D}}(|\Abox \mathcal{B}^{k+1}|-1) = \tp_1'$.
Finally, we check that all the types $\tp_{\TO, \Abox \mathcal{B}^{k} \mathcal{D}}(|\Abox \mathcal{B}^{k}|+i)$ and $\tp_{\TO, \Abox \mathcal{B}^{k+1} \mathcal{D}}(|\Abox \mathcal{B}^{k+1}|+i)$ are correct, for $0 \leq i < N_{\mathcal D} - N$. Details are left to the reader.
\end{proof}

A criterion for $\FO(<,\equiv)$-definability of linear $\LTL_\horn\Xallop$ OMPQs (cf.\ Theorem~\ref{DFAcrit} $(ii)$) is given by the next theorem whose (rather technical) proof can be found in Appendix~\ref{le:26}:

\begin{theorem}\label{th:foe-horn-crit}
Let $\q = (\TO, \varkappa)$ be an OMPQ with a $\bot$-free $\LTL_\horn\Xallop$-ontology $\TO$. Then $\q$ is not $\FO(<,\equiv)$-rewritable over $\Xi$-ABoxes iff there are $\Abox, \mathcal{B}, \mathcal{D} \in \Sigma_\Xi^*$ and $k \geq 2$, such that $(i)$ and $(ii)$ from Theorem~\ref{th:fo-horn-crit} hold and
there are $\mathcal{U}, \mathcal{V} \in \Sigma_\Xi^*$, such that $\mathcal B = \mathcal V \mathcal U$, $|\mathcal{U}|= |\mathcal{V}|$,
\begin{description}
\item[$(iii)$]
 $\tp_{\TO, \Abox \mathcal{B}^{k} \mathcal{D}}(|\Abox \mathcal{B}^i|-1) = \tp_{\TO, \Abox \mathcal{B}^{k} \mathcal{D}}(|\Abox \mathcal{B}^i \mathcal V|-1)$, for all $i < k$, and

\item[$(iv)$] %either $(\TO, \Abox \mathcal{B}^{k+1} \mathcal{D})$ is inconsistent or
 $\tp_{\TO, \Abox \mathcal{B}^{k+1} \mathcal{D}}(|\Abox \mathcal{B}^i|-1) = \tp_{\TO, \Abox \mathcal{B}^{k+1} \mathcal{D}}(|\Abox \mathcal{B}^i \mathcal V|-1)$, for all $i$, $1 \leq i \leq k$.
\end{description}
\end{theorem}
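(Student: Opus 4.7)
The plan is to mirror the proof of Theorem~\ref{th:fo-horn-crit} by reusing the same DFA $\A = (Q, \Sigma_\Xi, \delta, q_{-1}, F)$ with $Q = 2^{\Type_\q}$ (and hence the same identity~\eqref{eq:fo-horn-crit2} relating DFA runs to canonical-model types), but now applying the criterion of Theorem~\ref{DFAcrit}~(ii) in place of (i). That criterion states that $\L(\A) = \L_\Xi(\q)$ is not $\FO(<,\equiv)$-definable iff there are $u, v \in \Sigma_\Xi^*$ with $|u|=|v|$, a reachable state $q$, and $k \leq |Q|$ such that $q \not\simm \delta_u(q)$, $q = \delta_{u^k}(q)$, and $\delta_{u^i}(q) = \delta_{u^i v}(q)$ for every $i < k$. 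The whole proof will consist of translating between this DFA condition and the ontology-level conditions (i)-(iv) through~\eqref{eq:fo-horn-crit2}.

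For the $(\Rightarrow)$ direction, I take a DFA witness $(u, v, q, k)$, let $\Abox_0$ be any word with $q_{-1} \Rightarrow_{\Abox_0} q$, and set $\mathcal V = u$, $\mathcal U = v$, so that $\mathcal B = \mathcal V \mathcal U = uv$ has $|\mathcal V| = |\mathcal U|$. I then pick the distinguishing suffix $\mathcal D$ as in the proof of Theorem~\ref{th:fo-horn-crit}. Writing $q_i = \delta_{u^i}(q)$, the hypothesis yields $\delta_v(q_i) = q_i$ for all $0 \leq i \leq k$ (using $q_k = q_0$), hence $\delta_{\mathcal B^j}(q) = q_j$ and the DFA, while running on $\Abox_0 \mathcal B^k \mathcal D$, visits $q_i$ at position $|\Abox_0 \mathcal B^i| - 1$ and $q_{i+1}$ at position $|\Abox_0 \mathcal B^i \mathcal V| - 1$. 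The residual suffixes $\mathcal B^{k-i} \mathcal D$ and $\mathcal U \mathcal B^{k-i-1} \mathcal D$ drive these two states into the same sequence of DFA states thanks to $\delta_v(q_{i+1}) = q_{i+1}$, so the effective-type-set bundles in~\eqref{eq:fo-horn-crit2} coincide, giving (iii); condition (iv) is analogous on $\Abox_0 \mathcal B^{k+1} \mathcal D$, while (i), (ii) follow exactly as in Theorem~\ref{th:fo-horn-crit}. The same pigeonhole trimming as there preserves the upper bounds on $|\Abox|$, $|\mathcal D|$, and $k$.

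For the $(\Leftarrow)$ direction, starting from witnesses $\Abox, \mathcal B = \mathcal V \mathcal U, \mathcal D, k$ of (i)-(iv), I first run the pumping step of Theorem~\ref{th:fo-horn-crit}'s $(\Leftarrow)$ direction on (i) and (ii) to produce a state $q$ and a period $kj$ with $q = \delta_{\mathcal B^{kj}}(q)$ such that $\mathcal D$ distinguishes $\delta_{\mathcal B^{kj}}(q)$ from $\delta_{\mathcal B^{kj}\mathcal B}(q)$ modulo $\simm$. Feeding (iii) and (iv) into~\eqref{eq:fo-horn-crit2} lifts to $\simm$-equalities $\delta_{\mathcal B^i}(q) \simm \delta_{\mathcal B^i \mathcal V}(q)$ at the intermediate positions. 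Setting $u = \mathcal B$ and $v = \mathcal V \mathcal V$ (so that $|v| = 2|\mathcal V| = |\mathcal B| = |u|$), the stabilisation $\delta_\mathcal V(\cdot) \simm \mathrm{id}$ on the orbit of $q$ implies $\delta_{u^i v}(q) \simm \delta_{u^i}(q)$, meeting the $\simm$-version of Theorem~\ref{DFAcrit}~(ii).

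The main obstacle is the bookkeeping in the $(\Leftarrow)$ direction: matching the ontology's ``$\mathcal V$-shift'' at positions $|\Abox \mathcal B^i \mathcal V| - 1$ with the DFA's ``$v$-after-$u^i$'' equality requires the choice $v = \mathcal V \mathcal V$ of length $|u| = |\mathcal B|$, and one must argue that this exploits the $\simm$-closure of $\delta_\mathcal V$ along the orbit of $q$. Upgrading $\simm$-equivalences to the exact equalities demanded by Theorem~\ref{DFAcrit}~(ii) needs one extra pumping round, analogous to the one used for condition (i) in the proof of Theorem~\ref{th:fo-horn-crit}, which is the technically most delicate step.
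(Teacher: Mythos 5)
Your overall strategy (reuse the powerset DFA $\A$ and identity~\eqref{eq:fo-horn-crit2} from Theorem~\ref{th:fo-horn-crit}, and translate Theorem~\ref{DFAcrit}~$(ii)$ back and forth) is the same as the paper's, but both directions of your argument have real problems. In the $(\Rightarrow)$ direction you have the roles of $\mathcal U$ and $\mathcal V$ reversed. In Theorem~\ref{DFAcrit}~$(ii)$ the word $u$ \emph{advances} the state ($q\not\simm\fu_u(q)$) while $v$ \emph{stabilises} it ($\fu_{u^iv}(q)=\fu_{u^i}(q)$); in the statement being proved it is $\mathcal V$ that must be type-neutral, since $(iii)$ and $(iv)$ assert that appending $\mathcal V$ after $\Abox\mathcal B^i$ leaves the type unchanged. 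The correct instantiation is therefore $\mathcal V=v$, $\mathcal U=u$, $\mathcal B=vu$, which is exactly what the paper does. With your choice $\mathcal V=u$ the DFA sits in $q_i$ at position $|\Abox\mathcal B^i|-1$ but in $q_{i+1}$ at position $|\Abox\mathcal B^i\mathcal V|-1$; these are different powerset states recording different ABox letters, so there is no reason for the two types to agree and your claim that the ``effective-type-set bundles coincide'' is false --- condition $(iii)$ simply fails for your witnesses. The repair is a one-line swap, but as written the verification of $(iii)$ and $(iv)$ does not go through.

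The $(\Leftarrow)$ direction has deeper gaps. First, \eqref{eq:fo-horn-crit2} computes types \emph{from} DFA states; equality of the resulting types at two positions of two particular words does not imply that the underlying states are $\simm$-equivalent (that would require agreement on \emph{all} suffixes), so ``feeding $(iii)$ and $(iv)$ into \eqref{eq:fo-horn-crit2}'' does not lift to $\fu_{\mathcal B^i}(q)\simm\fu_{\mathcal B^i\mathcal V}(q)$. Second, Theorem~\ref{DFAcrit}~$(ii)$ demands \emph{exact} equalities $\fu_{u^i}(q)=\fu_{u^iv}(q)$ together with $|u|=|v|$; forcing exact stabilisation requires replacing $\mathcal V$ by a power $\mathcal V^{M}$ (the paper takes $M=|Q|!$ via Proposition~\ref{prop:cycle}), after which your length balance $u=\mathcal B$, $v=\mathcal V\mathcal V$ collapses. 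The paper avoids both issues: it obtains exact fixed points purely combinatorially from Proposition~\ref{prop:cycle}, chooses $u=\mathcal U\mathcal V^{|Q|!-1}$ and $v=\mathcal V^{|Q|!}$ (both of length $|Q|!\cdot|\mathcal V|$), and uses $(iii)$, $(iv)$ only to establish --- via the canonical-model periodicity properties (a) and (b) --- that the pumped ABoxes $\Abox\mathcal V^{|Q|!-1}(\mathcal U\mathcal V^{|Q|!-1})^{N}\mathcal D$ retain the accept/reject dichotomy of $(i)$ and $(ii)$. This last step, showing that inserting extra copies of $\mathcal V$ preserves the relevant types, is where the semantic content of $(iii)$ and $(iv)$ is actually consumed, and it is entirely absent from your sketch.
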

This result allows us to obtain a \PSpace{} algorithm by a straightforward modification of the proof of Theorem~\ref{th:lin-ompq-fo-pspace}. Thus, we obtain:
\begin{theorem}\label{th:lin-ompq-foe-pspace}
Deciding $\FO(<,\equiv)$-rewritability of OMPQs $\q = (\TO, \varkappa)$ with a linear $\LTL_\horn\Xnext$-ontology $\TO$ over $\Xi$-ABoxes can be done in \PSpace{}.
\end{theorem}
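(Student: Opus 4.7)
The plan is to extend the \PSpace{} algorithm from the proof of Theorem~\ref{th:lin-ompq-fo-pspace} to verify also the additional conditions (iii) and (iv) of Theorem~\ref{th:foe-horn-crit}. Conditions (i) and (ii) coincide with those of Theorem~\ref{th:fo-horn-crit} and are handled by the existing algorithm verbatim. During the stage that guesses the behaviour $\mathsf{b}(\mathcal B)$, I would additionally guess a decomposition $\mathcal B = \mathcal V \mathcal U$ together with behaviours $\mathsf{b}(\mathcal V)$ and $\mathsf{b}(\mathcal U)$ satisfying $\mathsf{b}(\mathcal V)\cdot\mathsf{b}(\mathcal U) = \mathsf{b}(\mathcal B)$.

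The first subtlety is enforcing $|\mathcal V| = |\mathcal U|$ when both may be exponentially long in $|\q|$. To handle this, I would guess $\mathcal V$ and $\mathcal U$ symbol-by-symbol in lockstep through a sequence of quadruples
\[
\bigl(\mathsf{b}(\mathcal V^{\leq j}),\ \mathsf{b}(\mathcal V^{>j}),\ \mathsf{b}(\mathcal U^{\leq j}),\ \mathsf{b}(\mathcal U^{>j})\bigr),\qquad 0 \leq j < |\mathcal V|,
\]
each obtained from its predecessor by composing with $\mathsf{b}(a)$ and $\mathsf{b}(b)$ on the appropriate sides for some $a,b\in\Sigma_\Xi$. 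Since there are only $2^{O(|\q|)}$ distinct such quadruples, the same loop-elimination argument used to bound $|\mathcal B|$ in the proof of Theorem~\ref{th:lin-ompq-fo-pspace} gives $|\mathcal V| = |\mathcal U| \leq 2^{O(|\q|)}$, so a binary counter tracking $j$ fits in polynomial space.

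Conditions (iii) and (iv) are universally quantified over $i \in \{0,\dots,k-1\}$ (respectively $\{1,\dots,k\}$), a range of potentially exponential size. My approach is to observe that the full type at any position inside the $\mathcal B^k$ (resp.\ $\mathcal B^{k+1}$) block is determined by the pair $\bigl(\mathsf{b}(\Abox\mathcal B^i),\,\mathsf{b}(\mathcal B^{k-i}\mathcal D)\bigr)$ via~\eqref{th:atomic-type-2nfa} for the atomic part, together with the constant temporal part established by $\tp_1$ (or $\tp_1'$) in the proof of Theorem~\ref{th:lin-ompq-fo-pspace}. As $i$ varies, $\mathsf{b}(\mathcal B)^i$ and $\mathsf{b}(\mathcal B)^{k-i}$ take only $2^{O(|\q|)}$ distinct values in the behaviour monoid, each computable from $\mathsf{b}(\mathcal B)$ by fast exponentiation; I would enumerate these distinct behaviour configurations, and for each of them compute the atomic types at positions $|\Abox\mathcal B^i|-1$ and $|\Abox\mathcal B^i\mathcal V|-1$ (the latter using the already-guessed $\mathsf{b}(\mathcal V)$ and $\mathsf{b}(\mathcal U)$) and check their equality.

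The main obstacle will be confirming that verifying atomic-type equality together with the already-checked constancy of the temporal part suffices for full type equality, including $\Rnext$- and $\Lnext$-constituents; this is handled by the same propagation-consistency checks used in the $\FO(<)$ algorithm, applied across constant-size windows around each of the two relevant positions (whose width is bounded by the nesting depth of $\nxt$-operators in $\sub_\q$). Putting everything together, the extended procedure makes only polynomially many nondeterministic guesses of polynomial-size objects and runs in \PSpace, yielding the claimed upper bound.
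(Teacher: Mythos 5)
Your proposal is correct and matches the paper's intent exactly: the paper proves this theorem by declaring it ``a straightforward modification'' of the \PSpace{} algorithm of Theorem~\ref{th:lin-ompq-fo-pspace} to additionally check conditions $(iii)$ and $(iv)$ of Theorem~\ref{th:foe-horn-crit}, which is precisely what you carry out. Your lockstep guessing of $\mathsf{b}(\mathcal V)$ and $\mathsf{b}(\mathcal U)$ to enforce $|\mathcal U|=|\mathcal V|$ is the same device the paper already uses in the $\FO(<,\equiv)$ case of Theorem~\ref{thm:2NFA}, and your handling of the quantification over $i<k$ via behaviour compositions and \eqref{th:atomic-type-2nfa} is the mechanism already present in the proof of Theorem~\ref{th:lin-ompq-fo-pspace}.
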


At present, we do not know how to transform Theorem~\ref{DFAcrit} $(iii)$ into \PSpace-checkable conditions on the canonical models and ABoxes, so the complexity of deciding $\FO(<,\MOD)$-rewritability of linear $\LTL_\horn\Xnext$ OMPQs remains open.

%********** 

% !TEX root = JAIR.tex

\section{$\FO(<)$-rewritability of $\LTL_\krom\Xnext$ OMAQs and $\LTL_\core\Xnext$ OMPQs}\label{sec:others}

Our final aim is to look for non-trivial classes of OMQs deciding $\FO(<)$-rewritability of which could be `easier' than \PSpace. Syntactically, the simplest type of axioms~\eqref{axiom1} are binary clauses $C_1 \to C_2$ and $C_1 \land C_2 \to \bot$, known as \core{} axioms, which together with $C_1 \lor C_2$ form the class Krom. In the atemporal case, the W3C standard language \OWLQL{}, designed specifically for ontology-based data access, admits core clauses only and uniformly guarantees FO-rewritability~\cite{CDLLR07,ACKZ09}.

In this section, we use NFAs with $\varepsilon$-transitions that can be defined as 2NFAs where backward transitions $q \to_{a,-1} q'$ are disallowed and transitions of the form $q \to_{a,0} q'$ hold for all $a \in \Sigma$, in which case we write $q \to_{\varepsilon} q'$.

As we saw in the proof of Theorem~\ref{thm:kromlowerexp}, OMPEQs with disjunctive axioms can simulate $\LTL_\horn\Xnext$ OMAQs, and so are too complex for the purposes of this section. On the other hand, $\LTL_\krom\Xnext$ OMAQs and $\LTL_\core\Xnext$ OMPQs are all $\FO(<,\equiv)$-rewritable~\cite{DBLP:journals/ai/ArtaleKKRWZ21}. Below, we focus on deciding $\FO(<)$-rewritability of OMQs in these classes.

\subsection{$\LTL_\krom\Xnext$ OMAQs}

\begin{theorem}\label{thm:coNP}
Deciding $\FO(<)$-rewritability of Boolean and specific $\LTL_\krom\Xnext$ OMAQs over $\Xi$-ABoxes is \coNP-complete.
\end{theorem}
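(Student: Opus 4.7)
The plan is to use the criterion of Theorem~\ref{DFAcrit}~$(i)$ together with the classical result of Stockmeyer and Meyer that universality of a unary NFA is \coNP-complete (equivalently, non-universality is \NP-complete). The overall strategy is to show that $\FO(<)$-rewritability of an $\LTL_\krom\Xnext$ OMAQ is equivalent to universality of an associated polynomial-size unary NFA, yielding tight \coNP{} bounds in both directions.

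For the \coNP{} upper bound, given an $\LTL_\krom\Xnext$ OMAQ $\q=(\TO,A)$, the aim is to construct in polynomial time a unary NFA $\aut_\q$ such that $\q$ is $\FO(<)$-rewritable iff $\aut_\q$ accepts every unary word. Here we exploit the restricted shape of $\LTL_\krom\Xnext$: every basic concept has the form $\nxt^j B$ for an atom $B$ and a bounded offset $j$, and each axiom is a binary clause between such literals, i.e.\ $\nxt^{j_1} B_1 \to \nxt^{j_2} B_2$, $\nxt^{j_1} B_1 \land \nxt^{j_2} B_2 \to \bot$, or $\top \to \nxt^{j_1} B_1 \lor \nxt^{j_2} B_2$. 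The witness required by Theorem~\ref{DFAcrit}~$(i)$ for failure of $\FO(<)$-rewritability, namely a word $u$, a reachable state $q$ and a number $k$ with $q\not\simm\fu_u(q)$ and $q=\fu_{u^k}(q)$, should, thanks to the binary-clause structure, be reducible to the existence of a nontrivial accepting path in a polynomial-size unary automaton that tracks how implicational chains and disjunctive choices interact under iteration of $u$. Non-$\FO(<)$-rewritability then becomes unary non-universality, placing rewritability in \coNP.

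For the matching \coNP{} lower bound, we would reduce universality of a unary NFA $\aut$ over $\{a\}$ to $\FO(<)$-rewritability of an $\LTL_\krom\Xnext$ OMAQ $\q_\aut$ constructed in polynomial time, so that $\aut$ is non-universal iff $\q_\aut$ is not $\FO(<)$-rewritable. The encoding introduces an atom $P_q$ for each state $q$ of $\aut$, implicational Krom axioms $P_q \to \Rnext P_{q'}$ for the transitions, Krom-admissible disjunctions $\top \to P_{q_1} \lor P_{q_2}$ to capture nondeterministic branching, and a designated atomic concept whose derivation at some time point encodes acceptance. The arrangement is such that an unaccepted unary input $a^n$ induces a nontrivial cycle of length related to $n$ in the minimal DFA of $\L_\Xi(\q_\aut)$, witnessing failure of $\FO(<)$-definability via Theorem~\ref{DFAcrit}~$(i)$. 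Since Proposition~\ref{prop:specific-to-boolean++} is stated for Horn ontologies and its auxiliary axiom $A \land X \to X'$ is not Krom, the specific/Boolean reduction is handled directly: either by enriching the alphabet with a marker atom selecting the distinguished timestamp, or by running parallel variants of the same construction.

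The main obstacle will be making precise the polynomial-size unary NFA whose universality captures $\FO(<)$-rewritability of a given $\LTL_\krom\Xnext$ OMAQ. Although the NFA recognising $\L_\Xi(\q)$ obtained from the proof of Theorem~\ref{Prop:rewr-def} can have exponentially many states, the binary nature of Krom axioms forbids any genuine three-way interaction between atomic concepts at a single time point, and the cycle-witnessing behaviour of Theorem~\ref{DFAcrit}~$(i)$ should be summarised by a unary transition whose state is a polynomial-size $2$-SAT-style profile. Controlling the combinatorial branching introduced by disjunctive axioms $\top \to B_1 \lor B_2$, and showing that the resulting unary summary faithfully reflects both the existence and the absence of nontrivial cycles, is the technical heart of the upper bound.
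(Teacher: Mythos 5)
Your proposal has the right raw ingredients (unary NFAs, Stockmeyer--Meyer, Theorem~\ref{DFAcrit}~$(i)$), but both directions contain genuine gaps. For the lower bound, reducing from \emph{universality} of a unary NFA cannot work as stated: a unary regular language is $\FO(<)$-definable iff it is finite or cofinite, so an NFA that misses only finitely many words is non-universal yet has an $\FO(<)$-definable language. Hence no reduction that faithfully transfers $\L(\aut)$ into $\L_\Xi(\q_\aut)$ can satisfy ``$\aut$ non-universal iff $\q_\aut$ not $\FO(<)$-rewritable''; you would first have to convert non-universality into genuine periodicity, which is the hard part and is not addressed. The paper instead reduces from $\FO(<)$-\emph{definability} of unary NFA languages (shown \coNP-hard via Stockmeyer--Meyer). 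Moreover, its gadget uses no disjunctions: each NFA transition becomes a core axiom $Q_i\to\Rnext Q_j$, so the set of atoms derivable at a position under the monotone Horn closure is exactly the set of reachable states, and reaching an accepting state at the position of $Y$ triggers inconsistency. Your proposed axioms $\top\to P_{q_1}\lor P_{q_2}$ are both unnecessary and semantically wrong for this purpose: under certain-answer semantics a disjunction quantifies universally over the branches, whereas NFA acceptance is existential.

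For the upper bound, the claim that $\FO(<)$-rewritability is equivalent to universality of a single polynomial-size unary NFA is precisely the unproved ``technical heart'' you flag, and it is not the right target. The paper's characterisation is different: using the Krom normal form, $(\TO,\Abox)\models A(l)$ holds iff some single assertion $B(k)$ entails $A$ at $l$ or two assertions are jointly inconsistent, which reduces rewritability of $\q$ to $\FO(<)$-definability (i.e., finiteness or cofiniteness), emptiness and finiteness of the polynomially many unary languages $\L_{L_1L_2}=\{a^n\mid\TO\models L_1\to\Rnext^n L_2\}$, each recognised by a linear-size unary NFA read off the implication graph of $\TO$. Each such check is in \coNP{} (or \NL), and a conjunction of polynomially many \coNP{} checks is in \coNP. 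No summarisation of cycles of the exponential DFA by a ``2-SAT-style profile'' is needed, and without a concrete construction that step of your argument does not go through.
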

\begin{proof}
%Consider a specific\nz{krom or arbitrary?} OMAQ $\q(x)=(\TO,A(x))$. Take some $X,Y,Z\notin \sig(\q)$ and let $\TO'$ be $\TO$ with additional axioms $A\land X\to\bot$, $X\to\Rnext Y$, $Y\to\Rnext Y$, and $X\land Y\to\bot$.\nz{don't understand} For any $\Xi$-ABox $\Abox$, we have $\TO,\Abox\models A(k)$ iff $\TO',\Abox\cup\{X(k)\}\models\bot$. Therefore the specific OMAQ $\q(x)$ is $\FO(<)$-rewritable over $\Xi$-ABoxes iff the Boolean OMAQ $\q'=(\TO',Z)$ is rewritable over $\Xi\cup\{X\}$-ABoxes.\nz{why does it follow?} \textcolor{red}{I think this `reduction' is wrong: in the specific case, we need to know whether the distance between a point where we are checking the answer is located at some distance from some other points that is divisible by some numbers.}

Suppose $\q=(\TO,A)$ ($\q(x)=(\TO,A(x))$) is a Boolean (respectively, specific) $\LTL_\krom\Xnext$ OMAQ. % and $\TO$ is consistent.
Using the form of Krom axioms, one can show~\cite<e.g.,>{DBLP:journals/ai/ArtaleKKRWZ21} that, for any ABox $\Abox$ and $l\in\mathbb Z$ (respectively, $l\in \tem(\Abox)$), we have $(\TO,\Abox) \models A(l)$ iff at least one of the following holds:
$(i)$~there is $B(k)\in\Abox$ such that $\TO\models B\to\nxt^{l-k}A$ (the $\nxt^n$ notation was defined in Section~\ref{sec:linear-2}); $(ii)$~$\TO$ and $\Abox$ are inconsistent, i.e., there exist $k_1\le k_2$,  $B(k_1)\in\Abox$ and $C(k_2)\in\Abox$ such that $\TO\models B\to\Rnext^{k_2-k_1}\neg C$.

%Let $X\notin\sig(\q)$. Clearly the $\FO(<)$-rewritability of $\q$ and $\q(x)$ are the same over $\Xi$-ABoxes as over $\Xi\cup\{X\}$ ABoxes. Let $\Xi_1=\Xi\cup\{X\}$.
%We assume that there are no $B\in\sig(\q)$ such that $\TO\models B\to\Rbox\Lbox A$.

%Let us call $B\in\Xi$ $A$-\emph{implying} $A$ if $\TO\models B\to A'$ where $A'$ is either $A$, $\Rnext^k A$, or $\Lnext^k A$ for some $k$.
%Let $\Xi_A=\{B\in\Xi\mid\TO\models B\to\Rdiamond\Ldiamond A\}$. We can check that $B\in\Xi_A$ in $\NL$.\nz{why?} Let $\Sigma_B=\{ a\in\Sigma_\Xi\mid\exists C\in a \ \TO\models C\to B\}$.%, let $\Sigma'_B=\{ a'\in\Sigma'_\Xi\mid a\in\Sigma_B\}$, and let $\Gamma_B=\Sigma_B\cup\Sigma'_B$.
%For the OMAQ $\q$ the certain answer is \yes\ over an ABox $\Abox$ iff there is $B(k)\in\Abox$ for some $B\in\Xi_A$, or there are $B,C\notin\Xi_A$, such that  for some $k\le l$ we have $B(k)\in\Abox$, $C(l)\in\Abox$, $\TO\models B(k)\land C(l)\to\bot$.

%Recall that a regular language is $\FO(<)$-definable iff it is star-free.

Let $\textit{lit}(\q)=\{C,\neg C \mid C\in \sig(\q)\}$. For any $L_1,L_2\in \textit{lit}(\q)$, we construct a unary NFA $\A_{L_1L_2}$ of size $O(|\q|)$ that accepts the language $\L_{L_1L_2}=\{a^n\mid\TO\models L_1\to\Rnext^n L_2, \ n \ge 0\}$ over the alphabet $\{a\}$.
%For any $B$ we can construct a unary NFA $\A_{B}^\to$ of size $O(|\q|)$ that accepts $\L_{B}^\to$.
The set of its states is $\textit{lit}(\q)$, $L_1$ is the initial state, $L_2$ the only accepting state, and the transitions are
%
%\begin{itemize}
%\item $I \to_{\varepsilon} L$, if $\TO\models B\to L$;
%\item[--]
$L \to_a L'$ if $\TO\models L \to\Rnext L'$, and $L \to_\varepsilon L'$ if $\TO\models L \to L'$.
For $\Xi\subseteq \sig(\q)$, we define two sets: $\Xi^\exists_A=\{B\in\Xi\mid(\TO,\{B(0)\})\models\exists x\, A(x)\}$ and $\Xi^\forall_A=\{B\in\Xi\mid (\TO,\{B(0)\}) \models\forall x\, A(x)\}$.

\begin{lemma}
$(a)$ The Boolean OMAQ $\q$ is $\FO(<)$-rewritable over $\Xi$-ABoxes iff, for any $B,C\in\Xi\setminus\Xi^\exists_A$, the language $\L_{B\neg C}$ is $\FO(<)$-definable.

$(b)$ The specific OMAQ $\q(x)$ is $\FO(<)$-rewritable over $\Xi$-ABoxes iff the following conditions are satisfied\textup{:}
\begin{itemize}
\item[$(b_1)$] for all $B\in\Xi$, the languages $\L_{BA}$ and $\L_{\neg A\neg B}$ are $\FO(<)$-definable\textup{;}

\item[$(b_2)$] for all $B,C\in\Xi\setminus\Xi^\forall_A$ such that at least one of the $\L_{BA}$ and $\L_{\neg A\neg C}$ is finite, the language $\L_{B\neg C}$ is $\FO(<)$-definable.
\end{itemize}
%
%for any $B,C\in$ the languages $L^\to_{B}=\{\emptyset^n\mid \TO\models B\to\Rnext^{n+1} A\}$ $L^\leftarrow_{B}=\{\emptyset^n\mid \TO\models B\to\Lnext^{n+1} A\}$, and  $L_{BC}=\{\emptyset^n\mid \TO\models B\to\Rnext^{n+1}\neg C\}$ and $L^\to_{BC}=\{\emptyset^n\mid \{B\}\emptyset^n\{C\}'\in L_{\Xi'}(\q)\}$ is $\FO(<)$-definable.
\end{lemma}
\begin{proof}
$(a,\, \Rightarrow)$ If $\q$ is $\FO(<)$-rewritable, then $\L_{\Xi}(\q)$ over the alphabet $\Sigma_\Xi$ is $\FO(<)$-definable, and so is the language $\L_{\Xi}(\q)\cap \L(\{B\}\emptyset^*\{C\})$, for any $B,C \in \Xi$. For $B,C \in \Xi \setminus \Xi^\exists_A$, we have $\{B\}\emptyset^n\{C\}\in \L_{\Xi}(\q)$ iff $\TO\models B\to\Rnext^{n+1}\neg C$ iff $a^{n+1}\in\L_{B\neg C}$. Therefore, $\L_{B\neg C}$ is $\FO(<)$-definable.

$(a,\, \Leftarrow)$ For a $\Xi$-ABox $\Abox$, the certain answer to $\q$ is \yes\ iff either there is $B(k)\in\Abox$, for some $B\in\Xi^\exists_A$, or there are $B,C\in\Xi\setminus\Xi^\exists_A$ and $k\le l$ such that $B(k),C(l)\in\Abox$ and $\TO\models B\to\Rnext^{k-l}\neg C$. As these conditions are $\FO(<)$-definable, $\q$ is $\FO(<)$-rewritable.

$(b,\, \Rightarrow)$ If $\q(x)$ is $\FO(<)$-rewritable, then $\L_{\Xi}(\q(x))$ over the alphabet $\Gamma_\Xi$ is $\FO(<)$-definable, and so are the languages $\L_{\Xi}(\q(x))\cap \L(\{B\}\emptyset^*\emptyset')$ %(see the definition of $\L_{\Xi}(\q(x))$ in Section~\ref{prelims})
and \mbox{$\L_{\Xi}(\q(x))\cap \L(\emptyset'\emptyset^*\{B\})$}, for any $B \in \Xi$. We have $\{B\}\emptyset^n\emptyset'\in \L_{\Xi}(\q(x))$ iff $\TO\models B\to\Rnext^{n+1}A$ and $\emptyset'\emptyset^*\{B\}\in \L_{\Xi}(\q(x))$ iff $\TO\models B\to\Lnext^{n+1}A$. Therefore, $\L_{BA}$ and $\L_{\neg A\neg B}$ are $\FO(<)$-definable.

Let $B,C\in\Xi\setminus\Xi^\forall_A$ and $\L_{BA}$ be finite. There is $l\in\mathbb Z$ with $(\TO,\{C(0)\})\not\models A(l)$ and there is $k$ with $k>n$ for all $a^n\in \L_{BA}$. For $m>k+|l|$, we have $(\TO,\{B(0),C(m)\})\models A(m+l)$ iff $\TO\models B\to \Rnext^m \neg C$. So $\L_{B\neg C}$ is $\FO(<)$-definable. The case when $\L_{\neg A\neg C}$ is finite is similar.

$(b,\,\Leftarrow)$ Assuming that conditions $(b_1)$ and $(b_2)$ hold, we define formulas $\varphi_{B \neg C}$, for any $B, C \in \Xi$. If $\L_{B \neg C}$ is $\FO(<)$-definable,  then set $\varphi_{B \neg C} = \exists x, y\, (B(x) \land C(y) \land \psi(x,y))$, where $\psi(x,y)$ is an $\FO(<)$-formula saying that $a^{y-x} \in \L_{B \neg C}$.
Suppose $B, C \not \in \Xi_A^\forall$ and $\L_{B \neg C}$ is not $\FO(<)$-definable. It follows from $(b_2)$ that  both $\L_{BA}$ and $\L_{\neg A\neg C}$ are infinite.
By $(b_1)$ and the folklore fact that every star-free language over a unary alphabet is either finite or cofinite,
we have $n_1, n_2 \in \mathbb N$ such that $a^k \in \L_{BA}$ for all $k \geq n_1$ and $a^k \in \L_{\neg A\neg C}$ for all $k \geq  n_2$. Then we set $\varphi_{B \neg C} = \exists x, y\, (B(x) \land C(y) \land \psi(x,y))$, where $\psi(x,y)$ is an $\FO(<)$-formula saying that $y-x < n_1+n_2$ and $a^{y-x} \in \L_{A\neg C}$. Finally, for $B, C \in \Xi$ such that either $B$ or $C$ is not in $\Xi_A^\forall$ and $\L_{B \neg C}$ is not $\FO(<)$-definable, we set $\varphi_{B \neg C} = \bot$. For $B \in \Xi$, let $\varphi_{BA}(x) = \exists y\, (B(y) \land \psi(y,x))$, where $\psi(y,x)$ is an $\FO(<)$-formula saying that $a^{x-y} \in \L_{BA}$, which exists by $(b_1)$. Similarly, let $\varphi_{\neg A\neg B}(x) = \exists y\, (B(y) \land \psi(y,x))$, where $\psi(y,x)$ is an $\FO(<)$-formula saying that $a^{y-x} \in \L_{\neg A \neg B}$. We claim that
$$
\varphi(x) = \bigvee_{B \in \Xi} (\varphi_{BA}(x) \lor \varphi_{\neg A \neg B}(x)) \lor \bigvee_{B, C \in \Xi} \varphi_{B \neg C}
$$
is an $\FO(<)$-rewriting of $\q(x)$ over $\Xi$-ABoxes. Indeed,  let $(\TO,\Abox) \models A(l)$ for $l \in \tem(\Abox)$. If $(i)$ at the beginning of the proof of Theorem~\ref{thm:coNP} holds, then we have $\SA \models \varphi_{BA}(l)$ or $\SA \models \varphi_{\neg A \neg B}(l)$, so $\SA \models \varphi(l)$. If $(ii)$ holds, consider the $B$ and $C$ given by it. If $\L_{B\neg C}$ is $\FO(<)$-definable, then $\SA \models \varphi_{B \neg C}$ and $\SA \models \varphi(l)$ as required. If $\L_{B\neg C}$ is not $\FO(<)$-definable and $B, C \not \in \Xi_A^\forall$, consider $B(k_1) \in \Abox$ and $C(k_2) \in \Abox$ such that $k_2-k_1 \in \L_{B \neg C}$ given by $(ii)$. If $k_2 - k_1 < n_1 + n_2$, then $\SA \models \varphi_{B \neg C}$ and $\SA \models \varphi(l)$ as required. If, on the contrary, $k_2 - k_1 \geq n_1 + n_2$, we have  either $l-k_1 \in \L_{BA}$ or $k_2 - l \in \L_{\neg A \neg B}$. Then $\SA \models \varphi_{BA}(l)$ or $\SA \models \varphi_{\neg A \neg B}(l)$, so $\SA \models \varphi(l)$. Finally, if either $B$ or $C$ is in $\Xi_A^\forall$, we have either $\SA \models \varphi_{BA}$ or $\SA \models \varphi_{CA}$ or $\SA \models \varphi_{\neg A \neg B}$ or $\SA \models \varphi_{\neg A \neg C}$, so $\SA \models \varphi(l)$.
The proof that $\SA \models \varphi(l)$ implies $(\TO,\Abox) \models A(l)$ is similar and left to the reader.
\end{proof}

Thus, to check $\FO(<)$-rewritability of $\q$ and $\q(x)$, % and $\q(x)$
it suffices to check $\FO(<)$-definability, emptiness and finiteness of the languages of the form $\L_{L_1L_2}$, for $L_1, L_2 \in \lit(\q)$. Emptiness and finiteness can be checked in \NL{} in the size of $\A_{L_1L_2}$.
%, $\L_{B}^\to$ and $\L_{B}^\leftarrow$. %
%
Using~Stockmeyer \& Meyer's~\citeyear[Theorem 6.1]{DBLP:conf/stoc/StockmeyerM73}, one can show that deciding $\FO(<)$-definability of the language of a unary NFA is \coNP-complete, which gives the required upper bound.
To establish \coNP-hardness, for any given unary NFA $\A=(Q,\{a\},\delta,Q_0,F)$ with $Q=\{Q_0,\dots,Q_n\}$, we define an $\LTL_{\core}\Xnext$ ontology $\TO_\A$ with $\sig(\TO_\A) = Q \cup \{X,Y\}$ and the axioms $X\to \Rnext Q_0$, $Q_i\wedge Y\to \bot$, for every $Q_i\in F$, and $Q_i\to \Rnext Q_j$, for every transition $Q_i \to_a Q_j$ in $\A$. Let $A$ be a fresh concept name.
The OMAQ $\q= (\TO_\A,A)$ (respectively, $\q(x) = (\TO_\A,A(x))$)  is $\FO(<)$-rewritable over $\{X,Y\}$-ABoxes iff $\L(\A)$ is $\FO(<)$-definable because $(\TO,\Abox)\models A(l)$ for some $l \in \Z$ (respectively, $l \in \tem(\Abox)$), for an $\{X,Y\}$-ABox $\Abox$, iff $\Abox$ is inconsistent with $\TO_\A$ iff there are $X(i),Y(j)\in\Abox$ with $a^{j-i-1}\in\L(\A)$.
%The proof for the specific OMAQ $(\TO,A(x))$ is similar.\nz{???}
\end{proof}

Our next result deals with a weaker (core) ontology language but more expressive queries.

%*********

\subsection{$\LTL_\core\Xnext$ OMPEQs}

\begin{theorem}\label{thm:corepi-upper}
Deciding $\FO(<)$-rewritability of Boolean and specific $\LTL_\core\Xnext$ OMPEQs over $\Xi$-ABoxes is $\Pi^p_2$-complete. 
\end{theorem}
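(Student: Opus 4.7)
The plan is to apply the general $\FO(<)$-rewritability criterion of Theorem~\ref{DFAcrit}~$(i)$, translated into a statement about ABoxes and canonical models in the spirit of Theorem~\ref{th:fo-horn-crit}, and then exploit the tractable structure of canonical models for $\LTL_\core\Xnext$-ontologies to bound the size of witnesses polynomially. The upper bound then follows from a $\forall\exists$-schema, and the lower bound is obtained by reduction from $\forall\exists$-SAT.

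For the upper bound, I would first adapt Theorem~\ref{th:fo-horn-crit} to OMPEQs: $\q=(\TO,\varkappa)$ is not $\FO(<)$-rewritable over $\Xi$-ABoxes iff there exist $\Abox,\mathcal B,\mathcal D\in\Sigma_\Xi^\ast$ and $k\geq 2$ such that conditions $(i)$ and $(ii)$ of that theorem hold (with $\tp$ computed in the canonical model $\mathcal C_{\TO,\cdot}$, which is well defined for $\bot$-free core ontologies by Lemma~\ref{le:remove-bot}). The critical step is to show that these witnesses can always be taken of polynomial size. For an $\LTL_\core\Xnext$-ontology the canonical model propagates atoms via deterministic binary clauses with $\Rnext/\Lnext$-offsets of magnitude $1$, so both the period and the active-domain extension needed to stabilise the type sequence of $\mathcal C_{\TO,\Abox\mathcal B^k\mathcal D}$ are linear in $|\TO|$. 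Standard pumping on the tuple $(\tp_{\TO,\Abox\mathcal B^k\mathcal D}(i),\tp_{\TO,\Abox\mathcal B^{k+1}\mathcal D}(i))$ from the proof of Theorem~\ref{th:fo-horn-crit}---but now indexed by atomic types rather than full $\sub_\q$-types---contracts $\Abox,\mathcal B,\mathcal D$ and $k$ to polynomial bounds. Given such bounds, the decision procedure for \emph{non}-rewritability is: guess $\Abox,\mathcal B,\mathcal D,k$ of polynomial size and verify the membership/non-membership of $\varkappa$ in the four relevant types. Since $\varkappa$ is positive existential and $\TO$ is core, each query of the form $(\TO,\mathcal E)\models\varkappa(n)$ is in \NP{} (guess witnesses for the $\Rdiamond,\Ldiamond$'s in $\varkappa$ inside the periodic canonical model), and the complementary query is in \coNP. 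A constant number of oracle calls per witness places non-rewritability in $\Sigma^p_2$, hence rewritability in $\Pi^p_2$.

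For the lower bound I would reduce from $\forall\vec x\,\exists\vec y\,\varphi(\vec x,\vec y)$ with $\varphi$ in 3-CNF. Build an $\LTL_\core\Xnext$ OMPEQ $\q_\Phi=(\TO_\Phi,\varkappa_\Phi)$ over a signature that includes marker atoms for the positions of the $x_i$, together with atoms $X_i,\bar X_i$ recording their values. The ontology $\TO_\Phi$, using core axioms $X_i\to\Rnext X_i$, $\bar X_i\to\Rnext\bar X_i$ and conflict axioms $X_i\land\bar X_i\to\bot$ (eliminated via Lemma~\ref{le:remove-bot}), propagates a fixed truth assignment for $\vec x$ across a block $\mathcal B$ of the candidate witness. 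The universal ABox portion of the criterion thus ranges over all $\vec x$-assignments. The existential $\vec y$ is encoded through disjunctions inside $\varkappa_\Phi$, which are available since the query is positive existential; $\varkappa_\Phi$ is constructed so that $\varkappa_\Phi$ is true in the canonical model of the pumping witness iff some $\vec y$-assignment satisfies every clause of $\varphi(\vec x,\vec y)$. Fine-tuning $\varkappa_\Phi$, $\TO_\Phi$ and the $\mathcal D$-tail ensures that conditions $(i)$ and $(ii)$ of the rewritability criterion both hold simultaneously precisely when some $\vec x$ falsifies $\Phi$, so non-rewritability of $\q_\Phi$ is equivalent to $\Phi$ being \emph{invalid}, yielding $\Pi^p_2$-hardness of rewritability. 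The Boolean/specific distinction is handled by Proposition~\ref{prop:specific-to-boolean++}.

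The main obstacle is the pumping argument bounding witness sizes polynomially: one must argue not merely that the atomic-type sequence stabilises quickly, but that the full $\sub_\q$-type sequence in $\mathcal C_{\TO,\Abox\mathcal B^k\mathcal D}$ at the cut points $|\Abox|-1$, $|\Abox\mathcal B|-1$, \dots{} can be described by a polynomially-small summary. This requires showing that, although $\varkappa$ may contain many $\Rdiamond/\Ldiamond$-subformulas, their satisfaction at the relevant positions depends only on the nearby deterministic atom propagation plus a bounded amount of periodic information further along $\mathcal B^k$ and $\mathcal B^{k+1}$. Once this summary is in place, the $\Pi^p_2$-algorithm is routine; without it, the witness blowup of Theorem~\ref{th:fo-horn-crit} would only yield an $\ExpSpace$-bound as in Theorem~\ref{hornExpSpacehard}.
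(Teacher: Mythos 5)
There is a genuine gap in your upper bound, and it is precisely at the point you yourself flag as ``the main obstacle''\!. Your claim that for an $\LTL_\core\Xnext$-ontology ``the period and the active-domain extension needed to stabilise the type sequence \dots are linear in $|\TO|$'' is false: core axioms of the form $A_i^j\to\Rnext A_i^{(j+1)\bmod p_i}$ for distinct primes $p_1,\dots,p_m$ force the atomic-type sequence of the canonical model to have period $\prod_i p_i$, which is exponential in $|\TO|$ (this is exactly what the $\Pi_2^p$-hardness construction exploits). Consequently the witnesses $\Abox,\mathcal B,\mathcal D,k$ of the Theorem~\ref{th:fo-horn-crit}-style criterion cannot in general be contracted to polynomial length, and your guess-and-check step collapses. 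The paper's route around this is different: because $\varkappa$ is positive existential, only at most $|\varkappa|$ non-empty positions of the ABox matter, so $\L_\Xi(\q)$ decomposes into languages $\L_{\mathcal B}^\uparrow$ indexed by sparse words $\mathcal B\in W_{|\varkappa|,\Xi}$ (Lemma~\ref{lwstarfree}); the pumping block can then be taken to be $\emptyset^n$, the witnesses remain exponentially \emph{long} but are \emph{sparse} and are guessed via binary exponents of polynomial size, and Lemma~\ref{checkNP} shows that evaluating $\varkappa$ in the canonical model over such a succinctly encoded ABox $\emptyset^{i_1}a_1\dots\emptyset^{i_l}a_l\emptyset^{i_{l+1}}$ is in \NP{} (using the arithmetic-progression structure of the canonical model). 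Without some such sparsity-plus-succinct-encoding argument you only recover the \ExpSpace{} bound of Theorem~\ref{hornExpSpacehard}, as you note.

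Your lower bound sketch also defers to ``fine-tuning'' the two ideas that actually carry the reduction. First, a positive existential query cannot simply use per-variable disjunctions to choose the $\vec y$: nothing forces the choices to be consistent across clauses. The paper resolves this with a Chinese-Remainder device: axioms $A\to A_i^0$, $A_i^j\to\Rnext A_i^{(j+1)\bmod p_i}$, $A_i^0\to Y_i^0$, $A_i^1\to Y_i^1$ make every temporal offset $r$ encode a full assignment of $\vec y$ via $r\bmod p_i$, so a single $\Rdiamond\psi'$ in $\varkappa_\varphi$ quantifies existentially over consistent assignments. Second, you need a concrete source of non-$\FO(<)$-definability when the QBF is false; the paper uses $B\to\Rnext\Rnext B$ together with the disjunct $\Lnext B$ in $\varkappa_\varphi$, so that the witness language becomes $\L(\{B\}(\emptyset\emptyset)^*a'_{\mathfrak a})$, which is not star-free. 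Neither mechanism is present in your proposal, and without them the reduction does not go through.
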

\begin{proof}
By Proposition~\ref{prop:specific-to-boolean++} $(ii)$ and Lemma~\ref{le:remove-bot}, it suffices to consider Boolean OMPEQs $\q = (\TO, \varkappa)$ with a $\bot$-free $\TO$. Also, for the same technical reasons as in Section~\ref{sec:lin-ompq}, we can assume that $\varkappa$ takes the form $\Ldiamond \Rdiamond \varkappa'$.
We first observe that checking $\FO(<)$-definability of $\L_\Xi(\q)$ can be reduced to checking $\FO(<)$-definability of finitely many simpler languages.
For $n \geq 0$, let
$$
W_{n, \Xi}=\{a_1 \ldots a_k \in \Sigma_\Xi^*\mid |a_i| \geq 1, \
\sum_{i=1}^k| a_i|\le n\}.
$$%$W_\varkappa=\{\avec{a}_1\ldots \avec {a}_k\in \Sigma_\Xi^*\mid|\avec{a}_i|\ge 1, |\avec{a}_1| + \dots + |\avec{a}_k|\le|\varkappa|\}$
With each $\mathcal B = a_1\ldots a_k \in W_{|\varkappa|, \Xi}$ we associate the languages
$$
\L_{\mathcal B}^1 =  \L((\emptyset^* a_1)\ldots (\emptyset^* a_k)\emptyset^*)\quad \text{and} \quad
\L_{\mathcal B}=  \L_{\mathcal B}^1 \cap \L_\Xi(\q).
$$
For $\mathcal U=u_1\dots u_k$ and $\mathcal V=v_1\dots v_l$ in $\Sigma^*_\Xi$, we write $\mathcal U \le \mathcal V$ if $k=l$ and $u_i\subseteq v_i$, for all $i$.
%
%\begin{lemma}
Let $\L_{\mathcal B}^\uparrow=\{\mathcal V \in\Sigma_\Xi^*\mid \exists\, \mathcal U \in\L_{\mathcal B}\ \mathcal U \le \mathcal V\}$. We show that
\begin{equation}\label{eq:unicorn}
  \textstyle \L_\Xi(\q) ~=~ \bigcup_{\mathcal B \in W_{|\varkappa|, \Xi}}\L_{\mathcal B}^{\uparrow}.
\end{equation}
Let $\Abox \in \L_\Xi(\q)$, and so $(\TO, \Abox) \models \exists x\, \varkappa(x)$. Observe that, for any $\Abox$ and $j \in \Z$, $(\TO,\Abox) \models \varkappa(j)$ iff $(\TO,\Abox') \models\varkappa(j)$, for some $\Abox' \subseteq \Abox$ with $|\Abox| \leq |\varkappa|$. The latter statement is shown by induction on the construction of $\varkappa$, where the base case $\varkappa = A$ follows from the proof of Theorem \ref{thm:coNP}, and left to the reader. This observation implies that $(\TO, \Abox') \models \exists x\, \varkappa(x)$ for some $\Xi$-ABox $\Abox' \subseteq \Abox$ with $|\Abox'| \leq |\varkappa|$. Let $\mathcal B$ be the result of removing all $\emptyset$ from $\mathcal A'$ (viewed as a word). Clearly, $\mathcal B \in W_{|\varkappa|, \Xi}$ and $\Abox \in \L_{\mathcal B}^\uparrow$. The converse inclusion follows from the fact that $\Abox \in \L_\Xi(\q)$ implies $\Abox' \in \L_\Xi(\q)$ for any $\Abox \subseteq \Abox'$.

%for every $v \in \Gamma^*_\Xi$, we have $v\in\L_\Xi(\q(x))$ iff there is $u\le v$ such that $u\in\L_w$ for some $w\in\mathcal W_\varkappa$.
%\end{lemma}

\begin{lemma}\label{lwstarfree}
The language $\L_\Xi(\q)$ is $\FO(<)$-definable iff $\L_{\mathcal B}$ is $\FO(<)$-definable, for every $\mathcal B  \in W_{|\varkappa|, \Xi}$.
\end{lemma}
\begin{proof}
($\Rightarrow$) If $\L_\Xi(\q)$ is $\FO(<)$-definable, then so is $\L_{\mathcal B}$ as $\L^1_{\mathcal B}$ is $\FO(<)$-definable.

($\Leftarrow$) Suppose $\L_{\mathcal B}$ is $\FO(<)$-definable for any $\mathcal B \in W_{|\varkappa|, \Xi}$. %Since $\L_\Xi(\q(x))=\bigcup_{w\in W_{|\varkappa|, \Xi}}\L_w^{\uparrow}$
By~\eqref{eq:unicorn}, it suffices to prove that $\L_{\mathcal B}^\uparrow$ is $\FO(<)$-definable.
For $0 \leq  l \leq k$, let $\L^1_{\mathcal B, l} = \L\big((\emptyset^*a_{k - l+1})\ldots (\emptyset^* a_k) \emptyset^*\big)$. Note that $\L^1_{\mathcal B, 0} = \L(\emptyset^*)$ and $\L^1_{\mathcal B, k} = \L^1_{\mathcal B}$.
We prove by induction on $l$ that, for any $\L \subseteq \L^1_{\mathcal B, l}$, if $\L$ is $\FO(<)$-definable, then $\L^\uparrow$ is $\FO(<)$-definable. Let $l = 0$ and suppose $\L$ is $\FO(<)$-definable. Then $\L^1_{\mathcal B, l} = \L(\emptyset^*)$, and so $\L$ is a finite or cofinite subset of $\L(\emptyset^*)$. Either way the language $\L^\uparrow$ is $\FO(<)$-definable.
Now, suppose $l > 0$ and $\L \subseteq \L^1_{\mathcal B, l}$ is $\FO(<)$-definable.
Let $\A=(Q,\Gamma_\Xi,\delta,q_0,F)$ be a minimal DFA accepting $\L$. Let $Q_{\emptyset}=\{q\in Q\mid\exists i \, \delta_{\emptyset^i}(q_0)=q\}$. % and let $B'=\{q\in B\mid\delta(q,b_1)\text{ is defined}\}$.
 For $p\in Q_{\emptyset}$, let $\L_p$ be the language accepted by the automaton $(Q_\emptyset,\{\emptyset\},\delta|_{Q_{\emptyset}},\{q_0\},\{p\})$ and let $\L_p'$ be the language accepted by the automaton $(Q\setminus Q_{\emptyset},\Gamma_\Xi,\delta|_{Q\setminus Q_\emptyset},\delta_{\avec{a}_{k-l+1}}(p),F)$. Clearly, $\L_p'\subseteq\L^1_{\mathcal B,  l-1}$ and both $\L_p$ and $\L_p'$ are $\FO(<)$-definable. Since $\L_p'\subseteq\L(\emptyset^*)$ and by IH, the languages $\L_p^\uparrow$ and $\L_p'\mathstrut^\uparrow$ are $\FO(<)$-definable, and so $\L^\uparrow=\bigcup_{p\in Q_\emptyset}(\L_p^\uparrow\cdot(\bigcup_{\avec{a} \supseteq \avec{a}_{k-l+1}}\{\avec{a}\})\cdot\L_p'\mathstrut^\uparrow)$ is $\FO(<)$-definable as well.
\end{proof}

Now we give a criterion of checking $\FO(<)$-definability of $\L_w$ (cf.~Theorem~\ref{th:fo-horn-crit}).
\begin{lemma}\label{th:cloud}
Let $w = a_1 \dots a_k \in W_{|\varkappa|, \Xi}$. Then $\L_w$ is not $\FO(<)$-definable iff there are words $\Abox = (\emptyset^{i_1} a_1) \dots (\emptyset^{i_{l-1}} a_{l-1}) \emptyset^{i_l}$, $\mathcal D = (\emptyset^{i_l'} a_l) (\emptyset^{i_{l+1}} a_{l+1}) \dots (\emptyset^{i_k} a_k) \emptyset^{i_{k+1}}$, $\mathcal B = \emptyset^n$ and $k \geq 2$  such that $(i)$ and $(ii)$ from Theorem~\ref{th:fo-horn-crit} hold. Moreover, we can find $\Abox, \mathcal{B}, \mathcal{D}$ and $k$ such that $|\Abox|, |\mathcal{B}|, |\mathcal{D}|, k \leq 2^{O(|\q|)}$.
\end{lemma}
\begin{proof}
We only outline modifications needed to the proof of Theorem~\ref{th:fo-horn-crit} to obtain this result and the specific form of $\Abox$, $\mathcal B$ and $\mathcal D$.
Consider the automaton $\A$ defined in the proof of~Theorem~\ref{th:fo-horn-crit} and denote by $\A_{j}$, for $1 \leq j \leq k+1$, a copy of $\A$ restricted to the alphabet $\emptyset$. We construct an automaton $\A_w$ by taking a disjoint union of all the $\A_{j}$ and adding a transition $q \to_{a_j} q'$, for $1 \leq j \leq k$, from $q$ in $\A_{j}$ to $q' \in \A_{j+1}$ for each pair $(q,q')$ such that $\A$ contains an ${a_j}$-transition from the original of $q$ to the original of $q'$. The initial state of $\A_w$ is $q_{-1}$ from $\A_1$ and the final states are those in $\A_{k+1}$. It is straightforward to see that $\L_w = \L(\A_w)$. The proof of Theorem~\ref{th:fo-horn-crit} works  for $\A_w$ in place of $\A$. That $\mathcal B$ consists of $\emptyset$ only follows from the fact that non-trivial cycles in $\A_w$ can only be with $\emptyset$-symbols.
\end{proof}

We observe that (binary encoding of) $\Abox \mathcal B^k \mathcal D$ and $\Abox \mathcal B^{k+1} \mathcal D$ in the lemma above can be guessed and stored in polynomial time. Thus, it remains to show  that conditions $(i)$ and $(ii)$ from Theorem~\ref{th:fo-horn-crit}  can be checked by an \NP{}-oracle. The (more or less standard) proof of the following lemma is given in Appendix~\ref{standard}. 

\begin{lemma}\label{checkNP}
Given $a_1, \dots, a_l \in \Sigma_\Xi$ with $|a_i| = 1$, for $1 \le i \le l$, binary numbers $i_1,\dots, i_{l+1}, j$, a $\bot$-free $\LTL_\core\Xnext$-ontology $\TO$ and a positive existential temporal concept $\varkappa$, checking whether $\mathcal C_{\TO, \Abox} \models \varkappa(j)$ for $\Abox = \emptyset^{i_1}a_1 \dots \emptyset^{i_l}a_l \emptyset^{i_{l+1}}$ can be done in \NP{}.
\end{lemma}

We can now complete the proof of the upper bound. %Theorem~\ref{thm:corepi-upper}.
By Lemmas~\ref{lwstarfree} and~\ref{th:cloud}, $\L_\Xi(\q)$ is not $\FO(<)$-definable iff there exist $a_1 \dots a_k \in W_{|\varkappa|, \Xi}$, $\Abox = (\emptyset^{i_1} a_1) \dots (\emptyset^{i_{l-1}} a_{l-1}) \emptyset^{i_l}$, $\mathcal D = (\emptyset^{i_l'} a_l) (\emptyset^{i_{l+1}} a_{l+1}) \dots (\emptyset^{i_k} a_k) \emptyset^{i_{k+1}}$, $\mathcal B = \emptyset^n$, $k \geq 2$, such that $|\Abox|, |\mathcal{B}|, |\mathcal{D}|, k \leq 2^{O(|\q|)}$, $(i)$ $\neg \varkappa \in \tp_{\TO, \Abox \mathcal{B}^{k} \mathcal{D}}(|\Abox|-1) = \tp_{\TO, \Abox \mathcal{B}^{k} \mathcal{D}}(|\Abox \mathcal{B}^{k}|-1)$\textup{;} and $(ii)$ $\varkappa \in \tp_{\TO, \Abox \mathcal{B}^{k+1} \mathcal{D}}(|\Abox \mathcal{B}|-1)  = \tp_{\TO, \Abox \mathcal{B}^{k+1} \mathcal{D}}(|\Abox \mathcal{B}^{k+1}|-1)$. We can check non-$\FO(<)$-definability of $\L_\Xi(\q)$ by guessing the required $\Abox$, $\mathcal B$, $\mathcal D$, $k$ and the four types involved in the conditions $(i)$ and $(ii)$. That the four types are indeed correct can be checked in polynomial time by using the \NP{}-oracle provided by Lemma~\ref{checkNP}.
%\end{proof}

%\begin{theorem}\label{thm:corepi}
%Deciding $\FO(<)$-rewritability of Boolean and specific $\LTL_\core\Xnext$ OMPEQs over $\Xi$-ABoxes is $\Pi^p_2$-complete.\nz{not good to have two thms}
%\end{theorem}
%
%\begin{proof}
%The upper bound is shown by the theorem above.
The proof of the matching lower bound is by reduction of $\forall\exists\text{CNF}$---the satisfiability problem for fully quantified Boolean formulas in CNF with the prefix $\forall\exists$---which is known to be $\Pi^p_2$-complete~\cite<e.g.,>{Arora&Barak09}. 
By Lemma~\ref{le:remove-bot} and Proposition~\ref{prop:specific-to-boolean++} $(ii)$, we can only consider specific OMQs.
Suppose we are given a closed QBF $\varphi = \forall X_1\dots\forall X_n\exists Y_1\dots\exists Y_m\, \psi=\forall \avec{x}\exists \avec{y} \, \psi$ with a CNF $\psi$.
Define an $\LTL_\core\Xnext$ OMPEQ $\q_\varphi(x)=(\TO_{\varphi},\varkappa_\varphi(x))$ and $\Xi$ such that $\q_\varphi(x)$ is $\FO(<)$-rewritable over $\Sigma$-ABoxes iff $\forall \avec{x}\exists \avec{y}\,\psi$ is true.
Let $\Xi$ consist of the atomic concepts $A$, $B$, $A^j_i$, for $1\le i\le m$, $0\le j\le p_i-1$, where $p_i$ is the $i$-th prime number, $X^0_k, X^1_k$, for $1\le k\le n$, $Y^0_i,Y^1_i$, for $1\le i \le m$. The ontology $\TO_{\varphi}$ has the following axioms for all such $i$ and $k$:
\begin{align*}
& A\to A_i^0, \quad \ \, A_i^j\to\Rnext A_i^{(j+1)\!\!\!\!\!\mod\!\!\!\ p_i},\quad \text{for }0\le j\le p_i-1,\\
& A_i^0\to Y_i^0, \quad A_i^{1}\to Y_i^1,\qquad
X_k^0\to\Rnext X_k^0, \quad X_k^1\to\Rnext X_k^1, \quad B\to\Rnext\Rnext B.
\end{align*}
%\begin{multline*}
% A\to A_i^0,\quad A_i^j\to\Rnext A_i^{(j+1)\!\!\!\!\!\mod\!\!\!\ p_i},\quad A_i^0\to Y_i^0,\quad A_i^{1}\to Y_i^1,\\
%
% X_i^0\to\Rnext X_i^0,\quad X_i^1\to\Rnext X_i^1,\quad B\to\Rnext\Rnext B.
%\end{multline*}
%
The size of $\TO_{\varphi}$ is polynomial in $n+m$.
Let $\psi'$ be the result of replacing all $X_i$ in $\psi$ with $X_i^1$, all $\neg X_i$ with $X_i^0$, and similarly for the $Y_i$.
We set
$$
\varkappa_\varphi = A\land \bigwedge_{i=0}^{n}(X_i^0\vee X_i^1) \land (\Lnext B\vee\Rdiamond \psi').
$$
To show that $\q_\varphi(x)$ is as required, suppose $\forall \avec{x}\exists \avec{y} \, \psi$ is true. Let $(\TO_\varphi,\Abox)\models\varkappa_\varphi(t)$, for some $\Abox$ and $t$. Then $A(t)\in\Abox$ and $(\TO_\varphi,\Abox)\models\bigwedge_{i=0}^{n}(X_i^0\vee X_i^1)(t)$. So, for any $i$, there is $s\le t$ with  $X_i^0(s) \in \Abox$ or $X_i^1(s) \in \Abox$.
Let $\mathfrak a_1\colon\{X_1,\dots,X_n\}\to\{0,1\}$ be such that $(\TO_\varphi,\Abox)\models X_i^{\mathfrak a_1(X_i)}(s)$ for all $s>t$ and $i$. Take an assignment $\mathfrak a_2\colon\{Y_1,\dots,Y_m\}\to\{0,1\}$ that makes $\psi$ true. There is a number $r>0$ such that $r = \mathfrak a_2(Y_i)\ (\text{mod}\ p_i)$ for all $i$. Then $(\TO_\varphi,\Abox)\models Y_i^{\mathfrak a_2(i)}(t+r)$, $(\TO_\varphi,\Abox)\models\psi'(t+r)$, and so $(\TO_\varphi,\Abox)\models\Rdiamond \psi'(t)$.
Thus, the sentence
$$
A(x)\wedge \bigwedge_{i=0}^{n}\exists s_i\left((s_i\leqslant x)\wedge(X_i^0(s_i)\vee X_i^1(s_i))\right)
$$
is an $\FO(<)$-rewriting of $\q_\varphi(x)$ over $\Xi$-ABoxes.

If  $\forall \avec{x}\exists \avec{y} \, \psi$ is false, there is an assignment $\mathfrak a\colon\{X_1,\dots,X_n\}\to\{0,1\}$ such that $\psi$ is false under any assignment of the $Y_i$. Suppose $\L_\Xi(\q_\varphi(x))$ over $\Gamma_\Xi$ is $\FO(<)$-definable.
Let $a_{\mathfrak a}=\{A\}\cup\bigcup_{i=1}^n\{X_i^{\mathfrak a(X_i)}\} \in \Sigma_\Xi$. Consider $\Abox_l=\{B(0)\}\cup \bigcup_{Z\in a_{\mathfrak a}}\{ Z(l) \}$ for some $l>0$. Observe that, since $(\TO_\varphi,\Abox)\not\models \Rdiamond \psi'(l)$, $l$ is a certain answer to $\q_\varphi(x)$ over $\Abox_l$ iff $(\TO_\varphi,\Abox)\models B(l-1)$. It follows that $\L(\{B\}(\emptyset\emptyset)^*a'_{\mathfrak a}) = \L_\Xi(\q_\varphi(x)) \cap \L(\{B\}\emptyset^*a'_{\mathfrak a})$ (recall that $a' \in \Gamma_\Xi$ for each $a \in \Sigma_\Xi$). Clearly, $\L(\{B\}\emptyset^*a'_{\mathfrak a})$ is $\FO(<)$-definable, and so $\L(\{B\}(\emptyset\emptyset)^*a'_{\mathfrak a})$ is $\FO(<)$-definable, which is not the case \cite[Theorem~IV.2.1]{Straubing94}.
\end{proof}

\subsection{$\LTL_\core\Xnext$ OMPQs}

If we increase the expressive power of $\LTL_\core\Xnext$ OMPEQs $\q = (\TO,\varkappa)$ by allowing $\Box$-operators in $\varkappa$, the problem of deciding $\FO(<)$-rewritability becomes  \PSpace-complete, as established by Theorem~\ref{thm:ompqsforcore} below. The upper bound follows from Theorem \ref{th:lin-ompq-fo-pspace} and the next observation showing that, even though  core disjointness constraints $C_1 \land C_2 \to \bot$ may have IDB concepts $C_1$ and $C_2$, there is always an equivalent linear $\LTL_\horn\Xnext$ ontology. 

%Recall (see, e.g.,~\cite{Straubing94} and references therein) that a regular language is $\FO(<)$-definable iff it is $\FO(<)$-definable.

\begin{proposition}\label{th:red-core-to-lin}
For any Boolean \textup{(}specific\textup{)} $\LTL_\core\Xnext$ OMPQ and any signature $\Xi$, one can construct in polynomial time a $\Xi$-equivalent Boolean \textup{(}specific\textup{)} linear $\LTL_\horn\Xnext$ OMPQ.
\end{proposition}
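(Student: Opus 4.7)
The plan is to directly apply Lemma~\ref{le:remove-bot}~$(ii)$, which moves all $\bot$-axioms from the ontology into the query, and then verify that the residual ontology is automatically linear Horn because $\TO$ was core. Specifically, given an $\LTL_\core\Xnext$ OMPQ $\q = (\TO, \varkappa)$, I would let $\TO'$ be the result of deleting from $\TO$ every axiom of the form $C \to \bot$ or $C_1 \land C_2 \to \bot$, and set
\[
\varkappa' \ = \ \varkappa \ \lor \bigvee_{C_1 \land \dots \land C_k \to \bot \ \in\ \TO} \Rdiamond\Ldiamond(C_1 \land \dots \land C_k).
\]
By Lemma~\ref{le:remove-bot}~$(ii)$, the Boolean OMQ $\q$ is $\Xi$-equivalent to $(\TO', \varkappa')$, and the analogous transformation applied pointwise to the specific OMPQ $\q(x) = (\TO, \varkappa(x))$ yields a $\Xi$-equivalent specific OMQ $(\TO', \varkappa'(x))$.

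The two remaining things to check are that $(\TO', \varkappa')$ is a linear $\LTL_\horn\Xnext$ OMPQ and that the construction is polynomial. For the former, since $\TO$ is a $\bool{\core}\Xnext$ ontology, every non-$\bot$ axiom of $\TO$ has the shape $\top \to C$ or $C_1 \to C_2$; these are all Horn and have at most one concept on the left-hand side, so they are trivially linear (there is at most one IDB concept in the body). Hence $\TO'$ is a linear $\LTL_\horn\Xnext$ ontology. For the query, $\varkappa$ is positive by assumption, and each added disjunct $\Rdiamond\Ldiamond(C_1 \land \dots \land C_k)$ is a positive temporal concept (no $\neg$ is introduced), so $\varkappa'$ is positive. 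Finally, $|\TO'| \le |\TO|$ and $|\varkappa'| \le |\varkappa| + O(|\TO|)$, so the transformation is polynomial.

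There is no serious obstacle: the point of the proposition is the observation that the only obstruction to linearity in a core ontology is the binary disjointness axiom $C_1 \land C_2 \to \bot$ where both $C_i$ are IDB, and Lemma~\ref{le:remove-bot}~$(ii)$ is precisely the tool that eliminates all such axioms by absorbing them into a positive query. The only mild care needed is to ensure that the added $\Rdiamond\Ldiamond$-disjuncts keep the query positive (they do, as they are built from basic temporal concepts and the positive operators $\land, \Rdiamond, \Ldiamond$) and that the $\Xnext$-restriction on temporal operators in the ontology is preserved (it is, since no new axioms are added to $\TO'$).
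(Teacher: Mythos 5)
Your proof is correct, but it takes a genuinely different route from the paper's. The paper keeps the disjointness information inside the ontology: it introduces, for each atom $A$, a complemented copy $\bar A$, adds the contrapositive $\bar C_2 \to \bar C_1$ of every axiom $C_1 \to C_2$, replaces each axiom $C_1 \land C_2 \to \bot$ by $C_1 \to \bar C_2$, renames all atoms to fresh primed versions so that every right-hand side is IDB, and re-attaches the data via $A \to A'$ and $A \land \bar A' \to \bot$ for $A \in \Xi$; the only surviving $\bot$-axioms then have exactly one IDB conjunct, so the result is linear while still containing $\bot$ and leaving the query unchanged up to a renaming of atoms. You instead eliminate $\bot$ outright via Lemma~\ref{le:remove-bot}~$(ii)$, absorbing the bodies of the disjointness axioms into the query as $\Rdiamond\Ldiamond$-disjuncts, and then observe that a $\bot$-free core ontology is vacuously linear Horn because every rule body is a single concept. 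Your two supplementary checks are exactly the right ones and both go through: the added disjuncts use only $\land$, $\Rdiamond$, $\Ldiamond$ and the basic concepts of the ontology, so the query stays positive, and no new ontology axioms are added, so the restriction of the ontology operators to $\Rnext$ and $\Lnext$ is preserved. For the downstream use of the proposition in Theorem~\ref{thm:ompqsforcore}, either construction is equally serviceable, since the proof of Theorem~\ref{th:lin-ompq-fo-pspace} begins by removing $\bot$ via the same lemma anyway; your version is shorter because it delegates all the semantic work to Lemma~\ref{le:remove-bot}, whereas the paper's is self-contained and works purely at the level of the ontology.
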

\begin{proof}
We only consider Boolean OMQs $\q = (\TO, \varkappa)$ as the case of specific ones is similar.
First, for each atom $A$ in $\TO$, we introduce a fresh atom $\bar A$ and, for each axiom $C_1 \to C_2$ in $\TO$, we add to $\TO$ the axiom $\bar C_2 \to \bar C_1$, where $\bar C$ is the result of replacing $A$ in $C$ by $\bar A$; we also replace each axiom $C_1 \land C_2 \to \bot$ in $\TO$ with  $C_1 \to \bar C_2$. Then we rename each atom $A$ ($\bar A$) in $\q = (\TO, \varkappa)$ to $A'$ (respectively, $\bar A'$), for a fresh $A'$ (respectively, $\bar A'$). Denote by $C'$ the temporal concept obtained by replacing $A$ by $A'$ in $C$. Finally, we add the axioms $A \to A'$ and $A \land \bar A' \to \bot$ to $\TO$, for $A \in \Xi$, denoting the result by $\q' = (\TO', \varkappa')$. It is easy to see that $\q'$ is linear because all the IDB atoms of $\TO'$ are of the form $A'$ or $\bar A'$. For example, let $\TO = \{ \Lnext A \to B,\, \Rnext D \to C,\, C \land B \to \bot\}$ and $\Xi = \{ A, B, D\}$. Then  $\TO'$ contains the axioms $\Lnext A' \to B'$, $\bar B' \to \Lnext \bar A'$, $\Rnext D' \to C'$, $\bar C' \to \Rnext \bar D'$, $C' \to \bar B'$ together with $X \to X'$ and $X \land \bar X' \to \bot$, for each $X \in \Xi$. Clearly, $\q$ and $\q'$ are $\Xi$-equivalent. For example, over $\Abox = \{A(0), D(2)\}$, both $\q$ and $\q'$ return \yes{} as both $(\TO, \Abox)$ and $(\TO',\Abox)$ are inconsistent.
\end{proof}

\begin{theorem}\label{thm:ompqsforcore}
Deciding $\FO(<)$-rewritability of Boolean and specific $\LTL_\core\Xnext$ OMPQs is $\PSpace$-complete
\end{theorem}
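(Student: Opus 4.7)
The upper bound follows immediately from Proposition~\ref{th:red-core-to-lin} combined with Theorem~\ref{th:lin-ompq-fo-pspace}: given an $\LTL_\core\Xnext$ OMPQ $\q$ and a signature $\Xi$, one computes in polynomial time a $\Xi$-equivalent linear $\LTL_\horn\Xnext$ OMPQ $\q'$ by the proposition, and then decides $\FO(<)$-rewritability of $\q'$ over $\Xi$-ABoxes in \PSpace{} by the theorem; as $\Xi$-equivalence implies $\L_\Xi(\q) = \L_\Xi(\q')$, rewritability is preserved.

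For the matching lower bound we reduce in polynomial time from $\FO(<)$-rewritability of specific linear $\LTL_\horn\Xnext$ OMAQs, which is \PSpace-hard by Theorem~\ref{th:linear-omaq-pspace}. By Lemma~\ref{le:remove-bot} and Proposition~\ref{prop:specific-to-boolean++}, it suffices to produce specific $\LTL_\core\Xnext$ OMPQs with $\bot$-free core ontologies. The plan is to start from the hardness instance $\q_\A(x) = (\TO_\A, F_{end}(x))$ constructed in the proof of Theorem~\ref{th:linear-omaq-pspace} from a minimal DFA $\A = (Q, \Omega, \delta, q_0, F)$---whose only non-core axioms are the binary Horn rules $\bar q \land D \to \Rnext \bar r$ (for transitions $q \to_D r$) and $\bar q \land Y \to F_{end}$ (for $q \in F$)---and push these Horn rules into a positive query by exploiting the $\Box$-operator, which is available in OMPQs but not in the OMPEQs of Theorem~\ref{thm:corepi-upper}.

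Concretely, $\TO_\A'$ retains only the core axioms of $\TO_\A$ (namely $X \to \Rnext \bar q_0$ together with the disjointness constraints $C \land C' \to \bot$) and adds, for every IDB atom $B$ of $\TO_\A$ and every $D \in \Omega \cup \{Y\}$, a fresh complement atom $B^c$ (respectively $D^c$) together with the core disjointness $B \land B^c \to \bot$ (respectively $D \land D^c \to \bot$); the signature $\Xi$ is enlarged by these complements so that ABoxes may supply them. The positive query $\varkappa(x)$ is the conjunction of $F_{end}(x)$ with, for each Horn rule $\bar q \land D \to \Rnext \bar r$ of $\TO_\A$, the universal conjunct
\[
\Rbox\Lbox\bigl(\bar q^c \lor D^c \lor \Rnext \bar r\bigr)
\]
and with the analogous $\Rbox\Lbox$-conjuncts for the accepting rules, so that in any model in which the complements truly dualize the atoms they shadow, these conjuncts are semantically equivalent to the corresponding Horn rules.

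The hard part is that core ontologies cannot enforce the covering axiom $\top \to B \lor B^c$ (which would be Krom but not core), so in some models of $\TO_\A'$ the complements are under-determined and the $\Box$-conjuncts fail to faithfully implement the Horn rules on those models. Consequently $\L_\Xi(\q_\A')$ and the natural lifting of $\L_{\sig(\q_\A)}(\q_\A(x))$ to the enlarged signature are not literally equal. The main step in completing the reduction is therefore to argue that the two languages differ only on an $\FO(<)$-definable set of "ill-formed" ABoxes, namely those in which a pair $B, B^c$ is simultaneously absent at some position lying between an $X$ and a $Y$; combining this with Lemma~\ref{expandL}, one verifies that $\L_\Xi(\q_\A')$ is $\FO(<)$-definable iff $\L_{\sig(\q_\A)}(\q_\A(x))$ is, thereby transferring the \PSpace-hardness from Theorem~\ref{th:linear-omaq-pspace}.
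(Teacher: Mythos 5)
Your upper bound is correct and identical to the paper's: Proposition~\ref{th:red-core-to-lin} followed by Theorem~\ref{th:lin-ompq-fo-pspace}.

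The lower bound, however, has a genuine gap, and it is exactly at the point you flag and then try to wave away. Without the covering axiom $\top \to B \lor B^c$ (Krom but not core), the only link between $B$ and $B^c$ is the one-way disjointness $B \land B^c \to \bot$. Hence for every consistent ABox there is a model of $\TO_\A'$ in which every complement atom is false at every position where the ABox does not explicitly assert it. In that model your conjunct $\Rbox\Lbox(\bar q^c \lor D^c \lor \Rnext\bar r)$ demands $\Rnext\bar r$ at essentially every position, which fails in the minimal model; so under certain-answer semantics the conjunct is (almost) never entailed, and the language of $\q_\A'$ collapses to something close to the set of inconsistent ABoxes, which is $\FO(<)$-definable regardless of $\A$. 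Your proposed repair---that the two languages differ only on the $\FO(<)$-definable set of ABoxes missing some pair $B,B^c$---does not go through for the IDB atoms $\bar q$: the ABox never contains $\bar q$, and whether $\bar q^c$ \emph{ought} to be present at a position depends on whether the simulated run of $\A$ reaches $q$ there, which is precisely the non-$\FO$-definable information the reduction is supposed to capture. So the discrepancy set is not $\FO(<)$-definable, and the transfer of hardness breaks. (A further, smaller defect: once the rule $\bar q \land Y \to F_{end}$ is moved into the query, $F_{end}$ is no longer derivable, so the conjunct $F_{end}(x)$ makes the query unsatisfiable as a certain answer unless $F_{end}$ is put into $\Xi$, which trivialises it in a different way.)

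The paper avoids this trap by reducing from a different problem---emptiness of the intersection of a family of DFAs (Kozen)---and by making sure that \emph{every} atom occurring under the $\Box$ in the query is an EDB atom of the signature, so that ``true in all models'' coincides with ``present in the ABox''\!. The core ontology there consists only of disjointness constraints that reject incorrectly encoded runs, the query's $\Box_F$-conjunct checks that the ABox supplies a full simultaneous run of all the automata, and non-$\FO(<)$-definability is injected by the parity axiom $B \to \Rnext\Rnext B$. If you want to salvage a reduction in the spirit of yours (pushing Horn rules into a positive query, as in Theorem~\ref{thm:kromlowerexp}), you need the covering axioms, and those are exactly what the core fragment forbids.
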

\begin{proof}
The upper bound follows from Proposition~\ref{th:red-core-to-lin} and Theorem \ref{th:lin-ompq-fo-pspace}.
To prove the lower one, we reduce the \PSpace-complete problem of deciding the emptiness of the intersection of a set of DFAs~\cite{Kozen77} %\nb{D. Kozen, Lower bounds for natural proof systems, in: Proc. 18th IEEE Ann. Symp. on Foundations of Comput. Sci. (1977) 254-266.}
 to OMPQ rewritability. Let $\A_1,\dots,\A_n$ with $\A_i=(Q_i,\Sigma,\delta^i,q^i_0,F_i)$ be a sequence of DFAs that do not accept the empty word, have a common input alphabet, and disjoint sets $Q_i=\{q_0^i,\dots,q_{j_i}^i\}$ of states.

Let $\nabla_i$ be the set of atoms $N^i_{q, a, r}$, for $q,r\in Q_i$,  $a\in\Sigma$, such that $\fu^i_{a}(q)=r$. %$\delta_i(q,a)=\{r\}$. % and $a \in \Sigma$.
%
%Fact: to check whether $\bigcap_{i\in[1,n]}L(\A_i)=\varnothing$ is PSPACE-complete.
%
Consider the ontology $\TO$ with atomic  concepts $\{X,Y,B\}\cup\bigcup_{1 \leq i \leq n} \nabla_i$ and the following axioms, for $ 1 \leq i,l \leq n$, $q, r, s, t \in Q_i$, $q', r' \in Q_l$, $a,b \in \Sigma$:
\begin{align*}
(1) \quad & N^i_{q,a,r} \land N^l_{q', b, r'} \to\bot, &&\quad \text{ if either } a \neq b, \text{ or }i=l\text{ and }(q,r) \neq (q',r');\\
(2) \quad & N^i_{q,a,r} \land \Rnext N^i_{s,b,t} \to\bot, &&\quad \text{ if } r\neq s;\\
%& N^i_{q,a,r} \land N^l_{q',b,r'} \to \bot, &&\quad \text{ if } a\neq b,\\
(3) \quad & X\land \Rnext N^i_{q,a,r} \to\bot, &&\quad \text{ if } q \neq q_0^i;\\
(4) \quad & N^i_{q,a,r} \land\Rnext Y \to\bot, &&\quad \text{ if } r\notin F_i;\\
(5) \quad & X\land\Rnext Y\to\bot;&&\\
(6) \quad & Y\to\Rnext Y;&&\\
%Y\wedge A&\to\bot&&\text{if $A\neq B$},\\
(7) \quad & B\to\Rnext \Rnext B.&&
\end{align*}
Let
\begin{equation*}
\varkappa ~=~ \Lnext B \wedge X\wedge\Box_F\Big(\big(\bigwedge_{1 \leq i \leq n}\bigvee_{\fu^i_a(q) = r}N^i_{q,a,r}\big)\vee Y\Big).
\end{equation*}
We claim that the OMPQs $\q(x)=(\TO,\varkappa(x))$ and $\q=(\TO,\varkappa)$ are $\FO(<)$-rewritable over $\Xi$-ABoxes, for $\Xi = \sig(\q)$, iff  $\bigcap_{1 \leq i \leq n}\L(\A_i)=\emptyset$.

($\Leftarrow$) If $\bigcap_{1 \leq i \leq n}\L(\A_i)=\emptyset$, then, for any $\Xi$-ABox $\Abox$ and $k \in \tem(\Abox)$, we have $\TO,\Abox\models\varkappa(k)$ iff $\Abox$ is inconsistent with $\TO$ because the formula $X\wedge\Box_F\Big(\big(\bigwedge_{1 \leq i \leq n}\bigvee_{\fu^i_a(q) = r}N^i_{q,a,r}\big)\vee Y\Big)$ cannot be true at any place in a consistent ABox.
Let $\varphi$ be the disjunction of the formulas $\exists x\, (C(x) \land D(x))$, for all axioms $C\land D\to\bot$ of the form (1), the formulas $\exists x\, (C(x) \land D(x+1))$, for all axioms $C\land \Rnext D\to\bot$ of the forms (2) and (3), and $\exists x,y\, ((y<x+2)\land Y(y) \land C(x))$, for all axioms $C\land \Rnext Y\to\bot$ of the forms (4) and (5).
%Consider the $\FO(<)$ formula $\varphi$ equal to the disjunction of $\exists x (A(x) \land A'(x))$, for all axioms of the form $A \land A' \to \bot$ in $\TO$, and of $\exists x (A(x) \land A'(x+1))$, for all axioms of the form $A \land \Rnext A' \to \bot$ in $\TO$
%(clearly, $x+1$ and $x+2$ are expressible in $\FO(<)$).
Then $(x = x) \land \varphi$ is an $\FO(<)$-rewriting of $\q(x)$, and $\varphi$ is an $\FO(<)$-rewriting of $\q$.

($\Rightarrow$) Let  $w=a_1\ldots a_k\in\bigcap_{1 \leq i \leq n}\L(\A_i)$,  $q(i,j)=\fu^i_{a_1\ldots a_j}(q^i_0)$,
%For $1 \leq i \leq n$ and $0 \leq j \leq k$, there exists $q^i_j\in Q_i$ such that $q_0^i\dots q_{j_k}^i$ are the states $\A_i$ visits while reading $w$%$\delta_i(q_{j-1},w_j) = \{q_j\}$ (we assume $q_0 = q_0^i$, for the corresponding $i$).
$\avec{n}_j = \bigcup_i\{ N^i_{q(i,j-1),a_j,q(i,j)}\}$;
%We will show that $\L_\Xi(\q)$ and  $\L_\Xi(\q(x))$ are not $\FO(<)$-definable.
%
let $\L_1=\L(\{B\} (\emptyset)^* \{ X \} \avec{n}_1 \dots \avec{n}_k \{ Y \})$, and let $\L'_1=\L(\{B\} (\emptyset)^* \{ X' \} \avec{n}_1 \dots \avec{n}_k \{ Y \})$.
%Consider the language $\L_1$ of the regular expression $\{B\} (\emptyset \emptyset)^* \{ X' \} \avec{n}_1 \dots \avec{n}_k \{ Y \}$.
Clearly, $\L_1$ and $\L_1'$ are $\FO(<)$-definable. If $\L_\Xi(\q)$ is $\FO(<)$-definable, then so is $\L_2=\L_1\cap\L_\Xi(\q)$. However, $\L_2=\L(\{B\} (\emptyset\emptyset)^* \{ X \} \avec{n}_1 \dots \avec{n}_k \{ Y \})$ is not $\FO(<)$-definable. Similarly, $\L_2'=\L_1'\cap\L_\Xi(\q(x))$ is not $\FO(<)$-definable. So $\L_\Xi(\q)$ and $\L_\Xi(\q(x))$ are not $\FO(<)$-definable.
\end{proof}

%*************

\section{Conclusions}\label{sec:conclusion}

The problems we investigate in this article originate in the area of ontology-based access to temporal data. Classical atemporal ontology-based data access (OBDA), which over the past 15 years has become one of the most impressive applications of Description Logics and Semantic Technologies, is based on the idea of rewriting ontology-mediated queries (OMQs) into query languages supported by conventional database management systems (DBMSs). For relational data, standard target languages for rewritings are SQL---that is, essentially $\FO$-formulas---and datalog, which allows recursive queries over data.
%, or some limited combinations of the two (e.g., datalog with negation, linear datalog, etc.) $\FO$-formulas, however, express the recursion-free queries, which are implemented most efficiently in the modern DBMSs. They also correspond to standard SQL.
The idea of rewriting has led to numerous and profound results that either uniformly classify OMQs according to their $\FO$- and datalog-rewritability or establish the computational complexity of recognising $\FO$- and datalog-rewritability of OMQs in expressive languages and design practical decision and rewriting algorithms. In classical database theory, $\FO$- and linear-datalog-rewritability of datalog queries has been an active research area since the 1980s.

Unfortunately, those results and developed techniques are not applicable to OMQs over temporal data, where the timestamps are linearly ordered by the precedence relation $<$ and OMQs may contain temporal constructs. First, as well known, the interaction between temporal and description logic operators tends to dramatically increase the complexity of OMQ answering, which makes the uniform classification of OMQs according to their rewritability type much harder. Some initial steps in this direction have been made by~\citeA{DBLP:conf/ruleml/BorgwardtFK19,DBLP:journals/corr/abs-2111-06806}.
Second, even without the description logic constructs, pure one-dimensional temporal OMQs give rise to the complexity classes and target languages for $\FO$-rewritings that have not occurred in the OBDA context so far. For instance, any $\LTL_{\bool}\Xallop$ OMQ is rewritable into $\FO(<,\RPR)$---a class not appearing in the classical (atemporal) OBDA literature---that essentially requires recursion, which is weaker than linear datalog recursion but still not expressible in SQL.

In this article, our concern is determining the optimal rewritability type for OMQs given in linear temporal logic \LTL. In fact, we argue in the introduction that such OMQs provide an adequate formalism for querying sensor log data from various parts of complex equipment where there is no relevant interaction between those parts, and the results of measurements are qualitatively graded as, e.g., high, medium, low, etc. Our starting point is establishing a close connection between rewritability of \LTL{} OMQs and definability of regular languages by means of $\FO(<)$-formulas possibly containing  extra predicates and constructs. The computational complexity and definability of regular languages have been investigated since the late 1980s. The relevant $\FO$-languages identified are $\FO(<)$, $\FO(<,\equiv)$, $\FO(<,\MOD)$ and $\FO(\RPR)$, the first two of which are in $\ACz$ for data complexity, the third is in $\ACC$ and the last one in $\NCo$. In practice, $\FO(<,\MOD)$-rewritable OMQs can be implemented in SQL using the \texttt{count} operator, while $\FO(<,\equiv)$-rewritable ones do not need it. It is also known that recognising $\FO(<)$-definability of regular languages given by a DFA is \PSpace-complete; recognising definability by $\FO(<,\equiv)$ and $\FO(<,\MOD)$ formulas is known to be decidable, but the exact complexity has so far remained open.

The main technical results we obtain here are threefold. First, we settle the open problems just  mentioned by proving that deciding $\FO(<)$-, $\FO(<,\equiv)$- and $\FO(<,\MOD)$-definability of regular languages given by a DFA, NFA or 2NFA is \PSpace-complete. Second, we show that deciding  $\FO(<)$-, $\FO(<,\equiv)$- and $\FO(<,\MOD)$-rewritability of \LTL{} OMQs is \ExpSpace-complete. And finally, we identify a number of natural and practically important OMQ classes for which these problems are \PSpace-, $\Pi^p_2$- or \coNP{}-complete; these results could lead to feasible  algorithms to be used in temporal OBDA systems.
%
%It is to be noted that $\FO(<,\RPR)$-rewritable OMQs can be captured in SQL with recursion or procedural extensions that are not always supported by RDBMSs and are less efficient; $\FO(<,\MOD)$-rewritable ones can be implemented in the basic SQL using the \texttt{count} operator, while $\FO(<,\equiv)$-rewritable ones do not even need it.

While this article makes steps towards the non-uniform approach to temporal OBDA, many interesting and challenging problems remain open. We discuss some of them below.

\smallskip
\noindent
\textbf{1.} Our results on linear Horn, core and Krom OMQs are only established for ontologies with $\Rnext$ and $\Lnext$. Some of the techniques used in the proofs do not go through in the presence of $\Rbox$ and $\Lbox$, and so it would be interesting to see if the same complexity results hold for the fragments with all of these operators. One could also consider adding the operators `since' and `until' to ontologies and/or queries in \LTL{} OMQs. General results, such as Theorem~\ref{hornExpSpacehard}, will not be affected by this, but it is an open question for the fragments mentioned above. Finally, we could not establish the complexity of deciding $\FO(<,\MOD)$-rewritability of linear $\LTL_{\horn}\Xnext$ OMPQs. It is likely to be \PSpace{}, but we did not manage to prove an appropriate criterion in the spirit of Theorems~\ref{th:fo-horn-crit} and~\ref{th:foe-horn-crit}.

\smallskip
\noindent
\textbf{2.}
In this article, we consider queries with at most one answer variable. More expressive query languages based on monadic first-order logic $\MFO(<)$ and allowing multiple answer variables have been suggested by~\citeA{DBLP:journals/ai/ArtaleKKRWZ21}. It would be interesting to understand the impact of replacing \LTL{} queries with $\MFO(<)$ queries in \LTL{} OMQs on their $\FO$-rewritability properties.

\smallskip
\noindent
\textbf{3.}
Another prominent temporal KR formalism that has great potential as an ontology and query language for temporal OBDA is metric temporal logic \textsl{MTL}, which was originally introduced for modelling and reasoning about real-time systems~\cite{DBLP:journals/rts/Koymans90,DBLP:journals/iandc/AlurH93}. Each operator in \textsl{MTL} is indexed by a temporal interval over which the operator works: for example, $\Diamond_{(0, 1.5]} A$ is true at $t$ iff $A$ holds at some $t'$ with $0 < t'- t \leq 1.5$. The interpretation domain is dense $\mathbb R$ or $\mathbb Q$ under the continuous semantics and the active domain of the data instance under the pointwise semantics~\cite{DBLP:conf/formats/OuaknineW08}.  \textsl{MTL} is more expressive and succinct than \LTL{} and is also suitable in scenarios where sensors report their measurements asynchronously. In the context of OBDA, \textsl{MTL} has recently been investigated by~\citeA{DBLP:journals/jair/BrandtKRXZ18,DBLP:conf/ijcai/RyzhikovWZ19,DBLP:conf/ijcai/WalegaGKK20,DBLP:conf/aaai/CucalaWGK21,DBLP:journals/corr/abs-2201-04596}. Target rewriting languages for \textsl{MTL} OMQs include $\FO(\text{DTC})$, $\FO(\text{TC})$ with (deterministic) transitive closure, and datalog($\FO$), which correspond to the complexity classes L, NL and P, respectively. At present, the problem of recognising the data complexity and optimal rewritability type of \textsl{MTL} OMQs is wide open.

\smallskip
\noindent
\textbf{4.}
In OBDA practice, we are concerned not only with the fact of $\FO$-rewritability of a given OMQ but also with the size and shape of the rewriting to be executed by a DBMS~\cite<e.g.,>{DBLP:journals/jacm/BienvenuKKPZ18,DBLP:conf/pods/BienvenuKKPRZ17}. The experiments with a few real-world use cases reported by~\citeA{DBLP:journals/jair/BrandtKRXZ18,DBLP:conf/time/0001CKKMR0Z19} indicate that temporal OMQs with a non-recursive ontology are scalable and efficient. But we are not aware of any theoretical results on the succinctness of $\FO$-rewritings for temporal OMQs.

\smallskip
\noindent
\textbf{5.}
Extending the results obtained above for 1D \LTL{} OMQs to various 2D combinations of \LTL{} with description logics (such as \DL{}, $\mathcal{EL}$ or $\mathcal{ALC}$), Schema.org or datalog could be especially challenging due to the interaction between the temporal and domain dimensions. In the Horn case, one might try to use a variant of the automata-theoretic approach developed by~\citeA{DBLP:conf/ijcai/LutzS17,DBLP:journals/corr/abs-1904-12533}.

\smallskip
\noindent
\textbf{6.}
Finally, from the application point of view, it is important to identify real-world use-cases for temporal OBDA, relevant classes of OMQ, and then develop OMQ rewriting and optimisation algorithms for those classes. Some work in this direction has recently been done for both \textsl{MTL} and \LTL~\cite{DBLP:journals/corr/abs-2201-04596,DBLP:journals/jair/BrandtKRXZ18,DBLP:conf/dlog/TahratBAGO20}. Although the results of this paper suggest  algorithms that can identify the best rewritability class (and so the most efficient database query language) for a given OMQ, implementing and optimising such algorithms is a serious challenge. Furthermore, the algorithms mentioned above need to be incorporated into a user-friendly OBDA system such as Ontop~\cite{DBLP:conf/semweb/Rodriguez-MuroKZ13,DBLP:conf/semweb/XiaoLKKKDCCCB20}.

%\acks{This work was supported by the EPSRC U.K.\ grant EP/S032282 for the project  `\textsl{quant$^\textsl{MD}$}: Ontology-Based Management for Many-Dimensional Quantitative Data'\!.} 
%
%We are grateful to the referees  of  this  article  for  their careful reading, valuable comments and suggestions.

\paragraph{Acknowledgements.}
This work was supported by the EPSRC U.K.\ grant EP/S032282 for the project  `\textsl{quant$^\textsl{MD}$}: Ontology-Based Management for Many-Dimensional Quantitative Data'\!.
We are grateful to the referees  of  this  article  for  their careful reading, valuable comments and suggestions.

%%%%%%%%%%%%%%%%%%%%%%%%%%%%%%%%%%%%%%%%%%%%%%%%%%%%%%%%%%%%%%%%%%%%%%
%%%%%%%%%%%%%%%%%%%%%%%%%%%%%%%%%%%%%%%%%%%%%%%%%%%%%%%%%%%%%%%%%%%%%%

\appendix

% !TEX root = IC.tex

% !TEX root = JAIR.tex

\newpage

\section{}

\subsection{Proof of Theorem~\ref{Prop:rewr-def}~$(ii)$}\label{app:thm5}

\noindent
\textbf{Theorem~\ref{Prop:rewr-def}~$(ii)$.}  
\emph{Let $\q = (\TO, \varkappa)$ be a Boolean and $\q(x) = (\TO, \varkappa(x))$ a specific OMQ. Then, for any $\lang \in \{ \FO(<), \FO(<,\equiv), \FO(<,\MOD)\}$ and $\Xi\subseteq\sig(\q)$, the OMQ $\q$ is $\lang$-rewritable over $\Xi$-ABoxes iff $\L_\Xi(\q)$ is $\lang$-definable\textup{;} similarly, $\q(x)$ is $\lang$-rewritable over $\Xi$-ABoxes iff $\L_\Xi(\q(x))$ is $\lang$-definable.}
\begin{proof}
For any $A\in\Xi$ and any $a\in\Sigma_\Xi$,  we set
$$
\chi_A(y) = \bigvee_{A \in a\in\Sigma_\Xi} a(y), \qquad \chi_a(y)=\bigwedge_{A \in a}A(y)\land\bigwedge_{A \notin a}\neg A(y),
$$
where $a(y)$ is a unary predicate associated with each $a\in\Sigma_\Xi$.
For any $\Xi$-ABox $\Abox$ and any $n \in \tem(\Abox)$, we have $\SA \models A(n)$ iff $\mathfrak S_{w_\Abox} \models \chi_A(n)$, and $\mathfrak S_{w_\Abox} \models a(n)$ iff $\SA \models \chi_a(n)$.
Thus, we obtain an $\lang$-sentence defining $\L_\Xi(\q)$ by taking an $\lang$-rewriting of $\q$ and replacing all atoms $A(y)$ in it with $\chi_A(y)$. Conversely, we obtain an $\lang$-rewriting of $\q$ by taking an $\lang$-sentence defining $\L_\Xi(\q)$ and replacing all $a(y)$ in it with $\chi_a(y)$.

Consider next $\q(x)$. Let $\varphi(x)$ be an $\lang$-rewriting of $\q(x)$ and let $\varphi'(x)$ be the result of replacing atoms $A(y)$ in $\varphi(x)$ with $\chi'_A(y) =\bigvee_{A \in a\in\Gamma_\Xi} a(y)$. Given an ABox $\Abox$ and $i \in \tem(\Abox)$,
%consider the word $w_{\Abox,i}$ and the respective structure $\mathfrak S_{w_{\Abox,i}}$.
we have $\SA \models \varphi(i)$ iff $\mathfrak S_{w_\Abox,i}\models\varphi'(i)$. %Consider the word $w_{\Abox,i}$ and the respective structure $\mathfrak S_{w_{\Abox,i}}$. Clearly $\mathfrak S_{w_{\Abox,i}}\models\varphi'(i)$.
A word $w=a_0\ldots a_n\in\Gamma_\Xi^*$ is in $\L_\Xi(\q(x))$ iff (a) there is $i$ such that $a_i\in\Sigma'_\Xi$, (b) $a_j \in \Sigma_\Xi$ for all $j\neq i$, and (c) $\mathfrak S_w\models\varphi'(i)$. Therefore, for the sentence
$$
\varphi'' = \exists x\, \Big(\varphi'(x) \land \forall y \, \big[ \big ((y = x) \to \bigvee_{a' \in \Sigma'_{\Xi}} a'(y)\big) \land \big((y \neq x) \to \bigvee_{a \in \Sigma_{\Xi}} a(y) \big)\big] \Big)
$$
and a word $w\in\Gamma_\Xi^*$, we have $\mathfrak S_w\models\varphi''$ iff $w=w_{\Abox,i}$ for some $\Abox$ and $i$ such that $\SA\models\varphi(i)$. It follows that $\varphi''$ defines $\L_\Xi(\q(x))$.

Now, let $\psi$ be an $\lang$-sentence defining $\L_\Xi(\q(x))$ and let $\psi'(x)$ be the result of replacing atoms $a(y)$ in $\varphi$, for $a\in\Sigma_\Xi$,  with $a(y)\land(x\neq y)$ and atoms $a'(y)$, for $a'\in\Sigma'_\Xi$, with $a(y)\land(x=y)$. For $w=a_0\ldots a_n\in\Sigma_\Xi^*$, we have $\mathfrak S_w\models\psi'(i)$ iff $\mathfrak S_{w_i}\models\psi$, where $w_i$ is $w$ with $a_i$ replaced by $a_i'$. Let $\psi''(x)$ be the result of replacing $a(y)$ in $\psi'(x)$ with $\chi_a(y)$. Then, for any $\Abox$ and $i\in\tem(\Abox)$, we have $\mathfrak S_\Abox\models\psi''(i)$ iff $\mathfrak S_{w_\Abox}\models \psi'(i)$ iff $\mathfrak S_{w_{\Abox,i}}\models\psi$, and so $\psi''(x)$ is a rewriting of $\q$.
\end{proof}

%***********

\subsection{Proof of Lemma~\ref{expandL}}\label{app:lem7}

\noindent
\textbf{Lemma~\ref{expandL}.}
\emph{Suppose $\lang \in \{ \FO(<), \FO(<,\equiv), \FO(<,\MOD)\}$ and $\Sigma$, $\Gamma$ and $\Delta$ are alphabets such that $\Sigma\cup\{x,y\}\subseteq\Gamma\subseteq\Delta$, for some $x,y \notin \Sigma$. Then a regular language $\L$ over $\Sigma$ is $\lang$-definable iff the regular language 
\[
\L'=\{w_1xwyw_2\mid w\in\L,\ w_1,w_2\in\Gamma^\ast\}
\]
%
%concatenation $\L(\Gamma^* x) \L \L(y\Gamma^*)$ of three languages $\L(\Gamma^* x)$, $\L$ and $\L(y\Gamma^*)$ over $\Gamma'$ is $\lang$-definable.
 is $\lang$-definable over $\Delta$.
}
\begin{proof}
 Let $\L = \L(\A)$, for a minimal DFA $\A=(Q,\Sigma,\delta, q_0,F)$. Let $tr$ be the trash state\footnote{A \emph{trash state} is a state from which no accepting state is reachable. A  minimal DFA can have at most one trash state.} in $\A$ if any. Given alphabets $\Gamma,\Delta$, consider the DFA 
 \[
 \A'=\bigl(Q\cup\{tr,q_0',f\},\Delta,\delta', q'_0,\{f\}\bigr),
 \]
  where $\delta'$ consists of the following transitions:
%\begin{description}
 %\item
 $(q,a,p)$ for $(q,a,p)\in \delta$, $p\ne tr$,
% \item
$(q,a,q_0')$ for $(q,a,tr)\in \delta$,
 %\item
 $(q_0',a,q_0')$ for $a\in\Gamma\setminus\{ x\}$,
 %\item
 $(q,x,q_0)$ for $q\in Q\cup\{q_0'\}$,
 %\item
 $(q,a, q_0')$ for $q\in Q$ and $a\in\Gamma\setminus(\Sigma\cup\{x,y\})$,
 %\item
 $(q,y, q_0')$ for $q\in Q\setminus F$,
 %\item
 $(q,y,f)$ for $q\in F$,
 %\item
 $(f,a, f)$ for $a\in\Gamma$,
%\item
$(q,a,tr)$ for  $q$ and $a\in\Delta\setminus\Gamma$,
% \item
 $(tr,a,tr)$ for $a\in\Delta$.
% \end{description}
The DFA $\A'$ is illustrated in the picture below, where a transition labelled by a set stands for the corresponding transitions for each element of that set, the transitions starting from the frame around $\A$ represent the corresponding  transitions from every state in $\A$, and the transitions from states in $\A$ to $tr$ (shown as the dashed arrow in the picture) are redirected to $q'_0$.
It is readily checked that $\L(\A') = \{w_1xwyw_2\mid w\in\L,\ w_1,w_2\in\Gamma^\ast\}$.
%=\L(\Gamma^* x) \L \L(y\Gamma^*)$.
\\
\centerline{
\begin{tikzpicture}[->,thick,scale=0.9,node distance=2cm, transform shape]
%\node[state] (si) {$s_i$};
%\node[above right =0.7cm of si](B){$\aut_{<,\equiv,\MOD}$};
%\node[above of=si2](A){$\A_{<,\equiv,\MOD}$};
\node[state,initial](q0p){$q_0'$};
%\node at (4,-3)[state](qm){$q_M$};
\node at (5,0)[state] (q0) {$q_0$};
%\node at (3,1.3)[state](qn){$q_N$};
%\node at (4,4)[state](qp){$q_P$};
\node at (6.5,0)[state](qi){$q_i$};
\node at (8,0)[state] (qf) {$q_{f}$};
%\node at (8,-0.5)[state] (qf2) {$q_{f_2}$};
\node at (11,0)[state,accepting](f){$f$};
\node at (7,-3)[state](tr){$tr$};
%\node[above right =0.5cm of q](A){$\A_i$};
\node[fit = (q0)(qf), basic box = black, header = $\A$] (A) {};
\draw
(q0p) edge [loop above,above] node {$\Gamma\setminus\{x\}$} (q0p)
(q0p) edge [above] node{$x$} (q0)
(q0p) edge [bend right=40,below left] node{$\Delta\setminus\Gamma$} (tr)
(q0) edge [loop below, below] node{$x$} (q0)
(q0) edge [bend left,above] node {$y$} (q0p)
(qi) edge [below] node {$x$} (q0)
(qi) edge [bend right, above] node {$y$} (q0p)
(qi) edge [bend right,dotted, left] node {$a$} (tr)
(qi) edge [bend left=60, below] node {$a$} (q0p)
(A) edge [bend right=40,above] node {$\Gamma\setminus(\Sigma\cup\{x,y\})$} (q0p)
(A) edge [right] node {$\Delta\setminus\Gamma$} (tr)
(qf) edge [above] node{$y$} (f)
(qf) edge [bend left, below] node{$x$} (q0)
(f) edge [loop above,above] node{$\Gamma$} (f)
(f) edge [bend left,below] node{$\quad\ \Delta\setminus\Gamma$} (tr)
%(qf2) edge [above] node{$\Sigma_\Xi'$} (qa)
(tr) edge [loop below,below] node {$\Delta$} (tr)
%(si2) edge [above] node{$a_1$} (q)
%(f) edge [above] node{$a_2$} (sj2)
%(q) edge [dotted,above] node{$ $} (f)
%(8) edge [ loop above] node{$\B$} (8)
;
\end{tikzpicture}
}
\\
We now show that $\L(\A)$ is $\lang$-definable iff the language $\L(\A')$ is $\lang$-definable. As the argument is effectively the same for all $\lang$, we only show it in one case.

($\Leftarrow$) If $\L(\A)$ is not $\FO(<)$-definable, then, by Theorem \ref{DFAcrit} $(i)$, there exist a state $q$, a number $k$, and a word $u\in\Sigma^*$ such that $q\not\simm \fu_u(q)$ and $q= \fu_{u^k}(q)$. One can readily check that the same $q$, $k$ and $u$ satisfy the same condition in $\A'$, and so $\L(\A')$.

 ($\Rightarrow$) If $\L(\A')$ is not $\FO(<)$-definable, then, by Theorem \ref{DFAcrit} $(i)$, there exist a state $q$, a number $k$, and a word $u\in\Delta^*$ such that $q\not\simm \fu_u(q)$ and $q= \fu_{u^k}(q)$. There are no transitions leaving $tr$ and the only transition leaving $f$ is to $tr$.  It follows that, when reading $u^k$ starting from $q$, $\A'$ can visit $f$ or $tr$. Suppose it visits $q_0'$. As the only way of leaving $q_0'$ not to $tr$ is via $x$, the word $u$ contains $x$. Let $u=u_1xu_2$. But then, for any $p\notin \{f, tr\}$, we have $\fu_u(p)=\fu_{u_2}(q_0)$, and so all $\fu_{u^i}(q)$ are the same, which is a contradiction. Thus, $\A'$ does not visit $q_0'$. It follows that $\fu_{u^i}(q)\in Q$ and $u\in\Sigma^*$. Then the same $q$, $k$, and $u$ satisfy the conditions of Theorem \ref{DFAcrit} $(i)$ for $\A$, and so $\L(\A)$ is not $\FO(<)$-definable.
\end{proof}

%**********

\subsection{Additional Axioms and Counters for the Proof of Theorem \ref{hornExpSpacehard}}\label{sec:LTL-genearal-app}

Below are the axioms describing the transitions of the automata $\A_i$.  
For $\A_0$, we use the axioms
\begin{align*}
&[\mathbb A=0] \wedge T\land\sharp\to [(\Rnext \mathbb A)=0] \wedge \Rnext{ Q}\wedge [(\Rnext\mathbb L)=0], \\
&[\mathbb A=0] \land Q\land [\mathbb L={0}]  \land (q_1,x_1) \to [(\Rnext \mathbb A)=0] \land \Rnext{Q}\land [(\Rnext\mathbb L)=1],\\
&\ldots\\
&[\mathbb A=0] \land Q\land [\mathbb L=n-1]  \land x_n\to [(\Rnext \mathbb A)=0] \land \Rnext{Q}\land [(\Rnext\mathbb L)=n],\\
&[\mathbb A=0] \land Q\land [\mathbb L > n-1]\land [\mathbb L{< N}]  \land \B\to [(\Rnext \mathbb A)=0] \land \Rnext{Q}\land [(\Rnext\mathbb L)=\mathbb L+1],\\
&[\mathbb A=0] \land  Q\land[\mathbb L=N]\land \sharp\to [(\Rnext \mathbb A)=0] \land \Rnext{P}\land [(\Rnext\mathbb L)=0], \\
&[\mathbb A=0] \land  P\land[\mathbb L=0]\land a\to [(\Rnext \mathbb A)=0] \land \Rnext{P_{\sharp\sharp}}, \quad \text{for $a\neq(\qa,\B),\sharp,\flat$}, \\
&[\mathbb A=0] \land  P\land[\mathbb L=0]\land (\qa,\B)\to [(\Rnext \mathbb A)=0] \land \Rnext{P}\land [(\Rnext\mathbb L)=1], \\
&[\mathbb A=0] \land  P_{\sharp\sharp}\land a\to [(\Rnext \mathbb A)=0] \land \Rnext{P_{\sharp\sharp}}, \quad \text{for $a\neq\sharp$}, \\
&[\mathbb A=0] \land  P_{\sharp\sharp}\land \sharp \to [(\Rnext \mathbb A)=0] \land \Rnext{P}\land [(\Rnext\mathbb L)=0], \\
&[\mathbb A=0] \land  P\land[\mathbb L>0]\land[\mathbb L<N]\land \B\to [(\Rnext \mathbb A)=0] \land \Rnext{P}\land [(\Rnext\mathbb L)=\mathbb L+1)], \\
&[\mathbb A=0] \land  P\land[\mathbb L=N]\land \flat\to [\mathbb A=0]\land \Rnext F.
\end{align*}
For $\A_i$ with $0<i\leq N$ and $a,b,c\in\Sigma'\setminus\{\sharp,\flat\}$, we need the axioms
\begin{align*}
&[\mathbb A=1]\land[\mathbb A <N+1]\land T\land \sharp\to [(\Rnext\mathbb A)=\mathbb A]\land \Rnext{ R_\sharp}, \\
&[\mathbb A>1]\land[\mathbb A <N+1]\land T\land \sharp\to [(\Rnext\mathbb A)=\mathbb A]\land \Rnext{ Q}\land [(\Rnext\mathbb L)=\mathbb A-1], \\
&[\mathbb A>1]\land[\mathbb A <N+1]\land Q\land[\mathbb L>1]\land a\to [(\Rnext\mathbb A)=\mathbb A]\land \Rnext{ Q}\land [(\Rnext\mathbb L)=\mathbb L-1], \\
&[\mathbb A>0]\land[\mathbb A <N+1]\land Q\land[\mathbb L={1}]\land a \to [(\Rnext\mathbb A)=\mathbb A]\land \Rnext{ R_a},\\
&[\mathbb A>0]\land[\mathbb A<N+1]\land R_a\land b\to [(\Rnext\mathbb A)=\mathbb A]\land \Rnext{ R_{ab}},\\
&[\mathbb A>0]\land[\mathbb A<N]\land R_{ab}\land c \to [(\Rnext\mathbb A)=\mathbb A]\land \Rnext{ Q_{\gamma(a,b,c)}}\land\Rnext  [\mathbb L=\mathbb A+1], \\
%&[\mathbb A>0]\land[\mathbb A=N]\land R_{ab}\land c \to [(\Rnext\mathbb A)=\mathbb A]\land \Rnext{ Q_{z}}\land\Rnext  [\mathbb L=\mathbb A+1], \\
&[\mathbb A=N]\land R_{ab}\land \sharp \to [(\Rnext\mathbb A)=\mathbb A]\land \Rnext{ P_{\gamma(a,b,\sharp)}}\land\Rnext  [\mathbb L=N-1], \\
&[\mathbb A >0]\land[\mathbb A<N+1]\land Q_{a}\land[\mathbb L<N]\land b\to [(\Rnext\mathbb A)=\mathbb A]\land \Rnext{ Q_{a}}\land [(\Rnext\mathbb L)=\mathbb L+1],\\
&[\mathbb A =1]\land Q_{a}\land[\mathbb L=N]\land \sharp\to [(\Rnext\mathbb A)=\mathbb A]\land \Rnext{ P_{\sharp a}}\\
&[\mathbb A >1]\land[\mathbb A<N+1]\land Q_{a}\land[\mathbb L=N]\land \sharp\to [(\Rnext\mathbb A)=\mathbb A]\land \Rnext{ P_{a}}\land [(\Rnext\mathbb L)=\mathbb A-1]\\
&[\mathbb A >1]\land[\mathbb A<N+1]\land P_{a}\land[\mathbb L>1]\land b\to [(\Rnext\mathbb A)=\mathbb A]\land \Rnext{ P_{a}}\land [(\Rnext\mathbb L)=\mathbb L-1],\\
&[\mathbb A>0]\land[\mathbb A<N+1]\land P_{a}\land[\mathbb L={1}]\land b\to [(\Rnext\mathbb A)=\mathbb A]\land \Rnext{ P_{ba}} \\
%&[\mathbb A>0]\land[\mathbb A<N+1]\land Q_{a}\land[\mathbb L={N}]\land \sharp\to [(\Rnext\mathbb A)=\mathbb A]\land \Rnext{ Q_{\sharp a}}\land[(\Rnext  \mathbb L)=0], \\
&[\mathbb A>0]\land[\mathbb A<N+1]\land P_{ab}\land a\to [(\Rnext\mathbb A)=\mathbb A]\land \Rnext{ R_{ab}},\\
&[\mathbb A>0]\land[\mathbb A<N]\land Q_b\land[\mathbb L=N]\land \flat\to [(\Rnext\mathbb A)=\mathbb A]\land \Rnext F, \\
&[\mathbb A=N]\land R_{ab}\land \flat\to [(\Rnext\mathbb A)=\mathbb A]\land \Rnext F.
\end{align*}
To calculate the value of $j$ in the construction of $\TO_{\MOD}$, we use the following counters, formulas, and axioms.

For two counters $\mathbb X$ and $\mathbb Y$, set
$$
[\mathbb X=\mathbb Y/2] ~=~ X_k^0\land\bigwedge_{l=2}^k\left((Y_l^0\to X_{l-1}^0)\land(Y_l^1\to X_{l-1}^1)\right).
$$
We have $\I,n \models [\mathbb X=\mathbb Y/2]$ iff the values $x$ of $\mathbb X$ and $y$ of $\mathbb Y$ at $n$ in $\I$ satisfy $x=\lfloor y/2\rfloor$.
We define three new counters $\mathbb C^=_{\mathbb X\mathbb Y}$, $\mathbb C^-_{\mathbb X\mathbb Y}$, and $\mathbb C^+_{\mathbb X\mathbb Y}$, which come with the following axioms, for all $\iota_1,\iota_2,\iota_3\in\{0,1\}$, that should be added to the ontology:
\begin{align*}
&X_i^{\iota_1}\land Y_i^{\iota_2}\to (C^=_{\mathbb X\mathbb Y})_i^{(\iota_1+\iota_2+1)\!\!\!\!\mod 2},& \text{ for all $i\in[1,k]$,}\\
&X_1^{\iota_1}\land Y_1^{\iota_2}\to (C^+_{\mathbb X\mathbb Y})_1^{0}, &\\
&X_{i-1}^{\iota_1}\land Y_{i-1}^{\iota_2}\land(C^+_{\mathbb X\mathbb Y})_{i-1}^{\iota_3}\to (C^+_{\mathbb X\mathbb Y})_i^{(\iota_1\iota_2+\iota_1\iota_3+\iota_2\iota_3)\!\!\!\!\mod 2}, &\text{ for all $i\in[2,k]$,}\\
&X_1^{\iota_1}\land Y_1^{\iota_2}\to (C^-_{\mathbb X\mathbb Y})_1^{0}, &\\
&X_{i-1}^{\iota_1}\land Y_{i-1}^{\iota_2}\land(C^-_{\mathbb X\mathbb Y})_{i-1}^{\iota_3}\to (C^-_{\mathbb X\mathbb Y})_i^{(\iota_1\iota_2+\iota_1\iota_3+\iota_2\iota_3+\iota_2+\iota_3)\!\!\!\!\mod 2}, &\text{ for all $i\in[2,k]$.}
\end{align*}
Define the following formulas, where $\mathbb W,\mathbb X,\mathbb Y$ are some counters:
\begin{align*}
[\mathbb X>\mathbb Y]&=\bigvee_{i=1}^k \big(X_i^1\land Y_i^0\land\bigwedge_{j=i+1}^k(C^=_{\mathbb X\mathbb Y})_i^{1}\big),\\
[\mathbb X\ge\mathbb Y]&=[\mathbb X>\mathbb Y]\lor\bigwedge_{i=1}^k(C^=_{\mathbb X\mathbb Y})_i^{1},\\
[\mathbb W=\mathbb X+\mathbb Y]&=\bigwedge_{i=1}^k \bigwedge_{\iota_{1,2,3}\in\{0,1\}}\big(X_i^{\iota_1}\land Y_i^{\iota_2}\land(C^+_{\mathbb X\mathbb Y})_{i}^{\iota_3}\to W_i^{\iota_1+\iota_2+\iota_3\!\!\!\!\mod 2}\big),\\
[\mathbb W=\mathbb X-\mathbb Y]&=\bigwedge_{i=1}^k \bigwedge_{\iota_{1,2,3}\in\{0,1\}}\big(X_i^{\iota_1}\land Y_i^{\iota_2}\land(C^-_{\mathbb X\mathbb Y})_{i}^{\iota_3}\to W_i^{\iota_1+\iota_2+\iota_3\!\!\!\!\mod 2}\big).
\end{align*}
We have $\I,n \models [\mathbb X>\mathbb Y]$, $\I,n \models [\mathbb X\ge\mathbb Y]$, $\I,n \models [\mathbb W=\mathbb X+\mathbb Y]$, or $\I,n \models [\mathbb W=\mathbb X-\mathbb Y]$  iff the values $x$ of $\mathbb X$, $y$ of $\mathbb Y$, and $w$ of $\mathbb W$ at $n$ in $\I$ satisfy, respectively, the following conditions: $x>y$, $x\ge y$, $w=x+y$ for $x+y<2^k$, and $w=x-y$ for $x\ge y$.

In our ontology $\TO_\MOD$, we use counters $\mathbb U_l$, $\mathbb V_l$, $\mathbb R_l$, $\mathbb R^{+}_l$, $\mathbb R^-_l$, $\mathbb S_l$, $\mathbb S^-_l$, $\mathbb S^{+}_l$, $\mathbb D_l$, $\mathbb G_l$, $\mathbb H_l$, for $l\in[0,\dots,2k]$, along with some auxiliary counters $\mathbb C_{\mathbb X\mathbb Y}$. Intuitively, the counters with the index $l$ hold the values of the corresponding expressions after the $l$-th step of the algorithm according to the table below:\\[5pt]
\centerline{
\begin{tabular}{l|l} \toprule
$\mathbb U_l,\mathbb V_l,\mathbb R_l,\mathbb S_l$&$u,v,r,s$\\
%\hline
$\mathbb R^{+}_l,\mathbb S^{+}_l$&$r+\pp,s+\pp$\\
%\hline
$\mathbb R^{-}_l,\mathbb S^{-}_l$&$-r\!\!\!\mod \pp,-s\!\!\!\mod \pp$\\
$\mathbb D_l$&$|u-v|$\\
$\mathbb G_l$& the even number from the pair $((r-s)\!\!\!\mod \pp)$, $((r-s)\!\!\!\mod \pp)+\pp$\\
$\mathbb H_l$& the even number from the pair $((s-r)\!\!\!\mod \pp)$, $((s-r)\!\!\!\mod \pp)+\pp$\\
\bottomrule
\end{tabular}}
\\[5pt]
We add the following axioms (simulating the algorithm) to the ontology $\TO_{\MOD}$:% constructed so far:
\begin{align*}
&[\mathbb A >0]\land[\mathbb A <\pp]\land S\land \natural \to [\mathbb U_0=\pp] \land [\mathbb V_0=\mathbb A]\land[\mathbb R_0=0]\land[\mathbb S_0=1],\\
&[\mathbb U_l>\mathbb V_l]\to[\mathbb D_l=\mathbb U_l-\mathbb V_l],\\
&[\mathbb V_l\ge\mathbb U_l]\to[\mathbb D_l=\mathbb V_l-\mathbb U_l],\\
&[\mathbb R^{+}_l=\mathbb R_l+\mathbb U_0]\land[\mathbb R^{-}_l=\mathbb U_0-\mathbb R_l]\land[\mathbb S^{+}_l=\mathbb S_l+\mathbb U_0]\land[\mathbb S^{-}_l=\mathbb U_0-\mathbb S_l],%\\
\end{align*}
\begin{align*}
&[\mathbb R_l\ge\mathbb S_l]\land(((R_l)_1^0\land(S_l)_1^0)\vee((R_l)_1^1\land(S_l)_1^1))\to [\mathbb G_l=\mathbb R_l-\mathbb S_l]\land[\mathbb H_l=\mathbb S^+_l+\mathbb R^-_l],\\
&[\mathbb R_l\ge\mathbb S_l]\land(((R_l)_1^1\land(S_l)_1^0)\vee((R_l)_1^0\land(S_l)_1^1))\to [\mathbb G_l=\mathbb R_l+\mathbb S^-_l]\land[\mathbb H_l=\mathbb S^+_l-\mathbb R_l],\\
&[\mathbb S_l>\mathbb R_l]\land(((R_l)_1^0\land(S_l)_1^0)\vee((R_l)_1^1\land(S_l)_1^1))\to [\mathbb G_l=\mathbb R^+_l+\mathbb S^-_l]\land[\mathbb H_l=\mathbb S_l-\mathbb R_l],\\
&[\mathbb S_l>\mathbb R_l]\land(((R_l)_1^1\land(S_l)_1^0)\vee((R_l)_1^0\land(S_l)_1^1))\to [\mathbb G_l=\mathbb R^+_l-\mathbb S_l]\land[\mathbb H_l=\mathbb S_l+\mathbb R^-_l],\\
%&[\mathbb R_l\ge\mathbb S_l]\to[\mathbb E_l=\mathbb R-\mathbb S]\\
%&[\mathbb S_l>\mathbb R_l]\to[\mathbb E_l=\mathbb R^{+N'}-\mathbb S]\\
%&(R_l)_1^0\to[\mathbb R_l'=\mathbb R_l/2]\\
%&(R_l)_1^1\to[\mathbb R_l'=\mathbb R^{+N'}_l/2]\\
%&(S_l)_1^0\to[\mathbb S_l'=\mathbb S_l/2]\\
%&(S_l)_1^1\to[\mathbb S_l'=\mathbb S^{+N'}_l/2]\\
%
&[\mathbb V_l>0]\land(V_l)_1^0\land(S_l)_1^0\to [\mathbb U_{l+1}=\mathbb U_l]\land[\mathbb V_{l+1}=\mathbb V_l/ 2]\land[\mathbb R_{l+1}=\mathbb R_l]\land[\mathbb S_{l+1}=\mathbb S_l/2],\\
&[\mathbb V_l>0]\land(V_l)_1^0\land(S_l)_1^1\to [\mathbb U_{l+1}=\mathbb U_l]\land[\mathbb V_{l+1}=\mathbb V_l/2]\land[\mathbb R_{l+1}=\mathbb R_l]\land[\mathbb S_{l+1}=\mathbb S^+_l/2],\\
&(V_l)_1^1\land(U_l)_1^0\land(R_l)_1^0\to [\mathbb U_{l+1}=\mathbb U_l/2]\land [\mathbb V_{l+1}=\mathbb V_l]\land[\mathbb R_{l+1}=\mathbb R_l/2]\land[\mathbb S_{l+1}=\mathbb S_l],\\
&(V_l)_1^1\land(U_l)_1^0\land(R_l)_1^1\to [\mathbb U_{l+1}=\mathbb U_l/2]\land [\mathbb V_{l+1}=\mathbb V_l]\land[\mathbb R_{l+1}=\mathbb R^+_l/2]\land[\mathbb S_{l+1}=\mathbb S_l],\\
&(V_l)_1^1\land(U_l)_1^1\land[\mathbb U_l> \mathbb V_l]\to [\mathbb U_{l+1}=\mathbb D_l/ 2]\land[\mathbb V_{l+1}=\mathbb V_l]\land[\mathbb R_{l+1}=\mathbb H_l/2]\land[\mathbb S_{l+1}=\mathbb S_l],\\
&(V_l)_1^1\land(U_l)_1^1\land[\mathbb V_l\ge \mathbb U_l]\to [\mathbb U_{l+1}=\mathbb U_l]\land[\mathbb V_{l+1}=\mathbb D_l/ 2]\land[\mathbb R_{l+1}=\mathbb R_l]\land[\mathbb S_{l+1}=\mathbb G_j/2],\\
&[\mathbb V_l=0]\to[\mathbb J=\mathbb R^-_l].
\end{align*}

\subsection{Proof of Lemma~\ref{th:derivations-runs}}\label{lem:23}

\noindent
\textbf{Lemma~\ref{th:derivations-runs}.}
\emph{Let $\Abox \in \Sigma_\Xi^*$ be of the form $\emptyset^N \mathcal B \emptyset^N$. % and $\TO$ a linear $\LTL_\horn\Xnext$ ontology.
Then
%
%$(i)$ $(\TO, \Abox)$ is inconsistent iff there exists a run $(q_0, 0), \dots, (q_\bot, i)$ of $\A_\TO^\Xi$ on $\Abox$\textup{;}
%
%$(ii)$
$A \in \tp_{\TO, \mathcal A}^{\sig(\TO)}(\ell)$ iff there exists a run $(q_0, 0), \dots, (q, \ell), (q_A, i)$ of $\A_\TO^\Xi$ on $\Abox$, for all $\ell$ with $N \leq \ell < |\mathcal A| - N$.
}
\begin{proof}
We call a sequence $\mathfrak D$ of the form
\begin{multline}
(C_1^0 \land \dots \land C_{k_0}^0 \to A_1, n_1), (C_1^1 \land \dots \land C_{k_1}^1 \land \nxt^{i_1} A_1 \to A_2, n_2), \dots,\\
\label{eq:deriv} (C_1^m \land \dots \land C_{k_m}^m \land \nxt^{i_m} A_{m} \to A, n_{m+1})
\end{multline}
a \emph{derivation} of $A$ from $\TO$ and $\Abox$ if the axioms are from $\TO$ and the numbers $n_1, \dots, n_m, n_{m+1}$ are such that $n_{j+1} = n_j + i_j$ and $\Abox \models C_1^j \land \dots \land C_{k_j}^j(n_{j+1})$. We say that such a derivation ends at $n$ if $n_{m+1} = n$. It is straightforward to verify that %(a) $(\TO, \Abox)$ is inconsistent if there is a derivation of $\bot$; (b) if $(\TO, \Abox)$ is consistent, then
$A \in \tp_{\TO, \Abox}^{\sig(\TO)}(\ell)$ iff there is a derivation of $A$ at $\ell$, for any $\ell \in \Z$.

%It can already be seen that the derivations above are similar to run of the automaton $\A_{\TO}^{\Xi}$...

Let $\Abox$ be of the form $\emptyset^N \mathcal B \emptyset^N$. We now show that, for any $\ell$ with $N \leq \ell < |\mathcal A|-N$,
\begin{multline}\label{eq:der-prop}
  \text{if there is a derivation of } A \text{ at } \ell, \text{ then there is a derivation of } A \text{ at } \ell \\
  \text{ such that }0 \leq n_j < |\Abox| \text{ for all } n_j \text{ in it}.
\end{multline}

%there is a derivation of $\bot$ iff there is a derivation~\eqref{eq:deriv} of $\bot$ such that $0 \leq n_j < |\Abox|$, for all numbers $n_j$ in this derivation; (b$'$)

%for any $N \leq \ell < |\mathcal A|-N$,
%if\nz{??} is a derivation of $A$ at $\ell$, then there is a derivation~\eqref{eq:deriv} of $A$ at $\ell$ such that $0 \leq n_j < |\Abox|$, for all numbers $n_j$ in this derivation.
%To show \emph{(a')} and \emph{(b')} the following proposition is required:
%
\begin{proposition}\label{th:short-ders}
Let $\mathfrak D_1$, $\mathfrak D_2$, $\mathfrak D_3$ be derivations from $\TO$ and $\Abox$ of the form\textup{:}
\begin{align*}
& \mathfrak D_1 =  \dots, (C_1 \land \dots \land C_{k} \land \nxt^{i} A \to A_0, n_0),\\
& \mathfrak D_2 = (\nxt^{i_0} A_0 \to A_1, n_1), \dots, (\nxt^{i_{m-1}} A_{m-1} \to A_m, n_m),\\
& \mathfrak D_3 = (C_1' \land \dots \land C_{k'}' \land \nxt^{i} A_m \to A_{m+1}, n_{m+1}), \dots
\end{align*}
%
%with $k, k' > 0$.
%
%\textup{(b)} If $\mathfrak D_1 \mathfrak D_2$ is a derivation of $A$, then there exists a derivation $\mathfrak D_1 \mathfrak D_2'$ of $A$ from $\TO$ and $\Abox$ such that $|n_j - n_0| \leq 2 M^2$ for all numbers $n_j$ in $\mathfrak D_2'$.
%\textup{(c)}
If $\mathfrak D_1 \mathfrak D_2 \mathfrak D_3$ is a derivation of $A$ at $\ell$, then there is a derivation $\mathfrak D_1 \mathfrak D_2' \mathfrak D_3$ of $A$ at $\ell$ from $\TO$ and $\Abox$ such that $\min \{ n_0, n_{m+1} \} - 2M^2 \leq n_j  \leq \max \{ n_0, n_{m+1} \} + 2M^2$ for all $n_j$ in $\mathfrak D_2'$.
\end{proposition}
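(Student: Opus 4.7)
The plan is to prove the proposition by a pumping argument on $\mathfrak D_2$, exploiting the fact that every axiom used in $\mathfrak D_2$ is of the form $\nxt^i A \to A'$ with $i \in \{-1, 0, 1\}$ (per the normal form of Remark~\ref{rem1}) and involves only IDB atoms. I would view these axioms as edges $A \xrightarrow{i} A'$ in a labeled dependency digraph $G_\TO$ on $\textit{idb}(\TO)$. Since each edge with $i \neq 0$ uses exactly one occurrence of $\Rnext$ or $\Lnext$ in $\TO$, the graph $G_\TO$ has at most $M$ edges of nonzero label. The chain $\mathfrak D_2$ then induces a walk $W$ in $\textit{idb}(\TO) \times \mathbb Z$ from $(A_0, n_0)$ to $(A_m, n_m)$, where $n_m = n_{m+1} - i$ is fixed by the first axiom of $\mathfrak D_3$.

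The first step is to reduce to a \emph{simple} walk: whenever $(A_j, n_j) = (A_k, n_k)$ for some $j < k$, excise the steps $j+1, \ldots, k$. This cut-and-paste operation preserves both endpoints and the consistency of the chain, so after finitely many such reductions $W$ visits each pair $(A, n) \in \textit{idb}(\TO) \times \mathbb Z$ at most once.

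The main step is a double pigeonhole argument bounding the excursion of $W$ outside $[L, H] = [\min\{n_0, n_{m+1}\}, \max\{n_0, n_{m+1}\}]$. Suppose $W$ reaches maximum height $H + r$ with $r > 2M^2$; excursions below $L$ are treated symmetrically. The ascending portion of $W$ from its last pre-excursion visit to $H$ up to height $H + r$ uses $r$ up-moves, each an application of one of the $\leq M$ up-edges of $G_\TO$. A first pigeonhole yields an up-edge $A \xrightarrow{1} A'$ applied at more than $M$ distinct heights during the ascent, giving more than $M$ visits $(A, p_1), (A, p_2), \ldots$ to the atom $A$ along the ascent; the subwalks between consecutive visits constitute positive cycles at $A$ in $G_\TO$. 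A second pigeonhole, now applied to the $\leq M$ down-edges used on the descent, similarly produces a batch of negative cycles at some atom $B$. Matching a positive cycle at $A$ with a negative cycle at $B$ whose displacements cancel, then excising both from $W$ and connecting through a short detour given by a simple path in $G_\TO$, produces a strictly shorter simple walk with the same endpoints, contradicting the simplicity of $W$.

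The main obstacle I anticipate is the matching-and-cancellation step, since the naive pigeonhole only yields cycles of some positive and some negative displacements, not necessarily of matching magnitudes. To obtain the clean bound $2M^2$ I expect to need a refined pigeonhole on the pairs (source-atom, up-edge) and (source-atom, down-edge), giving at most $M \cdot M$ possibilities in each direction, together with the observation that any closed excursion must have balanced up- and down-moves so that matching cycles can be extracted and linearly combined. Stationary axioms ($i = 0$) must also be accounted for, but they are controlled by simplicity: at each fixed position the chain of stationary steps visits pairwise distinct atoms, so contributes at most $|\textit{idb}(\TO)|$ steps per level and cannot inflate the excursion height.
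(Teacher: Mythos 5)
Your reduction to a simple walk (excising the segment between two occurrences of the same pair $(A_j,n_j)$) is sound, but it only covers the degenerate case of what is actually needed: a simple walk can still climb arbitrarily high, e.g.\ by iterating a cycle of $G_\TO$ with positive total displacement, since all the $(\text{atom},\text{position})$ pairs it visits are then distinct. So the whole burden falls on your cycle-cancellation step, and that is where the genuine gap lies. The pigeonhole you describe yields a batch of positive cycles at some atom $A$ on the ascent and negative cycles at some atom $B$ on the descent, but, as you yourself note, nothing forces their displacements to match, and the proposed repairs do not close this. Balancing displacements by integer linear combination would require excising a number of copies of each cycle governed by an lcm/B\'ezout computation, and the walk need not contain that many copies; and excising cycles anchored at different atoms and ``connecting through a short detour'' shifts every subsequent position by the net excised displacement, so unless the positive and negative excisions cancel exactly, the endpoint $(A_m,n_m)$ moves and the junction with $\mathfrak D_3$ breaks.

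The paper's proof avoids the matching problem by pigeonholing on \emph{positions} rather than on edges. Within the excursion outside $[\min\{n_0,n_{m+1}\},\max\{n_0,n_{m+1}\}]$, each visited position $n$ carries the signature (first atom derived at $n$, last atom derived at $n$); for positions above the maximum the first is the target of an up-edge and the last the source of a down-edge (symmetrically below the minimum), so there are at most $M^2$ signatures on each side. If two distinct positions $n_1\neq n_2$ share a signature, the derivation contains the nested pattern $(A,n_1)\to^*(A,n_2)\to^*(B,n_2)\to^*(B,n_1)$; one excises the two outer segments and shifts the middle segment by $n_1-n_2$. The two excised pieces automatically have opposite displacements $\pm(n_2-n_1)$ because they are anchored at the same pair of positions, so no matching is needed, and the shift is legitimate precisely because the $\mathfrak D_2$-axioms $\nxt^{i}A\to A'$ carry no ABox-dependent conjuncts. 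Iterating until all signatures are distinct gives the $2M^2$ bound. To salvage your argument, replace the edge pigeonhole by this position-signature pigeonhole.
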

\begin{proof}
%$(b)$ Suppose $\mathfrak D_1 \mathfrak D_2$ is a derivation of $\bot$ and $\mathfrak D_2$ is such that $A_{j+k} = A_j$, for some $i$ and $k$. Then, clearly, $\mathfrak D_1 \mathfrak D_2'$ is also a derivation of $\bot$, where
%
%\begin{align*}
%\mathfrak D_2' = (\nxt^{i_0} A_0 \to A_1, n_1), \dots, (\nxt^{i_{j-1}} A_{j-1} \to A_{j}, n_{j}), (\nxt^{i_{j+k}} A_{j+k} \to A_{j+k+1}, n_{j+k+1} - d),\\ (\nxt^{i_{m-1}} A_{m-1} \to L_{m}, n_{m} - d),
%\end{align*}
%
%and $d = n_{j+k} - n_j$. Using this fact, we pick such $A_j$ and $A_{j+k}$ that there exists no pair $A_{j - j'} = A_{j+k+k'}$ in $\mathfrak D_2$. Then $|n_j - n_0| \leq M \leq 2M^2$ for all $n_j$ in $\mathfrak D_2'$.
%
%$(d)$
Suppose $n_{m+1} > n_0$ (the opposite case is analogous). Let $j$ be the earliest number in $\mathfrak D_2$ such that
\begin{itemize}
\item either $n_j = n_{m+1}$ and $n_{j+k} = n_{m+1}$ for some $k \geq 0$,

\item or $n_j = n_{0}$ and $n_{j+k} = n_{0}$ for some $k \geq 0$.
\end{itemize}
If there is no such $j$, then Proposition~\ref{th:short-ders} holds with $\mathfrak D_2' = \mathfrak D_2$. Suppose the former case holds for the earliest $j$. Let $\mathfrak D_2 = \mathfrak D_4 \mathfrak D_5 \mathfrak D_6$, where $\mathfrak D_5$ is the subsequence of $\mathfrak D_2$ between $j$ (not inclusive) and $j+k$. Consider any quadruple $((A_{j'}, n_{j'}),(A_{j''}, n_{j''}), (A_{k''}, n_{k''}), (A_{k'}, n_{k'}))$ in $\mathfrak D_5$ with $j' \leq  j'' \leq k'' \leq k'$, $n_{j'} = n_{k'}$, $n_{j''} = n_{k''}$, $A_{j'} = A_{j''}$ and $A_{k'} = A_{k''}$. Clearly, $\mathfrak D_1 (\mathfrak D_4 \mathfrak D_5' \mathfrak D_6) \mathfrak D_3$ is also a derivation $A$ at $\ell$ from $\TO$ and $\Abox$, where
\begin{multline*}
\mathfrak D_5' = (\nxt^{i_j} A_j \to A_{j+1}, n_{j+1}), \dots, (\nxt^{i_{j'-1}} A_{j'-1} \to A_{j'}, n_{j'}), \\
\shoveright{(\nxt^{i_{j''}} A_{j''} \to A_{j''+1}, n_{j''+1} - d),\dots}\\
\shoveleft{\quad\quad\quad(\nxt^{i_{k''-1}} A_{k''-1} \to A_{k''}, n_{k''} - d), (\nxt^{i_{k'}} A_{k'} \to A_{k'+1}, n_{k'+1}), \dots,}\\
(\nxt^{i_{j+k-1}} A_{j+k-1} \to A_{j+k}, n_{j+k})
\end{multline*}
and $d = n_{j''} - n_{j'}$. After recursively applying to $\mathfrak D_5$ the transformation above for each quadruple $((A_{j'}, n_{j'}),(A_{j''}, n_{j''}), (A_{k''}, n_{k''}), (A_{k'}, n_{k'}))$, we obtain $\mathfrak D_5'$. It is easy to check that there exist no $n_1 \neq n_2$ and atoms $A,B$ such that $(\nxt^{i_1} A_{1} \to A, n_1), \dots, (\nxt^{i_2} A_{2} \to B, n_1)$ and $(\nxt^{i_3} A_{3} \to A, n_2), \dots, (\nxt^{i_4} A_{4} \to B, n_2)$ are in $\mathfrak D_5'$. Therefore, $|n_{j'} - n_{m+1}| \leq 2M^2$ for all numbers $n_{j'}$ in $\mathfrak D_5'$. If the latter case holds for the earliest $j$, we can transform the subsequence $\mathfrak D_5$ of $\mathfrak D_2$ between $j$ (not inclusive) and $j+k$ into the subsequence $\mathfrak D_5'$ with all numbers $|n_{j'} - n_{0}| \leq 2M^2$. Then we find $j$ in $\mathfrak D_6$ satisfying one of the two cases above and transform $\mathfrak D_6$ analogously. We proceed until there are no more $j$ satisfying either of the two cases and the result $\mathfrak D_2'$ of the transformation is as required by the proposition.
\end{proof}

To show~\eqref{eq:der-prop}, consider a derivation $\mathfrak D$ of $A$ at $\ell$, for $N \leq \ell < |\mathcal A|-N$, with the numbers $n_j$.
Take the first $n_j$ such that $n_j \geq |\mathcal B|+M$ or $n_j < 2M^2$. Suppose the former is the case. Since $\Abox_i = \emptyset$ for $|\emptyset^N \mathcal B| \leq i < |\Abox|$, there are $n_{j'}$, for $j' < j$,  such that $2M^2 \leq  n_{j'} < |\mathcal B|+M$ and a (sub)sequence $(\nxt^{i_{j'}} A_{j'} \to A_{j' + 1}, n_{j'+1}), \dots,$ $(\nxt^{i_{j-1}} A_{j-1} \to A_{j}, n_j)$ is in $\mathfrak D$. We expand this subsequence by taking all $(\nxt^{i_{j}} A_{j} \to A_{j+1}, n_j), \dots, (\nxt^{i_{j''-1}} A_{j''-1} \to A_{j''}, n_{j''})$, such that $j''$ is the first after $j$ such that $n_{j''}=n_{j'}$. Let $\mathfrak D = \mathfrak D_1 \mathfrak D_2 \mathfrak D_3$, where $\mathfrak D_2$ is the expanded sequence above. By applying Proposition~\ref{th:short-ders}, we obtain a derivation $\mathfrak D_1 \mathfrak D_2' \mathfrak D_3$ of $A$ at $\ell$, where all numbers $n_j$ in $\mathfrak D_1 \mathfrak D_2'$ are such that $2M^2 \leq n_j \leq n_{j'}+2M^2 < |\Abox|$. In case $n_j < 2M^2$, we analogously obtain a derivation of $A$ at $\ell$, where all numbers $n_j$ in $\mathfrak D_1 \mathfrak D_2'$ are such that $0 \leq n_{j'} - 2M^2 \leq n_j < |\mathcal B|+M$. By continuing to apply Proposition~\ref{th:short-ders} to $\mathfrak D_3$ the required number of times, we obtain a derivation of $A$ at $\ell$ satisfying~\eqref{eq:der-prop}.

%Observe that, since $\Abox_i = \emptyset$ for $|\emptyset^N \mathcal B| \leq i < |\Abox|$ (respectively, for $0 \leq i < N$), $n_j \geq |\mathcal B|+M$ (respectively, $n_j < 2M^2$) implies that there exists $n_{j'}$, for $j' < j$,  such that $2M^2 \leq  n_{j'} < |\mathcal B|+M$ and a (sub)sequence $(\nxt^{i_{j'}} A_{j'} \to A_{j' + 1}, n_{j'+1}), \dots,$ $(\nxt^{i_{j-1}} A_{j-1} \to A_{j}, n_j)$ is in $\mathfrak D$. Suppose, first, that there exists no $j'' > j$ such that $(C_1 \land \dots \land C_{k} \land \nxt^{i} A_{j''} \to A_{j''+1}, n_{j''+1})$ is further in $\mathfrak D$ for some $k \geq 1$. Then, by Proposition~\ref{th:short-ders} (c), there is a derivation $\mathfrak D$ of $\bot$ with all numbers $n_j$, for $j \geq j'$, satisfying $n_{j'} - n_j \leq 2M^2$ and, therefore, $n_j < |\mathcal B|+M+2M^2 = |\Abox|$ (respectively, $n_j \geq 0$).
%If $j'' > j$ satisfying the conditions above exists, the proof is similar using Proposition~\ref{th:short-ders} (d). The proof of (b$'$) is similar using Proposition~\ref{th:short-ders} (d).

This completes the proof of Lemma~\ref{th:derivations-runs} as, clearly, for any $\ell$ with $N \leq \ell < |\mathcal A| - N$,  there is a run $(q_0, 0), \dots, (q, \ell), (q_A, i)$ of $\A_\TO^\Xi$ on $\Abox$ iff there is a derivation of  $A$  at  $\ell$ such that $0 \leq n_j < |\Abox|$  for all $n_j$ in it.
\end{proof}

%****************

\subsection{Proof of Theorem~\ref{th:foe-horn-crit}}\label{le:26}

\noindent
\textbf{Theorem~\ref{th:foe-horn-crit}.}
\emph{Let $\q = (\TO, \varkappa)$ be an OMPQ with a $\bot$-free $\LTL_\horn\Xallop$-ontology $\TO$. Then $\q$ is not $\FO(<,\equiv)$-rewritable over $\Xi$-Aboxes iff there are $\Abox, \mathcal{B}, \mathcal{D} \in \Sigma_\Xi^*$ and $k \geq 2$ such that $(i)$ and $(ii)$ from Theorem~\ref{th:fo-horn-crit} hold and
there are $\mathcal{U}, \mathcal{V} \in \Sigma_\Xi^*$ such that $\mathcal B = \mathcal V \mathcal U$, $|\mathcal{U}|= |\mathcal{V}|$,
\begin{description}
\item[$(iii)$]
 $\tp_{\TO, \Abox \mathcal{B}^{k} \mathcal{D}}(|\Abox \mathcal{B}^i|-1) = \tp_{\TO, \Abox \mathcal{B}^{k} \mathcal{D}}(|\Abox \mathcal{B}^i \mathcal V|-1)$, for all $i < k$, and
\item[$(iv)$] %either $(\TO, \Abox \mathcal{B}^{k+1} \mathcal{D})$ is inconsistent or
 $\tp_{\TO, \Abox \mathcal{B}^{k+1} \mathcal{D}}(|\Abox \mathcal{B}^i|-1) = \tp_{\TO, \Abox \mathcal{B}^{k+1} \mathcal{D}}(|\Abox \mathcal{B}^i \mathcal V|-1)$, for all $i$ with $1 \leq i \leq k$.
\end{description}
}
\begin{proof}
Consider the DFA $\A=(Q,\Sigma,\delta,q_{-1},F)$ from the proof of Theorem~\ref{th:fo-horn-crit} such that $\L_\Xi(\q) = \L(\A)$.
$(\Rightarrow)$ Suppose $\q$ is not $\FO(<, \equiv)$-rewritable. By Theorem~\ref{DFAcrit} $(ii)$, there exist $\Abox,  \mathcal{V}, \mathcal{U}, \mathcal{D} \in \Sigma_\Xi^*$ with $|\mathcal{U}| = |\mathcal{V}|$ and $k \geq 2$ such that
$$
q_{-1} \Rightarrow_{\Abox} q_0 \Rightarrow_{\mathcal V} q_0 \Rightarrow_{\mathcal U} q_1 \Rightarrow_{\mathcal V} q_1 \Rightarrow_{\mathcal U} \dots
\Rightarrow_{\mathcal U} q_{k-1} \Rightarrow_{\mathcal V} q_{k-1} \Rightarrow_{\mathcal U} q_0,
$$
$q_0 \Rightarrow_{\mathcal D} r_0$, $q_1 \Rightarrow_{\mathcal D} r_1$ for some $q_0, \dots, q_{k-1}, r_0,r_1 \in Q$ with $r_0 \not \in F$ and $r_1 \in F$. That $(i)$ and $(ii)$ are satisfied for $\mathcal B = \mathcal{V} \mathcal{U}$ is shown as in the proof of Theorem~\ref{th:fo-horn-crit}. Then $(iii)$ and $(iv)$ easily follow from~\eqref{eq:fo-horn-crit2}.

$(\Leftarrow)$ Suppose $(i)$--$(iv)$ hold and set $\mathcal E(i_0, \dots, i_{j}) = \mathcal V^{i_0} \mathcal U \dots \mathcal V^{i_{j}} \mathcal U$.
%%
%\begin{multline*}\mathcal E(i_0, \dots, i_{km+k-1}) = \\
%%
%\mathcal U^{i_0} \mathcal W \dots  \mathcal U^{i_{k-1}} \mathcal W \dots \mathcal U^{i_{km}} \mathcal W \dots  \mathcal U^{i_{km+k-1}} \mathcal W,
%\end{multline*}
%%
%for some numbers and $m \geq 0$ and $i_0, \dots, i_{km+k-1} \geq 1$.
Let $\mathcal F_{j'}(i_0, \dots, i_{j})$ be the prefix of $\mathcal E(i_0, \dots, i_{j})$ of the form $\mathcal V^{i_0} \mathcal U \dots \mathcal V^{i_{j'-1}} \mathcal U \mathcal V^{i_{j'}}$, for $j' \leq j$. By the properties of the canonical models, we then obtain the following, for any $0 \leq n \leq m$ and any $0 \leq  \ell < k$:
\begin{description}
\item[] %$(\TO, \Abox \mathcal E(i_0, \dots, i_{km+k-1}) \mathcal D)$ is consistent iff $(\TO, \Abox \mathcal B^k \mathcal{D})$ is consistent and
(a) $
\tp_{\TO, \Abox \mathcal E(i_0, \dots, i_{km+k-1}) \mathcal D}(|\Abox \mathcal F_{kn+\ell}(i_0, \dots, i_{km+k-1})|-1) =
\tp_{\TO, \Abox \mathcal B^k \mathcal{D}}(|\Abox \mathcal{B}^{\ell}|-1),
$,

\item[] %$(\TO, \Abox \mathcal E(i_0, \dots, i_{km+k-1}, i_0) \mathcal D)$ is consistent iff $(\TO, \Abox \mathcal B^{k+1} \mathcal{D})$ is consistent and
(b) $
\tp_{\TO, \Abox \mathcal E(i_0, \dots, i_{km+k-1}, i_0) \mathcal D}(|\Abox \mathcal F_{kn+\ell+1}(i_0, \dots, i_{km+k-1}, i_0)|-1) = \tp_{\TO, \Abox \mathcal B^{k+1} \mathcal{D} }(|\Abox \mathcal{B}^{\ell+1}|-1).$
\end{description}

The rest of the proof relies on the following observation:
\begin{proposition}\label{prop:cycle}
Let $\A$ be a DFA with a set of states $Q$, $|Q| \geq 3$, over an alphabet $\Sigma$. Then, for any $q \in Q$ and $w \in \Sigma^*$, there exists $q'$ such that $q \Rightarrow_{w^{|Q|!-1}} q' \Rightarrow_{w^{|Q|!}} q'$.
\end{proposition}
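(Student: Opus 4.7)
The plan is to analyze the dynamics of the single function $f_w \colon Q \to Q$ induced by reading $w$, namely $f_w(p) = p'$ iff $p \Rightarrow_w p'$. Everything follows from the standard ``rho''-shape of the orbit of $q$ under $f_w$.

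First I would consider the sequence $q, f_w(q), f_w^2(q), \ldots$ and apply the pigeonhole principle: since $Q$ is finite, there exist least $t \geq 0$ and $c \geq 1$ with $f_w^{t+c}(q) = f_w^t(q)$, and clearly $t + c \leq |Q|$. Thus the orbit consists of a tail of length $t$ followed by a cycle of length $c$; in particular $t \leq |Q|-1$ and $c \leq |Q|$, and for every $m \geq t$ the state $f_w^m(q)$ lies on the cycle, and for any cycle state $p$ one has $f_w^j(p) = p$ whenever $c \mid j$.

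Next I would verify the two arithmetic facts on which the proof hinges. For $|Q| \geq 3$ an easy induction (or direct check) shows $|Q|! - 1 \geq |Q|$, hence $|Q|! - 1 \geq t$, so
\[
q' \ := \ f_w^{|Q|!-1}(q)
\]
lies on the cycle. Moreover $c \leq |Q|$ implies $c \mid |Q|!$, so applying $f_w^{|Q|!}$ to any cycle state is the identity; in particular $f_w^{|Q|!}(q') = q'$.

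Translating back to the notation of the paper, $f_w^{|Q|!-1}(q) = q'$ means $q \Rightarrow_{w^{|Q|!-1}} q'$, and $f_w^{|Q|!}(q') = q'$ means $q' \Rightarrow_{w^{|Q|!}} q'$, which is exactly the conclusion of the proposition. There is essentially no obstacle here: the only point worth flagging is the mild arithmetic inequality $|Q|! - 1 \geq |Q|$ that explains the hypothesis $|Q| \geq 3$, and the divisibility $c \mid |Q|!$, which is the reason the factorial (rather than, say, $|Q|$ itself) appears in the statement.
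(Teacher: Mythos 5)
Your proof is correct: the paper states Proposition~\ref{prop:cycle} without proof, and your argument via the tail-plus-cycle (``rho'') structure of the orbit of $q$ under $\fu_w$ is exactly the standard one the paper implicitly relies on (its Preliminaries record the same eventual-periodicity facts in \eqref{fpi}--\eqref{fpiii}). Both key points are right: $|Q|!-1$ exceeds the tail length, so $q'=\fu_w^{|Q|!-1}(q)$ lies on the cycle, and the cycle length divides $|Q|!$, so $\fu_w^{|Q|!}$ fixes $q'$.
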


Take the DFA $\A$ from the proof of Theorem~\ref{th:fo-horn-crit}, assume without loss of generality that $|Q|\ge 3$, and, for $m \geq 0$, consider the sequence
\begin{multline*}
q_{-1} \Rightarrow_{\Abox \mathcal V^{|Q|!-1}}  q_0 \Rightarrow_{\mathcal V^{|Q|!}} q_0' \Rightarrow_{\mathcal U} q_0''  \Rightarrow_{\mathcal V^{|Q|!-1}}
 q_1 \Rightarrow_{\mathcal V^{|Q|!}} q_1' \Rightarrow_{\mathcal U} q_1'' \Rightarrow_{\mathcal V^{|Q|!-1}} \dots \\
q_{km+k-1} \Rightarrow_{\mathcal V^{|Q|!}} q'_{km+k-1} \Rightarrow_{\mathcal U} q_{km+k}.
\end{multline*}
By Proposition~\ref{prop:cycle}, $q_i = q_i'$ for $0 \leq i < km+k$. By taking an appropriate $m$, as in the proof of Lemma~\ref{th:fo-horn-crit}, we can find $i$ and $j$ such that
$$
q_{-1} \Rightarrow_{\Abox \mathcal V^{|Q|!-1} (\mathcal U \mathcal V^{|Q|!-1})^{ik}}  r_0 \Rightarrow_{\mathcal U \mathcal V^{|Q|!-1}} r_1 \Rightarrow_{\mathcal U \mathcal V^{|Q|!-1}} \dots
\Rightarrow_{\mathcal U \mathcal V^{|Q|!-1}} r_{jk + k-1} \Rightarrow_{\mathcal U \mathcal V^{|Q|!-1}} r_0
$$
and $r_\ell \Rightarrow_{\mathcal V^{|Q|!}} r_\ell$, for $0 \leq \ell < jk + k$.  It can be readily shown using (a) and (b) that $q_0' \not \in F$ and $q_1' \in F$ for such $q_0'$ and $q_1'$ that $r_0 \Rightarrow_{\mathcal D} q_0'$ and $r_1 \Rightarrow_{\mathcal D} q_1'$. Now, we have found a state $r_0$ in $\A$ that satisfies the condition of Theorem~\ref{DFAcrit} $(ii)$ with $u = \mathcal U \mathcal V^{|Q|!-1}$ and $v = \mathcal V^{|Q|!}$. Therefore, $\q$ is not $\FO(<,\equiv)$-rewritable.
\end{proof}

%********

\subsection{Proof of Lemma~\ref{checkNP}}\label{standard}

\noindent
\textbf{Lemma~\ref{checkNP}.}
\emph{Given $a_1, \dots, a_l \in \Sigma_\Xi$ with $|a_i| = 1$, for $1 \le i \le l$, binary numbers $i_1,\dots, i_{l+1}, j$, a $\bot$-free $\LTL_\core\Xnext$-ontology $\TO$ and a positive existential temporal concept $\varkappa$, checking whether $\mathcal C_{\TO, \Abox} \models \varkappa(j)$ for $\Abox = \emptyset^{i_1}a_1 \dots \emptyset^{i_l}a_l \emptyset^{i_{l+1}}$ can be done in \NP{}.}

\begin{proof}
We first show that, for any ABox $\Abox$, we have $\mathcal C_{\TO, \Abox} \models \varkappa(j)$ iff there exist numbers $n$, $n'$ and $k$, $k'$ with $0 < n, k \leq |\{\Diamond \varkappa' \in \sub(\varkappa) \}|+3$, $0 \leq n' < n$, $0 \leq k' < k$, a set of numbers $\{j_{-k}, \dots, j_{-1}, j_0, j_1, \dots j_n\} \subseteq \Z$, and types $\tp_{-k}, \dots, \tp_{-1}, \tp_0, \tp_1, \dots \tp_{n}$ for $(\TO, \varkappa)$ such that:
\begin{itemize}
\item $j_i < j_{i+1}$, for all $i$ with $-k \leq i < n$, and $j_0 = j$;

\item $j_{i+1} - j_i \leq 2^{O(|\q|)}$ if $j_i > \max \Abox$ or $j_{i+1} < 0$;

\item $\tp_{\TO, \Abox}^{\sig(\TO)}(j_i) \subseteq \tp_i$ , for all $i$ with $-k \leq i < n$, and $\varkappa \in \tp_0$;

\item $\tp_{n} = \tp_{n'}$ and $\tp_{-k} = \tp_{-k'}$;

\item for all $i < n'$ and $\Rdiamond \varkappa' \in \sub(\varkappa)$, $\Rdiamond \varkappa' \in \tp_i$ implies $\varkappa' \in \tp_{i'}$ for some $i' \in (i, n]$;

\item for all $i \in [n', n]$ and $\Rdiamond \varkappa' \in \sub(\varkappa)$, $\Rdiamond \varkappa' \in \tp_i$ implies $\varkappa' \in \tp_{i'}$ for some $i' \in [n', n]$;

\item for all $i < n'$ and $\Rdiamond \varkappa' \in \sub(\varkappa)$, $\varkappa' \in \tp_i$ implies $\Rdiamond \varkappa' \in \tp_{i'}$ for all $i' < i$;

\item for all $i \in [n', n]$ and $\Rdiamond \varkappa' \in \sub(\varkappa)$, $\varkappa' \in \tp_i$ implies $\Rdiamond  \varkappa' \in \tp_{i'}$ for all $i' \in [n', n]$,
\end{itemize}
and similarly for $\Ldiamond \varkappa'$ formulas. %(We note that we do not require that $\neg \Rdiamond \varkappa' \in \tp_i$ implies $\neg \varkappa_i$ for all $i' > i$.)

$(\Rightarrow)$ Suppose $(\TO,\Abox)\models\varkappa(j)$, so $\varkappa \in \tp_{\TO, \Abox}(j)$. Let $\Phi$ be the set of $\Diamond \varkappa' \in \tp_{\TO, \Abox}(j)$ for which there exist (unique) $j_{\varkappa'}$ satisfying $\neg \Diamond \varkappa', \varkappa' \in \tp_{\TO, \Abox}(j_{\varkappa'})$. Let $\{ j_{\varkappa'} \mid \Diamond \varkappa' \in \Phi\} \cup \{ j \} = \{j_{-k'}, \dots, j_{-1}, j_0, j_1, \dots, j_{n'} \}$ such that $j_0 = j$  $j_{-k'} < j_{-k'+1} < \dots  < j_{n'-1} < j_{n'}$. We take the smallest numbers $j_{n'+1}$ and $j''$ exceeding  $\max \Abox$ for which $\tp_{\TO, \Abox}(j_{n'+1})= \tp_{\TO, \Abox}(j'')$ and $j'' > j_{n'+1}$. Let $\Psi$ be the set of all $\Rdiamond \varkappa' \in \tp_{\TO, \Abox}(j_{n'+1})$. For each $\Rdiamond \varkappa' \in \Psi$, we take the smallest $j_{\varkappa'} \in (j_{n'+1}, j'']$ with $\varkappa' \in \tp_{\TO, \Abox}(j_{\varkappa'})$. Let $\{ j_{\varkappa'} \mid \Rdiamond \varkappa' \in \Psi\} = \{j_{n'+2}, \dots, j_{n-1} \}$. Finally, we set $j_n = j''$ (for the appropriate $n$). The selection of $k$ and $j_{-k}, \dots, j_{-k'-1}$ is analogous and left to the reader. We take $\tp_i = \tp_{\TO, \Abox}(j_i)$, for $i \in [-k, n]$. Using the periodicity property of the canonical models~\cite<e.g.,>[Lemma 22]{DBLP:journals/ai/ArtaleKKRWZ21}, one  can check that the required conditions are satisfied.

$(\Leftarrow)$ Suppose there are $n$, $m$, $n'$, $m'$, $j_i$ and $\tp_i$ satisfying the conditions above. It is easy to check by induction on the construction of $\varkappa'$ that $\varkappa' \in \tp_i$ implies $\varkappa' \in \tp_{\TO, \Abox}(j_i)$ for all $\varkappa' \in \sub(\varkappa)$. As $\varkappa \in \tp_0$, it follows that $(\TO,\Abox)\models\varkappa(j)$.
%We proceed by induction on the construction of $\varkappa$. If $\varkappa$ is atomic and $(\TO,\Abox)\models\varkappa(j)$, then there is $B(i)\in\Abox$ such that $\TO\models B\to\Rnext^{j-i}A$ or $\TO\models B\to\Lnext^{i-j}A$, which can be checked in polynomial time. The cases $\varkappa=\varkappa_1\land \varkappa_2$ and $\varkappa=\varkappa_1\vee\varkappa_2$ are obvious.
%
%Let $\varkappa=\Rdiamond\varkappa_1$. If $(\TO,\Abox)\models\varkappa(j)$, then $(\TO,\Abox)\models\varkappa_1(i)$ for some $i>j$. By IH, checking $(\TO,\Abox)\models\varkappa_1(i)$ is in \NP. This means that there is a certificate $\mathfrak c$  of length polynomial in $|(\TO,\Abox)|$ and $\log |i|$ such that, given $\mathfrak c$, we can check $(\TO,\Abox)\models\varkappa_1(i)$ in polynomial time. By the structure of the canonical models~\cite{DBLP:journals/ai/ArtaleKKRWZ21}, the required $i$ can be found in the interval $j<i<|j|+\max\Abox+2^{O(|\TO|)}$. So $i$ is of polynomial length, and given $i$ and $\mathfrak c$, we can check $(\TO,\Abox)\models\varkappa(j)$ in polynomial time. Thus, checking $(\TO,\Abox)\models\varkappa(j)$ is in \NP. The case of $\varkappa=\Ldiamond\varkappa_1$ is symmetric.

It is now easy to provide the required \NP{} algorithm. Indeed, we first guess the required binary numbers $j_i$ and types (recall that $j_{i+1} - j_i \leq 2^{O(|\TO|+|\varkappa|)}$ if $j_i > \max \Abox$ or $j_{i+1} < 0$). The list of conditions above can be checked in polynomial time. In particular, $\tp_{\TO, \Abox}^{\sig(\TO)}(j_i) \subseteq \tp_i$ for $\Abox = \emptyset^{i_1}a_1 \dots \emptyset^{i_l}a_l \emptyset^{i_{l+1}}$ can be checked in polynomial time using arithmetic progressions~\cite<e.g.,>[Theorem 14]{DBLP:journals/ai/ArtaleKKRWZ21}.
\end{proof}

%%%%%%%%%%%%%%%%%%%%%%%%%%%%%%%%%%%%%%%%%%%%%%%%%%%%%%%%%%%%%%%%%%%%%%
%%%%%%%%%%%%%%%%%%%%%%%%%%%%%%%%%%%%%%%%%%%%%%%%%%%%%%%%%%%%%%%%%%%%%%

\vskip 0.2in
%\bibliography{OBDA,LTL}
%\bibliographystyle{../theapa}

\end{document}